\providecommand{\printnomenclature}{\printglossary}
\providecommand{\makenomenclature}{\makeglossary}
\newcommand{\lyxmathsym}[1]{\ifmmode\begingroup\def\b@ld{bold}
  \text{\ifx\math@version\b@ld\bfseries\fi#1}\endgroup\else#1\fi}
\providecommand{\tabularnewline}{\\}
\DeclareRobustCommand{\lyxsout}[1]{\ifx\\#1\else\sout{#1}\fi}
\let\myTOC\tableofcontents
\renewcommand\tableofcontents{%
  \pdfbookmark[1]{\contentsname}{}
  \myTOC
  }
\let\mySection\section\renewcommand{\section}{\suppressfloats[t]\mySection}
\newcommand{\bra}[1]{\langle #1|}
\newcommand{\ket}[1]{|#1\rangle}
\newcommand{\braket}[2]{\langle #1|#2\rangle}
\newcommand{\ketbra}[2]{| #1 \rangle \langle #2 |}
\DeclareMathOperator{\Tr}{Tr}
\newtheorem{thm}{Theorem}
\begin{document}
\thispagestyle{empty}

\begin{center}
\includegraphics[width=0.1\textwidth]{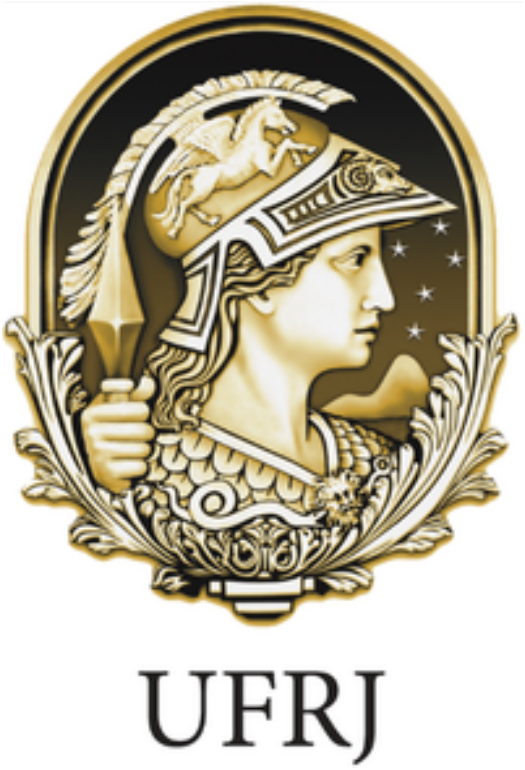}
\end{center}

\begin{center}
\begin{minipage}[t]{0.6\columnwidth}%
\begin{center}
UNIVERSIDADE FEDERAL DO RIO DE JANEIRO
\par\end{center}
\begin{center}
INSTITUTO DE FÍSICA
\par\end{center}%
\end{minipage}
\par\end{center}

\vspace{3cm}

\begin{center}
\textbf{\Large{}Classical and Quantum Light: }{\Large\par}
\par\end{center}

\begin{center}
\textbf{\Large{}Versatile tools for quantum foundations and quantum
information} 
\par\end{center}

\vspace{4cm}

\begin{center}
{\large{}Thais de Lima Silva}{\large\par}
\par\end{center}

\vspace{3cm}

\vfill{}

\begin{center}
Rio de Janeiro
\par\end{center}

\vspace{-1cm}

\begin{center}
Abril, 2020
\par\end{center}

\vspace{-2cm}

\cleardoublepage{}

\thispagestyle{empty}

\begin{center}
UNIVERSIDADE FEDERAL DO RIO DE JANEIRO
\par\end{center}

\begin{center}
INSTITUTO DE FÍSICA
\par\end{center}

\vspace{3cm}

\begin{center}
\textbf{\Large{}Classical and Quantum Light: }{\Large\par}
\par\end{center}

\begin{center}
\textbf{\Large{}Versatile tools for quantum foundations and quantum
information} 
\par\end{center}

\vspace{2cm}

\begin{center}
{\large{}Thais de Lima Silva}{\large\par}
\par\end{center}

\begin{center}
ORIENTADOR: Stephen Patrick Walborn
\par\end{center}

\begin{center}
CO-ORIENTADOR: Grabriel Horacio Aguilar
\par\end{center}

\vspace{3cm}

\begin{flushright}
\emph{}%
\begin{minipage}[c][1\totalheight][t]{0.6\columnwidth}%
\emph{Tese apresentada como parte dos requisitos para obtenção do
título de doutora em Física pelo programa de pós-graduação do Instituto
de Física da Universidade Federal do Rio de Janeiro.}%
\end{minipage}
\par\end{flushright}

\vfill{}

\begin{center}
Rio de Janeiro
\par\end{center}

\vspace{-1cm}

\begin{center}
April, 2020
\par\end{center}

\vspace{-2cm}

\cleardoublepage{}

\frontmatter

\thispagestyle{empty}%
\noindent\begin{minipage}[t]{1\columnwidth}%
\end{minipage}\vspace{5cm}

\noindent\begin{minipage}[t]{1\columnwidth}%
\begin{center}
À memória de meu tio Oswaldo Vicente de Lima.
\par\end{center}%
\end{minipage}

\vfill{}

\begin{flushright}
\begin{minipage}[t]{0.5\columnwidth}%
\begin{singlespace}
\noindent \begin{flushright}
\emph{Dedico-me sobretudo aos gnomos, anões, sílfides e ninfas que
me habitam a vida. Dedico-me à saudade de minha antiga pobreza, quando
tudo era mais sóbrio e digno e eu nunca havia comido lagosta. O que
me atrapalha a vida é escrever. E não esquecer que a estrutura do
átomo não é vista mas sabe-se dela. Sei de muita coisa que não vi.}
\par\end{flushright}
\end{singlespace}
\vspace{-0.7cm}

\begin{flushright}
\textbf{Clarice Lispector}, ``A hora da estrela''
\par\end{flushright}%
\end{minipage}
\par\end{flushright}

\vspace{2cm}

\cleardoublepage{}

\chapter*{Agradecimentos}

Uma tese nasceu, com dores de parto e em meio a uma pandemia sem precedentes.
Isolada no último mês de escrita, porém jamais sem o apoio distante
de muitos que estiveram presentes durante esses quatro anos de doutorado
e dez anos de física. Tantos são os que merecem dedicatórias e agradecimentos
que incorro no risco de ser injusta e esquecer-me de alguns nomes
ou até desprivilegiar alguém pela ordem em que os cito. Não sei se
isso é de qualquer importância àqueles que me cercam, assim como não
soube se a própria tese era de qualquer importância, por vezes ela
perdeu o sentido: não salvaria vidas e parece ser só o que importa
durante esses dias sombrios. Por outro lado, por que salvar vidas?
Pelo número de vidas salvas ou pela individualidade de cada uma? Sendo
assim, a minha vida também deve importar e as particularidades ligadas
a ela também. Essa tese importa, nem que seja somente para mim. O
conteúdo dela certamente não revoluciona a ciência, mas fornece alguns
tijolinhos para sustentar essa enorme construção científica. E assim
é feita a ciência: somente vez ou outra uma revolução, mas sempre
se sustentando nos tijolinhos.

E eu sempre me sustentando na minha família. Meus pais, dona Ini e
seu Magno, e minha irmã, Tina, sempre acreditaram em mim e nas minhas
escolhas, mesmo quando elas envolveram me mudar para longe deles.
Me mudando para longe, não pude participar ativamente do crescimento
da minha sobrinha, Thayná, e de minha prima, Isa, espero que ao menos
eu sirva de inspiração para essas criaturinhas. Devo me sentir orgulhosa
e, família, vocês também devem se orgulhar do trabalho que fizemos.
Meus pais, que sequer tiveram oportunidade de concluir o ensino fundamental,
conseguiram me impulsionar para que eu agora tenha a oportunidade
de concluir um doutorado! Muito obrigada pela dedicação de vocês,
espero conseguir retribuir de alguma forma. Agradeço também a toda
a família, tios e primos, a união dessa família sempre será uma motivação,
em especial à Tinti e ao Uncle que sempre serviram de inspiração e
também à Grangran, a avó mais teimosa e mais divertida que eu poderia
ter.

Além da minha família original, não posso deixar de agradecer a uma
família que me adotou e incorporou como se eu fosse um deles desde
quando os conheci. Muito obrigada à família Zanco, especialmente ao
Jônatas que por tanto tempo esteve ao meu lado, alguém que me apoiou,
me deu suporte e me ensinou tanta coisa para tornar a vida mais leve.
Não poderia me esquecer também da minha mãe carioca, Druzila, que
me recebeu em sua casa no primeiro ano de doutorado como a uma filha
e me deu não somente abrigo, mas sua amizade e seus cuidados. 

Muitas foram as amizades que os tantos anos de Física me trouxeram,
algumas passageiras, outras que ficam pra vida. Quantas foram as vezes
em que quase fomos roubados por macacos enquanto comíamos biscoito
frito às 16h, Leandro? Companheiro de natação, de escrita de dissertação
e agora de tese, de trabalhos de EaD, de finais de semana na universidade,
de insolação... Obrigada pela amizade incondicional de quase dez anos
e pela disposição em sempre ajudar e ouvir. 

Assim que cheguei à UFRJ fui levada ao que seria meu escritório e
que seria dividido com dois malucos. Como foram divertidos os primeiros
dias, Kainã e Renato fizeram eu me sentir importante e acolhida, ganhava
até paçoca e bolacha piraquê! Através deles e de sua capacidade incrível
de socialização, conheci muitos colegas de instituto. Foi através
deles que conheci uma pessoa incrível, uma mulher sonhadora, destemida
e que domina as palavras como poucas vezes vi. Obrigada, Carol, pela
amizade, pelas festas loucas, pelas trocas de segredinhos e por me
fazer ver como feminilidade, força e conquista de respeito devem andar
juntos. Não acho que já tenha te dito isso: te admiro muito, admiro
sua determinação em não seguir a corrente, mas ir contra, ir bailando
sobre pernas de pau ao encontro dos seus ideais. Falando em mulher
forte e admiração, não posso deixar de mencionar e agradecer à Murielvis,
a amiga mais surpreendente que jamais tive, de infância no canavial
até paraquedismo, ela abarca tudo que a vida tem a oferecer com uma
coragem e uma força que não parecem possíveis ao julgar erroneamente
a aparência.

Obrigada a todos os colegas de laboratório e de grupo, sem vocês esses
anos teriam sido pobres, até mesmo de inspiração para trabalhar. Foram
muitas as festinhas de aniversário, confraternização aleatória ou
celebrações de defesa, muitos imagem e ação, chocolates e outros doces
compartilhados... Sem momentos de procrastinação pós almoço com Rodrigo
ou com Márcio ou com Ranieri, ou com todos juntos, o doutorado não
teria a leveza e a graça que teve. E claro que, sem a ajuda do Rani,
nem mesmo o doutorado talvez fosse possível, após várias horas de
discussões, misturadas com procrastinação baseada nos mais diversos
temas e jogos, vários ``Você pode ler isso aqui que eu escrevi?''
seguidos de um trocadilho e um ``sim'', após alguns aniversários
de Thainery, diversas conversas sobre os questionamentos profundos
da vida, só me resta agradecer e torcer para que nossa amizade e colaboração
continuem por muitos anos mais. Agradeço ao Márcio também por me mostrar
ridiculamente como é possível ter vida social, dominar várias línguas,
traduzir vídeos do YouTube, dar ótimos churrascos, editar vídeos de
memes e ainda ser super produtivo no trabalho. Esse agradecimento
é o mais próximo que eu consigo chegar de um vídeo de aniversário
com bolo de imagem e ação, considere retribuído.

Não posso também não agradecer ao Victor, mas honestamente não sei
o que dizer. Sua importância durante esse período foi inegável, transformadora
e por que não dizer nutritiva, dado que tudo se iniciou como uma troca
de marmitas. E mesmo o seu afastamento me trouxe coisas maravilhosas,
como esses dois amigos que não posso deixar de mencionar. Pedro e
Matheus, vocês foram e são fundamentais na minha vida e mesmo no meu
trabalho. A companhia de vocês, todos os jantares compartilhados e
as besteiras ditas aliviaram minha carga e me ajudaram a me valorizar
e confiar em mim. Obrigada por me integrarem em tão pouco tempo, a
dedicação e o bom humor de vocês me inspira e não aceito que nossa
amizade não seja pra sempre.

E como não citar os elementos fundamentais nessa trajetória: todos
os professores que dela participaram. Muito obrigada a todos, tantos
nomes fundamentais que não há espaço para citar todos. Desde antes
de ingressar na universidade, um professor do instituto de Física
da UFG já me influenciou na minha escolha. Ter invadido a sala do
Caparica em uma visita à universidade foi fundamental, não só para
a escolha, como também para não me frustrar com ela, já que ele me
disse algo como ``A vida, o trabalho e as conquistas de um cientista
são muito diferentes da visão idealizada da ficção''. E meus primeiros
passos como cientista foram guiados por meu primeiro orientador que
me acompanhou na graduação e no mestrado, meu obrigada ao Ardiley.
Obrigada também a todos os professores do grupo de Informação quântica
da UFRJ pelos ensinamentos e pela amizade, em especial ao Fabricio
que, à sua maneira às vezes rude, sempre se preocupou comigo, e ao
Leandro que agora me acolhe para uma nova fase, o pós-doutorado. Finalmente,
declaro minha gratidão ao meu orientador, Steve, e ao meu coorientador,
Gabo, vocês acreditaram em mim, me acompanharam e me incentivaram,
recebi não só ensinamentos e ajuda, mas confiança e amizade, e tive
o melhor ambiente de trabalho possível.

Por último, gostaria de agradecer ao CNPq, sem a bolsa de doutorado
não teria sido possível.\selectlanguage{american}

\cleardoublepage{}

\chapter*{Resumo}

Feixes ópticos oferecem muitos graus de liberdade a serem explorados.
Há graus de liberdade discretos como polarização, momento angular
orbital e caminhos discretos. Existem também graus de liberdade contínuos,
como frequência, momento e posição transversal. Além da possibilidade
de emaranhar photons nesses muitos graus de liberdade, isso faz da
luz uma ferramenta extremamente útil e versátil para investigações
em fundamentos de mecânica quântica e em informação quântica. Desde
o início da informação e da computação quântica, experimentos fotônicos
têm tido um papel crucial que vai desde testes fundamentais da teoria
até a implementação de protocolos de informação quântica. Nesta tese,
essa importância e versatilidade é endossada apresentando novas contribuições
que exploram tanto graus de liberdade discretos como contínuos. A
tese inicia-se com dois experimentos que utilizam luz clássica e exploram
a analogia entre a função de onda de sistemas quânticos e a amplitude
da onda eletromagnética. O primeiro é uma simulação da dinâmica de
uma partícula quântica relativística na qual utiliza-se a analogia
entre campo próximo/distante e a função de onda em posição/momento,
bem como a analogia entre spin e polarização. Esta simulação permite
observar claramente o chamado \textit{zitterbewegung, }movimento trêmulo
de partículas livres, com boa visibilidade para valores ajustáveis
de massa da partícula. O segundo trabalho é relacionado à teoria de
medidas mutuamente imparciais que são efetivamente discretas, porém
construídas a partir de variáveis contínuas, implementadas novamente
no perfil transversal de um feixe luminoso clássico. Demonstra-se
teoricamente que tais medidas não são compatíveis com variáveis discretas
nem contínuas, uma vez que o número máximo de medidas mutuamente imparciais
possível não se comporta como nenhuma das duas possibilidades. Na
segunda parte desta tese, são apresentados três trabalhos utilizando
graus de liberdade discretos de polarização e caminho de fótons. O
primeiro trata da redefinição da correlação quântica não local chamada
\textit{steering} no cenário multipartido, baseada em uma inconsistência
na definição anterior, a saber, a criação desta correlação através
de operações que supostamente não seriam capazes de criá-la. Este
fenômeno é chamado exposição de \textit{steering} quântico. Neste
trabalho, é construído um protocolo para geração de qualquer assemblage
bipartido com \textit{steering} a partir de um assemblage tripartido
sem a correlação. Em geral, tal protocolo não é realizável quanticamente,
no entanto, apresentamos um exemplo obtido a partir de um estado quântico
tripartido em que a exposição de \textit{steering} ocorre e a observamos
experimentalmente para fótons emaranhados. Os demais trabalhos são
relacionados à implementação experimental de canais quântico de um
qbit, um deles é um canal particular para o qual testamos não-Markovianidade
usando uma medida operacional chamada correlação condicional de passado-futuro
(CPF). É mostrado que, mesmo com erros e estatística finita inerente
à implementação experimental, a correlação CPF é capaz de detectar
efeitos de memória que vão além da capacidade de outros quantificadores.
A tese é finalizada com uma proposta para a realização de qualquer
canal quântico de um qbit, em que o qbit é representado pela polarização
de fótons únicos. Diferentemente de outros trabalhos correlatos, nossa
proposta não depende da implementação clássica de combinações convexas
de canais, além de não requerer sistemas auxiliares adicionais, já
que estes são providos por graus de liberdade de caminho do próprio
fóton.

\bigskip{}

\noindent \textbf{Palavras-chave:} Ótica clássica e quântica; ótica
paraxial; fótons emaranhados; simulação quântica e clássica; equação
de Dirac; medidas mutuamente imparciais; \textit{steering} quântico
multipartido; não-markovianidade quântica; canais quânticos. \selectlanguage{american}

\cleardoublepage{}

\chapter*{Abstract}

Light beams offer many degrees of freedom to be explored. There are
discrete ones as polarization, angular orbital momentum and discrete
paths. There are also continuous ones, like frequency, momentum and
transverse position. In addition to the possibility of entangling
photons in these many degrees of freedom, it makes light a very useful
and versatile tool for quantum information and quantum foundation
purposes. Since the very beginning of quantum information and quantum
computation, photonic experiments have played a crucial role that
ranges from testing the foundations of quantum theory to implementing
quantum information protocols. In this thesis, we endorse its importance
and versatility by presenting novel contributions that further explore
both discrete and continuous degrees of freedom. It begins with two
experiments that use classical light and explore its analogous behavior
to quantum systems. The first one is a simulation of the dynamics
of a relativistic quantum particle in which we use the analogy between
near/far transverse fields and position/momentum wavefunctions as
well as the analogy between spin and polarization. Our simulation
enables us to clearly observe the so called zitterbewegung, the trembling
motion of free particles, with good visibility and with a tunable
value of particle mass. The second work is related to the theory of
mutually unbiased measurements that are effectively discrete but constructed
from continuous variables systems, which is again experimentally implemented
on the transverse field profile of a classical light beam. We theoretically
prove that these measurements are actually neither continuous or discrete,
since the maximum number of mutually unbiased measurements possible
does not behave like any of those. In the second part of the thesis,
three works are presented that use the polarization and path discrete
degrees of freedom. The first one is a redefinition of the quantum
nonlocal correlation called steering in the multipartite scenario,
based on an inconsistency in the previous definition, namely the creation
of this correlation from scratch using operations that supposedly
would not be able to do so. We call this exposure of quantum steering.
In this work we build a protocol to generate any steerable bipartite
assemblage from a tripartite unsteerable one, although this protocol
is not realizable with quantum states, we come out with a quantum
example for which this exposure phenomenon is observed with entangled
photons. The other two works are related to the experimental implementation
of quantum channels of qubits, one of them is a particular channel
for which we test for non-Markovianity using a operational measure
called conditional past-future (CPF) correlation. We show that, even
with finite statistics inherent to an experiment and with experimental
errors, this CPF correlation is able to detect memory effects beyond
other non-Markovianity quantifiers. The thesis finishes with a proposal
for an experimental realization of any quantum channel of a single
qubit, where the qubit is realized by the polarization of single photons.
Differently from other works, our proposal does not rely on classical
implementation of convex superposition, also it does not need any
extra ancillary systems, since the ancillas are provided by path degrees
of freedom of the photon itself.

\bigskip{}

\noindent \textbf{Keywords:} Quantum and classical optics; paraxial
optics; entangled photons; quantum and classical simulation; Dirac
equation; mutually unbiased measurements; multipartite quantum steering;
quantum non-Markovianity; quantum channels.

\cleardoublepage{}

\ihead{}

\ohead{}
\chead{}

\cfoot{}

\ofoot{\thepage}

\cleardoublepage{}

\tableofcontents

\cleardoublepage{}

\mainmatter

\ihead{}

\ohead{\leftmark}

\ifoot{}

\cfoot{}

\ofoot[
]{\thepage}

\addchap{Introduction}

Optical experiments were at the heart of the two big revolutions in
Physics that occurred in the beginning of the twentieth century. On
one side, one can cite the interference experiment of Michelson and
Morley which supported the Theory of Relativity \cite{michelson1887}.
On the other side, the discovery \cite{lenard1902} and subsequent
explanation \cite{einstein1905} of photoelectric effect was one of
the motivations for the development of quantum physics. 

Ever since the first formulations of quantum theory, optics has played
a central role in its development. Not only the theory itself has
been initiated by the photoeletric effect, but also it was frequently
an optics experiment that served as the most suitable platform for
testing some non-intuitive features of the theory. For the latter,
we can cite, for example, the incredible amount of experiments to
test quantum nonlocality, with the first unambiguous experiments on
violation of Bell inequalities \cite{Bell1964} by quantum correlations
realized by A. Aspect \textit{et al.} \cite{aspect1981,aspect1982b,Aspect1982}
using photons produced by cascade emission, and subsequently many
experiments using photons produced by parametric down conversion \cite{ou1988,shih1988,kwiat1995},
only to cite a few. Also related to quantum correlations, we can mention
the experiment proving the weirdness of entanglement for multipartite
systems without the necessity of inequalities using Greenberger-Horne-Zeilinger
states \cite{pan2000}. Furthermore, the first quantum teleportation
implementation was performed with photonic states \cite{bouwmeester1997};
wave-particle complementarity has been tested many times with delayed
choice experiments, and quantum erasers, for instance (see Ref. \cite{ma2016}
and references therein); and the two-particles interference exhibiting
a Hong-Ou-Mandel dip because of the symmetry in the quantum bosonic
state of photons was also verified \cite{hong1987}. 

Moreover, with the advent of quantum information theory, photons have
become a natural physical system for \textit{quantum information transmission}
because, first and foremost, they are the fastest carriers of information
available and, due to their lack of charge and mass, they have a reduced
interaction with the environment, making them able to transmit signals
through large distances outside the protected environment of a laboratory
\cite{zeilinger2005}. Thus, many quantum key distribution protocols
have been realized using photons \cite{jennewein2000,bennett1992}.
More recently, these protocols are becoming closer to practical application
with many realizations of quantum information transmission through
long distance fibers \cite{Hiskett_2006,korzh2015,pan2016} and using
Earth satellites \cite{Yin2017,satellite2018} or drones \cite{drone2020}.
Photons have their limitations as a platform for \textit{quantum information
processing} because of the difficulty to store them and also to build
multipartite entangled states, due exactly to the feature that favors
the transmission of information: the difficulty of producing interactions.
However, even in this direction photonic experiments have contributed
with the first attempts of proving the supremacy of quantum computers,
by the use of many-photons interference in boson sampling \cite{pan2019,Brod2019}.

The enormous number of interesting results is ascribable to the versatility
of light. It offers many degrees of freedom which can be explored
independently or jointly, with the possibility of producing states
with entanglement between different degrees of freedom. There are
discrete ones such as polarization, angular orbital momentum, discrete
spatial modes and number of photons \cite{knill2001}. There are also
continuous ones, like frequency, momentum and transverse position
\cite{Braunstein05}.

After all these successful demonstrations of the power of optics as
a tool for the study of quantum theory, there are still many challenges
and room for new and interesting research. This thesis intends to
give some additional contribution to the research field of quantum
theory using optics experiments. It is divided in two parts according
to the experimental platform used. Each part begins with an overview
of the main experimental techniques and devices employed (chapters
1 and 4). Part I contains two experiments using continuous variables
provided by the transverse degrees of freedom of a classical light
beam. Part II contains three experiments that use discrete degrees
of freedom of pairs of photons produced via spontaneous parametric
down conversion. The experiments may be classified in three general
topics: quantum kinematics (Chapter 3), quantum dynamics of a single
system (chapters 2, 6, and 7) and quantum correlations (Chapter 5).
Although sharing the same experimental platforms, the works presented
here are substantially different in their theoretical support, thus
each of them has its own technical introduction to the topic explored
as self-contained as it is possible, such that the chapters can be
read in any order. The five experiments presented approach different
aspects of quantum theory in a fundamental or applied feature: 
\begin{itemize}
\item In a slightly more applied direction, we present the optical simulation
of a relativistic spin $\nicefrac{1}{2}$ particle in Chapter 2. This
experiment, although realized using a classical optical beam, goes
along with the idea of quantum simulation in which a controllable
system is used to mimic the dynamics of another system that one usually
cannot access \cite{blatt2012}. We observe the so-called \textit{zitterbewegung,
}the trembling motion of free relativistic particles predicted by
Dirac equation.
\item In Chapter 3 the description of a single continuous variable quantum
system is explored by means of mutually unbiasedness (MU). Mutually
unbiased bases refer to pairs of Hilbert space bases for which the
projective measurement of any element of one basis on the other basis
gives equiprobable outcomes \cite{Durt10}. This concept can be defined
for discrete or continuous variables systems and is directly related
to complementarity of observables. In practice, MU is not directly
observed for continuous variables systems, but can be recovered by
discretizing the measurements through periodic coarse graining \cite{Tasca18a}.
In this chapter, we investigate the construction of an arbitrary number
of such periodic coarse grained measurements satisfying MU and how
this construction resembles the discrete and continuous cases.
\item If the system is composed of subsystems, then it may present correlations
that are stronger than the ones allowed by classical physics. One
such correlation is quantum steering, which appears in a scenario
were some subsystems are not accessible at the quantum state level
\cite{Cavalcanti2017}. In Chapter 5, we explore multipartite quantum
steering. It is shown that the current definition of this correlation
can lead to inconsistencies, allowing for the creation of the correlation
from scratch by applying some operations that admittedly should not
be able to do so. We experimentally demonstrate this phenomenon, showing
that it can be strong enough to be detected even under unavoidable
experimental imperfections. 
\item The content of Chapter 6 is related to the dynamics of a single quantum
system. Although the textbooks usually present the evolution of a
quantum system as a unitary transformation, the general transformations
are not unitary and given as the solution of master equations. Decoherence
and dissipation are phenomena induced by the unavoidable coupling
of an open quantum system with its environment. When describing this
kind of system dynamics, some important approximation are usually
considered. A paradigmatic example is the Born-Markovian approximation
(BMA), which considers that the reservoir is not altered significantly
due to the presence of the system. Nevertheless, even when a quantum
master equation is obtained beyond the BMA, most of the identifiers
of quantum memory may indicate the absence of any non-Markovian (memory)
effect. For example, dynamics characterized by positive time-dependent
rates are usually classified as Markovian ones~\cite{vega,BreuerReview,plenioReview}.
In this context, conditional past-future correlations (CPF) were shown
to be good memory indicators, predicting non-Markovianity in dynamics
usually considered as Markovian \cite{budini,budiniPRA}. In this
chapter, we provide theoretical and experimental evidence on the feasibility
of measuring and detecting departures from the BMA by using CPF correlations
for the decay of a two level system (polarization of a photon) in
a bosonic bath (spatial modes of the photon). 
\item Closing the thesis, in Chapter 7, a proposal for simulating the aforementioned
general evolution for a qubit is presented, there the qubit is implemented
in the polarization degree of freedom of single photons. 
\end{itemize}

\part{Transverse spatial degree of freedom}

~\ihead{}

\ohead{\textbf{Chapter~\thechapter}~\leftmark}

\ifoot{}

\cfoot{}

\ofoot[
]{\thepage}

\chapter{Experimental techniques \label{chap:Paraxial}}

In this chapter, the experimental techniques and devices common to
the following two chapters are presented. In these two works the experiments
are realized using classical light beams, and we will de interested
in the continuous variables coming from the beam transverse position
and discrete variables coming from the polarization degree of freedom.
The first section presents the functioning of the waveplates used
to manipulate the polarization. We use collimated light beams satisfying
the paraxial approximation, which is presented in Section \ref{sec:Paraxial-approximation}.
It is possible to change to which phase space representation we are
looking to by performing optical Fourier transforms (to go from position
to momentum representation and vice versa) or optical fractional Fourier
transforms (to change between two arbitrary phase space direction
representations), this is the subject of Sec. \ref{sec:Fractional-Fourier-transform}.
Finally, in the last section we present the device used to manipulate
the spatial profile of the light beam, it is known as a spatial light
modulator and enables one to imprint programmable position dependent
phases to the wave fronts.

\section{Polarization transformations: wave plates\label{sec:Placas-de-onda}}

Birefringent crystals (App. \ref{chap:Anexo2}) have different properties
for different electric field directions. Because of this anisotropy,
they are ideal to manipulate the polarization of electromagnetic waves,
by transforming the polarization of a beam or even to separate different
polarization components. Wave plates are slabs of a birefringent crystal
cut to have their optical axis perpendicular to the incidence direction.
This way it is guaranteed that no walk-off between two orthogonal
polarization directions occurs. The effect of the wave plate is to
set a phase difference between ordinary and extraordinary polarization
directions. For fixed refractive indexes $n_{e}$ ($x$ polarization)
and $n_{\vartheta}$ ($y$ polarization), the phase difference and
accordingly the polarization transformation is manipulated by changing
the crystal width $d$ and rotation angle $\theta$ around the $z$
axis.

\begin{figure}[H]
\noindent \centering{}\includegraphics[width=0.4\textwidth]{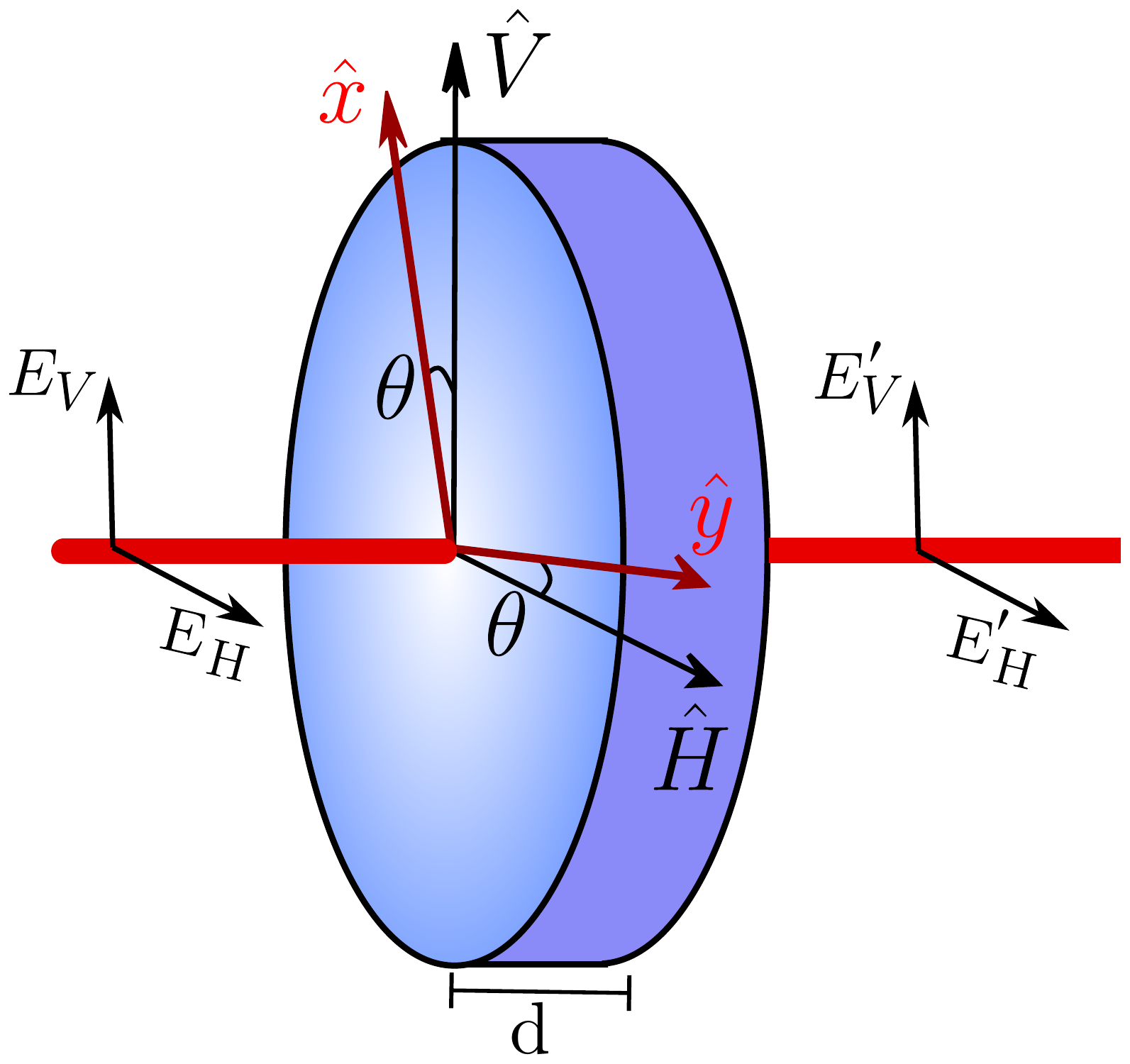}\caption{\foreignlanguage{english}{\textit{A wave plate}: a piece of a uniaxial crystal of width $d$
cut such that its optical axis $\hat{x}$ lie on the incidence plane
face. By changing the rotation angle $\theta$ and the width of the
plate, the transformation relative to the convenient polarization
basis $\{\hat{H},\hat{V}\}$ can be selected. }
}
\end{figure}

\selectlanguage{english}%
Let us consider the plane wave solutions for a given frequency $\omega_{0}$
propagating in the $\hat{z}$ direction. Because the optical axis
is perpendicular to the propagation direction, so is the electric
field inside the material. The most general plane wave with this features
reads
\begin{equation}
\boldsymbol{E}(z,t)=\left(E_{x}e^{i\omega_{0}t-i2\pi\frac{n_{e}}{\lambda_{0}}z}\hat{\boldsymbol{x}}+E_{y}e^{i\omega_{0}t-i2\pi\frac{n_{\vartheta}}{\lambda_{0}}z}\hat{\boldsymbol{y}}\right)=e^{i\omega_{0}t-i2\pi\frac{n_{e}}{\lambda_{0}}z}\left(E_{x}\hat{\boldsymbol{x}}+E_{y}e^{-i2\pi\frac{n_{\vartheta}-n_{e}}{\lambda_{0}}z}\hat{\boldsymbol{y}}\right),\label{eq:camp}
\end{equation}
with $E_{x}$ and $E_{y}$ the field amplitudes in the extraordinary
and ordinary directions, respectively, and $\lambda_{0}=\frac{2\pi c}{\omega_{0}}$
is the wavelength of the beam in vacuum.

It is convenient to define the laboratory coordinate system in the
horizontal and vertical directions. The crystal and laboratory references
are related by 
\begin{equation}
\hat{\boldsymbol{x}}=\cos\theta\,\hat{\boldsymbol{H}}+\sin\theta\,\hat{\boldsymbol{V}}\qquad\hat{\boldsymbol{y}}=-\sin\theta\,\hat{\boldsymbol{H}}+\cos\theta\,\hat{\boldsymbol{V}}.
\end{equation}
A plane wave in the imminence of entering the wave plate $(z=0)$
has field amplide $\boldsymbol{E}_{0}=E_{H}\hat{\boldsymbol{H}}+E_{V}\hat{\boldsymbol{V}}.$
After crossing the wave plate $z=d$, according to (\ref{eq:camp}),
it is transformed to 

\begin{multline}
\boldsymbol{E}(d)=e^{-i2\pi\frac{n_{e}}{\lambda_{0}}d}\Big\{\left[\left(E_{H}\cos\theta+E_{V}\sin\theta\right)\cos\theta-\left(-E_{H}\sin\theta+E_{V}\cos\theta\right)\sin\theta\:e^{-i2\pi\frac{n_{\vartheta}-n_{e}}{\lambda_{0}}d}\right]\hat{\boldsymbol{H}}\\
\left[\left(E_{H}\cos\theta+E_{V}\sin\theta\right)\sin\theta+\left(-E_{H}\sin\theta+E_{V}\cos\theta\right)\cos\theta\:e^{-i2\pi\frac{n_{\vartheta}-n_{e}}{\lambda_{0}}d}\right]\hat{\boldsymbol{V}}\Big\}.\label{eq:fase}
\end{multline}
If the wave plate width is such that $d=(2m+1)\lambda_{0}/2(n_{1}-n_{2})$
,with $m$ an integer number, then it is called a \textit{half wave
plate} (HWP). The phase difference caused by a HWP is $\pi$ and $e^{-i2\pi\frac{n_{\vartheta}-n_{e}}{\lambda_{0}}d}=-1$.
It can be seen from (\ref{eq:fase}) that the resultant transformation
$(E_{H},E_{V})\rightarrow(E_{H}^{\prime},E_{V}^{\prime})$ resulting
from a HWP set at angle $\theta$ in the $\{\hat{\boldsymbol{H}},\hat{\boldsymbol{V}}\}$
basis is given by the transformation matrix
\begin{equation}
\hat{HWP}(\theta)=\left[\begin{array}{cc}
\cos(2\theta) & \sin(2\theta)\\
\sin(2\theta) & -\cos(2\theta)
\end{array}\right].\label{eq:HWP}
\end{equation}
Thus, a half wave plate transforms linear polarization into linear
polarization because it does not introduce any complex phase between
the field amplitude components. Particularly, if $\theta=45^{\circ}$,
the HWP transforms horizontal into vertical polarization and vice
versa. Another interesting particular configuration is $\theta=22.5^{\circ}$.
In this case, horizontally (vertically) polarized light is transformed
into a beam with diagonal (antidiagonal) polarization $\hat{\boldsymbol{D}}=\frac{\hat{\boldsymbol{H}}+\hat{\boldsymbol{V}}}{\sqrt{2}}$
$\left(\hat{\boldsymbol{A}}=\frac{\hat{\boldsymbol{H}}-\hat{\boldsymbol{V}}}{\sqrt{2}}\right)$.

If the plate width is related to the light wavelength by $d=(4m+1)\lambda_{0}/4(n_{1}-n_{2})$
, with $m$ an integer number, it is called \textit{quarter wave plate}
(QWP). The phase difference between ordinary and extraordinary waves
after crossing the plate is then $e^{-i2\pi\frac{n_{\vartheta}-n_{e}}{\lambda_{0}}d}=i$,
considering that $n_{e}>n_{\vartheta}$ as is the case of quartz,
the material composing the wave plates we use. If the angle between
the crystal axis and the vertical direction is $\theta$ the resulting
transformation in the $\{\hat{\boldsymbol{H}},\hat{\boldsymbol{V}}\}$
basis is given by the operator
\begin{equation}
\hat{QWP}(\theta)=\frac{1}{\sqrt{2}}e^{i\frac{\pi}{4}}\left[\begin{array}{cc}
1-i\cos(2\theta) & i\sin(2\theta)\\
i\sin(2\theta) & 1+i\cos(2\theta)
\end{array}\right],\label{eq:QWP}
\end{equation}
as can be obtained from Eq. (\ref{eq:fase}). In particular, when
$\theta=45^{\circ}$, the QWP transforms linear to circular polarization
and vice versa
\begin{equation}
\hat{\boldsymbol{H}}\stackrel{{\scriptscriptstyle QWP}}{\longrightarrow}\hat{\boldsymbol{R}}=e^{i\frac{\pi}{4}}\frac{\left(\hat{\boldsymbol{H}}-i\hat{\boldsymbol{V}}\right)}{\sqrt{2}}\qquad\hat{\boldsymbol{V}}\stackrel{{\scriptscriptstyle QWP}}{\longrightarrow}\hat{\boldsymbol{L}}=e^{-i\frac{\pi}{4}}\frac{\left(\hat{\boldsymbol{H}}+i\hat{\boldsymbol{V}}\right)}{\sqrt{2}},
\end{equation}
$\hat{\boldsymbol{R}}$ and $\hat{\boldsymbol{L}}$ are complex unity
vectors for right and left polarization, respectively.

For both types, the integer $m$ defines the order of the wave plate.
For applications with classical laser light with a big coherence length
such as the one discussed in the next chapter , the order of the wave
plates used does not degrade the interference between ordinary and
extraordinary light. In the second part of this thesis, however, the
experiments are made with down converted photons which have a short
coherence length comparable with the possible optical path difference
inside the wave plate. In that case, it is desirable to work with
zero order plates $(m=0)$.

For this part of the thesis, we only consider the special angles mentioned
above. In Sec. \ref{sec:Projective-measurements}, we show that any
polarization projection can be performed by a pair QWP-HWP. In Sec.
\ref{sec:Unitary-transformations} we also show that any unitary transformation
in polarization can be realized by a set HWP-QWP-HWP.

\section{Paraxial approximation\label{sec:Paraxial-approximation}}

The electric field of an electromagnetic wave propagating in vacuum
satisfies the wave equation
\begin{equation}
\nabla^{2}\boldsymbol{E}(t,\boldsymbol{x})-\frac{1}{c^{2}}\frac{\partial^{2}}{\partial t^{2}}\boldsymbol{E}(t,\boldsymbol{x})=0,
\end{equation}
where $c$ is the light velocity. The simplest solution to this equation
is a plane wave $E_{0}\exp[i\omega_{\mathbf{k}}t-i\mathbf{k}\cdot\boldsymbol{x}]$
with frequency $\omega$ and wave vector $\mathbf{k}$ satisfying
$\omega_{\mathbf{k}}=c|\mathbf{k}|=c\,k$. A plane wave itself does
not represent a physical field since it is spread all over space and
time, but the set of plane wave functions is a complete set of solutions
of the wave equation such that
\[
\boldsymbol{E}(t,\boldsymbol{x})=\int d^{3}k\:\tilde{\boldsymbol{A}}(\boldsymbol{k})e^{i\omega_{\boldsymbol{k}}t-i\boldsymbol{k\cdot x}}
\]
is the most general solution possible, $\tilde{\boldsymbol{A}}(\boldsymbol{k})$
is a complex vector perpendicular to the wave vector $\mathbf{k}$.
We are interested in describing the monocromatic collimated light
beam emitted by a laser. This solution has the property of being well
localized in space, it has a well defined propagation direction and
it does not diverges much during the propagation. This features allows
for the so called paraxial approximation which leads to the Helmholtz
equation, as we describe next, whose solutions are well known. 

A monocromatic paraxial wave propagating in the $z$ direction is
a solution composed by plane waves with frequency $\omega$ whose
transverse wave vector components are much smaller than the component
in the direction of propagation, i.e. $k_{x},k_{y}<<k$ and $k_{z}\sim k$
or writing $k_{z}=k-\Delta k$
\begin{equation}
\Delta k=k-\left(k^{2}-k_{y}^{2}-k_{z}^{2}\right)^{\frac{1}{2}}<<k=\frac{2\pi}{\lambda}.\label{eq:deltaz-1}
\end{equation}
The electric field can be rewritten as 
\begin{equation}
\boldsymbol{E}(t,\boldsymbol{x})=\left[\int d^{2}k\:\tilde{\boldsymbol{A}}(\boldsymbol{k})e^{-i\left(k_{x}x+k_{y}y+\Delta kz\right)}\right]e^{i\omega t-ikz}=\boldsymbol{A}(\boldsymbol{x})e^{i\omega t-ikz},\label{eq:solpar}
\end{equation}
that is, a plane wave propagating in the $z$ direction with envelope
$\boldsymbol{A}(\boldsymbol{x})$. Because of the paraxial condition
(\ref{eq:deltaz-1}), the envelope varies slowly with the propagation
distance. The substitution of solution (\ref{eq:solpar}) into the
wave equation shows that the envelope satisfies
\begin{equation}
\nabla_{t}^{2}\boldsymbol{A}(\boldsymbol{x})-2ik\frac{\partial\boldsymbol{A}(\boldsymbol{x})}{\partial z}+\frac{\partial^{2}\boldsymbol{A}(\boldsymbol{x})}{\partial z^{2}}=0,\label{eq:kkk}
\end{equation}
$\nabla_{t}^{2}$ stands for the Laplacian operator in the transverse
coordinates $x$ and $y$. Considering the slow variation with the
propagation distance, the Taylor series of the envelope
\[
\boldsymbol{A}(x,y,z+\Delta z)=\boldsymbol{A}(\boldsymbol{x})+\frac{\partial\boldsymbol{A}(\boldsymbol{x})}{\partial z}\Delta z+\frac{1}{2}\frac{\partial^{2}\boldsymbol{A}(\boldsymbol{x})}{\partial z^{2}}\Delta z^{2}+...
\]
can be approximated by the first order expansion for $\Delta z\sim\lambda$,
regarding that 
\[
\frac{1}{2}\left|\frac{\partial^{2}\boldsymbol{A}(\boldsymbol{x})}{\partial z^{2}}\right|\lambda^{2}<<\left|\frac{\partial\boldsymbol{A}(\boldsymbol{x})}{\partial z}\right|\lambda,
\]
 or equivalently
\[
\left|\frac{\partial^{2}\boldsymbol{A}(\boldsymbol{x})}{\partial z^{2}}\right|<<2k\left|\frac{\partial\boldsymbol{A}(\boldsymbol{x})}{\partial z}\right|.
\]
 Applying this approximation to Eq. (\ref{eq:kkk}) give us the paraxial
Helmholtz equation 
\begin{equation}
\nabla_{t}^{2}\boldsymbol{A}(\boldsymbol{x})+2ik\frac{\partial\boldsymbol{A}(\boldsymbol{x})}{\partial z}=0
\end{equation}
satisfied by the envelope field in the paraxial condition. This equation
is completely analogous to the free Schrödinger equation for a unit-mass
particle if we make the associations $z\rightarrow t$ and $k\rightarrow\nicefrac{1}{\hbar}$.
In this analogy, the complex amplitude of the field is associated
to the particle wavefunction, and the beam intensity distribution
to the probability density of detecting the particle. This analogy
has been widely used in many experiments emulating quantum systems
using light \cite{Tasca11,lemos2012}. 

The Hermite-Gauss functions form a particular complete set of solutions
to the Helmholtz equation for each component of the vector $\boldsymbol{A}(\boldsymbol{x})$
{\cite{saleh1991}}. The output of the monomode
laser we use in our experiments is a Hermite-Gauss function of zero
order expressed as 
\[
A_{0,0}(x,y,z)=A_{0}\frac{w_{0}}{w(z)}\exp\left[\frac{-\rho^{2}}{w^{2}(z)}\right]\exp\left[-ik\frac{\rho^{2}}{2R(z)}+i\zeta(z)\right],
\]
where:
\begin{itemize}
\item $A_{0}$ is a complex constant , $\rho^{2}=x^{2}+y^{2}$ is the distance
to the center of the beam;
\item $w(z)=w_{0}\left[1+\left(\frac{z}{z_{0}}\right)^{2}\right]^{1/2}$
is the beam width in position $z,$ were $z_{0}=\frac{w_{0}^{2}\pi}{\lambda}$
is known as the Rayleigh range;
\item The origin of the $z$ axis is defined such that the beam waist, that
is the position is which the beam width has the smallest value possible
$w_{0}$, is located at $z=0$;
\item $R(z)=z\left[1+\left(\frac{z_{0}}{z}\right)^{2}\right]$ is the radius
of curvature of the wave fronts in position $z$;
\item $\zeta(z)=\mathrm{tg}^{-1}\left(\frac{z}{z_{0}}\right)$ is called
Gouy phase.
\end{itemize}
\textbf{Free space propagation:} Let us consider only one component
of the electric field of a paraxial wave propagating in the $z$ direction
and that this component distribution on the plane $z=z_{1}$ is equal
to a function $f_{1}(x,y)=E(x,y,z_{1})$. We want to relate this initial
transverse profile to the field distribution $f_{2}(x,y)=E(x,y,z_{2})$
in a posterior position $z=z_{2}$ if the electromagnetic wave is
propagating in free space. The paraxial approximation applied to Eq.
(\ref{eq:deltaz-1}) gives
\begin{equation}
\Delta k\approx\frac{k_{x}^{2}+k_{y}^{2}}{2k}.
\end{equation}
Thus, defining the Fourier transform of the initial distribution 
\[
f_{1}(x,y)=\frac{1}{(2\pi)^{2}}\int dk_{x}\int dk_{y}\:F_{1}(k_{x},k_{y})e^{-ik_{x}x-ik_{y}y}
\]
 and defining the propagation distance $d=z_{2}-z_{1}$, the field
distribution in position $z=z_{2}$ is obtained with Eq. (\ref{eq:solpar})
as 
\begin{equation}
f_{2}(x,y)=\frac{e^{-ikd}}{(2\pi)^{2}}\int dk_{x}\int dk_{y}\:F_{1}(k_{x},k_{y})e^{-ik_{x}x-ik_{y}y}e^{-i\frac{k_{x}^{2}+k_{y}^{2}}{2k}d},\label{eq:freeprop}
\end{equation}
but this is the inverse Fourier transform of the product $F_{1}(k_{x},k_{y})e^{-i\frac{k_{x}^{2}+k_{y}^{2}}{2k}d},$
and by the convolution theorem \cite{goodman1996} it can be written
as 
\begin{equation}
f_{2}(x,y)=e^{-ikz}\int dx^{\prime}\int dy^{\prime}\:f_{1}(x^{\prime},y^{\prime})\,h(x-x^{\prime},y-y^{\prime}),\label{eq:freeprop2-1}
\end{equation}
where we defined $h(x,y)=\frac{i}{\lambda d}\exp\left[-ik\frac{x^{2}+y^{2}}{2d}\right]$
as the Fourier transform of the free space transfer function $e^{-i\frac{k_{x}^{2}+k_{y}^{2}}{2k}d}$.

\textbf{Propagation through a thin lens: }Consider a monocromatic
paraxial wave crossing a lens with central width $\Delta z$ made
of a homogeneous isotropic material whose refractive index is $n$.
While inside the lens, the wave vector changes its modulus due to
the change in the refractive index. The lens width and thus the optical
path differ\label{eq:freeprop2}ence acquired by the beam depend on
the transverse position relative to the center of the lens. If $f_{1}(x,y)=E(x,y,z_{1})$
is the electric field amplitude in the plane immediately before the
lens, the net effect in the electric field amplitude in the plane
immediately after the lens $f_{2}(x,y)=E(x,y,z_{1}+\Delta z)$ is
a quadratic phase 
\begin{equation}
f_{2}(x,y)=e^{-ink\Delta z}\exp\left[ik\frac{x^{2}+y^{2}}{2f}\right],\label{eq:lens}
\end{equation}
where $f$ is the focal distance of the lens which relates to the
refractive index of the lens material and to the curvature of the
lens depending on its exact shape. This expression is obtained under
some assumptions: that the incident beam is narrow compared to the
lens curvature, and that the lens is thin and the incidence is almost
normal, such that direction changes in the wave vector are neglected.
The last condition is valid for all wave vectors composing the beam
and is equivalent to requiring the beam to be paraxial. 

\section{Fractional Fourier transform and phase space variables\label{sec:Fractional-Fourier-transform}}

The fractional Fourier transform is a generalization of the ordinary
Fourier transform and can be defined in the context of rotations in
the phase space of a continuous variables system. As a mathematical
tool it has applications in classical signal processing, quantum physics
and in the solution of differential equations \cite{Ozaktas01}.

Let us consider the adimensional position and momentum operators of
a continuous variables system satisfying the canonical commutation
relation $[\hat{x},\hat{p}]=i$. Let $\ket{x}$ be the eigenstate
of the position operator $\hat{x}$ with eigenvalue $x$. The eigenstates
of the position operator satisfy the completeness relation $\int dx\,\ket{x}\bra{x}=\mathbb{1}$,
such that any pure state $\ket{\psi}$ of the system can be uniquely
represented by its position representation or wavefunction $\psi(x)=\langle x|\psi\rangle$
as 
\begin{equation}
\ket{\psi}=\int dx\:\psi(x)\ket{x}.
\end{equation}
The same statements are valid for the momentum eigenstates $\{\ket{p}\}$,
being $\tilde{\psi}(p)=\braket{p}{\psi}$. Because of the commutation
relation between position and momentum, the scalar product of eigenstates
of the two operators is given by $\braket{p}{x}=\frac{e^{ipx}}{\sqrt{2\pi}}$
and thus 
\begin{equation}
\tilde{\psi}(p)=\int dx\,\braket{p}{x}\braket{x}{\psi}=\frac{1}{\sqrt{2\pi}}\int dx\,e^{-ipx}\psi(x),
\end{equation}
that is, momentum and position representations are connected via a
Fourier transform (FT). Accordingly, the position representation is
obtained as the inverse Fourier transform of the momentum wavefunction.

Analogously, we can consider the eigenstates $\{\ket{q_{\theta}}\}$
of any operator defined as 
\begin{equation}
\hat{q}_{\theta}=\cos\theta\,\hat{x}+\sin\theta\,\hat{p}.\label{eq:qtheta}
\end{equation}
The operator $\hat{q}_{\theta}$ can be seen as a rotated position
in phase space, a two dimensional space which orthogonal axis are
defined by the position and momentum values. The eigenstates of $\hat{q}_{\theta}$
satisfy \cite{Ozaktas01}
\begin{equation}
\braket{q_{\theta}}{x}=\mathcal{F}^{\theta}=\sqrt{\frac{1-i\cot\theta}{2\pi}}e^{i\frac{\cot\theta}{2}\left(x^{2}+q_{\theta}^{2}\right)-ixq_{\theta}\csc\theta},\label{eq:kernel}
\end{equation}
they form a basis for the space of states and satisfy $\int dq_{\theta}\ket{q_{\theta}}\bra{q_{\theta}}=\mathbb{1}$
such that any state can be written as 
\begin{equation}
\ket{\psi}=\int dq_{\theta}\:\bar{\psi}(q_{\theta})\ket{q_{\theta}}.
\end{equation}
The $q_{\theta}$ representation and the position representation are
related through the integral transformation
\begin{equation}
\bar{\psi}(q_{\theta})=\braket{q_{\theta}}{\psi}=\int dx\,\psi(x)\bra{q_{\theta}}\ket{x}\label{eq:FrFTrep}
\end{equation}
with kernel given by (\ref{eq:kernel}). This transformation is called
fractional Fourier transform (FrFT). The FrFT is linear and additive,
meaning that realizing two subsequent transformations with angles
$\theta_{1}$ and $\theta_{2}$ is equivalent to realizing only one
transformation with angle $\theta_{1}+\theta_{2}$. In the particular
case of $\theta=\pi/2$, it is possible to see that the FrFT consistently
reduces to the ordinary FT. Analogously, any rotation in phase space
from $q_{\theta}$ to $q_{\theta^{\prime}}$ representations is obtained
through a FrFT with kernel $\mathcal{F}^{\Delta\theta}$, with $\Delta\theta=\theta^{\prime}-\theta$.

The FrFT has many optical realizations and even one of its first treatments
came in the context of classical optics \cite{Ozaktas93,Lohmann93}.
We show next the methods we use in our experiments to realize FTs
and FrFT. Although the FT is contained in the FrFT, we present the
two transformations separately as we use a simpler method to realize
FTs when position and momentum are the only phase space variables
needed.

\subsection{Optical Fourier transform\label{subsec:Optical-Fourier-transform}}

As was mentioned before, the transverse electric field distribution
of a paraxial wave is analogous to the wavefunction of a quantum particle
in position representation. In the same way, the Fourier transform
of the field distribution can be regarded as the analogous to the
momentum representation of the state. The Fourier transform of the
transverse field distribution can be obtained experimentally by the
apparatus shown in Fig. \ref{fig:Configura=0000E7=0000E3o-utilizando-uma}
as is described in what follows.

Let us consider again a monocromatic paraxial beam with wavelength
$\lambda$, propagating in the $z$ direction, which is perpendicular
to a lens of focal distance $f$ and width $\Delta$, located one
focal distance apart from the plane $z=0$. In a plane $A$ located
in the origin of the $z$ axis, a component of the electric field
is described by a function $f(x,y)=E(x,y,0)$ with Fourier transform
$F(k_{x},k_{y})$. We are interested in relating $f(x,y)$ and the
field distribution $g(x,y)=E(x,y,2f+\Delta)$ after the beam has propagated
in free space by a distance $f$, traversed the lens and then again
freely propagated by another focal distance.

\begin{figure}[h]
\noindent \begin{centering}
\includegraphics[width=0.6\textwidth]{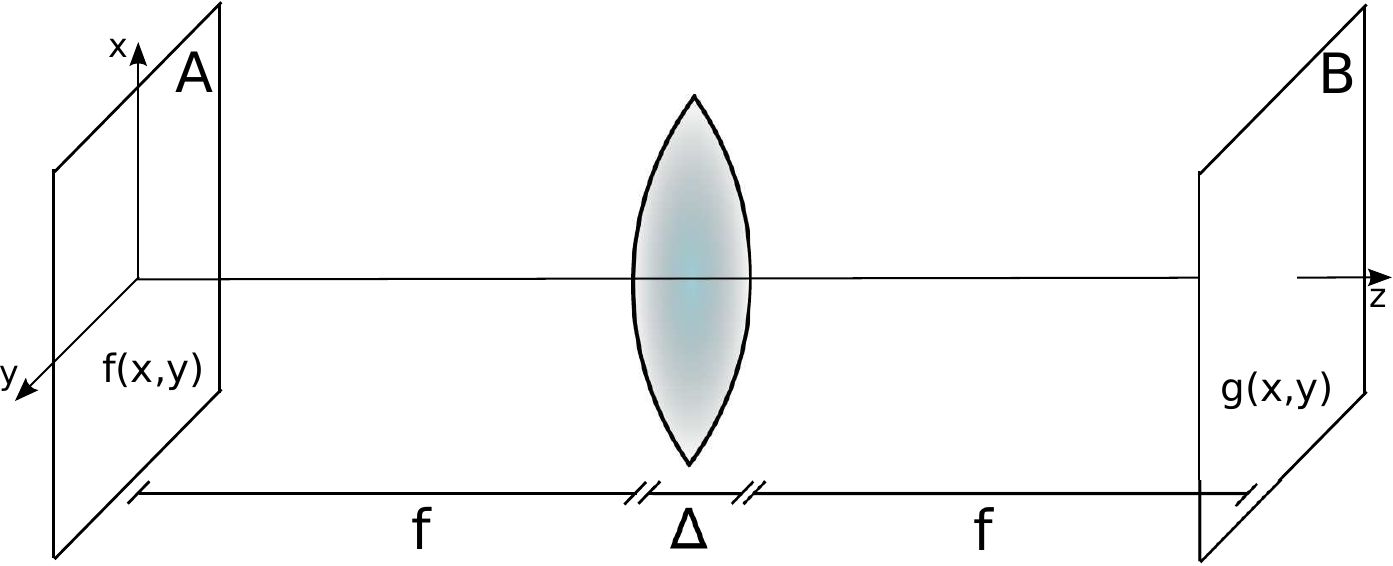}
\par\end{centering}
\caption{Configuration using a lens to implement the optical Fourier transform
between the two focal planes.\label{fig:Configura=0000E7=0000E3o-utilizando-uma}}
\end{figure}

After the first free space propagation, according to Eq. (\ref{eq:freeprop}),
the field distribution becomes
\begin{equation}
E(x,y,\mathsf{f})=\frac{e^{-ik_{0}\mathsf{f}}}{(2\pi)^{2}}\int dk_{x}\int dk_{y}\:F(k_{x},k_{y})e^{i\mathsf{f}\frac{k_{x}^{2}+k_{y}^{2}}{2k_{0}}}e^{-ik_{x}x-ik_{y}y}.
\end{equation}
The wave then crosses the lens acquiring a quadratic phase (Eq. (\ref{eq:lens}))
and becoming

\begin{eqnarray}
E(x,y,\mathsf{f}+\Delta) & = & e^{-ik_{0}\Delta}e^{ik_{0}\frac{x^{2}+y^{2}}{2\mathsf{f}}}E(x,y,\mathsf{f})\label{eq:U_poslente}\\
 & = & \frac{e^{-ik_{0}(\mathsf{f}+\Delta)}}{(2\pi)^{2}}e^{ik_{0}\frac{x^{2}+y^{2}}{2\mathsf{f}}}\int dk_{x}\int dk_{y}\:F(k_{x},k_{y})e^{i\mathsf{f}\frac{k_{x}^{2}+k_{y}^{2}}{2k_{0}}}e^{-ik_{x}x-ik_{y}y}.\nonumber 
\end{eqnarray}
It is now easier to directly use Eq. (\ref{eq:freeprop2-1}) rather
than the Fourier propagation to obtain the field distribution in the
focal plane $B$ after the last free space propagation as\footnote{\noindent Plugging Eq. (\ref{eq:U_poslente}) into Eq. (\ref{eq:freeprop2-1})
gives\vspace{-0.5cm}
 
\begin{multline*}
g(x,y)=\frac{i}{\lambda_{0}\mathsf{f}}\frac{e^{-ik_{0}(2\mathsf{f}+\Delta)}}{(2\pi)^{2}}\int dx'\int dy'\int dk_{x}\int dk_{y}\:e^{\frac{-ik_{0}}{2\mathsf{f}}\left((x-x')^{2}+(y-y')^{2}\right)}e^{\frac{ik_{0}}{2\mathsf{f}}\left(x'^{2}+y'{}^{2}\right)}\\
e^{\frac{i\mathsf{f}}{2k_{0}}\left(k_{x}^{2}+k_{y}^{2}\right)}e^{-i(k_{x}x'+k_{y}y')}F(k_{x},k_{y})
\end{multline*}
\vspace{-1cm}
\begin{multline*}
g(x,y)=\frac{i}{\lambda_{0}\mathsf{f}}\frac{e^{-ik(2\mathsf{f}+\Delta)}}{(2\pi)^{2}}e^{-\frac{ik_{0}}{2\mathsf{f}}\left(x^{2}+y{}^{2}\right)}\int dk_{x}\int dk_{y}\:\left[\int dx'e^{i\left(\frac{k_{0}x}{\mathsf{f}}-k_{x}\right)x'}\right]\left[\int dy'\,e^{i\left(\frac{k_{0}y}{\mathsf{f}}-k_{y}\right)y'}\right]\\
e^{\frac{i\mathsf{f}}{2k_{0}}\left(k_{x}^{2}+k_{y}^{2}\right)}F(k_{x},k_{y}).
\end{multline*}
The lens width can be neglected in comparison with the focal distance
and the position integrals are identified as Dirac delta functions
since $\delta(k)=\frac{1}{2\pi}\int dx\,e^{ikx}$, giving\vspace{-0.5cm}
 
\begin{multline*}
g(x,y)=\frac{i}{\lambda_{0}\mathsf{f}}e^{-ik_{0}2\mathsf{f}}e^{-\frac{ik_{0}}{2\mathsf{f}}\left(x^{2}+y{}^{2}\right)}\int dk_{x}\int dk_{y}\:\delta\left(\frac{k_{0}x}{\mathsf{f}}-k_{x}\right)\,\delta\left(\frac{k_{0}y}{\mathsf{f}}-k_{y}\right)e^{\frac{i\mathsf{f}}{2k_{0}}\left(k_{x}^{2}+k_{y}^{2}\right)}F(k_{x},k_{y}),
\end{multline*}
which leads to Eq. (\ref{eq:resultTF}) after integration.}
\begin{equation}
g(x,y)=\frac{ie^{-2ik_{0}\mathsf{f}}}{\lambda_{0}\mathsf{f}}F\left(\frac{k_{0}x}{\mathsf{f}},\frac{k_{0}y}{\mathsf{f}}\right),\label{eq:resultTF}
\end{equation}
that is , the wave amplitude in the position $(x,y)$ on plane $B$
is proportional to the Fourier transform of the initial amplitude,
evaluated for the transverse wave vector $(k_{x}=\frac{k_{0}x}{\mathsf{f}},k_{y}=\frac{k_{0}y}{\mathsf{f}})$.
Thus, we can regard the first and second focal planes as the position
and momentum spaces, respectively.

\begin{figure}[h]
\noindent \begin{centering}
\includegraphics[width=0.9\textwidth]{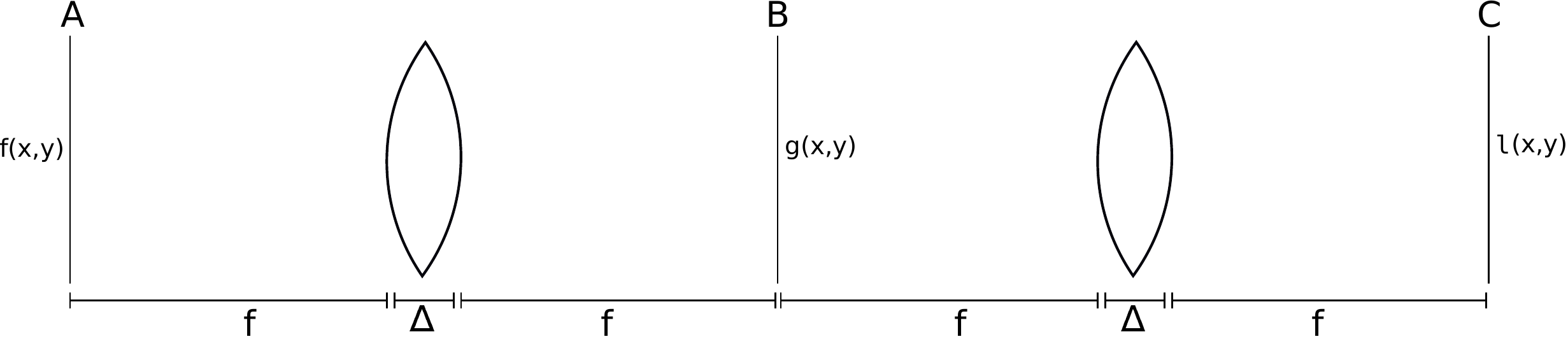}
\par\end{centering}
\caption{The optical Fourier transform scheme duplicated.\label{fig:TFinv}}
\end{figure}

The ``inverse'' Fourier transform is obtained replicating the lens
system as shown in Fig. \ref{fig:TFinv}. As the Fourier transform
of $g(x,y)$ is given by 
\begin{eqnarray}
G(k_{x},k_{y}) & = & \int dx\int dy\:g(x,y)e^{ik_{x}x+ik_{y}y}\nonumber \\
 & = & ie^{-2ik_{0}\mathsf{f}}\mathsf{f}\lambda_{0}f\left(-\frac{\mathsf{f}}{k_{0}}k_{x},-\frac{\mathsf{f}}{k_{0}}k_{y}\right),
\end{eqnarray}
then the field amplitude in plane $C$ is the inverted image of plane
$A$ 
\begin{equation}
l(x,y)=e^{-2ik_{0}\mathsf{f}}f(-x,-y),
\end{equation}
what allow us to say that in plane $C$ we have again the position
space although it is inverted.

\subsection{Optical fractional Fourier transform}

\begin{figure}[h]
\noindent \begin{centering}
\includegraphics[width=0.6\columnwidth]{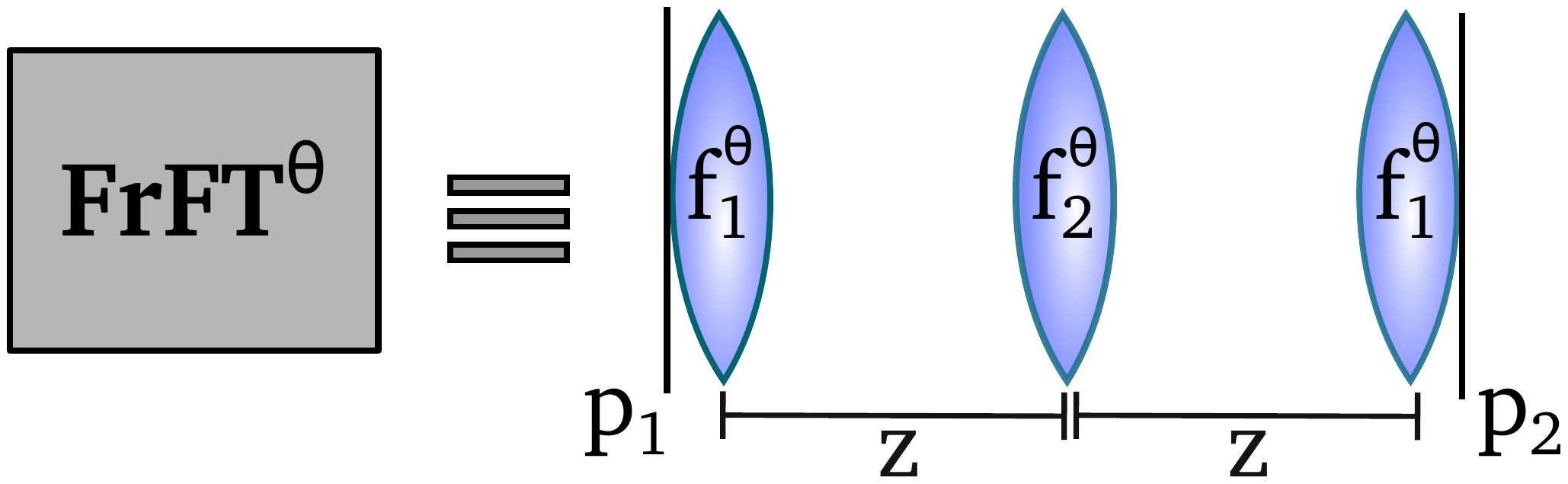}
\par\end{centering}
\caption{Three lenses scheme for optical FrFT: the field distribution in the
transverse plane $p_{2}$ is equals to the FrFT of the field distribution
in the input plane $p_{1}$.\label{fig:Three-lenses-scheme}}

\selectlanguage{american}%
\end{figure}

In what follows it is shown that the three lenses configuration illustrated
in Fig. \ref{fig:Three-lenses-scheme} realizes a FrFT of the transverse
field distribution between the input ($p_{1}$) and output ($p_{2}$)
planes, with any phase-space rotation-angle $\theta$ if the lenses
focal lengths are suitably chosen as 
\begin{eqnarray}
\frac{1}{f_{1}^{\theta}} & = & \frac{1}{z}\left(1-\frac{\cot(\theta/2)}{2}\right)\label{eq:f1}\\
1-\frac{z}{2f_{2}^{\theta}} & = & \sin\theta.\label{eq:f2}
\end{eqnarray}

Let us consider a monochromatic light beam with wavelength $\lambda$
propagating in the $z$ direction perpendicular to the lenses coming
from the left to the right. The spatial electric field distribution
is given in any point of space by $E(x,y,z)$. In the plane $p_{1}$
immediately before the first lens the field is a function $E(x,y,0)=f(x,y)$
of the coordinates on the plane. For simplicity, the lenses are considered
to be cylindrical, such that the curvature in $y$ direction is infinity
and the lens acts only in the $x$ direction. We can forget about
the dependence in coordinate $y$ since the beams we use are separable
in the transverse variables and the $y$-dependent part is only affected
by the free propagation divergence. Also for simplicity, the propagation
phases $e^{ikz}$ are ignored throughout the calculation. So, after
the first lens the field amplitude becomes (Eq. (\ref{eq:lens}))
\begin{equation}
E(x,\Delta_{1})=f(x)\exp\left(\frac{i\pi x^{2}}{\lambda f_{1}^{\theta}}\right),
\end{equation}
where $\Delta_{1}$ and $f_{1}$ are the width and the focal length
of this lens. Then the wave freely propagates by a distance $z$ after
which the field distribution is given by the convolution (Eq. (\ref{eq:freeprop2-1}))
\begin{equation}
E(x,\Delta_{1}+z)=\frac{1}{\lambda z}\int dx'\:f(x')\exp\left(\frac{i\pi x'^{2}}{\lambda f_{1}^{\theta}}\right)\exp\left[-i\pi\frac{(x-x')^{2}}{\lambda z}\right].
\end{equation}

Another quadratic phase is imprinted by the second lens whose width
is $\Delta_{2}$and whose focal length is $f_{2}$. The field becomes
\begin{align*}
E(x,\Delta_{1}+z+\Delta_{2})= & \frac{1}{\lambda z}\int dx'\:f(x')\exp\left(\frac{i\pi x'^{2}}{\lambda f_{1}^{\theta}}\right)\\
 & \exp\left(\frac{i\pi x{}^{2}}{\lambda f_{2}^{\theta}}\right)\exp\left[-i\pi\frac{(x-x')^{2}}{\lambda z}\right]
\end{align*}

After the second free space propagation, using again the convolution
(\ref{eq:freeprop2-1}), the field amplitude transforms to
\begin{align*}
E(x,\Delta_{1}+2z+\Delta_{2})= & \frac{1}{(\lambda z)^{2}}\int dx''\int dx'\:f(x')\exp\left(\frac{i\pi x'^{2}}{\lambda f_{1}^{\theta}}\right)\\
 & \exp\left(\frac{i\pi x''^{2}}{\lambda f_{2}^{\theta}}\right)\exp\left[-i\pi\frac{(x''-x')^{2}}{\lambda z}\right]\exp\left[-i\pi\frac{(x-x'')^{2}}{\lambda z}\right]
\end{align*}

Let us take a look at the integral in the variable $x''$
\begin{equation}
\int dx''\exp\left[\frac{i\pi x''^{2}}{\lambda f_{2}^{\theta}}-\frac{2i\pi x''^{2}}{\lambda z}+\frac{2i\pi x''(x+x')}{\lambda z}\right].
\end{equation}
Substituting (\ref{eq:f2}) and solving the Gaussian integral it becomes
\begin{equation}
\int dx''\exp\left[-\frac{2i\pi\sin\theta x''^{2}}{\lambda z}+\frac{2i\pi x''(x+x')}{\lambda z}\right]=\sqrt{\frac{\lambda z}{2i\sin\theta}}\exp\left[\frac{i\pi(x'+x)^{2}}{2\lambda z\sin\theta}\right].\label{eq:interm}
\end{equation}

The field amplitude is then given by
\begin{equation}
E(x,\Delta_{1}+2z+\Delta_{2})=\frac{1}{(\lambda z)^{2}}\sqrt{\frac{\lambda z}{2i\sin\theta}}\int dx'\:f(x',y')\exp\left(\frac{i\pi x'^{2}}{\lambda f_{1}^{\theta}}+\frac{i\pi(x'+x)^{2}}{2\lambda z\sin\theta}-\frac{i\pi x'^{2}}{\lambda z}-\frac{i\pi x{}^{2}}{\lambda z}\right),
\end{equation}
Finally, using (\ref{eq:f1}) and the trigonometric relation $\cot(\theta/2)=\frac{1}{\sin\theta}+\cot\theta$
all the undesired phases in $x'$ are canceled out. The remaining
spurious phases in $x$ are removed by the last lens. The field at
the output plane is given by
\begin{equation}
E(x,2\Delta_{1}+2z+\Delta_{2})=\frac{1}{(\lambda z)^{2}}\sqrt{\frac{\lambda z}{2i\sin\theta}}\int dx'\:f(x',y')\exp\left\{ \frac{i\pi\,}{2\lambda z}\left[\cot\theta\,\left(x'^{2}+x^{2}\right)-2xx'\csc\theta\right]\right\} ,
\end{equation}
that is precisely the FrFT of angle $\theta$ of the field amplitude
in the input plane $f(x)$. The scaling factor $\sqrt{2\lambda z}$
is common to both phase space variables and das not depend on the
FrFT order. If there are no further transformations after the second
free propagation, then the last lens is not required, since it just
corrects the phase.

This scheme was proposed in Ref. \cite{Rodrigo09} and is not the
only way to perform a optical FrFT. For example, a simpler scheme
using only one lens could be used \cite{Ozaktas93}, but then the
free propagation distances must be changed for each phase space direction
one wants to access. The three-lenses scheme allows one to keep the
free space propagation distances fixed. That is, the lenses can have
fixed positions, if the focal lengths are changed accordingly. As
we show in the next section, since the a lens effect is to imprint
a position dependent phase to the wave front, a spatial light modulator
can be used to mimic a lens of any focal length. Thus the three-lenses
scheme provides a method to realize optical FrFT in a programmable
way.

\selectlanguage{american}%

\selectlanguage{english}%

\section{Spatial Light Modulator}

\textit{Spatial light modulator} (SLM) is a common term used to identify
devices that modulate the phase, the amplitude, or the polarization
of a light beam by means of diverse physical phenomena like acusto-optic
and electro-optic effects, and liquid crystal anisotropy \cite{saleh1991}.
In this section the basic operation of a liquid crystal SLM is explained.

Basically, liquid crystals (LC) are materials which are in a fluid
phase as liquids, being able to adapt their shapes to the recipient,
at the same time that they present anisotropic features like crystals
\cite{grabmaier1975}. Typically, materials which present a LC phase
are composed by elongated molecules. The anisotropy then comes from
the alignment of the long molecule axis in a preferred direction.
When the molecules are in average all aligned in the same direction,
but their centers are randomly distributed, the LC has only one anisitropic
axis. In this case, the LC is called nematic and it behaves like a
uniaxial crystal with optic axis in the same direction as the molecule
orientation.

\begin{figure}[h]
\noindent \begin{centering}
\subfloat[\foreignlanguage{american}{}]{\includegraphics[width=0.4\textwidth]{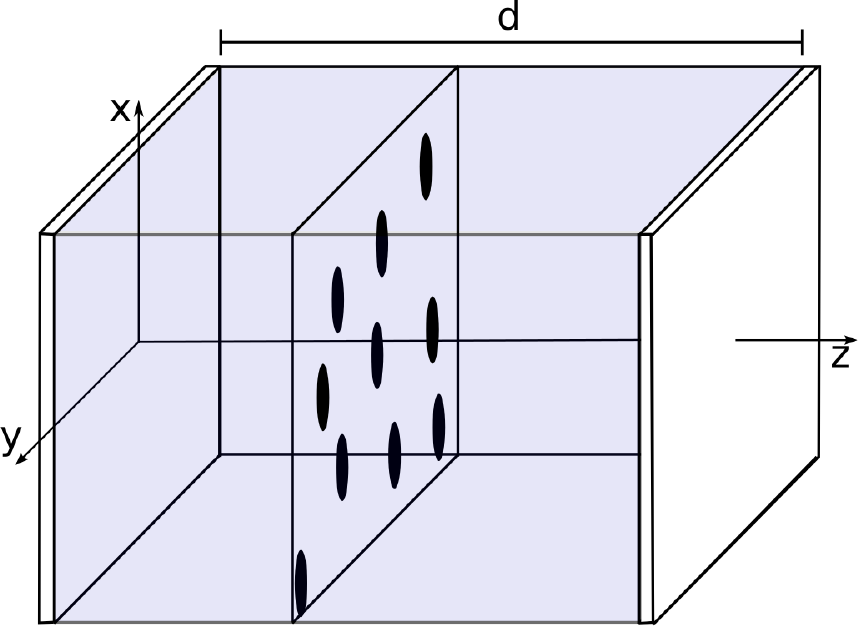}

}\qquad{}\subfloat[\foreignlanguage{american}{}]{\includegraphics[width=0.4\textwidth]{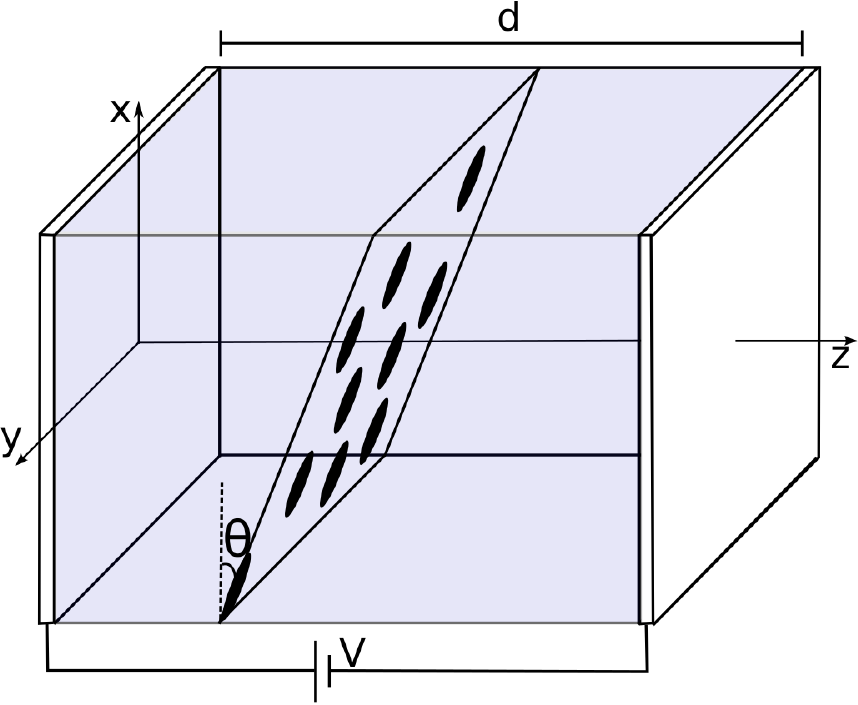}

}
\par\end{centering}
\caption{(a) Nematic liquid crystal cell with width $d$, enclosed by two glass
plates which keep the LC molecules aligned with the $x$ direction.
(b) If a potential difference $V$ is applied through the cell in
the $z$ direction, it generates a dipole force that tilts the molecules
by an angle $\theta$ in the stead position. \label{fig:(a)-C=0000E9lula-de}}
\end{figure}

Let us consider a cell of a nematic LC with thickness $d$ and a square
face as shown in Fig. \ref{fig:(a)-C=0000E9lula-de}-a). Because of
the interaction with the two glass plates involving the LC material,
the molecules tend to align with the $x$ direction. A transverse
field can be applied with a electric potential difference $V$ across
the cell, generating a dipole force in the $z$ direction which cause
the molecules to rotate until they reach a steady position at angle
$\theta$ relative to the $x$ axis (Fig. \ref{fig:(a)-C=0000E9lula-de}-b)).
The inclination of the molecules is given as a function of the potential
difference by\cite{saleh1991}
\begin{equation}
\theta=\begin{cases}
0, & V<V_{c}\\
\frac{\pi}{2}-2\textrm{tg}^{-1}\left[\textrm{exp}\left(-\frac{V-V_{c}}{V_{0}}\right)\right], & V>V_{c}
\end{cases},
\end{equation}
where $V_{0}$ is a constant characterizing the material, and $V_{c}$
is the critical potential above which the molecules start tilting.

The direction of the molecules long axis is the extraordinary direction
of the crystal with index of refraction $n_{e}$, while the two perpendicular
directions have refractive index $n_{\vartheta}$. When an electromagnetic
wave propagates inside the LC cell in the $z$ direction, it feels
the ordinary refractive index $n_{\vartheta}$ if it is $y$-polarized,
while the refractive index will be

\begin{equation}
\frac{1}{n(\theta)^{2}}=\frac{\cos^{2}\theta}{n_{\vartheta}^{2}}+\frac{\textrm{sen}^{2}\theta}{n_{e}^{2}}
\end{equation}
for $x$-polarized waves (App. \ref{chap:Anexo2}). Thus, the optical
path of the $x$-polarized component inside the LC cell can be manipulated
by changing the applied voltage. Analogously to Eq. (\ref{eq:camp}),
this optical path difference results in a phase difference between
the two orthogonal polarizations given by
\begin{equation}
\phi(\theta)=2\pi\frac{n_{2}-n(\theta)}{\lambda_{0}}d,
\end{equation}
that is, an electrically-controlled phase modulation occurs.

\begin{figure}[h]
\noindent \centering{}\includegraphics[width=0.6\textwidth]{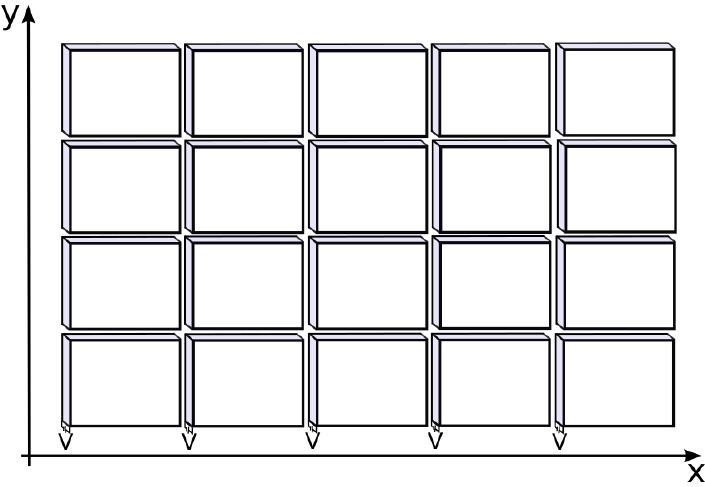}\caption{Many LC identical cells disposed side by side composing a SLM screen.
Each cell's voltage is individually controlled such that a position
dependent phase is imprinted to the wave front of the $x$-polarized
component of a electromagnetic wave.\label{fig:modulador}}
\end{figure}

A SLM is a screen composed of thousands of microscopic LC cells arranged
side by side as shown in Fig. \ref{fig:modulador}. The voltage across
each cell is individually electronic controlled. So it is possible
to program the SLM to apply any position depend phase $f(x,y)$ by
changing the voltages as to make $2\pi\frac{n_{2}-n(\theta(x,y))}{\lambda_{0}}d=f(x,y)$,
where $\theta(x,y)$ is the inclination of the molecules inside the
cell located at position $(x,y)$ on the screen plane. The refractive
index in the $y$ direction is fixed and the phase aquired by the
$x$-polarized component is relative between the two components of
polarization. The action of a SLM in the $\{\hat{x},\hat{y}\}$ basis
is given by the position dependent operator 
\begin{equation}
\hat{P}\left[f(x,y)\right]=\left[\begin{array}{cc}
e^{if(x,y)} & 0\\
0 & 1
\end{array}\right],\label{eq:slm}
\end{equation}
disregarding the global phase. Of course the applied phase is not
the spatially continuous function $f(x,y)$ but a discretized version
of it, whose resolution is given by the size of the LC cells.

In our experiments, we use Holoeye SLMs, which have the long molecule
axis is in the horizontal and thus the horizontal polarization is
modulated. The resolution of these SLMs is $1920\times1080$ pixels,
with $8\,\mu$m pixel pitch. Therefore, if a phase function varies
considerably in a interval smaller than $8\,\mu$m, then the SLM does
not capture its detailed features. These devices are build on a silicon
matrix in which lies the electronic control parts, thus the light
is not transmitted through the cells but it is reflected passing twice
through the LC{ \cite{johnson1993}}. These
SLMs have a fill factor of $92\%$ meaning that $8\%$ of the total
active area is empty space between neighbor cells. This causes part
of the incident light to scatter by diffraction and approximately
$40\%$ of the light is lost in each SLM use. This kind of SLM works
coupled to a computer as an additional screen. The phase function
is converted to a gray scale image which is projected in the SLM.
Each value of the gray scale number of 8 bits (natural numbers from
0 to 255) is associated to a value in the phase interval usually with
a roughly linear correspondence, which can be adjusted manually for
each SLM. Ideally, if the phase interval is $[0,2\pi]$, then the
color $0$ corresponds to applied phase $0$ and color 255 corresponds
to apply a phase of $2\pi$. For each of the 256 gray scale values,
a voltage is assigned. This operation mode discretize also the values
of phase possible, since only the values of the form $(n-1)\frac{2\pi}{255}$
can be set, with $n=1,2,...,256$. 

\subsection{SLM calibration}

A SLM typically works for a large range of wavelengths imprinting
phases in an adjustable phase interval. To assure that, for a fixed
wavelength ($632.8\,$nm produced by a He-Ne laser in our case), the
right phase interval is reached, a calibration process is needed.
To find out what is the phase imprinted for each gray tone we interfere
a modulated wave with one that has not been applied a phase difference.
As the SLM modulates only the horizontal polarization, we interfere
the incoming horizontal and vertical polarizations in a HWP as is
illustrated in Fig. \ref{fig:Set-up-for}. First of all a horizontally
polarized beam is prepared from a continuous wave laser using a polarizing
beam splitter (PBS), a device that transmits horizontal polarization
while reflecting the vertical polarized component. The beam then passes
through a QWP set to $-45^{\circ}$ which transforms the polarization
to circular
\begin{equation}
E_{0}\hat{\boldsymbol{H}}\stackrel{\small{QWP_{@45^{\circ}}}}{\longrightarrow}E_{0}\frac{\hat{\boldsymbol{H}}+i\hat{\boldsymbol{V}}}{\sqrt{2}},
\end{equation}
$E_{0}$ is the field amplitude after the PBS. The light is then reflected
by the SLM whose screen uniformly projects the same grayscale which
corresponds to some unknown phase $\theta$. The correspondent transformation
is given by
\begin{equation}
E_{0}\frac{\hat{\boldsymbol{H}}+i\hat{\boldsymbol{V}}}{\sqrt{2}}\stackrel{\small{SLM_{@\theta}}}{\longrightarrow}E_{0}\frac{\hat{\boldsymbol{H}}e^{i\theta}+i\hat{\boldsymbol{V}}}{\sqrt{2}}.\label{eq:calibreslm}
\end{equation}
After that , the beam goes back through the same path passing again
through the QWP which mixes modulated and unmodulated components
\begin{equation}
E_{0}\frac{\hat{\boldsymbol{H}}e^{i\theta}+i\hat{\boldsymbol{V}}}{\sqrt{2}}\stackrel{\small{QWP_{@45}}}{\longrightarrow}E_{0}\frac{\hat{\boldsymbol{H}}\left(e^{i\theta}-1\right)+i\hat{\boldsymbol{V}}\left(e^{i\theta}+1\right)}{2}.\label{eq:calibreslm2}
\end{equation}
Finally, the polarization components of the beam are separated by
the PBS and the vertical component intensity is measured by a power
meter. For a given $\theta$ the measured intensity $I$ is proportional
to the modulus square of the vertical electric field and thus 
\begin{equation}
I\propto\cos^{2}\frac{\theta}{2}\label{eq:power}
\end{equation}
what allows us to determine the phase $\theta$ corresponding to a
given grayscale level. 

\begin{figure}[h]
\noindent \begin{centering}
\includegraphics[width=0.5\textwidth]{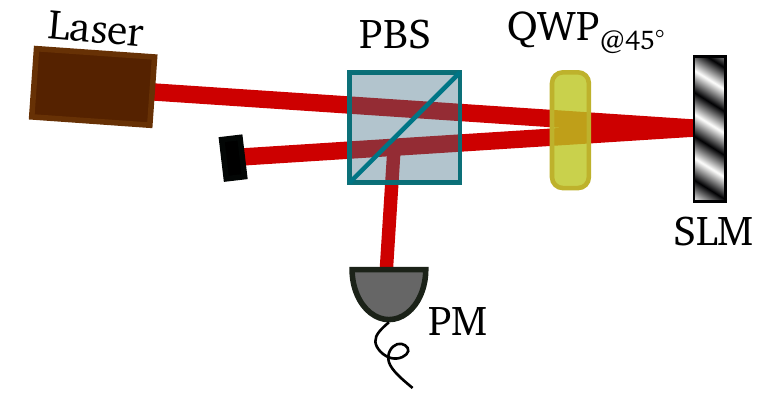}\caption{Set up for an SLM calibration: a light beam at the frequency the SLM
is going to be calibrated for passes through a polarizing beam spliter
(PBS) which lets only the horizontally polarized component be transmitted.
A HWP set to $22.5^{\circ}$ transforms the light to diagonal polarization.
Only the horizontal component is modulated by the SLM. Horizontal
and vertical polarization interfere when passing again through the
HWP and the intensity of vertical polarization component reflected
by the PBS is measured by a power meter (PM).\label{fig:Set-up-for}}
\par\end{centering}
\selectlanguage{american}%
\end{figure}

If the SLM is set to work in the range $[0,2\pi]$ with linear correspondence
between gray scale values and angles, then the measured power as a
function of the color should behave like Eq. (\ref{eq:power}), with
maximums at $0$ and $255$. To calibrate the SLM we vary the grayscale
along all 256 possible values, measuring the power of the vertically
polarized component for each one. When using the factory setting,
we obtain the red dots shown in Fig. \ref{fig:Measured-power-as}-a),
a behavior similar to the expected (blue solid line in the figure)
but with a slightly larger frequency. We use this measurements to
determine a new correspondence function between color and voltage
across the LC cells. After the reconfiguration, the power measurement
returns the correct behavior (Fig. \ref{fig:Measured-power-as}-b)).

\begin{figure}[h]
\noindent \begin{centering}
\includegraphics[width=0.9\textwidth]{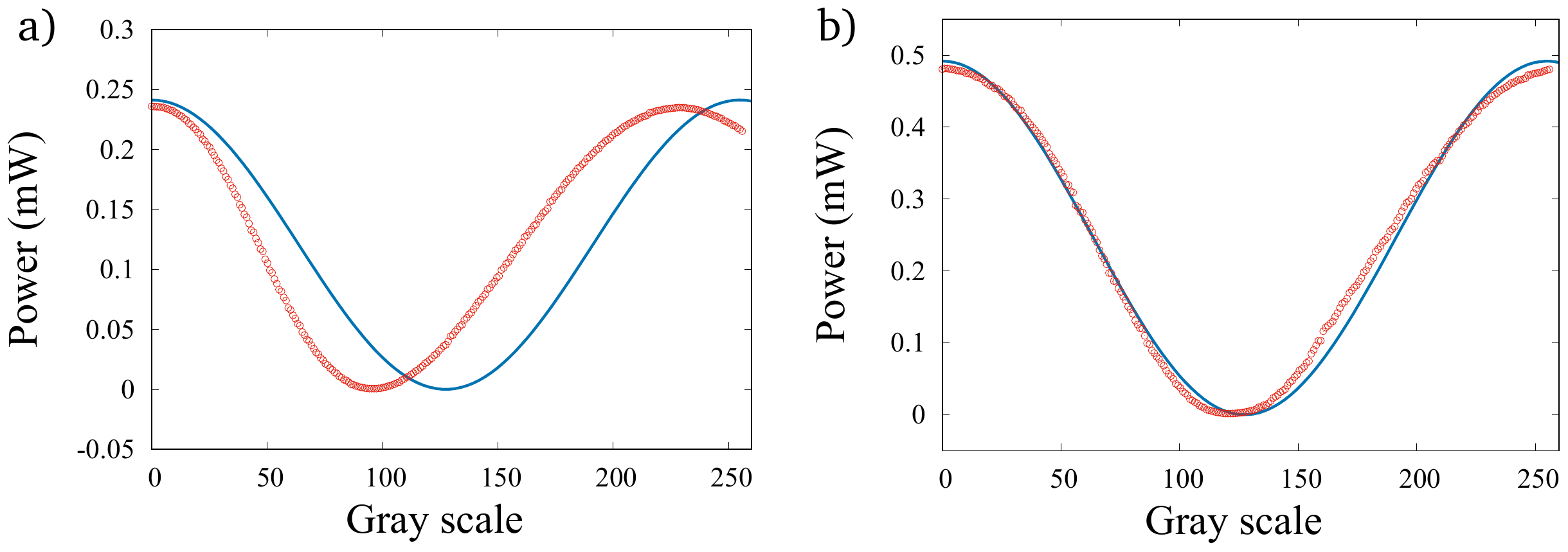}
\par\end{centering}
\caption{Measured power as a function of the gray color projected on the SLM
screen in a calibration process: a) with the factory setting, b) after
adjusting the voltage configuration to wavelength $632.8\,$nm and
phase range $[0,2\pi]$. In both plots the blue solid line is the
expected cosine function and the red dots are measured power values.
\label{fig:Measured-power-as}}
\end{figure}

\subsection{Beam positioning in the SLM plane}

Besides of assuring that the correct phase value is imprinted, a proper
coordinate system on the SLM plane must be defined, what means to
find a physically meaningful origin point since the direction of the
axes is defined by the rectangular array of LC cells. The origin is
defined relatively to the center of the light beam without any previous
spatial alteration\footnote{In the case when the beam hits other SLM screens before the one that
is being calibrated, all the previous ones must be kept with uniform
phase such that the position of the beam is not altered.}. 

\begin{figure}[h]
\noindent \begin{centering}
\includegraphics[width=0.4\columnwidth]{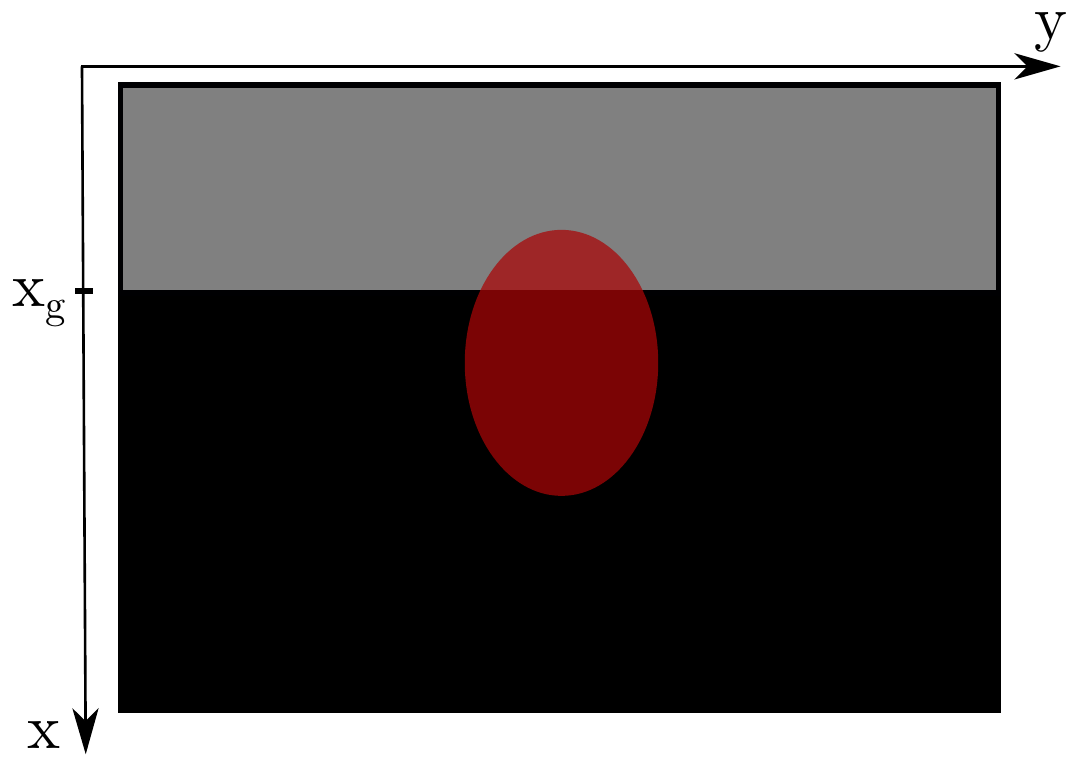}
\par\end{centering}
\caption{SLM screen projecting the image for positioning the beam. Part of
the red light beam gets the phase given by black color and part gets
the phase associated to the gray tone. \label{fig:SLM-screen-projecting}}
\end{figure}

We consider only one coordinate on the plane, let us say the vertical
position, and ignore the dependence on the second one, as it is the
case in our experiments. The laser produces a Gaussian mode with amplitude
proportional to $f(x,y)=\exp\left[-\frac{(x-x_{0})}{2\Delta_{x}^{2}}\right]\exp\left[-\frac{(y-y_{0})}{2\Delta_{y}^{2}}\right]$.
Let us consider a setup as the one used for phase calibration (Fig.
\ref{fig:Set-up-for}) but instead of a uniform grayscale in the entire
screen, the SLM projects an image as shown in Fig. \ref{fig:SLM-screen-projecting}
with the screen divided into two colors. When we wrote the field amplitude
in Eq. (\ref{eq:calibreslm}), we were not concerned with the spatial
distribution since the whole beam was receiving the same phase and
the spatial distribution would be integrated to give the total power.
Now, after the SLM action and after the QWP, instead of Eq. (\ref{eq:calibreslm2}),
we have two phase regions and we can write for the field amplitude
\begin{equation}
\boldsymbol{E}\propto\begin{cases}
f(x,y)\frac{\hat{\boldsymbol{H}}\left(e^{i\theta_{1}}-1\right)+i\hat{\boldsymbol{V}}\left(e^{i\theta_{1}}+1\right)}{2} & x<x_{g}\\
f(x,y)\frac{\hat{\boldsymbol{H}}\left(e^{i\theta_{2}}-1\right)+i\hat{\boldsymbol{V}}\left(e^{i\theta_{2}}+1\right)}{2} & x\geq x_{g}
\end{cases},
\end{equation}
$\theta_{1}$ and $\theta_{2}$ are the phases of each color region.
Consequently, the total power measured when selecting only vertical
polarization is proportional to 
\begin{equation}
\int_{-\infty}^{\infty}\,dy\left[\int_{-\infty}^{x_{g}}\,dx\:f(x,y)^{2}\,\cos^{2}\frac{\theta_{1}}{2}+\int_{x_{g}}^{\infty}\,dx\:f(x,y)^{2}\,\cos^{2}\frac{\theta_{2}}{2}\right].
\end{equation}
If the colors are chosen such that $\theta_{1}=\pi$ and $\theta_{2}=0$,
because the beam is Gaussian, the power will be a displaced complementary
error function of $x_{g}$ $\textrm{erfc}(x_{g}-x_{c})$ which center
$x_{c}$ coincides with the center of the Gaussian beam $x_{0}$.
So, by varying the position of the gray $\pi$-phase stripe from pixel
0 to the $1080$th pixel we can reconstruct the complementary error
function and determine the central position of the beam on the SLM
as shown in Fig. \ref{fig:Measured-power-in}. In general, we cannot
use exactly the setup of Fig. \ref{fig:Set-up-for} because the beam
positioning must be done with the whole experimental setup for the
actual experiment mounted, but any configuration able to interfere
horizontal and vertical polarization after the SLM action will behave
as a error-like function when the screen is scanned by the $\pi$-phase
region.

\begin{figure}[h]
\noindent \begin{centering}
\includegraphics[width=0.5\columnwidth]{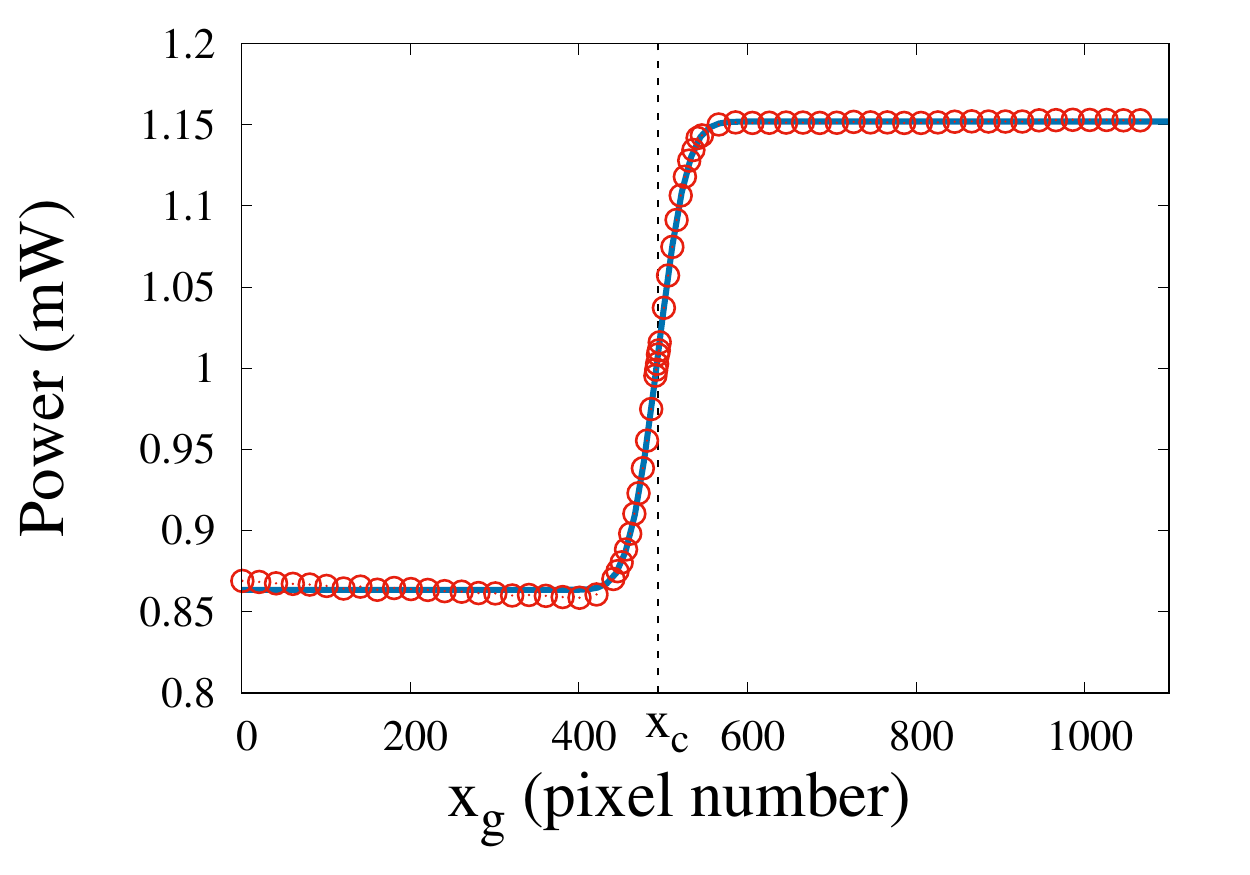}
\par\end{centering}
\caption{Power of vertical polarization component as a function of the end
position of the gray stripe. The dots are the measured values and
the blue line is the fitted complementary error function. \label{fig:Measured-power-in}}

\selectlanguage{american}%
\end{figure}

\subsection{Amplitude masks from phase modulation \label{subsec:Amplitude-masks-from}}

It is possible to produce amplitude masks from phase modulation by
using a phase-only SLM as a diffraction grating. A diffraction grating
is an optical element that periodically modulates the phase of a incident
beam. Thus, setting a SLM to imprint a periodic phase $\exp[i\,f(y)]$
makes it work as a diffraction grating in the $y$ direction. Consider
a paraxial incident beam making a small angle $\theta_{i}$ with the
plane $xz$, where $z$ is the normal direction to the SLM plane.
If the imprinted phase has a period $\Lambda$ which is much larger
than the beam wavelength $\lambda$, then the reflection of the beam
by the SLM generates several beams at angles \cite{saleh1991}
\[
\theta_{m}=\theta_{i}+m\frac{\lambda}{\Lambda},\quad m\in\mathbb{Z}.
\]
Thus, the small the period of the phase function, the larger is the
separation between two consecutive diffracted beams. However, the
period, and consequently the beam separation, is limited by the SLM
pixel size. The relative power in each diffraction order is determined
by the shape of the phase function. 

Now, let us consider that the $y$-direction periodic phase is multiplied
by a mask 
\[
M(x)=\begin{cases}
1 & x\in R\\
0 & \textrm{otherwise,}
\end{cases}
\]
that is, the mask vanishes outside the region $R$ of the $x$ axis,
which could be the union of several disconnected intervals. An example
of the resultant phase pattern is shown in Fig. \ref{fig:SLM-used-as}.
The effect is to select the portion of the beam incident in the desired
region, that is, the beam falling outside $R$ is reflected to the
zero order of diffraction and the higher orders of diffraction contain
only the beam incident inside $R$. Different mask functions may also
be used, allowing for a more general amplitude modulation. The diffraction
grating can also be summed to any other phase function varying with
the $x$ coordinate, such that this coordinate can be independently
manipulated. The advantage of using only the higher order diffracted
beams is to ensure that the whole beam has been modulated, otherwise
it would not be diffracted. However, it is worth noting that this
application of an SLM only works for beams linearly polarized in the
direction of the LC molecules, and thus cannot be used in experiments
which tangle polarization and spatial degrees of freedom.

\begin{figure}[h]
\noindent \begin{centering}
\includegraphics[width=0.5\columnwidth]{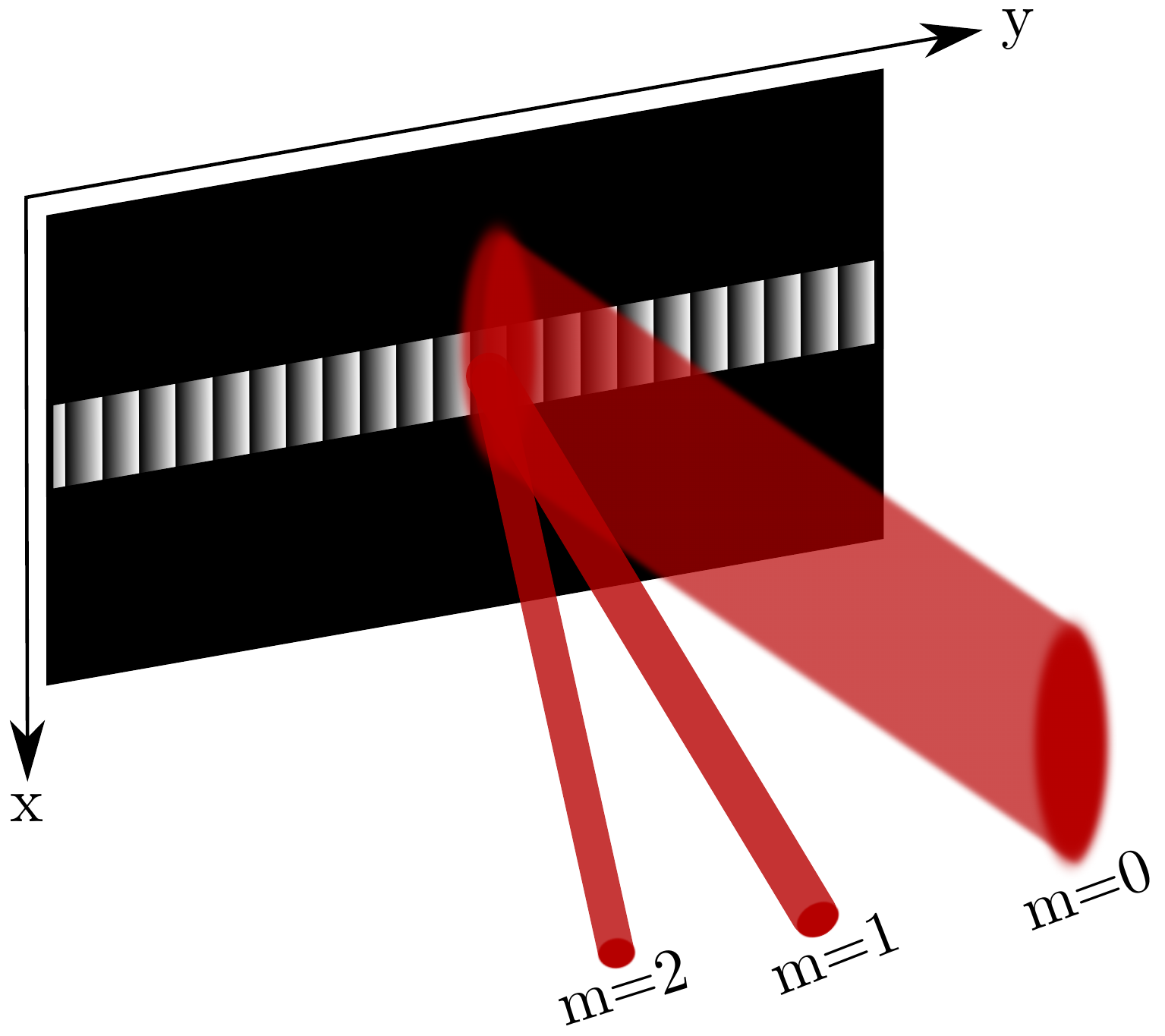}
\par\end{centering}
\caption{SLM used as an amplitude mask: only the portion of the incident beam
(not shown in the picture) that hits the SLM in the graded region
will be present in order $m\protect\geq1$ of difraction. The beam
with the order $m=0$ of difraction also contains the portion reflected
in the non-graded region.\label{fig:SLM-used-as}}

\selectlanguage{american}%
\end{figure}
\selectlanguage{american}

~\ihead{}

\ohead{\textbf{Chapter~\thechapter}~\leftmark}

\ifoot{}

\cfoot{}

\ofoot[
]{\thepage}

\chapter{Optical simulation of the free Dirac Equation\label{chap:Dispositivos-=0000F3pticos}}

In this chapter, I present a classical optics simulation of the one-dimensional
Dirac equation for a free particle. For this purpose we use both discrete
and continuous degrees of freedom. Positive and negative energy components
are represented by orthogonal polarizations of a free propagating
beam, while the spatial profile represents the spatial wave function
of the particle. Using a completely tunable time parameter, we observed
the oscillation of the average value of the Dirac position operator--known
as \emph{Zitterbewegung} (ZB). We are also able to measure the so
called mean-position operator which is a single-particle observable
and presents no oscillations. Our work opens the way for optical simulation
of interesting phenomenon of relativistic systems, as well as condensed-matter
physics, without any requirement for specially engineered medium. 

This work was done in collaboration with Ardiley T. Avelar, Rafael
M. Gomes and Emile R.F. Taillebois from Federal University of Goiás,
and Stephen P. Walborn. My contribution to this work was devising
the simulation protocol, designing the experiment, setting the experiment
up, and also to do the data analysis and help writing the paper which
is now published in Phys. Rev. A \cite{silva2019}. 

\section{Introduction}

\noindent \qquad{}Although the Dirac equation \cite{dirac1928} represents
a historical landmark in the quantum description of relativistic systems
-- satisfactorily explaining the origin of spin and predicting the
existence of antimatter \cite{anderson1933} -- it still provokes
a lot of discussion about its interpretation, even when applied to
describe the simplest physical system, that is, a free particle. In
this case, the Dirac equation predicts intriguing phenomena, for instance
the ZB \cite{schrodinger1930} and Klein's paradox \cite{klein1929},
which impede the single-particle (SP) interpretation of the Dirac
equation. As fundamental effects in the understanding of relativistic
influence over quantum theory, they have contributed to the transition
to the many-body quantum field approach \cite{grobe2004}.

The important technical difficulties involved in the direct observation
of several relativistic quantum predictions have led to an increased
interest in their simulations in trapped ions \cite{gerritsma2010,Gerritsma2011},
photonic crystals \cite{zhang2008}, confined light \cite{otterbach2009},
graphene \cite{katsnelson2006}, optical superlattices \cite{Dreisow2010},
Bose-Einstein condensates \cite{salger2011,leblanc2013} and ultracold
atoms \cite{vaishnav2008}. Among the unexpected effects of Dirac
equation, the ZB -- the flickering motion of a free relativistic
quantum particle described by a Dirac wavefunction with positive and
negative energy components -- is one of most investigated due to
its interesting counterintuitive nature.

From a SP perspective, Dirac's equation must be interpreted as the
simultaneous solving of two independent problems, for instance, the
single free evolution of both electron (positive energy sub-space)
and positron (negative energy sub-space). Therefore, there is no meaning
in assigning a physical interpretation to an operator that is not
SP, that is, an operator that mixes positive and negative subspaces.
In the SP approach, physical results must be obtained by projecting
SP observables over the subspace corresponding to the problem of interest.
Despite the previous studies \cite{gerritsma2010,Dreisow2010,leblanc2013},
an important feature was not explored: the Dirac position operator
related to the flickering motion is not a SP observable, i.e. it cannot
be written as the direct sum of its positive and negative energy projections. 

For Dirac's theory, a SP position observable exists and is obtained
using the so-called Foldy--Wouthuysen transformation (FWT) \cite{foldy1950}
-- a momentum dependent unitary transformation that diagonalizes
Dirac's Hamiltonian and is at the kernel of important algorithms used
to obtain quantum relativistic corrections \cite{obukhov2001,quach2015}.
This observable is often called mean-position operator and does not
exhibit the oscillatory behavior characteristic of the ZB, a result
that generates doubts concerning its actual existence.

Here, the simulation of the Dirac one-dimensional free evolution and
the ZB is performed using the transverse degrees of freedom of a paraxial
light beam, where different components of the spinor are represented
by different polarization components of the beam. This physical setup
is well suited for the purpose of quantum simulation, as it allows
for implementation of the dynamical phases with easy tuning of the
important physical parameters \cite{lemos2012}. Besides being a proof-of-concept
for the optical simulation of Dirac particles, the present approach
differs from others in the theoretical procedure adopted to perform
the simulation. Previous works perform a direct simulation of the
1+1D Dirac Hamiltonian, while the present approach performs the evolution
in the diagonalized Foldy--Wouthuysen representation (FWR) and permits
one to switch back and forth between this and Dirac's representation,
allowing us to investigate the behavior of both the Dirac position
and the mean-position operators.

\section{Dirac Equation and position operator}

Consider the 1D Dirac equation 
\begin{equation}
i\hbar\frac{\partial}{\partial t}\psi_{D}=\hat{\mathcal{H}}_{D}\psi_{D}=(c\hat{p}\sigma_{1}+mc^{2}\sigma_{3})\psi_{D},\label{dirac}
\end{equation}
where $c$ is the speed of light, $\hat{p}$ is the momentum operator,
$m$ is the mass of the particle, and $\sigma_{i}$ are the usual
Pauli matrices. The information of this system is encoded in the spinor
$\psi_{D}$ that has only two components which are related to positive
and negative energy states in the particle's rest frame, i.e. spin
degrees of freedom are eliminated by the dimensional constrain \cite{schuabl}.
In Dirac's coordinate representation, the momentum operator $\hat{p}$
assumes the usual form $-i\hbar\frac{\partial}{\partial x}$, where
$x$ is the so-called Dirac coordinate associated to the multiplication
operator $\hat{x}_{D}\psi_{D}(x)=x\psi_{D}(x)$. Since the Hamiltonian
operator is not diagonal in this representation, the positive and
negative energy eigenstates are non-trivial and assume, respectively,
the forms $\psi_{p}^{+}(x,t)=u(p)e^{-i\varepsilon(p)t/\hbar}e^{ipx/\hbar}$
and $\psi_{p}^{-}(x,t)=v(p)e^{i\varepsilon(p)t/\hbar}e^{-ipx/\hbar}$,
with $\varepsilon(p)\equiv\sqrt{(mc^{2})^{2}+(pc)^{2}}$, $u(p)=[2mc^{2}(\varepsilon(p)+mc^{2})]^{-1/2}\begin{pmatrix}\varepsilon(p)+mc^{2} & cp\end{pmatrix}^{T}$
and $v(p)=[2mc^{2}(\varepsilon(p)+mc^{2})]^{-1/2}\begin{pmatrix}cp & \varepsilon(p)+mc^{2}\end{pmatrix}^{T}$.

The non-diagonal form of $\hat{\mathcal{H}}_{D}$ in Dirac's representation
is evinced by the commutator $[\hat{\mathcal{H}}_{D},\hat{x}_{D}]=-ic\hbar\sigma_{1}$
and leads to the Heisenberg picture evolution given by \cite{thaller}
\begin{equation}
\begin{aligned}\hat{x}_{D}(t)= & \hat{x}_{D}(0)+c^{2}\hat{p}\hat{\mathcal{H}}_{D}^{-1}t-\frac{c\hbar\hat{\mathcal{H}}_{D}^{-1}}{2i}\left(e^{2i\hat{\mathcal{H}}_{D}t/\hbar}-1\right)\left(c\hat{p}\hat{\mathcal{H}}_{D}^{-1}-\sigma_{1}\right).\end{aligned}
\label{eqEvol}
\end{equation}
The first two terms on the right represent the expected linear time
evolution of a free particle, the last term being associated to the
ZB. This flickering motion is accompanied by other particularities
of the $\hat{x}_{D}$ operator. Indeed, the evolution given in (\ref{eqEvol})
is derived from the equation of motion $\dot{\hat{x}}_{D}=\frac{i}{\hbar}[\hat{\mathcal{H}}_{D},\hat{x}_{D}]=c\sigma_{1},$
which implies that, although $\langle\dot{\hat{x}}_{D}\rangle=\langle c^{2}\hat{p}\hat{\mathcal{H}}_{D}^{-1}\rangle$,
the eigenvalues associated to the velocity ${\dot{\hat{x}}_{D}}$
are restricted to $\pm c$, a remarkable result which contributes
to raise doubts as to the correct interpretation of $\hat{x}_{D}$
as definition of position. These peculiarities of the operator $\hat{x}_{D}$
arise from the fact that this is not a SP observable, i.e. $\hat{x}_{D}\neq\hat{P}_{+}\hat{x}_{D}\hat{P}_{+}^{\dagger}+\hat{P}_{-}\hat{x}_{D}\hat{P}_{-}^{\dagger}$,
where $\hat{P}_{\epsilon}=\frac{1}{2mc^{2}}\left(\begin{smallmatrix}mc^{2}+\epsilon\varepsilon(p) & -\epsilon cp\\
\epsilon cp & mc^{2}-\epsilon\varepsilon(p)
\end{smallmatrix}\right)$ is the projection operator over the subspace of states with energy
sign $\epsilon$.

To obtain a SP position, the FWT must be applied to diagonalize the
Dirac Hamiltonian. For the 1D Dirac free particle, this canonical
transformation is given by the momentum dependent unitary operator
\begin{equation}
\hat{U}(\hat{p})=e^{i\hat{S}(\hat{p})},\quad\textrm{with}\hat{\quad S}(\hat{p})\equiv\frac{\sigma_{2}}{2}\mathrm{tg}^{-1}\left(\frac{\hat{p}}{mc}\right).\label{eq:FWoperator}
\end{equation}
In the resulting FWR, the original Dirac Hamiltonian is given by $\hat{\mathcal{H}}_{D}^{\prime}=\sigma_{3}\varepsilon(\hat{p})$,
and the former $\hat{x}_{D}$ operator by $\hat{x}_{D}^{\prime}=\hat{x}_{FW}+\frac{\hbar mc^{3}}{2\varepsilon(p)^{2}}\sigma_{2}$,
where $\hat{x}_{FW}$ is the new multiplication operator in the FWR.
The operator $\hat{x}_{FW}$ is the so called mean-position operator
and, unlike the operator $\hat{x}_{D}$, it is a SP observable since
$\hat{x}_{FW}=\hat{P}_{+}^{\prime}\hat{x}_{FW}\hat{P}_{+}^{\prime}+\hat{P}_{+}^{\prime}\hat{x}_{FW}\hat{P}_{+}^{\prime}$,
where $\hat{P}_{\epsilon}^{\prime}=\left(\begin{smallmatrix}\delta_{\epsilon+} & 0\\
0 & \delta_{\epsilon-}
\end{smallmatrix}\right)$ are the energy projectors in the new representation.

Aside from being SP, the operator $\hat{x}_{FW}$ also satisfies the
equation $\dot{\hat{x}}_{FW}=c^{2}\hat{p}\hat{\mathcal{H}}_{D}^{\prime}$,
resulting in the Heisenberg picture evolution \cite{thaller} 
\begin{equation}
\hat{x}_{FW}(t)=\hat{x}_{FW}(0)+c^{2}\hat{p}\hat{\mathcal{H}}_{D}^{\prime-1}t
\end{equation}
that is linear in time, as expected for a free particle. Thus, as
stated before, the ZB does not occur for this operator.

Here, as a proof-of-concept for the simulation of relativistic systems
using free propagating light beams, the simulation of both the Dirac
and FWRs will be performed in a single setup. This difference with
other simulation procedures open the possibility for future investigations
on more complex FWTs associated to relativistic scenarios involving
interactions.

\section{Simulation Protocol and Experiment}

One way to simulate the dynamics associated to Eq. (\ref{dirac})
is to directly implement the evolution operator $\exp\left(-\frac{i\hat{\mathcal{H}}_{D}t}{\hbar}\right)$,
which is usually a tough task due to the non-diagonal character of
$\hat{\mathcal{H}}_{D}$. This difficulty can be overcome by using
the FWT, since this transformation allows the time evolution operator
to be written as $\exp\left(-\frac{i\hat{\mathcal{H}}_{D}t}{\hbar}\right)=\hat{U}^{-1}(\hat{p})\exp\left(-\frac{i\hat{\mathcal{H}}_{D}^{\prime}t}{\hbar}\right)\hat{U}(\hat{p})$,
and $\hat{\mathcal{H}}_{D}^{\prime}$ is a diagonal operator. This
operator can be implemented in an optical beam by considering the
vertical coordinate on the transverse plane as the particle's position
and the horizontal (vertical) polarization as the superior (inferior)
component of the spinor. The horizontal spatial degrees of freedom
on the transverse plane play no relevant role in the experiment. Although
a spinor is a mathematical object which transforms very specifically
under a reference frame change, it is not a concern for this simulation
since the reference frame is assumed to be fixed.

The optical transformations required for the simulation are polarization
transformations (acting as nondiagonal operators) and phase shifts
(used to introduce momentum dependent phases). The former are obtained
with the suitable application of wave plates and the last are realized
by SLMs, which are able to imprint programmable position dependent
phases $\textrm{exp}[i\:\textrm{f}(x,y)]$ in the horizontal polarization.
The momentum-dependent phases are applied in the momentum space defined
as the optical Fourier transform of the position space where the initial
state is prepared. The position plane is shown as a dashed line in
Fig. \ref{fig:Experimental-setup}, the SLMs are placed such that
the optical Fourier transform connects position and momentum planes
as presented in Sec. \ref{subsec:Optical-Fourier-transform}. The
action of a quarter wave plate (QWP) set to $45^{\circ}$ is given
by the operator $\hat{Q}:=\hat{QWP}(45^{\circ})=\nicefrac{e^{i\tfrac{\pi}{4}}}{\sqrt{2}}\left(\mathbb{1}-i\sigma_{1}\right)$
{[}see Eq. (\ref{eq:QWP}){]}, while $\hat{H}:=\hat{HWP}(45^{\circ})=\sigma_{1}$
describes a half wave plate at $45^{\circ}$ {[}see Eq. (\ref{eq:HWP}){]}.
The action of a SLM is equivalent to applying $\hat{P}[f(x,y)]=\textrm{exp}[i\:\textrm{f}(x,y)]\sigma_{+}\sigma_{-}+\sigma_{-}\sigma_{+}$
over the transverse profile spinor., as can be seen from Eq. (\ref{eq:slm}).
Using this operator representation for the optical devices, it follows
that the operator (\ref{eq:FWoperator}) can be written in momentum
representation as
\begin{equation}
\hat{U}(p)=\hat{Q}\hat{P}\left[-\theta(p)\right]\hat{H}\hat{P}\left[\theta(p)\right]\hat{Q},
\end{equation}
with $2\theta(p)=\mathrm{tan}^{-1}\left(\frac{p}{mc}\right)$. We
express the inverse FWT in an analogous fashion. As the Hamiltonian
is diagonal in the FWR, the transformed time evolution operator is
achieved via the application of the dynamical phase $\exp\left[\pm it\varepsilon(p)/\hbar\right]$
in each spinor component using waveplates and the SLM, which concludes
the simulation. A summary of the analogy between the optical simulator
and the simulated system is given in Table I.

\begin{table}[h]
\noindent \centering{}\caption{Summary of the optical analogy}
\begin{tabular}{c|c}
\hline 
\textbf{Optical System}  & \textbf{Simulated System}\tabularnewline
\hline 
\hline 
Vertical transverse position  & $x$\tabularnewline
\hline 
Transverse profile of  & \multirow{2}{*}{$\psi_{D_{1}}(x)$}\tabularnewline
horizontal polarization & \tabularnewline
\hline 
Transverse profile of  & \multirow{2}{*}{$\psi_{D_{2}}(x)$}\tabularnewline
vertical polarization & \tabularnewline
\hline 
QWP@45°  & $\ensuremath{\hat{Q}=\nicefrac{e^{i\tfrac{\pi}{4}}}{\sqrt{2}}\left(\mathbb{1}-i\sigma_{1}\right)}$\tabularnewline
\hline 
HWP@45°  & $\hat{H}=\sigma_{1}$\tabularnewline
\hline 
\multirow{2}{*}{SLM printing phase $f(x,y)$ } & $\hat{P}[f(x,y)]=$ \tabularnewline
 & $\textrm{exp}[i\:\textrm{f}(x,y)]\sigma_{+}\sigma_{-}+\sigma_{-}\sigma_{+}$\tabularnewline
\hline 
Normalized horizontal  & \multirow{2}{*}{$\left|\psi_{D_{1}}(x)\right|^{2}$}\tabularnewline
polarization intensity at $x$ & \tabularnewline
\end{tabular}
\end{table}

The experimental scheme is shown in Fig.\ref{fig:Experimental-setup}.
A He-Ne laser with wavelength 632.8 nm and two Holoeye reflective
SLMs, each of which divided into halves to operate twice, are used.
As in this experiment we use both vertical and horizontal polarizations,
we need to use the zero order diffraction of the SLM although not
all the light is modulated in this order. The Fourier transforms are
made by plano-convex cylindrical lenses with 150 mm focal distance
such that the position space (mirrors and camera) is in one focal
plane and the momentum space is in the opposite focal plane where
the SLM is located. The reason to use cylindrical lenses is that only
the vertical transverse coordinate is used and thus only this direction
must be transformed. The momentum $p$ and the position on the SLM,
$X$, are connected by $X=\frac{\lambda f}{h}p$, where $f$ is the
focal distance and $\lambda$ is the laser wavelength \cite{saleh1991}.
In terms of $X$, the applied phases become $2\theta(X)=\mathrm{tan}^{-1}\left(\frac{h}{mc}\frac{X}{\lambda f}\right)$
and $t\varepsilon(p)/\hbar=2\pi t\sqrt{\left(\frac{X}{\lambda f}\right)^{2}+\left(\frac{mc}{h}\right)^{2}}$,
so the parameters we need to set are the speed of light $c$ and the
Compton wavelength $\lambda_{C}=h/mc$, which are easily tunable since
they enter as programmable parameters in the imprinted phases. Notice
that contrary to what is usual in optical simulations \cite{lemos2012},
the time in our simulation does not correspond to the propagation
distance of the beam as it would come in a direct analogy between
Schrödinger and paraxial Helmholtz equations \cite{dragoman2013}.
Since the time coordinate also comes up as a programmable parameter,
we could in principle take measurements for as many time values as
we wish inside a time interval. This also implies that the unit of
measurement for time is an arbitrary $\tau$. In this realization
we chose $\Delta t/\tau=1$.

\begin{figure}[h]
\noindent \centering{}\includegraphics[width=0.65\columnwidth]{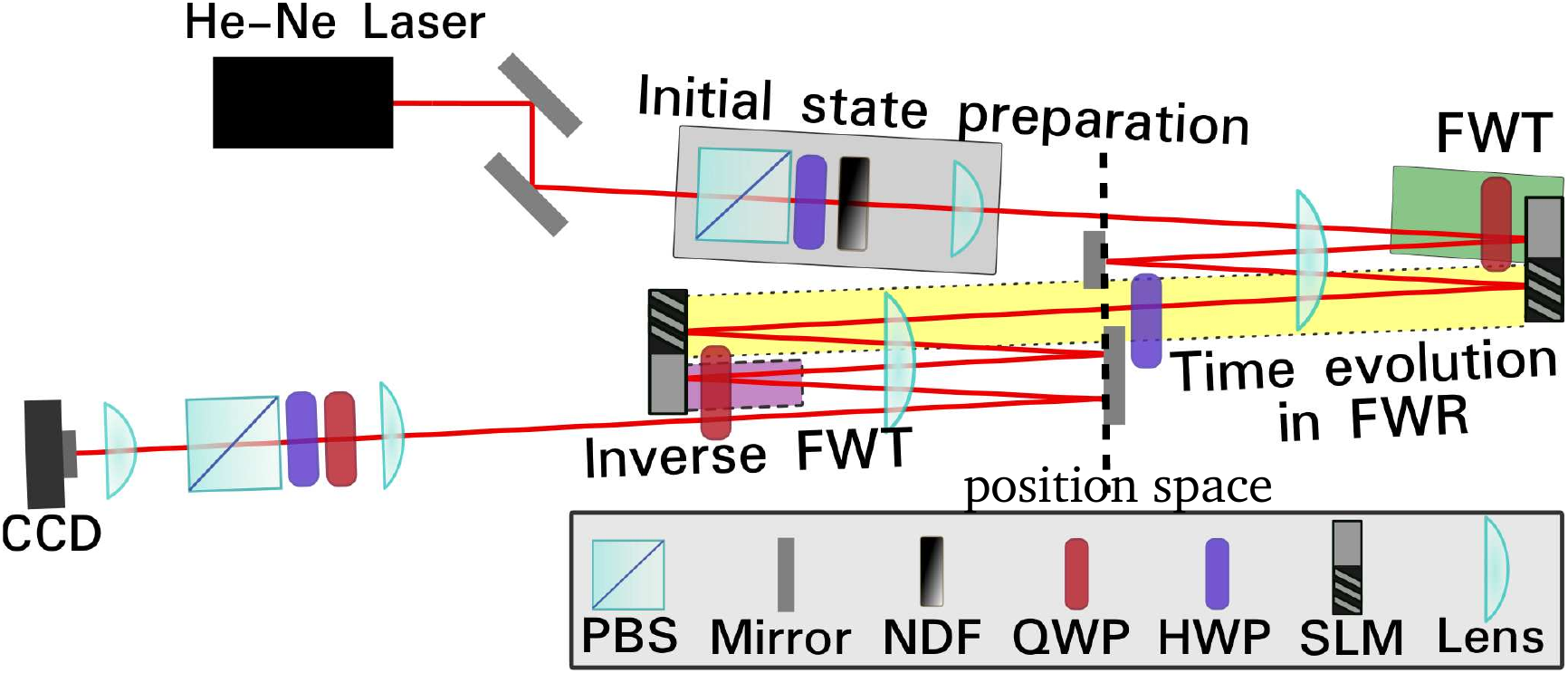}\caption{\label{fig:Experimental-setup}A He-Ne laser, prepared with an initial
gaussian profile and anti-diagonal polarization state, is sent through
an optical system designed to implement the Dirac Hamiltonian. The
grey shaded regions of the SLMs implement the FWTs, while the striped
regions implement the free-evolution. Lenses are used to map the field
profile among the different planes of the SLMs, and wave plates to
control the polarization state. A CCD camera is used to register the
intensity profile of the output field. Additional details are provided
in the text. }
\end{figure}


The laser produces a Gaussian spatial profile separable in the $x$
and $y$ coordinates, so the initial spinor is 
\begin{equation}
\psi_{D}(x,t=0)=\frac{e^{-i\pi x^{2}/(\lambda R)}e^{-x^{2}/(4\Delta^{2})}}{(\sqrt{2\pi}2\Delta)^{1/2}}\begin{pmatrix}a\\
b
\end{pmatrix},\label{eq:initial_state}
\end{equation}
where $a$ and $b$ are the normalized horizontal and vertical polarization
coefficients ($|a|^{2}+|b|^{2}=1$), $\Delta$ is the beam width in
the vertical direction, and $R$ is the vertical radius of curvature
of the beam in the initial position plane. The propagation and Gouy
terms of the Gaussian beam only introduce global phases which do not
affect the dynamical evolution \cite{saleh1991}. We start with $a=-b=1$,
but changing $a$ and $b$ would enable us to prepare different positive
and negative energy superpositions. Two cylindrical lenses are placed
before the first position space in order to manipulate the initial
momentum distribution which depends on $R$ and therefore on $\Delta$.
Using a beam profiler, we determined the initial state parameters
to be $\Delta=48,6\,\mu m$ and $\lambda R/\pi=(2.2\Delta)^{2}$.

The average position of the simulated particle is calculated as 
\[
\langle\hat{x}_{D}\rangle(t)=\sum_{i=1,2}\int dx\,x|\psi_{Di}(x,t)|^{2},
\]
where $\left|\psi_{Di}(x,t)\right|^{2}$ is proportional to the light
intensity of polarization component $i$ at position $x$ on the transverse
plane measured by a CCD camera placed at the output position space.
Each instant of time corresponds to one programmable-phases configuration
and one intensity-profile measurement. It is worth noting that the
evolved state is accessible for any time value.

\section{Results}

\begin{figure}[h]
\noindent \centering{}\includegraphics[width=0.7\columnwidth]{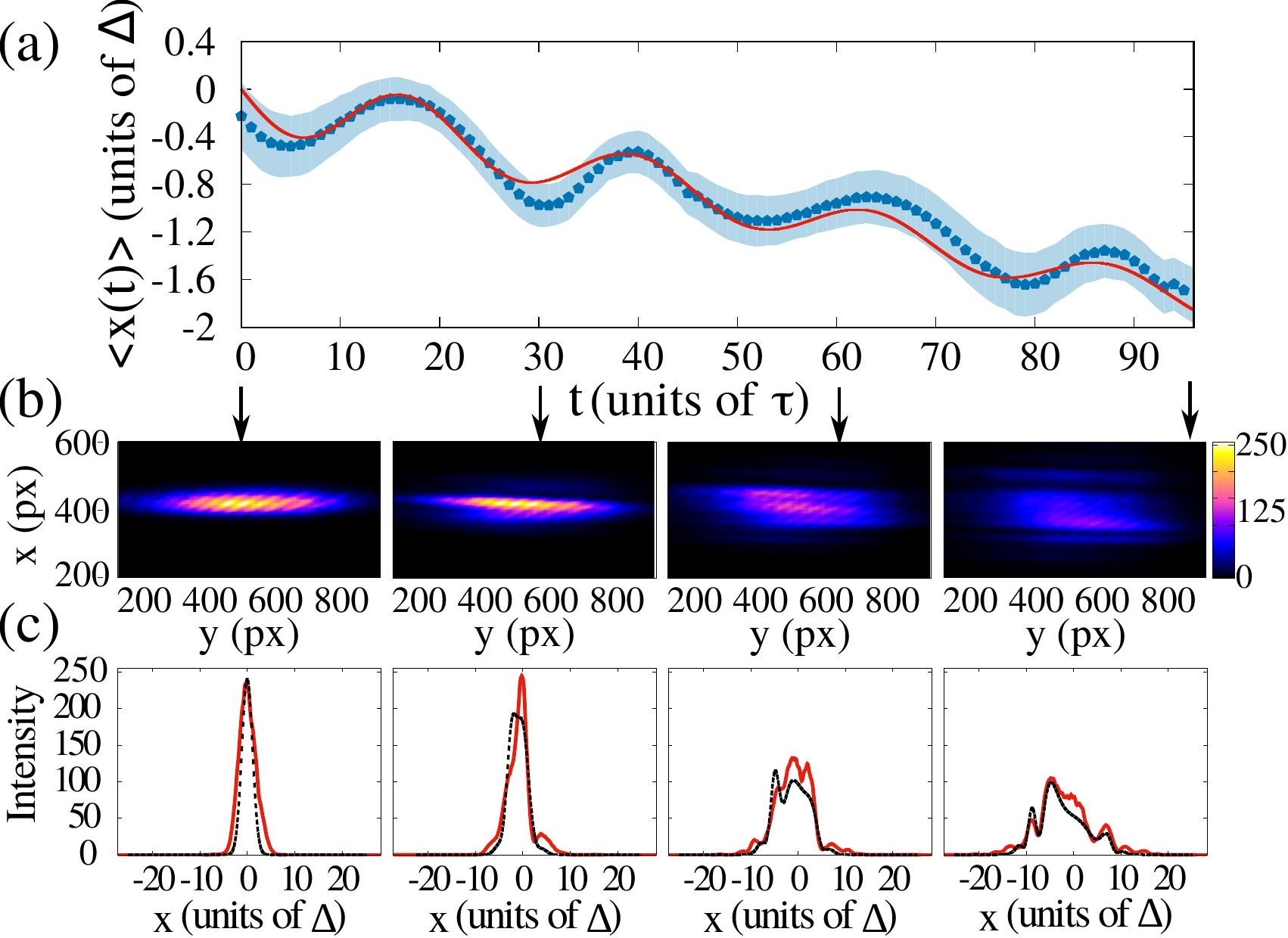}\caption{\label{fig:L5Results-1} (a) Mean position $\langle\hat{x}_{D}(t)\rangle$
as a function of the time parameter $t$ for Compton length $\lambda_{C}/\Delta=5$.
Points are experimental data, the dashed line is the numerical solution
and filled band is the experimental error of one $\sigma$. (b) The
camera image and (c) the one dimensional state got from marginalizing
the intensity measured by the camera over $y$ for four values of
the parameter $t$, namely $t/\tau=0,\:30,\:60\textrm{ and }95$,
are shown. In (c) red solid lines are experimental data and black
dashed curves are numerical solutions.}
\end{figure}


A summary of our experimental procedure for the particular case $\lambda_{C}=5\Delta$
is depicted in Fig.\ref{fig:L5Results-1}. In Fig.\ref{fig:L5Results-1}-a)
we present the mean position $\langle\hat{x}_{D}\rangle(t)$ as a
function of $t$, the ZB being evidenced by the oscillatory behavior.
The solid red line is the theoretical prediction, while points are
experimental results obtained from the images shown in Fig.\ref{fig:L5Results-1}-b).
The shaded region represents uncertainty of one $\sigma$. Fig.\ref{fig:L5Results-1}-b)
shows samples of the data collected by the CCD camera for some instants
of time, the $x$ distribution being obtained by considering only
a fixed $y$ coordinate at the center of the beam. The $x$ distributions
used to calculate $\langle\hat{x}_{D}\rangle(t)$ are shown in Fig.\ref{fig:L5Results-1}-c).
This procedure assumes that $x$ and $y$ intensity distributions
remain separable throughout all the apparatus. This is true in the
ideal case, however the cylindrical lenses can introduce some non-separability
as one can see in the slightly tilted elliptical intensity pattern
shown in Fig. \ref{fig:L5Results-1}-b). The non-separability causes
the initial state to be not entirely pure. Since our experimental
results agree well with theory, we conclude that these effects are
negligible for the present experiment.

For a fixed initial state and speed of light $c=0.1\,\nicefrac{\Delta}{\tau}$,
we measured the average position in Dirac's representation for different
values of the Compton wavelength, as shown in Fig.\ref{fig:AllMeanPosition}.
We fitted the average position with the function $vt+A\sin(\omega t+\delta)$
for each $\lambda_{C}$ to estimate the mean velocity, amplitude and
frequency of the oscillation. These quantities are shown in Fig.\ref{fig:amp_freq}.
Our experimental results are in agreement with the expected linear
dependence of amplitude and inverse dependence of frequency on $\lambda_{C}$
for small $\lambda_{C}$ \footnote{From Eq.(\ref{eqEvol}) we have that the amplitude of oscillations
for a given momentum eigenstate is proportional to $ch/\epsilon(p)$,
which for small $\lambda_{C}$, i.e., large mass, can be approximated
by $\lambda_{C}$. On the other hand, the frequency is proportional
to $\epsilon(p)/h$, that approaches $c/\lambda_{C}$ for large masses.}, as can be seen in Fig.\ref{fig:amp_freq}-b). This is consistent
with the fact that the ZB visibility in Fig.\ref{fig:AllMeanPosition}
increases for smaller values of $\lambda_{C}$.

\begin{figure}[h]
\begin{centering}
\includegraphics[width=0.7\columnwidth]{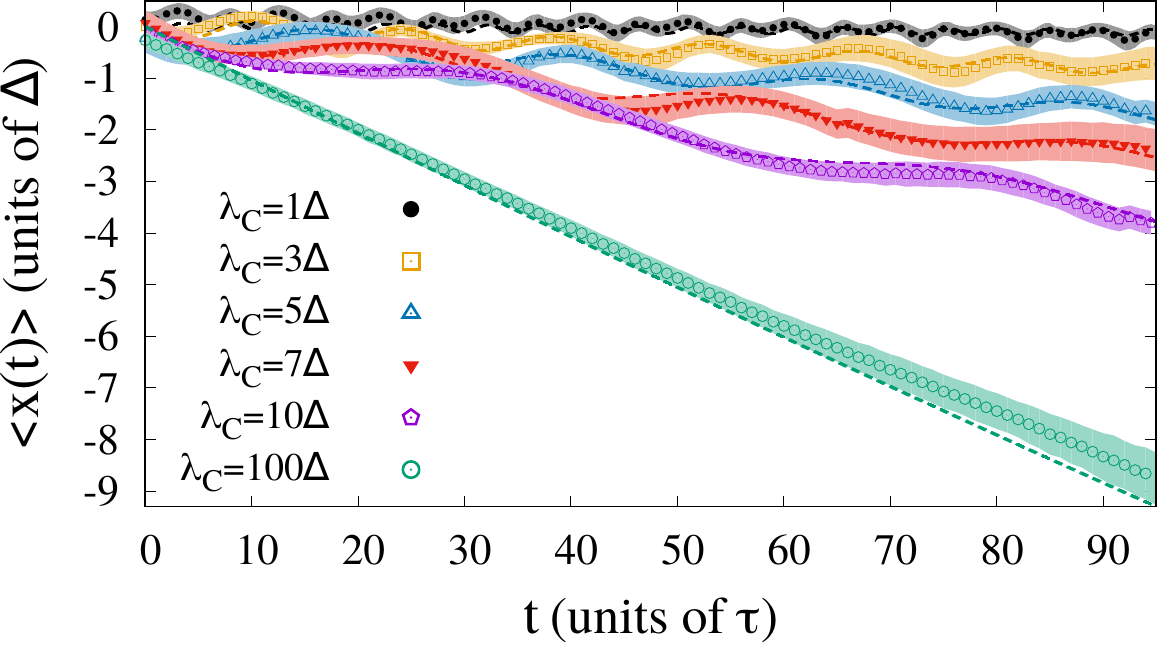} 
\par\end{centering}
 \caption{\label{fig:AllMeanPosition} Average position $\langle\hat{x}_{D}(t)\rangle$
as a function of the time parameter $t$ for Compton lengths $\lambda_{c}/\Delta=1,3,5,7,10,100$.
Dashed lines are numerical predictions of the theory using Eq. (\ref{dirac}),
while points are experimental results. The filled bands are experimental
errors of one $\sigma$.}
\end{figure}


The different inclinations exhibited in Fig.\ref{fig:AllMeanPosition}
are due to the fact that each mass, i.e. Compton wavelength, is associated
to a different velocity distribution, even the momentum distribution
being the same for all values of $\lambda_{C}$. Although the initial
state (\ref{eq:initial_state}) has zero average momentum, this is
not true for the mean velocity in Fig.\ref{fig:amp_freq}-a). As is
expected, the mean velocity falls quadratically with $\lambda_{C}$
for large masses (small $\lambda_{C}$), while it is close to the
speed of light for very small masses ($\lambda_{C}=100$).

The agreement between the experimental ZB data and the theoretical
predictions confirm that our optical setup is well suited for the
study of 1+1D relativistic dynamical systems, the theoretical extension
to larger dimensions being discussed in the Sec. \ref{sec:More-spatial-dimensions}.
Beside serving as a proof-of-concept, the proposed setup permits to
investigate the system in the FWR, an interesting possibility since
it allows to describe the dynamics of the system according to the
single-particle perspective, i.e. assigning physical sense only to
the projections of single-particle operators over the subspaces of
definite sign of energy.

From the SP perspective, operators that are not block-diagonal in
the FWR, as is the case for $\hat{x}_{D}$, have no physical meaning,
since they mix components of positive and negative energy that are
associated to two distinct problems. On the other hand, operators
that are block-diagonal in the FWR, as $\hat{x}_{FW}$ or the Dirac
Hamiltonian, may have a physical sense assigned to their positive
and negative projections. In this sense, the correct description of
the dynamics of a single electron (positron), for example, should
be given by $\hat{x}_{FW,+}=P_{+}^{\prime}\hat{x}_{FW}P_{+}^{\prime}$
($\hat{x}_{FW,-}=P_{-}^{\prime}\hat{x}_{FW}P_{-}^{\prime}$). In our
setup, the average $\langle\hat{x}_{FW}\rangle$ measured using both
positive and negative components of the spinor can be obtained by
measuring the transverse profile of the beam before the inverse FWT.
However, since the positive and negative components of the spinor
are encoded in the horizontal and vertical polarizations of the beam
in the FWR, the single-particle position dynamics described by $\hat{x}_{FW,+}$
(particle) and $\hat{x}_{FW,-}$ (anti-particle) is also accessible
by simply selecting one of the polarizations prior to the CCD measurement
in the FWR.

\begin{figure}[h]
\noindent \centering{}\includegraphics[width=0.7\columnwidth]{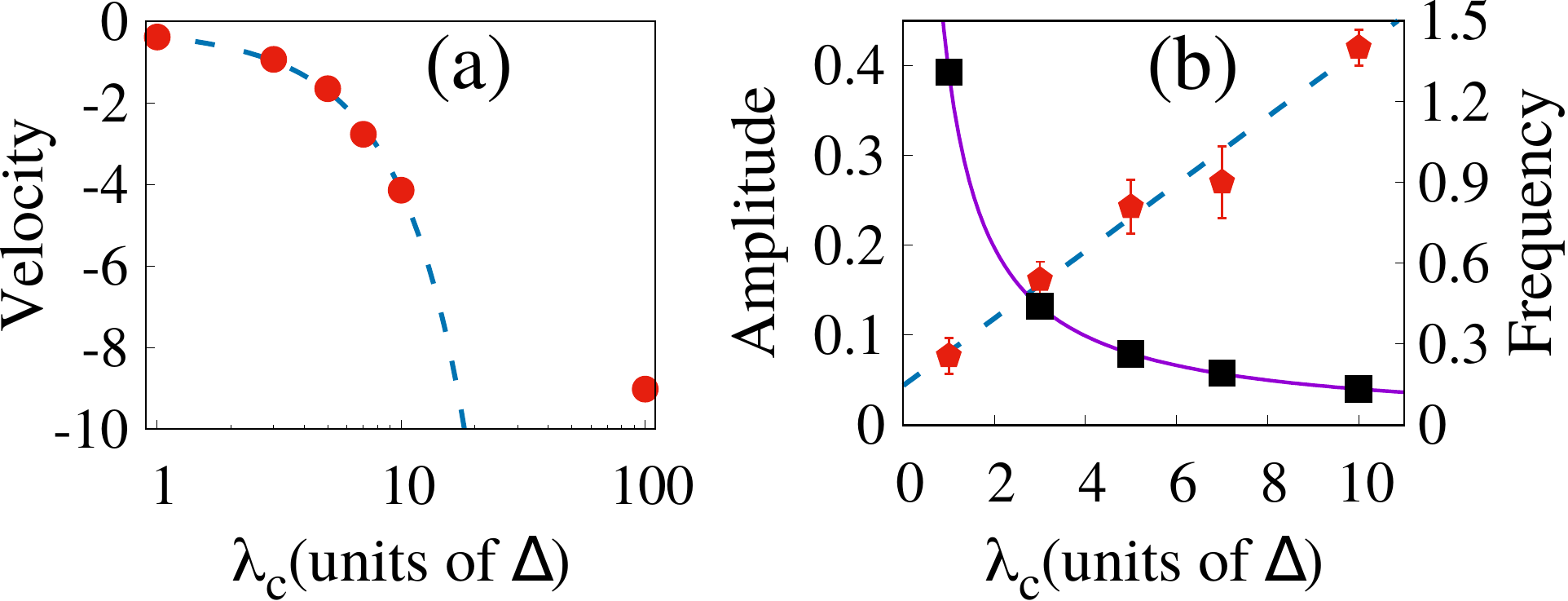}
\caption{\label{fig:amp_freq} (a) Mean velocity, (b) amplitude (red circles)
and frequency (black squares) of ZB obtained by fitting a sinusoidal
function to the average position $\langle\hat{x}_{D}(t)\rangle$ for
each Compton wavelength. For small values of Compton wavelength (large
masses) the mean velocity falls quadratically with $\lambda_{C}$
(dashed blue line) and for a large value of Compton wavelength (vanishing
mass) it appoximates the speed of light set on the experiment. The
amplitude dependence with $\lambda_{C}$ is in well agreement with
linear behavior (dashed blue line) while frequency is proportional
to $1/\lambda_{c}$ (solid purple line). }
\end{figure}


Experimental results for the mean-position operator are shown in Fig.
\ref{fig:L5FWMeanPosition} for $\lambda_{C}=5\Delta$. The experimental
data concerning the ZB effect for the $\hat{x}_{D}$ operator is also
plotted for comparison (blue points). Measurements of $\langle\hat{x}_{FW,+}(t)\rangle$
and $\langle\hat{x}_{FW,-}(t)\rangle$ are shown as black dots. As
is expected from the independence of the two problems in the SP description,
we have two independent mean trajectories corresponding to the free
evolution of the particle and the corresponding anti-particle. The
ZB is not present for these mean trajectories and a linear behavior
in time is observed, as it was expected. As mentioned earlier, we
were also able to measure the mean value of the FW mean-position operator
$\hat{x}_{FW}$. The results are plotted as the red circles and represent
an average of the positive and negative projections cases. The small
deviation from a perfect linear behavior can be explained assuming
that the SLMs do not modulate all the incident light but a fraction
of it, as is shown in the inset picture which shows the same mean
values as the experimental plot but obtained from a numerical simulation
of the experiment for modulation efficiency of $95\%$ in each SLM.

\begin{figure}[!h]
\noindent \centering{}\includegraphics[width=0.65\columnwidth]{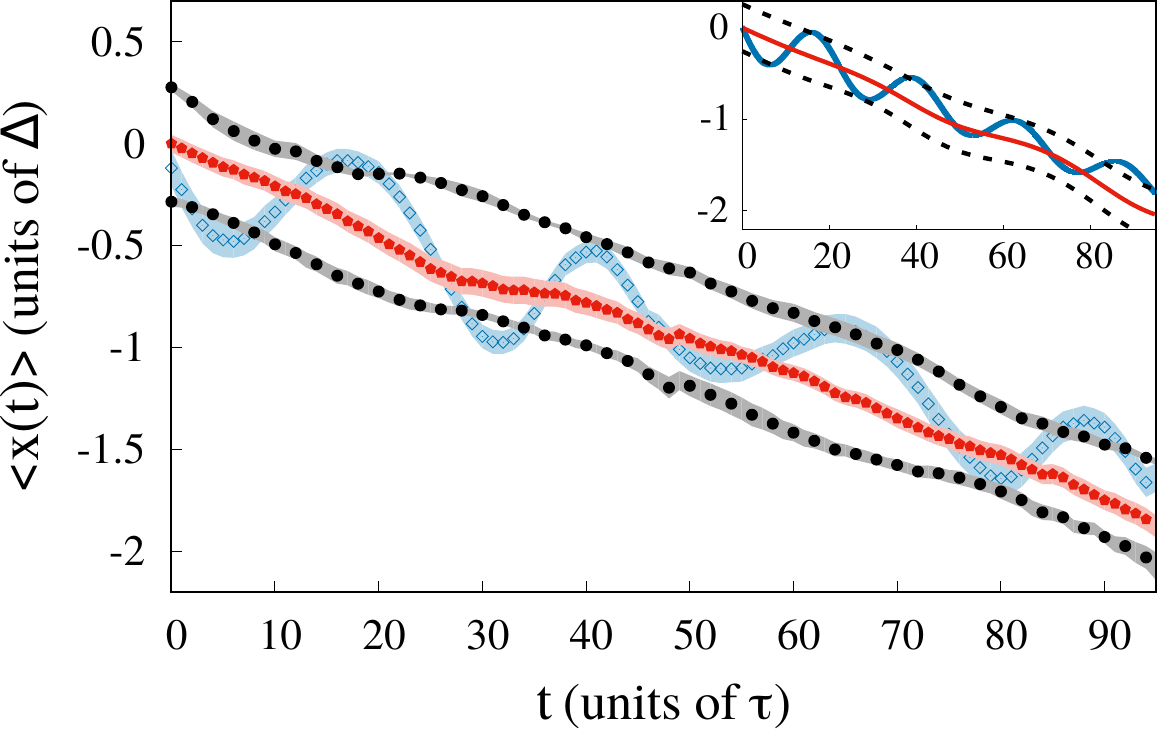}\caption{\label{fig:L5FWMeanPosition} Experimental mean positions $\langle\hat{x}_{D}(t)\rangle$
(blue dots) and $\langle\hat{x}_{FW}(t)\rangle$ (red circles) for
$\lambda_{C}/\Delta=5$. Measurements of $\hat{x}_{FW}(t)$ projected
over positive energy components (horizontal polarization in FWR) as
well as negative energy components (vertical polarization in FWR)
are shown as black dots. The shaded regions are the mean value uncertainties.
The $\langle\hat{x}_{FW}(t)\rangle$ data presents no ZB and fits
a linear dependence with $R^{2}=99.6\%$. The inset picture shows
the result of a numeric simulation of the experimental setup assuming
ideal devices except that the SLMs have efficiency of $95\%$.}
\end{figure}

\section{More spatial dimensions and potentials\label{sec:More-spatial-dimensions}}

The goal of this section is to show the simulation protocols for particles
in two and three spatial dimensions. Because we use the transverse
spatial degrees of freedom of a light beam, we only have at disposal
two coordinates to play the role of particle position. Also, the polarizations
used as spinor components, allows us to simulate only a two-components
spinor. Thus, this restricts our simulation protocol to one and two
spatial dimensions, the last one is presented in Section \ref{subsec:Simulation-of-2+1}.
Despite the apparent impossibility of simulating a particle existing
in a three dimensional space using our scheme, in Section \ref{subsec:Simulation-of-3+1}
we show that the simulation is possible for some particular cases. 

Finally, it would be desirable to include interactions in the particle
Hamiltonian. The drawback of using a simulation based on the FWT is
that this transformation is not exact for most of the potential functions
\cite{foldy1950}. But for a particular class of potentials it is
possible to include interactions in our simulation as we show in Section
\ref{subsec:Simulation-of-1+1}.

\subsection{Simulation of 2+1 dimensional Dirac Equation\label{subsec:Simulation-of-2+1}}

Consider the Dirac Hamiltonian for a free particle existing in a 2D
space 
\begin{equation}
H=c(\sigma_{1}p_{x}+\sigma_{2}\mathrm{p}_{y})+mc^{2}\sigma_{3}.
\end{equation}
The unitary transformation $e^{iS_{2}}=e^{-i\frac{\sigma_{1}p_{x}-\sigma_{2}p_{y}}{|\mathbf{p}|}\theta(\mathbf{p})}$
is the FWT which diagonalizes the Hamiltonian in this case, with $\theta(\mathbf{p})=\frac{1}{2}\textrm{tg}^{-1}\frac{|\mathbf{p}|}{mc}$.
For the same reasons as in the 1D situation, there is no spin if the
space is restricted to two dimensions and the particle state is a
spinor with two components. Once we manage to construct FWT from optical
device operators, the simulation protocol is made possible identifying
again the transverse profile of a laser beam in the two orthogonal
polarizations with the spinor components and identifying the two transverse
coordinates with particle position. To show that in fact there is
such a decomposition let us define a new momentum dependent phase
\begin{equation}
\theta'(\mathbf{p})=\begin{cases}
\mathrm{tg}^{-1}(p_{y}/p_{x}), & p_{y}>0\textrm{ or }p_{y}=0,p_{x}>0\\
\pi+\mathrm{tg}^{-1}(p_{y}/p_{x}), & p_{y}<0\textrm{ or }p_{y}=0,p_{x}<0,
\end{cases}\label{eq:thetalinha-1}
\end{equation}
motivated by the polar expression $p_{x}-ip_{y}=|\mathbf{p}|e^{-i\theta'}$.
In terms of the two phase functions the FWT reads 
\begin{equation}
e^{iS_{2}}=\left(\begin{array}{cc}
\cos\theta(\mathbf{p}) & e^{-i\theta'}\sin\theta(\mathbf{p})\\
-e^{i\theta'}\sin\theta(\mathbf{p}) & \cos\theta(\mathbf{p})
\end{array}\right).\label{eq:TFW2D-1}
\end{equation}
Using the definitions given in the main text it is easy to see that
the operator sequence 
\[
\hat{P}\left[-\theta^{'}(\mathbf{p})\right]\hat{Q}\hat{P}\left[-\theta(\mathbf{p})\right]\hat{H}\hat{P}\left[\theta(\mathbf{p})\right]\hat{Q}\hat{P}\left[\theta^{'}(\mathbf{p})\right]
\]
is equal to the FW unitary. The transformed diagonalized time evolution
is then given by the product $\hat{H}\hat{P}\left[i\varepsilon(\mathbf{p})t/\hbar\right]\hat{H}\hat{P}\left[-i\varepsilon(\mathbf{p})t/\hbar\right]$.
Just to conclude the protocol, the inverse FWT is given analogously
by the same set of devices as the FWT with different imprinted phases
and different angles for the wave plates.

\subsection{Simulation of 3+1 dimensional Dirac Equation for a particular class
of initial states\label{subsec:Simulation-of-3+1}}

Let us consider the 3+1 Dirac equation 
\begin{equation}
i\hbar\frac{\partial\psi(\mathbf{x},t)}{\partial t}=\left[c\boldsymbol{\alpha}\cdot\boldsymbol{p}+mc^{2}\beta\right]\psi(\mathbf{x},t)
\end{equation}
with the standard choice of Dirac matrices $\beta=\begin{pmatrix}\mathbb{1}_{2\times2} & 0\\
0 & -\mathbb{1}_{2\times2}
\end{pmatrix}\quad,\qquad\alpha_{k}=\begin{pmatrix}0 & \sigma_{k}\\
\sigma_{k} & 0
\end{pmatrix}$. The four components of the spinor $\psi(\mathbf{x},t)$ accounts
for the two signs of the energy and for the two spin projections along
a fixed direction. The FWT reads $e^{iS_{3}}=\cos{\theta(\mathbf{p})}+\beta\frac{\boldsymbol{\alpha\cdot}\mathbf{p}}{|\mathbf{p}|}\mathrm{sen}\,\theta(\mathbf{p})$
with the same definition for $\theta(\mathbf{p})$ as before.

A general simulation of the above equation, besides of requiring a
four dimensional object to emulate the four spinor components, it
would also require three spatial degrees of freedom, while the presented
setup allows for just two. Instead of proposing a complete simulation,
let us consider only the particular family of initial states given
by 
\begin{equation}
\psi(\mathbf{x},t=0)=\begin{cases}
\frac{1}{\sqrt{L}}\left(\begin{array}{c}
\tilde{\phi}_{1}(x,y)\\
\vdots\\
\tilde{\phi}_{4}(x,y)
\end{array}\right) & -\frac{L}{2}<z<\frac{L}{2}\\
0 & \mathrm{elsewhere}
\end{cases}.
\end{equation}
In momentum space this state becomes 
\begin{equation}
\tilde{\psi}(t=0,\boldsymbol{p})=\left(\begin{array}{c}
\tilde{\phi}_{1}(p_{x},p_{y})\\
\vdots\\
\tilde{\phi}_{4}(p_{x},p_{y})
\end{array}\right)\sqrt{\frac{L}{2\pi\hbar}}\left(\frac{\mathrm{sen}\,\frac{p_{z}L}{2\hbar}}{\frac{p_{z}L}{2\hbar}}\right),
\end{equation}
which depedence on $p_{z}$ behaves like a $\delta(p_{z})$ for large
values of $L$.

Thus we can approximate the FW transformed state by 
\begin{equation}
\tilde{\psi}'(\boldsymbol{p},t=0)\approx\left(\cos{\left(\theta(p_{x},p_{y},0)\right)}+\beta\frac{\alpha_{x}p_{x}+\alpha_{y}p_{y}}{|(p_{x},p_{y},0)|}\mathrm{sen}\,\theta(p_{x},p_{y},0)\right)\tilde{\psi}(\boldsymbol{p},t=0),
\end{equation}
 and all the momentum dependent phases only depend on two coordinates
and can be applied with SLMs.

The transformed state is explicitly written as 
\begin{equation}
\psi'(\boldsymbol{p},t=0)=\left(\begin{array}{cccc}
\cos{\theta} & 0 & 0 & \frac{p_{x}-ip_{y}}{|\mathbf{p}|}\mathrm{sen}\,\theta\\
0 & \cos{\theta} & \frac{p_{x}+ip_{y}}{|\mathbf{p}|}\mathrm{sen}\,\theta & 0\\
0 & \frac{-p_{x}+ip_{y}}{|\mathbf{p}|}\mathrm{sen}\,\theta & \cos{\theta} & 0\\
\frac{-p_{x}-ip_{y}}{|\mathbf{p}|}\mathrm{sen}\,\theta & 0 & 0 & \cos{\theta}
\end{array}\right)\left(\begin{array}{c}
\phi_{1}\\
\phi_{2}\\
\phi_{3}\\
\phi_{4}
\end{array}\right)\sqrt{\frac{L}{2\pi\hbar}}\left(\frac{\mathrm{sen}\,\frac{p_{z}L}{2\hbar}}{\frac{p_{z}L}{2\hbar}}\right)
\end{equation}
 we notice that the FWR only mixes the components two by two what
makes possible to simulate it using two beams without any joint transformation
between them. Then we can address the transverse profiles of horizontal
and vertical polarizations of the first beam to $\phi_{1}$ and $\phi_{4}$
and the FWR as well as the subsequent diagonal time evolution do not
mix this components with the two remaining. Moreover, each pair of
mixed components transforms like the two spatial dimensions case (Eq.
(\ref{eq:TFW2D-1})) with the suitable phase signs.

The interesting thing about three spatial dimensions simulation is
that it would enable us to investigate also spin effects like the
spin analogous to \emph{Zitterbewegung}.

\subsection{Simulation of 1+1 dimensional Dirac Equation for a particular class
of potentials\label{subsec:Simulation-of-1+1}}

Employing the strategy of \cite{sabin2012}, we show in this subsection
that, if the initial state is conveniently prepared, our approach
is able to simulate the Dirac equation for a particular class of potentials.
To this end, consider the 1D Dirac equation 
\begin{equation}
i\hbar\frac{\partial}{\partial t}\psi_{D}=(c\hat{p}\sigma_{1}+mc^{2}\sigma_{3}+V(x))\psi_{D},\label{dirac2}
\end{equation}
where $V(x)$ is a spinorial potential of the form 
\begin{equation}
V(x)=V_{1}(x)\sigma_{1}.
\end{equation}

For the above particular case, we define $\phi_{D}$ such that 
\begin{equation}
\psi_{D}=e^{-i\frac{1}{\hbar c}\int V_{1}(x^{\prime})\mathbb{1}dx^{\prime}}\phi_{D}.
\end{equation}
The substitution of this state on Eq.(\ref{dirac2}) shows that the
spinor $\phi_{D}$ evolves according to the free Dirac equation (Eq.(\ref{dirac})).
Thus, if we want to simulate the time evolution of the initial state
$\psi_{D}(x,t=0)$, we need to prepare the state $\phi_{D}(x,t=0)=\exp\left[i\frac{1}{\hbar c}\int V_{1}(x^{\prime})\mathbb{1}dx^{\prime}\right]\psi_{D}(x,t=0)$
and the dynamics of the free evolution. From the experimental point
of view, this corresponds to applying a local phase in position space
to all the components of the spinor before performing the free evolution
in the way as it is shown in the main text.

\section{Discussion and Conclusions}

Our experiment demonstrates how relativistic dynamics can be studied
using classical optics, and opens the way to more sophisticated investigations.
For this purpose it would also be desirable to produce more general
initial states. This can be accomplished using intensity and phase
masks in the initial state preparation. The state produced in this
experiment had zero average momentum, but simply shifting the momentum
in all SLMs phases by the same $\Delta p$ can be interpreted as if
the state has non-vanishing average momentum.

In principle the method implemented in this simulation using the FWT
could be applied for other simulation schemes of the Dirac equation,
however this transformation requires applying a phase shift that is
proportional to the inverse tangent of momentum. In our approach the
application of this phase is fairly easy, thanks to the spatial light
modulator (SLM). However, in other systems, this is quite challenging.
Typically in continuous variable quantum simulators one can implement
Gaussian Hamiltonians, but non-Gaussian operations (third order and
above) are quite difficult \cite{Tasca11}. Thus, we believe that
our approach is quite interesting in this regard, as it allows one
to employ the FWT and investigate relevant aspects of it.

Albeit here we focused 1+1D case, the extension for 2+1D and for some
initial states in 3+1D is straightforward as shown in the Sec. \ref{sec:More-spatial-dimensions}.
The first is a direct extension considering the second transverse
coordinate of the beam as the second spatial degree of freedom of
the simulated particle. A 2+1D simulation also allows for investigation
of electronic behavior in bidimensional condensed matter systems such
as graphene \cite{katsnelson2006}, but still do not present any spin
effect. For a general 3+1D simulation it would be necessary a third
beam coordinate what is not available in this scheme. In spite of
this limitation, we showed a class of initial states which dependence
on the third coordinate does not alter the time evolution. In this
case, the two extra spinor components are provided by the polarization
components of a second beam.

It is well known that there is no exact FWT for the non-free Dirac
equation, i.e., if we add a potential to the free Dirac equation (\ref{dirac})
the Dirac Hamiltonian becomes no longer diagonalizable with one single
unitary transformation \cite{schuabl}. This seems to be a very limiting
factor of our simulation technique and indeed it is if we try to implement
the actual FWT for a potential problem. Instead of doing so, we can
try to find other kinds of unitary transformations which reproduce
the time evolution and are experimentaly feasible with the available
optical elements. Up to now we know that at least for a particular
class of potentials it is possible to break the time evolution operator
into a position dependent phase which carries all the information
about the potential followed by the free evolution presented in this
work. This particular case was also discussed in Sec. \ref{sec:More-spatial-dimensions}.

In conclusion, we have presented an all-optical simulation of the
dynamics of a one-dimensional relativistic free point particle, where
the beam's spatial profile plays the role of the particle's wavefunction,
and its orthogonal polarization components are associated to spinor
components. Our experiment is based on the diagonalization of the
Dirac Hamiltonian using the FWT, which allowed for the decomposition
of the unitary evolution into operations that are realizable with
off-the-shelf optical components. Adjusting the tunable time parameter
we observed the oscillatory ZB phenomenon for Dirac's position operator.
Using our experimental FWT, we were also able to address this phenomenon
from a single-particle perspective, where the position description
is given by the positive and negative energy projections of the single-particle
mean-position operator. This approach allowed us to observe the absence
of ZB oscillations for the particle and anti-particle single-particle
dynamical evolutions.\selectlanguage{american}

\ihead{}

\ohead{\resizebox{\columnwidth}{!}{\textbf{Chapter~\thechapter}~\leftmark}}

\ifoot{}

\cfoot{}

\ofoot[
]{\thepage}

\chapter{Mutual unbiasedness of coarse-grained measurements for an arbitrary
number of phase space observables\label{chap:MUM}}

Observables of continuous quantum variables can be made discrete by
binning them together, resulting in an observable with a finite number
$d$ of outcomes. These operators allow one to reproduce some properties
of measurements on discrete quantum systems. One example is mutual
unbiasedness, which continuous variable operators satisfy only in
limits that are unphysical, but physical discretized operators can
satisfy perfectly, as in the discrete case. In this chapter, it is
shown that binning of continuous observables can lead to operators
that are in a sense neither continuous nor discrete. In particular,
it is shown that the maximum number of mutually unbiased measurements
is three for even $d$, which is analogous to the continuous case.
However, for prime $d$ we can find $d+1$ mutually unbiased observables,
surpassing the continuous case and in partial analogy to the discrete
case. To illustrate this, an optical experiment is presented showing
four mutually unbiased measurements with $d=3$ outcomes. For odd
non-prime $d$, it is shown theoretically that the maximum number
of unbiased measurements follows neither the discrete nor the continuous
regimes. 

This work was done in collaboration with \L ukasz Rudnicki from the
Center for Theoretical Physics in Poland, Daniel Tasca from Fluminense
Federal University, and Stephen Walborn. My contribution to this work
is both theoretical and experimental. In the theoretical part, I showed
the solution for the possible angle between phase space directions,
showed that the pair dimensions are forbidden for 4 or more phase
space directions, and also demonstrated the maximum number of directions
in the odd dimension case. In the experimental part, I designed and
built the experiment and made the data analysis. This work is being
prepared for publication.

\section{Introduction}

Quantum physics separates itself from classical physics in a number
of ways. One of these is the incompatibility of measurements, which
lies at the heart of the complementarity principle \cite{bohr1928b},
uncertainty relations \cite{toscano18}, quantum contextuality \cite{amaral18},
the violation of Bell's inequalities \cite{brunner14}, quantum random
number generation \cite{herrero17}, among other topics. Mutual unbiasedness
(MU) plays a fundamental role in incompatibility. Two observables
are mutually unbiased if measurements of one observable on the eigenstates
of the other observable produce a set of equiprobable outcomes. The
bases of the underlying Hilbert space associated to two mutually unbiased
observables are said to be mutually unbiased basis (MUB).

For practical purposes, it is of fundamental importance to know what
is the maximum number of simultaneously MUBs for a given Hilbert space
dimension and also how to build a set of MUBs with the maximum number
of elements. By a set of simultaneously MUBs we mean that any pair
of bases taken from the set satisfies MU conditions. Besides the fundamental
mathematical interest in MU, many quantum information protocols rely
on the use of more than two simultaneously MUBs. It is known, for
example, that measuring a quantum system in the maximal set of MUBs
is the minimal and optimal set of measurements to completely determine
the quantum state of the system \cite{wootters1989}. For a discrete
variables system with dimension $d$, the maximum number of MUBs possible
is $d+1$. Although this upper bound is valid for any $d$, only if
$d=q^{m}$, with $q$ a prime number and $m$ a positive integer,
the maximal set with $d+1$ elements is known to exist \cite{wootters1989}.
If $d$ is not the power of a prime number, few things are known about
the existence or construction of a MUB set even for the smallest dimension
possible $d=6$, and numerical \cite{butterley2007,brierley2008,Durt10,Brierley2010,raynal2011}
as well as analytical \cite{brierley2009,paterek2009} evidences point
to the existence of only three MUBs in this case . 

In contrast to finite dimensional systems, continuous variables (CV)
systems also have MUBs, the position and momentum operators bases
being standard examples. However, instead of allowing for the construction
of infinitely many simultaneously MUBs as would be if the limit of
discrete case were valid, it allows for only three simultaneously
MUBs \cite{Weigert08}. On the other hand, because of the finite resolution
of CV detectors and the impossibility of producing eigenstates of
CV operators, the unbiasedness is not observed in practice. Recently,
it was proposed that a periodic coarse graining (PCG) of the CV measurements
can recover the unbiasedness relations if the period of the measurements
in different phase space directions is adequately chosen \cite{Tasca18a}.
In this scheme, the CV system is mapped to a effective DV system with
dimension $d$ equals to the number of possible outcomes of the measurement.
In this case, it is referred to as mutually unbiased measurements
(MUM) instead of bases. The natural questions are: \textit{how many
MUMs one can have in this PCG scheme? And knowing that a certain number
of MUMs is possible, how to construct them? }\emph{Is there a ``scaling
rule\textquotedbl{} as a function of $d$? Do these measurements resemble
more their continuous or discrete counterparts?} The previous works
showed the existence of pairs \cite{Tasca18a} and triples \cite{paul18}
of such PCG MUMs. Here we answer these questions, first providing
a general recipe to construct mutually unbiased measurements (MUMs)
together with their experimental realization in a optical setup. We
then show that PCG observables display a behavior that is reminiscent
of both continuous and discrete variables systems. For even dimensionality
$d$, it is shown that there are at most three MUMs, as in the continuous
case. On the other hand, for odd $d$ there is some agreement with
the discrete case. We show that for prime $d$ there are at most $M=d+1$
MUMs, like the discrete case. However, when $d$ odd and not prime
then is no correspondence with neither the continuous nor the discrete
case. 

The chapter is organized as follows. In Sec. \ref{sec:Mutually-unbiased-basis}
the concept of MUBs in DV and CV variables is presented. In Sec. \ref{sec:Mutually-unbiased-periodic}
the construction of PCG MUM of Ref. \cite{Tasca18a} is revised for
only one pair of measurements. Our contribution is contained in Secs.
\ref{sec:Construction-of-several} and \ref{sec:Maximum-number-of}
with a method to construct several PCG MUMs for any dimension parameter,
including its experimental realization, and a proof of the maximum
number of PCG MUMs depending on the number of outcomes, respectively.
Sec. \ref{sec:Concluding-remarks} concludes this chapter.

\section{Mutually unbiased basis and measurements\label{sec:Mutually-unbiased-basis}}

Consider first a system with a finite dimensional Hilbert space which
dimension is $d$. Consider also two orthonormal bases $\{\ket{a_{j}}\}_{j=0,...,d-1}$
and $\{\ket{b_{j}}\}_{j=0,...,d-1}$ that may be regarded as the set
of eigenstates of the two observables $A=\sum_{j}a_{j}\ket{a_{j}}\bra{a_{j}}$
and $B=\sum_{j}b_{j}\ket{b_{j}}\bra{b_{j}}$, respectively. The bases
are said to be mutually unbiased if the absolute value of the inner
product of any pair of states, one from each basis, is a fixed number,
that is
\begin{equation}
\left|\bra{a_{j}}b_{k}\rangle\right|=d^{-\nicefrac{1}{2}}\quad\forall\;j,k=0,\cdots,d-1,
\end{equation}
where the value $d^{-\nicefrac{1}{2}}$ is due to the normalization
of the states \cite{Durt10}. In other words, if the system is initially
prepared in an eigenstate of $A$, any outcome of a subsequent measurement
of $B$ is equally probable, and vice-versa. $A$ and $B$ are said
to be extreme complementary observables: if one of them is known,
the other is completely unknown. The statement in terms of measurements
is useful to extend the concept of unbiasedness to more general measurement
processes. Consider two positive-operator valued measurements $\{\hat{\Omega}_{j}^{(a)}\}$
and $\{\hat{\Omega}_{j}^{(b)}\}$, they are said to be mutually unbiased
measurements if, for any state $\rho$ satisfying $p_{j^{\prime}}^{(a)}=\Tr\left[\hat{\Omega}_{j^{\prime}}^{(a)}\rho\left(\hat{\Omega}_{j^{\prime}}^{(a)}\right)^{\dagger}\right]=\delta_{jj^{\prime}}$
for some $j$, we have 
\begin{equation}
p_{k}^{(b)}=\Tr\left[\hat{\Omega}_{k}^{(b)}\rho\left(\hat{\Omega}_{k}^{(b)}\right)^{\dagger}\right]=\frac{1}{d}\quad\forall\:k=0,...,d-1,
\end{equation}
where $d$ in this case is the number of outcomes, $p_{j^{\prime}}^{(a)}$
($p_{k}^{(b)}$) is the probability of outcome $j^{\prime}$ ($k$)
when performing measurement $a$ ($b$). In other words, if the measurement
$a$ is deterministic in the sense that only outcome $j$ is detected
with unity probability, then the outcomes of measurement $b$ on the
same state are equiprobable. This statement is also valid if $a$
and $b$ are interchanged.

Maybe the most celebrated pair of complementary observables is position
and momentum, two continuous variable operators. Indeed, most quantum
mechanics textbooks highlight the fact that MU between position ($\hat{x}$)
and momentum ($\hat{p}$) operators can be demonstrated by $|\langle x|p\rangle|=1/\sqrt{2\pi}$
(we set $\hbar$ = 1 throughout). What is somewhat less well-known
is the fact that \emph{any} two non-parallel phase space operators
$\hat{q}_{\theta}$ and $\hat{q}_{\theta^{\prime}}$ (Eq. \eqref{eq:qtheta})
are mutually unbiased, which can be demonstrated via 
\begin{equation}
|\langle q_{\theta^{\prime}}|q_{\theta}\rangle|=\left({2\pi|\sin\Delta\theta}|\right)^{-1/2},\label{eq:CVs}
\end{equation}
where we assume that $q_{\theta}$ and $q_{\theta^{\prime}}$ are
characterized by angles $\theta$ and $\theta^{\prime}$ in phase
space, and $\Delta\theta\equiv\theta^{\prime}-\theta$ is the angle
between them \footnote{We note that in the limit $\theta_{j}k\rightarrow0$, the limit must
be taken before the absolute value to recover the normalization to
the usual Dirac delta function}, as illustrated in Fig. \ref{fig:Phase-space-directions}.

\begin{figure}[h]
\begin{centering}
\includegraphics[width=0.3\columnwidth]{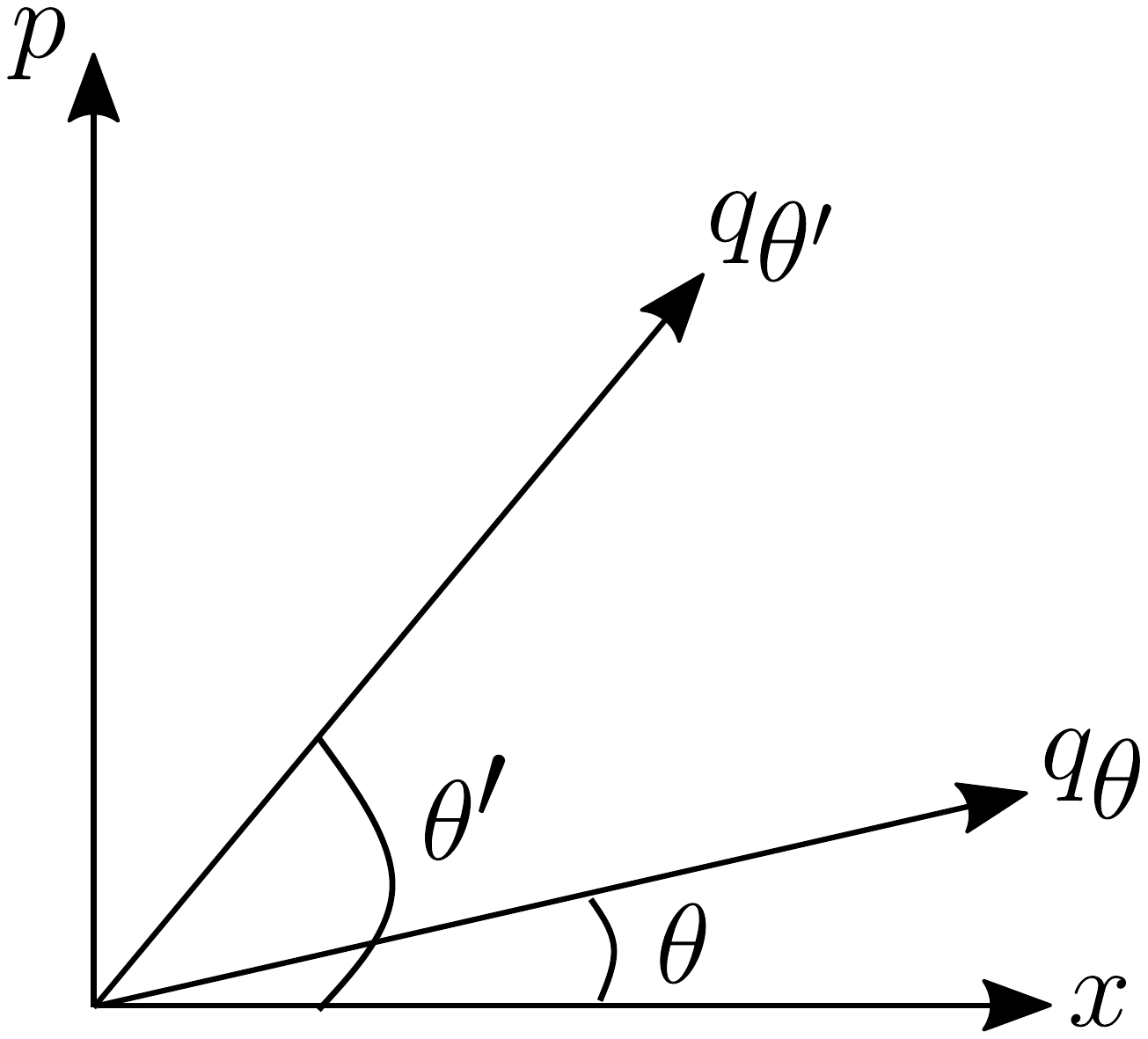}
\par\end{centering}
\caption{Phase space directions and angles.\label{fig:Phase-space-directions}}

\selectlanguage{american}%
\end{figure}

Mutually unbiasedness (MU) brings up one major difference between
continuous variable and discrete variable quantum systems. Discrete
and finite $d$-dimensional quantum systems admit at most $d+1$ mutually
unbiased bases \cite{wootters1989}. This means that the maximal set
for which any pair of bases are mutually unbiased has at most $d+1$
elements. When $d$ is the power of a prime number the existence of
such a maximal set is guaranteed \cite{Ivonovic1981,wootters1989,Bandyopadhyay2002,klappenecker2003}.
On the other hand, it was shown in Ref. \cite{Weigert08} that there
are at most three MU bases for a CV system (also known as a ``qumode\textquotedbl ).
MU for three phase space operators can be achieved by defining the
relative angle $\Delta\theta=2\pi/3$, such that all pairs of operators
satisfy Eq. \eqref{eq:CVs} with the same right-hand side (RHS) \cite{Weigert08,Paul16}.

\begin{figure}[h]
\begin{centering}
\includegraphics[width=0.95\columnwidth]{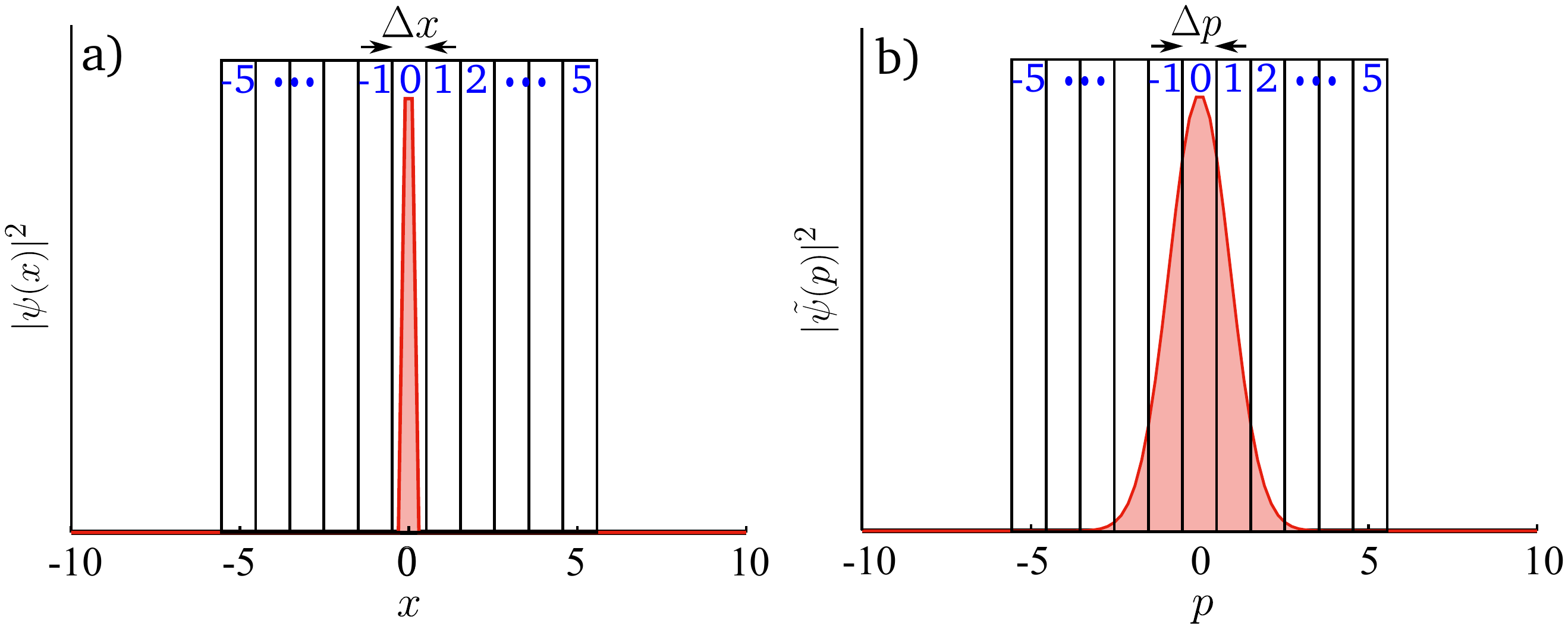}
\par\end{centering}
\caption{Coarse grained measurements in a) position and b) momentum over a
state $\ket{\psi}$. \label{fig:Course-grained-measurements}}

\selectlanguage{american}%
\end{figure}

Observables of continuous quantum variables can be discretized by
a ``binning\textquotedbl{} procedure, dividing the Hilbert space
into a finite number of discrete parts, as exemplified in Fig. \ref{fig:Course-grained-measurements}.
One common example is the parity operator, with eigenstates that are
symmetric or anti-symmetric with respect to the origin. There are
a number of reasons to pursue this type of discretization. For example,
it is well known that it is difficult to employ phase space operators
to demonstrate quantum non-locality \cite{revzen05}. This has led
to a number of binning schemes of measurements or states \cite{gilchrist98,banaszek98,banaszek99,wenger03,cavalcanti11}.
More fundamentally, the continuous variable eigenstates in \eqref{eq:CVs}
are not physical \cite{Braunstein05}. In real-world experiments,
they are approximated by states that are localized around some mean
value, which renders these physical eigenstates no longer mutually
unbiased. In addition, measurements in any quantum system suffer from
some amount of coarse graining, which follows from the fact that any
measurement device has some finite resolution. In this way, the measurements
are not projections over an eigenstate of the CV observable, but a
projection on the region of each detector
\begin{equation}
\hat{\Omega}_{k}^{(q_{\theta})}=\int_{\left(k-\frac{1}{2}\right)\Delta q_{\theta}}^{\left(k+\frac{1}{2}\right)\Delta q_{\theta}}\,dq_{\theta}\:\ket{q_{\theta}}\bra{q_{\theta}},\label{eq:binning}
\end{equation}
where $k$ is an integer and $\Delta q_{\theta}$ is the detector
aperture. For example, consider a very localized Gaussian state in
position representation, such that the probability of detecting the
particle out of detector $0$ is negligible, as the one represented
in Fig. \ref{fig:Course-grained-measurements}-a). The more localized
the state is, it still has a finite width in position, and consequently
a finite width in momentum as well {[}Fig. \ref{fig:Course-grained-measurements}-b){]}.
Therefore, although we may have unity probability of detecting the
system in the detector $0$ when measuring position, we do not have
equal probability of detecting the system in any detector when measuring
momentum, and the MU for this real measurement process is lost. In
other words, the measurements of Eq. \eqref{eq:binning} for position
and momentum are not MUM. The coarse-grained observables describing
these measurements though satisfy uncertainty relations \cite{toscano18},
but they are not complementary in the sense o MU. This inherent coarse
graining of real-world states, as well as real-world measurements,
motivates the search for coarse-grained mutually unbiased observables.

\section{Mutually unbiased periodic coarse grained measurements\label{sec:Mutually-unbiased-periodic}}

It has been shown recently that one path to MU is through the definition
of periodic coarse grained (PCG) observables, where physical mutually
unbiased measurement pairs \cite{Tasca18a} and mutually unbiased
measurement triples \cite{paul18} were demonstrated theoretically
and experimentally. An schematic representation of the PCG measurement
is shown in Fig. \ref{fig:The-periodic-coarse}. The continuum of
values for the phase space observable $q_{\theta}$ is binned uniformily,
according to the size of the detectors. Instead of associating an
outcome to each detector, the bins are labeled periodically with natural
numbers from $0$ to $d-1$. Anytime the quantum system is detected
in a box labeled by $j$, the outcome $j$ is attributed to that measurement.
In this formulation there naturally appears a ``dimensionality\textquotedbl{}
parameter $d$, given by the number of possible measurement outcomes.
The period of the coarse grained measurement, $T_{\theta}$, is equal
to the size of the detectors times the dimension.

\begin{figure}[h]
\begin{centering}
\includegraphics[width=0.6\columnwidth]{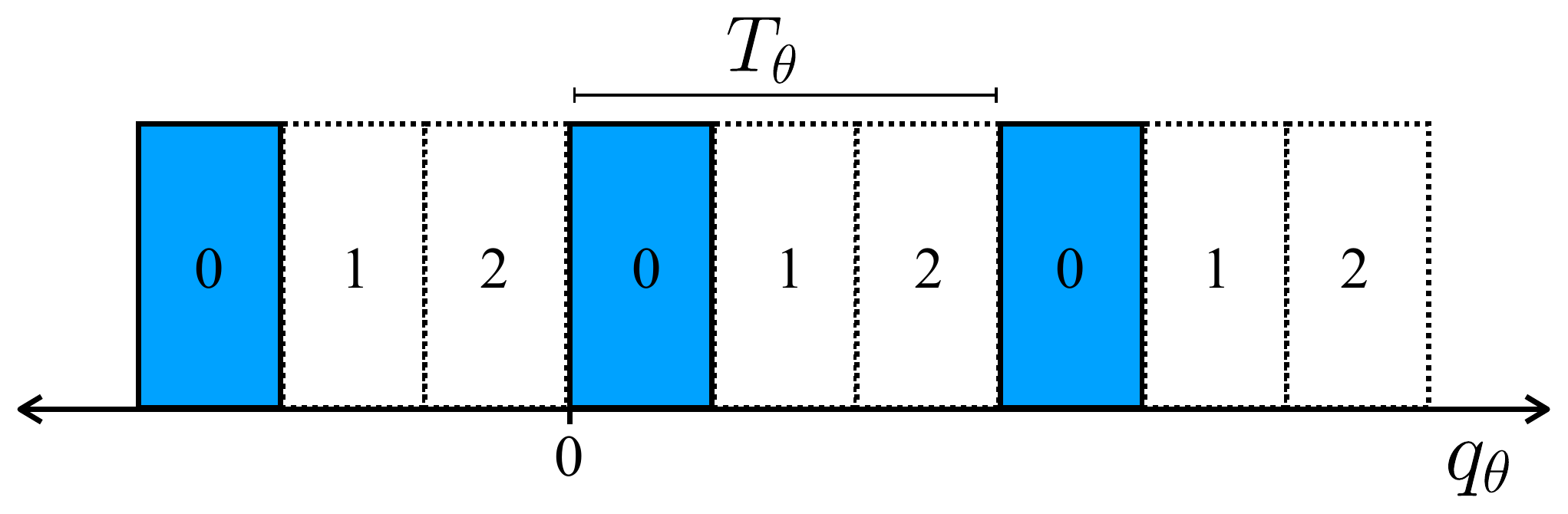}
\par\end{centering}
\caption{The periodic coarse graining scheme for $d=3$ and period $T_{\theta}$.
Anytime a particle is detected in the blue regions, the measurement
is attributed the outcome value $0$. The solid line represents the
mask function $M_{0}(q_{\theta},T_{\theta})$.\label{fig:The-periodic-coarse}}

\selectlanguage{american}%
\end{figure}

This PCG measurement can be described by the measurement operators
\begin{equation}
\hat{\Omega}_{k}^{(\theta)}=\int_{-\infty}^{\infty}\,dq_{\theta}\,M_{k}(q_{\theta}-q_{\theta}^{\text{cen}},T_{\theta})\,\ket{q_{\theta}}\bra{q_{\theta}},\label{eq:measurop}
\end{equation}
where the mask function describes the detectors and is regarded as
\textcolor{black}{the periodic square waves }
\begin{equation}
M_{k}(q_{\theta},T_{\theta})=\begin{cases}
1 & k\frac{T_{\theta}}{d}\leq q_{\theta}(\text{mod}\,d)<(k+1)\frac{T_{\theta}}{d}\\
0 & \textrm{otherwise},
\end{cases}
\end{equation}
such that $\hat{\Omega}_{k}^{(\theta)}$ projects a state on the subspace
of $\hat{q}_{\theta}$-eigenstates corresponding to all bins periodically
labeled by $k$.\textcolor{black}{{} The displacement parameter $q_{\theta}^{\textrm{cen}}$
is included to allow for freedom to define the origin} and will be
taken to be zero for simplicity. The Fourier series of the mask function
is written as 
\begin{equation}
M_{k}(q_{\theta},T_{\theta})=\frac{1}{d}+\sum_{n\in\mathbb{Z}/\{0\}}\frac{1-e^{-i\frac{2\pi n}{d}}}{i2\pi n}e^{-i\frac{2\pi nk}{d}}e^{-i\frac{2\pi n}{T_{\theta}}q_{\theta}}.\label{Fouriermask}
\end{equation}

Let us now consider two PCG measurements $\left\{ \hat{\Omega}_{k}^{(\theta)}\right\} {}_{k}$
and $\left\{ \hat{\Omega}_{k}^{(\theta^{\prime})}\right\} _{k}$as
the one described. They are defined in two nonparallel directions
characterized by angles $\theta$ and $\theta^{\prime}$ in phase
space, with periods $T_{\theta}$ and $T_{\theta^{\prime}}$, respectively.
Without any loss of generality we can consider a pure state satisfying
\begin{equation}
p_{k}^{(\theta)}=\bra{\psi}\hat{\Omega}_{k}^{(\theta)}\ket{\psi}=\delta_{kl}\quad\forall\,k=0,...,d-1.\label{eq:probtheta}
\end{equation}
The $q_{\theta}$ and $q_{\theta^{\prime}}$ representations of this
state are denoted as $\braket{q_{\theta}|\psi}=\psi(q_{\theta})$
and $\braket{q_{\theta^{\prime}}|\psi}=\bar{\psi}(q_{\theta^{\prime}})$.
In order to determine the condition for MU of the two PCG measurements,
the probability of detection in direction $\theta^{\prime}$ must
be calculated. By using Eq. \eqref{eq:measurop} with the Fourier
series \eqref{Fouriermask} and the FrFT connection between the two
phase space representations {[}Eq. \eqref{eq:FrFTrep}{]}, one obtains
\begin{equation}
p_{j}^{(\theta^{\prime})}=\bra{\psi}\hat{\Omega}_{j}^{(\theta^{\prime})}\ket{\psi}=\frac{1}{d}+\sum_{n\in\mathbb{Z}/\{0\}}\frac{1-e^{-i\frac{2\pi n}{d}}}{i2\pi n}\int dq_{\theta}\,e^{i\phi_{j}^{(n)}(q_{\theta})}\,\psi^{*}\left(q_{\theta}\right)\,\psi\left(q_{\theta}-n\frac{2\pi\sin\Delta\theta}{T_{\theta^{\prime}}}\right),\label{eq:probthetalin}
\end{equation}
where $\Delta\theta=\theta-\theta^{\prime}$ and $\phi_{j}^{(n)}(q_{\theta})=n\frac{2\pi\cos\Delta\theta}{T_{\theta^{\prime}}}q_{\theta}-\left(n\frac{2\pi}{T_{\theta^{\prime}}}\right)^{2}\frac{\sin(2\Delta\theta)}{4}-\frac{2\pi j}{d}$.
For the two measurements to be unbiased, $p_{j}^{(\theta^{\prime})}$
must be equal to $\nicefrac{1}{d}$ because we started from a localized
state relative to the measurement in $q_{\theta}$. Thus, the sum
in the right hand side shall be forced to vanish. The terms in the
sum with $n$ multiple of $d$ are already null because in this cases
$e^{-i\frac{2\pi n}{d}}=1$. Because of Eq. \eqref{eq:probtheta},
$\psi(q_{\theta})$ has nonnull values only inside the mask $M_{l}(q_{\theta},T_{\theta})$.
Hence, if $q_{\theta}$ is any value for which $\psi(q_{\theta})\neq0$,
then $\psi(q_{\theta}+\Delta q_{\theta})=0$, and so is $\psi^{*}\left(q_{\theta}\right)\,\psi\left(q_{\theta}+\Delta q_{\theta}\right)$,
provided that $\frac{T_{\theta}}{d}\leq\Delta q_{\theta}(\text{mod }T_{\theta})\leq(d-1)\frac{T_{\theta}}{d}$.
Particularly, if the absolute value of the increment $\frac{2\pi\sin\Delta\theta}{T_{\theta^{\prime}}}$
for $n=1$ in Eq. \eqref{eq:probthetalin} is equal to $m\frac{T_{\theta}}{d}$,
with $m$ an integer number non-multiple of $d$, the corresponding
term in the summation vanishes for any $\ket{\psi}$ satisfying \eqref{eq:probtheta}.
For the other values of $n$ non-multiple of $d$, we then have that
$n\frac{2\pi\sin\Delta\theta}{T_{\theta^{\prime}}}=nm\frac{T_{\theta}}{d}$
must be non-multiple of the period in order to all the integrals to
vanish. This arguments yield the MUM condition for the PCG measurements
\begin{equation}
T_{\theta}T_{\theta^{\prime}}=\frac{2\pi d\left|\sin\Delta\theta\right|}{m},\qquad\frac{nm}{d}\notin\mathbb{N}\quad\forall\,n=1,...,d-1,\label{eq:MUMcond}
\end{equation}
which is a relation between periods dependent on the dimensionality
parameter chosen as well as on the angle between the two phase space
directions. This relation was first shown in Ref. \cite{Tasca18a}
for position and momentum only $(\Delta\theta=\frac{\pi}{2})$ and
then extended to any two phase space observables in Ref. \cite{paul18}.
In the last reference, the authors also show that is it possible to
build a triple (and no more than three) of PCG measurements pairwise
MU when all the $m$ numbers present in the period relations are equal
to one. The cases above \cite{Tasca18a,paul18} are interesting, but
even in the ideal continuous case, we can have MU measurements for
two or three phase space directions. A natural question is, using
this periodic coarse graining, can we go beyond the CV case, obtaining
more than three MU observables? To answer the question about the possibility
of having a set with more MUMs and how many MUMs compose the maximal
set for a given dimension, one might try to find a set of periods
and angles for which the relation \eqref{eq:MUMcond} is satisfied
for all pairs of measurements for some $m$. In the next section,
it is presented a way to construct such set of measurements in which
the angles are fixed and the equations are solved for the periods.
The dimensions possible for a given number of measurement directions
come as part of the solution.

\section{Construction of several MUMs \label{sec:Construction-of-several}}

Following Refs. \cite{Tasca18a,paul18}, let us consider $R$ phase
space operators $\hat{q}_{j}$ as in Eq. \eqref{eq:CVs}, related
to each other via phase space rotations, and each characterized by
an angle $\theta_{j}$, for $j=0,...,R-1$, illustrated in Fig. \ref{fig:vecs}
a). We can then define define $R$ coarse-grained projective measurement
operators:

\begin{equation}
\hat{\Omega}_{j}^{(m)}=\int dq_{j}\,M_{r}^{(j)}\left(q_{j}-q_{j}^{\textrm{cen}};T_{j}\right)\left|q_{j}\right\rangle \left\langle q_{j}\right|,\label{Projectors}
\end{equation}
with detector apertures encoded in ``mask functions'' $M_{r}^{(j)}$,
such that \textcolor{black}{$\sum_{m=0}^{d-1}M_{r}^{(j)}=1$. The
parameter $T_{j}$ is the period of the mask function.} The outcome
probabilities produced by the set of projectors \eqref{Projectors}
then define the PCG of the probability distribution associated with
the phase-space variable $q_{j}$. Since we work with dimensionless
variables, the mask parameter $T_{j}$ is also dimensionless. 

\begin{figure}[h]
\centering{}\includegraphics[width=0.5\columnwidth]{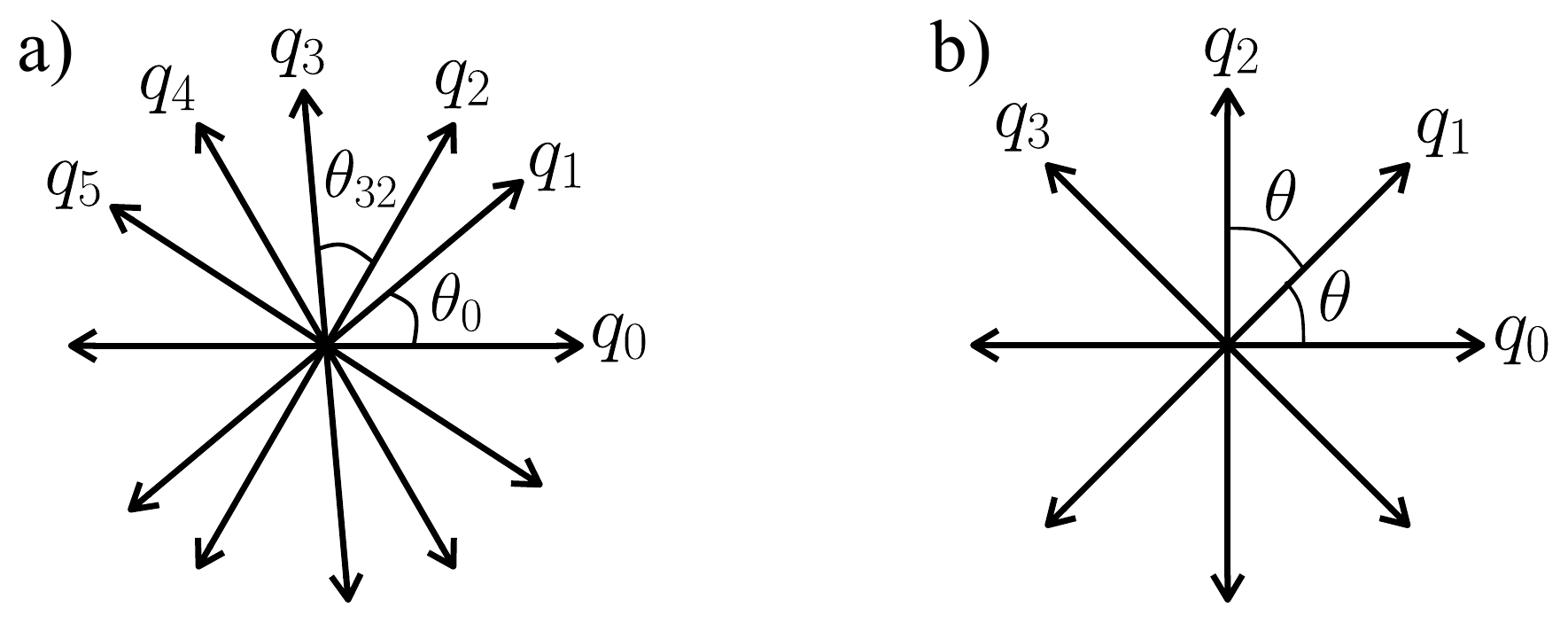}
\caption{Examples of phase space variables for a) R = 6 and b) the R = 4 case
implemented experimentally. }
\label{fig:vecs} 
\end{figure}


We assume that $\theta_{0}=0$, without loss of generality, and $\theta_{j}>\theta_{k}$
if $j>k$. The condition \eqref{eq:MUMcond} for mutual unbiasedness
of PCG operators of the sort \eqref{Projectors} is written as 
\begin{equation}
T_{j}T_{k}m_{jk}=2\pi d|\sin\theta_{jk}|,\label{eq:MUMcond1}
\end{equation}
where $\theta_{jk}\equiv\theta_{j}-\theta_{k}$ and $m_{jk}$ is a
positive integer. The MU condition \eqref{eq:MUMcond1} also requires
that 
\begin{equation}
\frac{m_{jk}n}{d}\notin\mathbb{N}\,\,\forall\,n=1,\dots d-1.\label{eq:MUMcond1-b}
\end{equation}

To construct a general recipe to obtain an arbitrary number of PCG
MUMs, let us consider that $\sin\theta_{j}\geq0$, which defines variables
$q_{j}$ in the upper semi-plane of phase space. This is not a restriction,
as variables in the lower half-plane can be taken to the upper half-plane
by a reflection through the origin: $q_{j}\rightarrow-q_{j}$. Using
the fact that $\theta_{0}=0$, we then have $R-1$ conditions of the
form: 
\begin{equation}
T_{j}=\frac{2\pi d\sin\theta_{j}}{m_{j0}T_{0}}.\label{eq:periods}
\end{equation}
Plugging the above equations for the periods into condition \eqref{eq:MUMcond1}
with $j,k\neq0$, we have 
\begin{equation}
T_{j}T_{k}m_{jk}=2\pi d|\sin\theta_{jk}|=\frac{(2\pi d)^{2}\sin\theta_{j}\sin\theta_{k}m_{jk}}{T_{0}^{2}m_{j0}m_{k0}}.\label{eq:MUMcond2}
\end{equation}
Considering $j>k$, this can be rewritten as
\begin{equation}
\frac{m_{jk}}{m_{j0}m_{k0}}=\frac{T_{0}^{2}}{2\pi d}(\cot\theta_{k}-\cot\theta_{j}).\label{eq:MUMcond4}
\end{equation}
Since the left-hand side is composed of all natural numbers, it is
a rational number, which results in a general restraint for the right-hand
side (RHS). Condition \eqref{eq:MUMcond4} is enough to prove several
important results concerning periodic discretization.

There is quite a bit of freedom in constraint \eqref{eq:MUMcond4}
concerning the period $T_{0}$, as well as the angles $\theta_{j},\theta_{k}$
(for $j,k\neq0$). With some specification, we can construct a useful
recipe for finding a general mutually unbiased set. As a step in this
direction, let us choose angles that are distributed at integer multiples
of an angle $\theta$, such that $\theta\leq\pi/2$ and $\theta_{j}=j\theta$.
Moreover, we will choose $T_{0}^{2}=\pi d\tan\theta$, so that we
can write 
\begin{equation}
\frac{m_{jk}}{m_{j0}m_{k0}}=\frac{\tan\theta}{2}(\cot k\theta-\cot j\theta).\label{eq:MUMcond5}
\end{equation}
Choosing then 
\begin{equation}
\tan\theta=\sqrt{N/N^{\prime}},\label{eq:MUMcond6}
\end{equation}
where $N,N^{\prime}\in\mathbb{N}$, we can prove that the RHS of Eq.
\eqref{eq:MUMcond5} is always a rational number for all $j,k=1,\dots,M-1$.
The proof is provided in the next subsection. This allows one to find
suitable numbers $m_{j0}$, $m_{k0}$ and $m_{jk}$ that satisfy Eq.
\eqref{eq:MUMcond5}. Eqs. \eqref{eq:MUMcond1-b}, \eqref{eq:MUMcond5}
and \eqref{eq:MUMcond6} then define the conditions for a set of mutually
unbiased PCG observables. 

\subsection{Solution to equation (\ref{eq:MUMcond4})}

The RHS of Eq. \eqref{eq:MUMcond4} must be a rational number for
all allowed values of $j$ and $k$. Thus, all elements in the sequence
$\{a_{k}=\frac{T_{0}^{2}}{2\pi d}\cot\theta_{k}\}_{k}$ should be
rational. There are two possibilities to assure this: requiring the
product of $\frac{T_{0}^{2}}{2\pi d}$ and $\cot\theta_{k}$ to be
rational, what implies a relation between the period and the angle,
or requiring each term in the product to be independently rational.
In both cases, mathematical induction can be used to figure out the
solutions. According to this method, we need to ensure the validity
of the statement for the first element in the sequence ($k=1$ in
our case). The statement is valid for all $k$ if it can be demonstrated
that the assumption of validity for any value $k-1$ implies its validity
for $k$.

In what follows, we use the recurrence relation 
\begin{equation}
\cot(k\theta)=\frac{\cot\theta\cot(k-1)\theta-1}{\cot\theta+\cot(k-1)\theta},\label{recurrence-1}
\end{equation}
and all $n_{i}$ are natural numbers.

\subsubsection*{First solution}

We want to show what are the conditions that make all elements of
the sequence $a_{k}$ to be rational numbers. A relation between the
period $T_{0}$ and the angle $\theta$ comes from the assertion that
the first element of this sequence is a rational: 
\begin{equation}
a_{1}=\frac{T_{0}^{2}}{2\pi d}\frac{1}{\tan\theta}=\frac{n_{1}}{n_{2}}\quad\Rightarrow\quad T_{0}=\sqrt{2\pi d\tan\theta\frac{n_{1}}{n_{2}}}.\label{k1-1}
\end{equation}

Now suppose that for an arbitrary $k$ the $(k-1)$-th element is
rational, that is 
\begin{equation}
a_{k-1}=\frac{T_{0}^{2}}{2\pi d}\cot(k-1)\theta=\frac{n_{3}}{n_{4}},\label{kminus1-1}
\end{equation}
this should imply the $k$-th element to also be a rational number.
Using the relations \eqref{recurrence-1}, \eqref{k1-1}, and \eqref{kminus1-1}
we have 
\begin{equation}
a_{k}=\frac{n_{1}}{n_{2}}\frac{n_{2}n_{3}-n_{1}n_{4}\tan^{2}\theta}{n_{2}n_{3}+n_{1}n_{4}},
\end{equation}
which is not rational unless 
\begin{equation}
\tan\theta=\sqrt{\frac{n_{5}}{n_{6}}}.\label{eq:tan-1}
\end{equation}
Therefore, if conditions \eqref{k1-1} and \eqref{eq:tan-1} are satisfied,
the inductive proof is concluded. The solution presented before for
the period and angle are particular cases with $n_{1}=1$, $n_{2}=2$.

\subsubsection*{Second solution}

We define a new sequence $\{b_{k}=\cot(k\theta)\}_{k}$ and $a_{k}=\frac{T_{0}^{2}}{2\pi d}b_{k}$.
One possibility for the elements of $\{a_{k}\}$ to be rational is
that the elements of $\{b_{k}\}$ are rational and 
\begin{equation}
\frac{T_{0}^{2}}{2\pi d}=\frac{n_{1}}{n_{2}}\quad\Rightarrow\quad T_{0}=\sqrt{2\pi d\frac{n_{1}}{n_{2}}}.\label{period-1}
\end{equation}

A condition for the angle comes from the requirement that $b_{1}$
is a rational number: 
\begin{equation}
b_{1}=\frac{n_{3}}{n_{4}}\quad\Rightarrow\quad\tan\theta=\frac{n_{4}}{n_{3}}.
\end{equation}

It comes directly from relation \eqref{recurrence-1} that, if $b_{k-1}$
is rational, then it follows that $b_{k}$ is also a rational number.
This solution allows for choosing $T_{0}$ and $\theta$ independently. 

\subsection{Even dimensionality parameter}

There are some interesting conditions that can be derived about the
dimension parameter $d$. As a first result, we show that, \emph{for
$d$ even, there are at most $R=3$ mutually unbiased PCG operators}.
To prove this, let us analyze once more the conditions \eqref{eq:MUMcond5}
for $k=1$ and $j=2$, which gives 
\begin{equation}
\frac{m_{20}m_{10}}{m_{21}}=4\cos^{2}{\theta}.\label{eq:deven1}
\end{equation}
For $R=3$ this is the only condition that must be satisfied, and
it alone is not prohibitive for any $d\geq2$. For any number of bases
$R>3$, an additional condition for $k=1$ and $j=3$ is also present.
The two conditions are related, since, using \eqref{eq:MUMcond5}
and \eqref{eq:deven1} we can write 
\begin{equation}
\frac{m_{30}m_{10}}{m_{31}}=4\cos^{2}{\theta}-1=\frac{m_{20}m_{10}-m_{21}}{m_{21}}.\label{eq:m30}
\end{equation}
Consider $d$ even. Then, using $n=d/2$ in \eqref{eq:MUMcond1-b}
determines that all $m_{jk}$ must be odd. From \eqref{eq:m30} we
can write $m_{30}=\left(m_{31}m_{20}m_{10}-m_{31}m_{21}\right)/(m_{21}m_{10})$,
and assuming that all $m_{jk}$ appearing in the RHS of this expression
are odd, then the numerator turns out to be even and thus $m_{30}$
must be even, which violates condition \eqref{eq:MUMcond1-b}. Thus,
there is no valid solution for even dimension if the number of phase
space operators $R>3$. 

\subsection{Examples for $d$ odd}

\begin{table}[h]
\centering{}%
\begin{tabular}{|c|c|c|>{\raggedright}m{6.8cm}|c|}
\hline 
{\small{}$\theta$} & {\small{}$\tan\theta$ } & {\small{}$R$ } & \multirow{1}{6.8cm}{\centering{}{\small{}$m$ values }} & {\small{}allowable $d$ }\tabularnewline
\hline 
{\small{}$\frac{\pi}{4}$ } & {\small{}1 } & {\small{}4 } & \centering{}{\small{}$m_{20}=2$, all other $m_{jk}=1$ } & {\small{}$d\geq3$, odd}\tabularnewline
\hline 
\noalign{\vskip\doublerulesep}
\multirow{2}{*}{{\small{}$\frac{\pi}{6}$ }} & \multirow{2}{*}{{\small{}$\frac{1}{\sqrt{3}}$ }} & \multirow{2}{*}{{\small{}6 }} & \multirow{2}{6.8cm}[0.1cm]{\centering{}{\small{}$m_{20}=m_{40}=m_{42}=3$, $m_{30}=m_{41}=m_{52}=2$,
all other $m_{jk}=1$ }} & \multirow{2}{*}{{\small{}$d=5,7,11,13,17,19,23,25,29$ }}\tabularnewline[\doublerulesep]
 &  &  &  & \tabularnewline
\hline 
{\small{}$0.35$ rad} & {\small{}$\sqrt{\frac{2}{15}}$} & {\small{}8} & \centering{}{\small{}many values} & {\small{}$d=7,11,19,23,29$}\tabularnewline
\hline 
{\small{}$\approx0.3$ rad } & {\small{}$\frac{1}{\sqrt{9}}$} & {\small{}9 } & \centering{}{\small{}many values } & {\small{}$d=11,17,19,23,29$ }\tabularnewline
\hline 
\end{tabular}\caption{\label{tab:1} Some results for $R=4,6,8,9$.}
\end{table}

Let us now consider some particular cases. One can see that previous
results \cite{paul18} for $R=3$ are retrieved when $\tan\theta=\sqrt{3}$
(here in present notation we have $q_{0}=x$, $q_{1}=-s$ and $q_{2}=r$,
in terms of previous variables \cite{paul18}) which from Eq. \eqref{eq:MUMcond1}
gives all $m_{jk}=1$. 

In table \ref{tab:1} we show results for $R=4,6,8$ and $9$. The
allowable dimensions $d$ were tested up to $d=30$. $R=3,4,6$ are
the only values that allow division of the first half-plane into equal
``slices\textquotedbl , while still maintaining $\tan^{2}\theta$
rational. For other values, this is not possible. For example, to
divide the semi-plane into 7 equal slices, we need $\theta=\pi/7$,
but this does not result in $\tan\theta=\sqrt{N/N^{\prime}}$. Thus,
we must choose an $N$ and $N^{\prime}$ which results in a $\theta<\pi/7$.
To take advantage of the entire phase space, in some sense, it seems
logical to choose the angles so that $\theta$ is as large as possible,
though this is not necessary. The results were obtained choosing the
value of $\theta$ and fixing $m_{10}=1$ from what is possible to
obtain the values of the other $m_{jk}$ as numerators and denominators
of the RHS of Eq. \ref{eq:MUMcond5}. With the values of all $m_{jk}$
we could check which are the allowed values of $d$ satisfying \ref{eq:MUMcond1}. 

Looking at our results for $R=4,6$ and $8$ one might be tempted
to assume that there are at most $R=d+1$ MUMs for odd $d$, as is
known to be true for some cases in discrete quantum mechanics. In
fact, this is the case if $d$ is a prime number, as we shall show
in Sec. \ref{sec:Maximum-number-of}. 

\subsection{Experimental realization}

To confirm and explore our results we performed a classical optics
experiment, similar to those of Refs. \cite{Tasca18a,paul18}. Optical
fractional Fourier transforms (FrFT) and amplitude masks were used
to prepare and measure the transverse spatial profile of a laser beam,
as shown schematically in Fig. \ref{fig:exp}. Both the optical FrFTs
as well as the amplitude masks were implemented using spatial light
modulators (SLMs), as shown in Fig. \ref{fig:exp} and described in
detail in Chapter \ref{chap:Paraxial}. The vertical transverse coordinate
is used as the system CV while the horizontal coordinate is used for
the amplitude masks diffraction. In each reflection by an SLM, only
the first order diffraction in the horizontal plane is taken for the
following operations. It is well known that when proper scaling is
chosen, the FrFT of order $\alpha$ is equivalent to a rotation in
phase space by the angle $\alpha$. Using the three-lens FrFT scheme
introduces a scaling factor such that the adimensional ($T_{j}$)
and physical ($T'_{j}$) periods are related by $T'_{j}=\sqrt{\frac{\lambda z}{\pi}}T_{j}$,
where $z=0.29\,$m is the distance between the lenses (quadratic phases
implemented by the SLMs) and $\lambda=632.9$nm is the laser wavelength
produced by the HeNe laser used. Moreover, for practical reasons the
physical periods are given in number of SLM pixels. The pixel size
of the Holoeye SLMs used here is $8\mu\text{m}$. The initial state
is fixed and is prepared as a colimated Gaussian beam with width $(2.54\pm0.06)\,$mm
at the plane of the first SLM. It is considered as the state in the
position-eigenstates representation. In the preparation stage, a FrFT
of order $j\theta$ was implemented on the transverse profile, followed
by the application of an amplitude mask $M_{r}^{(j)}$ of period $T_{j}^{\prime}$.
This maps the position representation in the first SLM plane to the
$q_{j\theta}$-representation in the third SLM plane, where the amplitude
mask is also applied to prepare a localized state in respect to PCG
$q_{j\theta}$ measurement. The measurement stage consisted of an
FrFT of order $(k-j)\theta$, mapping the $q_{j\theta}$-representation
in the third SLM plane to the $q_{k\theta}$-representation in the
third SLM plane, and an amplitude mask $M_{s}^{(k)}$. The full field
of the resulting output beam was then attenuated and detected with
a single photon detector. 

\begin{figure}[h]
\centering{}\includegraphics[width=0.7\columnwidth]{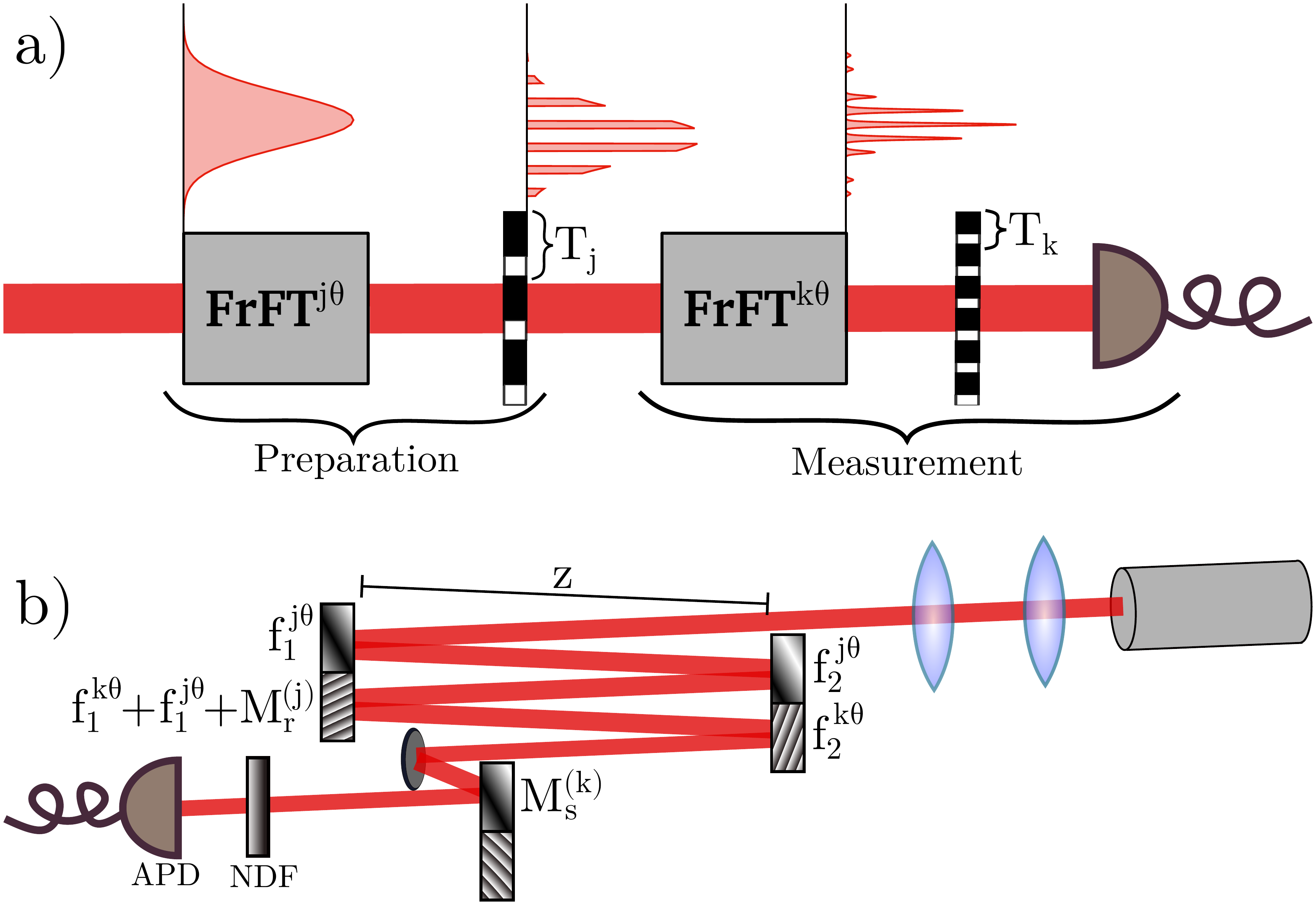} \caption{a) Schematic of the experiment. Fractional Fourier Transforms (FrFT)
and periodic amplitude masks (M) are used to prepare and measure the
transverse spatial profile of the laser beam. All output light is
incident on full-field single-photon detector. The scheme used to
implement FrFT is composed by three lenses with separation distance
$z$. If the focal distances satisfy $f_{1}^{\theta}=z\left(1-\frac{\cot(\theta/2)}{2}\right)^{-1}$
and $f_{2}^{\theta}=z(1-\sin\theta)^{-1}$ then the transverse profile
of the optical field in the plane $p_{2}$ is the FrFT of order $\theta$
of the field in the plane $p_{1}$ up to a scaling factor that is
independent of the FrFT order. b) Experimental setup using spatial
light modulators (SLM) for preparation and measurement. The output
of a $632.8$nm He-Ne laser is enlarged and collimated using two lenses.
The FrFTs are performed modulating the phase accordingly to the quadradic
phase of lenses with the right focal distances. The use of SLMs allows
us to make any order FrFT, which would be challenging with actual
lenses. The amplitude masks are also implemented using phase-only
modulators applying diffraction gratings and collecting only the first
diffraction order. Finally, the beam is attenuated with a neutral
density filter (NDF) and detected by a single photon detector. More
details provided in main text.}
\label{fig:exp} 
\end{figure}



\begin{center}
\begin{table} \centering \begin{tabular}{c|c|c|c|c|c|} \multicolumn{1}{c}{\multirow{6}{*}{\begin{turn}{90} \hspace{-0.5cm} Preparation \end{turn}}} & \multicolumn{5}{c}{Measurement}\tabularnewline \cline{2-6}   &  & 0 & 1 & 2 & 3\tabularnewline \cline{2-6}   & 0 & $0.161(3)$ & $1.5846(2)$ & $1.579(1)$ & $1.5847(2)$\tabularnewline \cline{2-6}   & 1 & $1.5841(8)$ & $0.143(3)$ & $1.583(3)$ & $1.584(4)$\tabularnewline \cline{2-6}   & 2 & $1.5846(1)$ & $1.5848(1)$ & $0.140(5)$ & $1.5838(1)$\tabularnewline \cline{2-6}   & 3 & $1.5844(2)$ & $1.5847(1)$ & $1.5844(5)$ & $0.162(3)$\tabularnewline \cline{2-6}  \end{tabular} \caption{Entropy value for different preparation and measurement direction in the case $R=4$ and $d=3$. } \label{tab:2} \end{table} 
\par\end{center}

We tested the case of $R=4$ MUMs with dimension parameter $d=3$
and $\theta=\pi/4$ for all nine combinations of preparation and measurement
phase space directions. We chose the period of mask 0 to be $T_{0}^{\prime}=93$
pixels, since this value is the closest integer number to the exact
value ($92.7476$ pixels) satisfying condition \eqref{eq:MUMcond6}
with $N=1$ and $N'=4$. Using \eqref{eq:periods}, and choosing $m_{10}=m_{30}=1$
we have $T_{1}^{\prime}=T_{3}^{\prime}=131.165$ pixels that was approximated
by $132$ pixels so the bin width is an integer. Using these values,
we tested MUM conditions between preparations $j=0,1,3$ and measurement
$k=2$, giving results shown in Fig. \ref{fig:results}. For each
measurement, the detection mask was scanned in all three positions
($d=3$), and the number of photon counts registered. We then calculated
the detection probabilities $p_{i}$, where $i=0,1,2$ refers to the
three amplitude masks, as well as the Shannon entropy $E=-\sum_{i}p_{i}\log_{2}p_{i}$
of the probability distributions, plotted in Fig. \ref{fig:results}
as a function of the period $T_{2}^{\prime}$ of the mask used in
the measurement stage. Vertical lines show values at which the period
$T_{2}^{\prime}$ corresponds to allowable $m_{2j}$ values. We can
see that at several places the entropy reaches its maximum value of
$\log_{2}3\approx1.5849$, which indicates the probability distribution
is uniform, corresponding to a MUM result. In order to have a set
of MUMs, the entropy must have its maximum value for all preparation
directions with the same value of $T_{2}^{\prime}$, which only happens
for the periods corresponding to some set of $m_{2j}$ that satisfies
\eqref{eq:MUMcond1} and \eqref{eq:MUMcond1-b} for all $j$ simultaneously\footnote{This is visually observed for periods $T_{2}^{\prime}>25$px approximately.
For periods smaller than that value, the entropy is too close to its
maximum value and it is not possible to tell by the plot which periods
give exactly the maximum entropy.}. The period parameter $T_{2}^{\prime}=93$ pixels satisfies the MUM
condition in all plots. Moreover, it corresponds to $m_{21}=m_{23}=1$
and $m_{20}=2$, as predicted by our theoretical results in Table
\ref{tab:1}. The entropy values obtained for these mask periods are
given in table \ref{tab:2}. We obtained results very close to the
maximum value of $\log_{2}3$ in all cases, indicating MUM results
for $R=4$ PCG measurements. To test the operation of our setup, we
also made measurements with equal preparation and measurement phase
space direction. The entropy ideally would vanish in this case, but
it has non-zero experimental values as can be seen in Table \ref{tab:2}.
This is due to the existence of some background noise that makes the
probability of preparing and measuring the system in the same mask
to be slightly smaller than 1 ($\approx0.98$), which is enhanced
by the structure of Shannon entropy. Similar results were obtained
for all combinations of preparation and measurement and are shown
in App. \ref{chap:Anexo0}. Furthermore, in all plots we can observe
that the entropy decreases greatly when $m_{2j}=3$, which is not
allowed by Eq. \eqref{eq:MUMcond1-b} when $d=3$. 

\begin{figure}[H]
\centering{}\includegraphics[width=8cm]{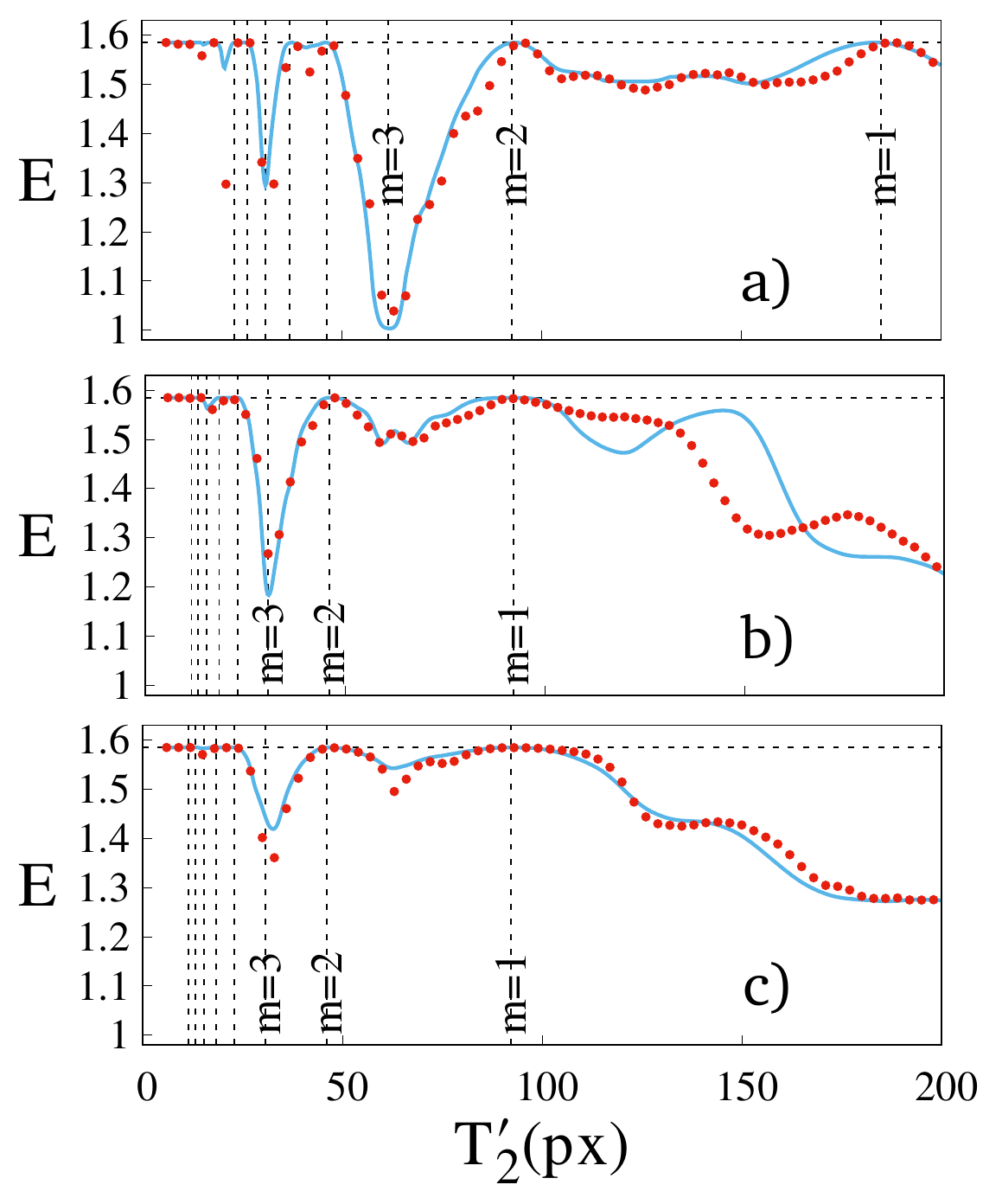} \caption{Example of results for $R=4$ MUMs, using angle $\theta=\pi/4$ and
preparation directions with a) $j=0$, b) $j=1$ and c) $j=3$. In
all graphs the measurement used is $k=2$. Entropy is plotted as a
function of the period $T_{2}^{\prime}$ of the measurement mask,
with preparation mask $M_{0}^{(j)}$ fixed with the period that satisfies
the MUM condition. The blue curves correspond to theoretical predictions
with the same initial state as the measured one.}
\label{fig:results} 
\end{figure}

\section{Maximum number of MUMs: general case\label{sec:Maximum-number-of}}

Our main results concern the maximum number of MUMS: First, let us
consider $d$ even. Using $n=d/2$ in \eqref{eq:MUMcond1-b} determines
that all $m_{jk}$ must be odd. Now, for $R\geq4$ measurements we
have conditions \eqref{eq:MUMcond4} for $jk=12,13,23$. Combining
these three constraints leads to $m_{23}m_{10}+m_{12}m_{30}=m_{13}m_{20}$,
which requires one $m$ to be even and violates Eq. \eqref{eq:MUMcond1-b}.
Thus, there are at most $R=3$ MUMs for even $d$ in the general case,
showing that it is not a particularity of our construction.

Now, for odd $d$ the PCG given in \eqref{Projectors} and satisfying
constraints \eqref{eq:MUMcond1} and \eqref{eq:MUMcond1-b}, we can
show that the maximum number of MUMs is given by $R_{max}=\varphi(d)+2$,
where $\varphi(d)=d\,\Pi_{p|d}\left(1-\frac{1}{p}\right)$ is Euler's
totient function counting the positive integers less than $d$ that
are relatively prime to $d$, and $p|d$ denotes all prime divisors
of $d$. Since $\varphi(d)=d-1$ when $d$ is prime we have $R_{max}=d+1$
for prime $d$. The proof is provided below. 

\subsubsection*{Proof for $d$ odd}

Consider again the MUM condition in the form of Eq. \eqref{eq:MUMcond4}.
This condition is particularly valid for $k=1$ and $j\geq2$ giving
\begin{equation}
\frac{m_{j1}}{m_{j0}m_{20}}=\frac{T_{0}^{2}}{2\pi d}(\cot\theta_{1}-\cot\theta_{j}).
\end{equation}
The previous equation can be substituted into the more general relation
for $j>k$ and $j,k\geq2$ yielding a relation involving only the
$m_{jk}$'s
\begin{equation}
m_{j1}m_{k0}-m_{j0}m_{k1}=m_{jk}m_{10}.\label{eq:ms}
\end{equation}

Let us introduce an expansion of all $m_{jk}$ as 
\begin{equation}
m_{jk}=l_{jk}d+n_{jk},
\end{equation}
where $l_{ij}$ the integer quotient of the division of $m_{jk}$
by $d$ and $n_{jk}=m_{jk}\:\text{mod}\,d$ is the integer remainder.
By plugging this parametrization into \eqref{eq:ms}, one can see
that
\begin{equation}
n_{10}m_{jk}=Z_{jk}d+n_{j1}n_{k0}-n_{j0}n_{k1},
\end{equation}
with 
\[
Z_{jk}=-l_{10}m_{jk}+n_{j1}l_{k0}+n_{k0}l_{j1}+l_{j1}l_{k0}d-\left(n_{k1}l_{j0}+n_{j0}l_{k1}+l_{k1}l_{j0}d\right)
\]
an integer. Notice that $n_{10}=1,2,\,...\,,d-1$, thus one instance
of condition \eqref{eq:MUMcond1-b} reads
\begin{equation}
\frac{n_{10}m_{jk}}{d}\notin\mathbb{N}\quad\Leftrightarrow Z_{jk}+\frac{n_{j1}n_{k0}-n_{j0}n_{k1}}{d}\notin\mathbb{N}.
\end{equation}
Hence, as $Z_{jk}$ is integer, a necessary condition for the set
of MUMs with $R$ measurements to exist is that 
\begin{equation}
(n_{j1}n_{k0}-n_{j0}n_{k1})\text{mod }d\neq0\quad\forall\:j,k\in\{2,3,...,R-1\}.\label{eq:conditions2}
\end{equation}

It follows from the fact that $n_{j1}n_{k0}\neq n_{k1}n_{j0}$ must
be satisfied for all pairs $i,j$ that \begin{subequations}\label{restrictions}
\begin{equation}
n_{j1}=n_{j'1}\Rightarrow n_{j0}\neq n_{j'0}
\end{equation}
\begin{equation}
n_{j0}=n_{j'0}\Rightarrow n_{j1}\neq n_{j'1}
\end{equation}
\begin{equation}
n_{j1}=n_{j0}\Rightarrow n_{j'0}\neq n_{j'1}\quad\forall\,j'\neq j.
\end{equation}
\end{subequations}

Let us define the $2\times(R-2)$-dimensional matrix 
\begin{equation}
\Omega=\left(\begin{matrix}n_{20}\quad n_{30}\quad...\quad n_{R-1,0}\\
n_{21}\quad n_{31}\quad...\quad n_{R-1,1}
\end{matrix}\right).\label{eq:N}
\end{equation}
The conditions \eqref{eq:conditions2} are thus equivalent to say
that the determinant of all submatrices $2\times2$ of $\Omega$ must
be congruent to zero modulo $d$, that is
\begin{equation}
\left|\begin{matrix}n_{j0} & n_{k0}\\
n_{j1} & n_{k1}
\end{matrix}\right|\mathrm{mod}\,d\neq0.\label{eq:subdet}
\end{equation}
Besides being more elegant, it is also helpful in having a more clear
view of the problem.

The problem of finding the maximum number of MUMs is now equivalent
to the problem of finding the maximum number of columns of matrix
\eqref{eq:N} such that condition \eqref{eq:subdet} is obeyed. Now,
let us look for the cases when condition \eqref{eq:subdet} is not
satisfied, i.e. the cases for which 
\begin{equation}
(n_{j0}n_{k1}-n_{k0}n_{j1})\equiv0\quad(\text{mod}\,d).\label{eq:subdet2}
\end{equation}
From Euclides Algorithm we know that for any $k\in\mathbb{Z},\;k<d$
$\exists\,\alpha,\beta\in\mathbb{Z}\quad\text{s.t.}\quad\alpha k+\beta d=\text{gdc}(k,d)$.
If $d$ is prime, then $\text{gdc}(k,d)=1$ since $k$ is smaller
than $d$. In this case, we can conclude that $k$ has a modular inverse
\begin{equation}
[\alpha k+\beta d](\text{mod}\,d)=1\quad\Rightarrow\quad[\alpha k](\text{mod}\,d)=1,
\end{equation}
what leads to $[\alpha ki](\text{mod}\,d)=i$ for any $i\in\mathbb{Z}$.
Thus, all elements of matrix \eqref{eq:N} have a modular inverse
with respect to $d$ for $d$ prime, which we denote by $\bar{n}_{ij}^{-1}$,
such that $n_{ij}\bar{n}_{ij}^{-1}\equiv1(\text{mod}\,d)$. Multiplying
\eqref{eq:subdet2} by $\bar{n}_{1j}^{-1}\bar{n}_{1k}^{-1}$ we are
left with 
\begin{equation}
(\bar{n}_{k0}^{-1}n_{k1}-\bar{n}_{j0}^{-1}n_{j1})\equiv0\quad(\text{mod}\,d)\label{eq:subdet3}
\end{equation}
or 
\begin{equation}
\bar{n}_{k0}^{-1}n_{k1}\equiv\bar{n}_{j0}^{-1}n_{j1}\quad(\text{mod}\,d).\label{eq:subdet4}
\end{equation}
Thus each column of $\Omega$ is characterized by a number $\chi_{k}=\bar{n}_{k0}^{-1}n_{k1}\,(\text{mod}\,d)$.
Accordingly, to satisfy \eqref{eq:subdet}, each column must have
a different value of $\chi_{k}$. As $1\leq\chi_{k}\leq d-1$, only
$d-1$ columns are allowed, therefore it is not possible to have more
than $d+1$ MUMs. 

If $d$ is not a prime number, then the general MUM condition \eqref{eq:MUMcond1-b}
implies that all elements of $\Omega$ are still coprime with $d$.
The same argument can be used to show that only $\varphi(d)$ columns
are allowed, $\varphi(d)$ being Euler's totient function of $d$,
i.e. the number of all coprimes with $d$ smaller than $d$ calculated
by $\varphi(d)=d\,\Pi_{p|d}\left(1-\frac{1}{p}\right)$, where $p|d$
denotes all prime divisors of $d$. Therefore $\varphi(d)+2$ is the
maximum number of MUMs in this case.

A particular form of the matrix $\Omega$ that satisfies \eqref{eq:subdet}
is 
\begin{equation}
N=\left(\begin{matrix}1 & \dots & 1 & \dots & 1\\
n_{21} & \dots & n_{j1} & \dots & n_{R-1,1}
\end{matrix}\right),\label{eq:N2}
\end{equation}
where $R=\varphi(d)+2$. Note that for $\Omega$ of this form, the
maximum value of $R$ is easily determined. We can see that no two
$n_{j1}$'s can be equal, since in this case condition \eqref{eq:subdet}
would not be satisfied. This, together with the constraints on $n_{ij}$,
determine that the maximum number $K_{max}$ of $n_{2j}$'s is then
given by the number of non-factors of $d$, as shown above.

We have shown the maximum number of MUMs allowed, but not necessarily
one can built a set containing the maximum number of measurements
since condition \eqref{eq:conditions2} are only necessary but not
sufficient. The possibility of having the maximum number of MUMs will
depend on the particular choice of phase space directions. For example,
as shown in Table \ref{tab:1}, if $\theta=\tan^{-1}\left(\sqrt{\frac{2}{15}}\right)$
with eight MUMs, dimension $d=7$ is allowed, but if $\theta=\tan^{-1}\left(\sqrt{\frac{1}{15}}\right)\approx0.25$rad
the dimensions allowed up to $30$ are \{13,17,19,23,29\}, which does
not include $d=R-1$. 

\section{Concluding remarks \label{sec:Concluding-remarks}}

MU is an essential concept in quantum mechanics and quantum information,
about which there are still some basic open questions, such as the
existence of a maximal set of such basis for a general dimension.
For the current known cases, continuous and discrete systems differ
in the number of bases or measurements contained in the maximal MU
set. In this chapter we extended the recently proposed PCG measurements
for CV systems that satisfy MU conditions \cite{Tasca18a}. This kind
of measurement, although performed on CV variables systems, has a
finite discrete number of outcomes resembling a discrete variable
system in some aspects. Here we showed how to construct an arbitrary
number of such PCG MUMs satisfying the MU conditions pairwise. In
our construction, the measurements are determined by the choice of
one angle and one period, for which we found MU conditions. For a
given number of measurements, we could find the allowed dimensions.
For the inverse question, given a dimension $d$, we showed that,
if $d$ in even, the maximum number of measurements is equal to three,
as is the case of the original CV system. On the other hand, if $d$
is odd then the maximum number of MUMs is determined by the number
of prime factors of $d$ and reproduces the discrete case for $d$
prime. It was not expected that the PCG MUMs would follow the discrete
or continuous behavior. Actually it is surprising that it does resemble
both for particular dimensions. These results seems merely mathematical
and reconnect them with physics, we also showed here an experimental
realization of the constructed measurements in which the continuous
variable is taken to be the transverse position of an attenuated light
beam. In our scheme, the FrFT connecting different phase space representations
is performed in a programmable way, allowing for preparation and measurement
in any phase space directions without changing the setup. It is shown
for the case of $d=3$ and $R=4$ that the measurements only present
MU if all the periods are chosen in accordance to the solutions we
found. 

An interesting future direction for this work is to identify a utility
for these results in quantum information. For example, can these MU
observables assist in a task such as tomography, cryptography, or
random number generation? \selectlanguage{american}

\part{Discrete degrees of freedom}

~\ihead{}

\ohead{\textbf{Chapter~\thechapter}~\leftmark}

\ifoot{}

\cfoot{}

\ofoot[
]{\thepage}

\chapter{Experimental techniques \label{chap:SPDC}}

In this Chapter we provide some experimental techniques used in the
discrete degrees of freedom experiments. At the single photon level
the methods used in our experiments are often equivalent to those
used in classical optics. Thus they are partly presented in a classical
manner. This Chapter does not intend neither to be complete nor rigorous,
but its intention is to present the main concepts necessary to understand
the experiments and the experimental issues.

\section{Generating polarization entangled pairs of photons}

Entangled pairs of photons can be generated through the process called
spontaneous parametric down conversion (SPDC). In this process, the
passage through a transparent second-order nonlinear medium can sometimes
cause a photon of a pump beam of frequency $\omega_{p}$ to split
in two photons of lower frequencies $\omega_{s}$ and $\omega_{i}$,
historically called signal and idler photons. Because they are generated
together, conservation laws are responsible for the two photons to
be correlated in many degrees of freedom such as frequency, momentum,
orbital angular momentum and polarization \cite{walborn2010}\footnote{The term parametric refer to the fact that no energy and momentum
is transferred to the medium, so the conservation laws apply only
to the three photons system.}. 

To conserve energy, the frequencies of signal and idler must sum up
to that of their parent photon (Fig. \eqref{fig:SPDC}-c)), that is
\begin{equation}
\omega_{i}+\omega_{s}=\omega_{p}.\label{eq:frequency}
\end{equation}
Momentum conservation implies 
\begin{equation}
\mathbf{k}_{s}+\mathbf{k}_{i}=\mathbf{k}_{p},\label{eq:phase}
\end{equation}
where the $\mathbf{\mathbf{k}}$s are the wave vectors of each mode
(Fig. \eqref{fig:SPDC}-b)). Relations \eqref{eq:frequency} and \eqref{eq:phase}
are called frequency and phase matching conditions, respectively. 

As the pump and the generated fields have quite different frequencies,
the dispersion of the medium causes them to travel at different velocities
because they experience different indices of refraction. Moreover,
isotropic media have null second order susceptibility (the electric
polarization must be an odd function of the electric field such that
the inversion of the last causes an inversion of the polarization
without any alteration). Thus this conversion process requires anisotropy
to take place. In anisotropic media the index of refraction depends
not only on the frequency, but also on the direction of propagation
and the polarization of the propagating field (See Appendix \ref{chap:Anexo2}).
Actually, this dependence is beneficial, since, together with other
properties like crystal thickness, the manipulation of the direction
of the optical axis of the crystal permits one to control the phase
matching and choose the type of SPDC allowed, as is described in the
sequence. 

\begin{figure}[h]
\centering{}\includegraphics[width=0.8\textwidth]{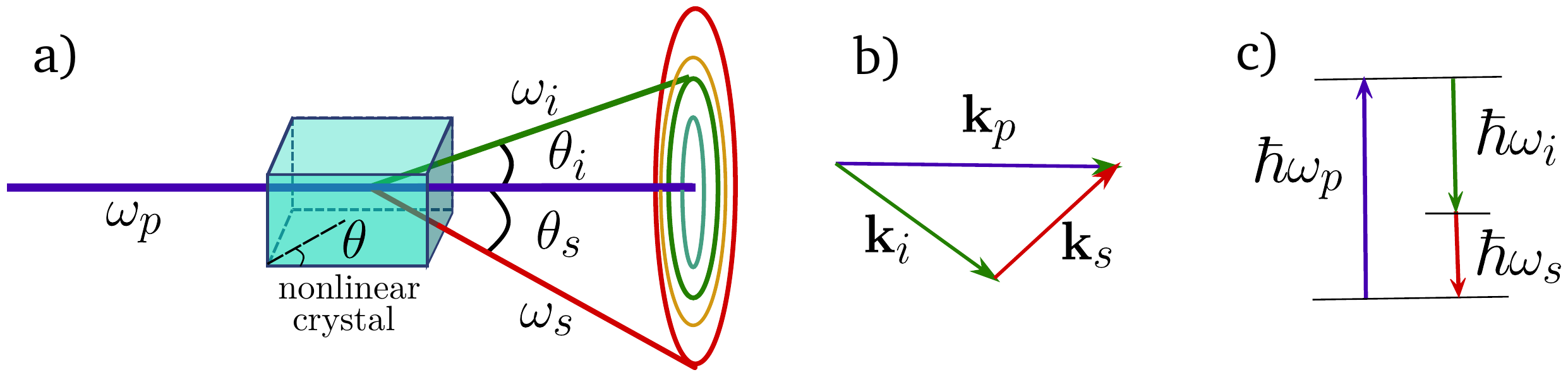}\caption{\textbf{a)} A collimated pump beam with frequency
$\omega_{p}$ crosses a slab of second-order nonlinear crystal generating
a continuum of lower frequency pairs of modes spatially distributed
according to the phase matching conditions. Here the example of the
so-called type-I phase matching: the generated beams have the same
polarization and form a cone centered around the incident beam. The
exit angles of the converted pairs is controlled by the optic axis
angle $\theta$. Two beams generated together, non-collinear with
the pump beam propagation, are represented by the red and green rays.
Each pair of beams conserve \textbf{b)} momentum and \textbf{c)} energy.}
\label{fig:SPDC}
\end{figure}

In a uniaxial crystal, as is the case of the $\beta$-barium-borate
(BBO) crystal used in our experiments, for each propagation direction
there are two orthogonaly polarized modes with different indices of
refraction. The ordinary wave ($\vartheta$) with index of refraction
independent of the propagation direction, and the extraordinary wave
($e$), whose refractive index depends on the angle $\theta$ that
the propagation direction makes with the optic axis of the crystal.
The phase-matching condition \eqref{eq:phase} can be separated into
components giving 
\begin{equation}
\begin{array}{c}
\omega_{s}n_{\vartheta,e}(\omega_{s},\theta)\sin\theta_{s}=\omega_{i}n_{\vartheta,e}(\omega_{i},\theta)\sin\theta_{i}\\
\omega_{s}n_{\vartheta,e}(\omega_{s},\theta)\cos\theta_{s}+\omega_{i}n_{\vartheta,e}(\omega_{i},\theta)\cos\theta_{i}=\omega_{p}n_{\vartheta,e}(\omega_{p},\theta)\cos\theta_{p},
\end{array}
\end{equation}
where the first equation is for the component perpendicular to the
pump beam direction and the second is the parallel component. The
norm of the wave vector is $\mathbf{|k}|=n\omega/c$. The index of
refraction of each mode must take into account if its polarization
is ordinary or extraordinary. These equations can be solved, together
with the frequency matching, for the angle of the optic axis $\theta$
of the crystal fixing any combinations of ordinary and extraordinary
polarization for the three waves. If the polarization of the two converted
beams is the same , the phase-matching is said to be of type-I and
if they are orthogonal it is said to be of type-II. In type-I SPDC,
the generated modes form coaxial cones and two corresponding modes
are diametrically opposite because of momentum conservation, as shown
in Fig. \eqref{fig:SPDC}-a). In type-II SPDC, two cones for the two
different polarizations are generated. If the crystal orientation
satisfies the phase-matching for a input polarization of the pump,
the orthogonal polarization will not be able the give rise to SPDC.

\begin{figure}[h]
\begin{centering}
\includegraphics[width=0.5\textwidth]{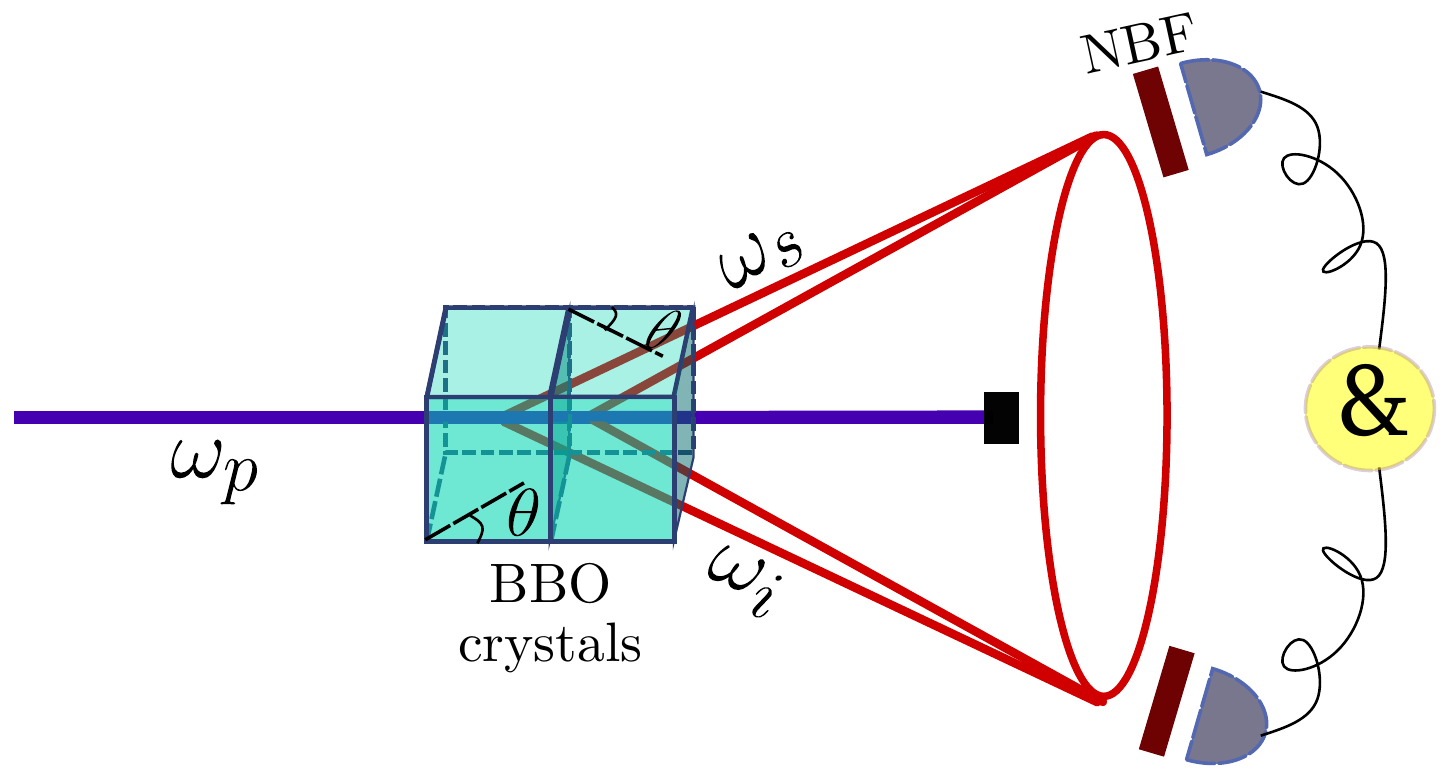}
\par\end{centering}
\caption{SPDC scheme used in our experiments: two adjacent type-I BBO crystals
cut as to attain the same phase-matching, one is rotated of $\pi/2$
relative to the other, such that the phase-matching is obeyed in each
crystal by orthogonal polarizations. The crystals are thin enough
to generate indistinguishable down-converted photons. The signal and
idler detectors are placed symmetrically relative to the pump beam
direction and collect the photon pairs with $\omega_{s}=\omega_{i}=\omega_{p}/2$.
Only detections in coincidence are considered. Narrow band spectral
filters (NBF) are placed in front of each detector.}

\end{figure}

In our experiments, we use a two-type-I-crystals source of entangled
photons \cite{Kwiat99}. In this source, two thin type-I crystals
are put adjacent, such that the pump beam passes through both. The
two crystals are identically cut and form the same angle $\theta$
with the normal incidence direction, but they are rotated by an angle
$\pi/2$ relative to each other such that, for the same propagation
direction, the ordinary polarization of the first crystal is in the
same direction as the extraordinary polarization of the second one.
The phase matching is such that a photon with horizontal polarization
(state $\ket{H}_{p}$\} arriving at the first crystal produces two
vertically polarized ones $\ket{V}_{s}\ket{V}_{i}$, while a vertically
polarized pump photon (state $\ket{V}_{p}$) generates a pair of horizontally
polarized photons $\ket{H}_{s}\ket{H}_{i}$ in the second crystal.
If the two crystals are thin enough, then their cones of down-converted
light coincide and the photons generated in one or in the other crystal
are indistinguishable. The result is that each pair of photons is
generated in a coherent superposition of being created in the first
or in the second crystal. If the pump photons are in the superposition
$\left(\ket{H}_{p}+e^{i\varphi_{p}}\ket{V}_{p}\right)/\sqrt{2}$,
thus for a given pair of corresponding down-converted modes, the polarization
state is (up to an irrelevant global phase)
\[
\frac{\ket{H}_{s}\ket{H}_{i}+e^{i\varphi}\ket{V}_{s}\ket{V}_{i}}{\sqrt{2}}.
\]
The phase difference $\varphi$ is due to the thickness of the crystals
and can be changed by controlling the phase difference $\varphi_{p}$
of the input laser. The desired entangled state can be obtained by
placing a QWP at $0^{\circ}$ in the pump beam, tilting this wave
plate around the vertical axis slightly changes the optical path length
inside it, allowing for tuning of the phase $\varphi$. 

The above described source of pairs of entangled photons can also
be used as a single photon source as one of the photons may be used
only to herald the presence of the other.

\subsection{Coherence length }

The phase-matching conditions are satisfied by a continuum of pairs
of modes. In fact, the SPDC for a single mode pump beam can be effectively
described by the interaction Hamiltonian
\begin{equation}
H_{I}=\sum_{\sigma_{s},\sigma_{i}}\int d\mathbf{k}_{s}\int d\mathbf{k}_{i}\,g_{\mathbf{k}_{s},\sigma_{s}}g_{\mathbf{k}_{i},\sigma_{i}}\delta(\mathbf{k}_{s}+\mathbf{k}_{i}-\mathbf{k}_{p})\delta(\omega_{s}(\mathbf{k}_{s})+\omega_{i}(\mathbf{k}_{i})-\omega_{p}(\mathbf{k}_{p}))a_{\mathbf{k}_{s},\sigma_{s}}^{\dagger}a_{\mathbf{k}_{i},\sigma_{i}}^{\dagger}a_{p}+H.c.,
\end{equation}
where $g_{\mathbf{k},\sigma}$ depends on the second-order nonlinear
coefficient of the media, on the volume of the crystal and on the
index of refraction for the mode with wave vector $\mathbf{k}$ and
polarization $\sigma$. The summation is over the two polarization
directions of the converted photons. Operator $a_{\mathbf{k},\sigma}^{\dagger}$
creates a photon with wave vector $\mathbf{k}$ and polarization $\sigma$,
while operator $a_{p}$ annihilates a photon in the pump mode. $H.c.$
stands for Hermitian conjugate. This effective description of the
SPDC process is valid under several assumptions, among which are the
assumptions of weak power of the pump laser such that the time between
two down conversions is relatively large, and that the crystal is
large as compared to the wavelength of the three beams \cite{walborn2010}.
Since the coupling is weak, such that a pump photon passes through
the crystal without being absorbed with high probability, there is
only a small probability that it can generate a pair of photons and
a negligible probability of generating a higher number of photons.
Thus, the time evolution operator in the interaction picture can then
be approximated by its first order Taylor expansion 
\begin{equation}
U(t)=e^{-itH_{I}}\approx\mathbb{1}-itH_{I}.
\end{equation}
As a spontaneous process the initial state can be regarded as vacuum
in the down converted modes and a strong classical field with amplitude
$E_{p}$ in the pump mode. The non-linearity of the material then
produces the two photon state
\begin{multline}
\ket{\psi}=C\Big(\ket{vac}-itE_{p}\sum_{\sigma_{s},\sigma_{i}}\int d\mathbf{k}_{s}\int d\mathbf{k}_{i}\,g_{\mathbf{k}_{s},\sigma_{s}}g_{\mathbf{k}_{i},\sigma_{i}}\delta(\mathbf{k}_{s}+\mathbf{k}_{i}-\mathbf{k}_{p})\delta(\omega_{s}(\mathbf{k}_{s})+\omega_{i}(\mathbf{k}_{i})-\omega_{p}(\mathbf{k}_{p}))\quad\\
\ket{1_{\mathbf{k}_{s},\sigma_{s}}}\ket{1_{\mathbf{k}_{i},\sigma_{i}}}\Big)\label{eq:estado}
\end{multline}
where $\ket{vac}$ is the vacuum state in the down converted modes
and $\ket{1_{\mathbf{k},\sigma}}$ is the state with one photon in
mode $\mathbf{k},\sigma$, $C$ is a normalization constant. The interaction
time $t$ is the time for crossing the crystal. As was mentioned before,
the photons produced by SPDC are non-monocromatic, rather it is quite
the opposite: they are broadband photons. In the experiments, however,
the photons are postselected by the position of the detectors whose
narrow aperture selects only a narrow range of momenta, thus reducing
the spatial and spectral bandwidths considered. Moreover, narrow bandwidth
filters are placed in front of the detectors. Also, the vacuum contribution
is neglected, since we consider only detections in coincidence between
the signal and idler detectors. In particular, we use a pump laser
centered at 325 nm and collect the degenerated pairs of photons at
650 nm with a single mode fiber. 

The large bandwidth causes the photons to have a quite small coherence
length. Roughly speaking, the coherence length is the maximum path
difference between two parts of a split beam such that their recombination
still gives rise to interference. It is expressed as $l_{c}\equiv\lambda^{2}/\Delta\lambda$
and is $\approx40\,\mu$m for a $10$ nm filter centered around $650$
nm. This means that, for example , a diagonal state $\ket{D}=(\ket{H}+\ket{V})/\sqrt{2}$
separated into horizontal and vertical polarization components and
recombined with a path difference much larger than the coherent length
will become the mixed state $(\ket{H}\bra{H}+\ket{V}\bra{V})/2$. 

The finite bandwidth of the converted photons also leads to a time
duration of the order of femtoseconds for the converted photons. This
also means that they are considered to be detected in coincidence
within a time interval of the order of femtoseconds. In the experiment,
though, we use a time window of $4$ ns. It does not cause photons
of different pairs to be considered coincident because of the big
interval between down-conversions and it allows for path differences
of the order of one meter between photons belonging to the same down-conversion.

\subsection{Obtaining error bars: Poisson distribution}

There are many methods to estimate the error of a measurement. For
example, the same measurement can be repeated ideally an infinite
number of times at the same conditions and the standard deviation
can be regarded as the error associated to the measurement. Another
common approach is to theoretically estimate the error from the previous
knowledge of the probability distribution behind stochastic results.
This last method, called Monte Carlo estimation, is quite useful when
the experimentalist does not have access to many repetitions of the
measurement. In this case, the mean value over the few measurements
realized is used as the mean value of the distribution and a large
number of random results is generated artificially from this distribution.
The quantities of interest and their standard deviation are then calculated
from this artificial experimental data.

SPDC is a very inefficient process and the probability of production
of one pair of photons from the coherent state of the pump laser is
very small, meaning that $|C|^{2}$ in Eq. \eqref{eq:estado} is close
to one. The probability of detection in coincidence of a pair of twin
photons is further diminished because of the narrow aperture of the
detectors and also their efficiency, as well as because of the frequency
filters. In this way, the time between two consecutive down-conversions
is large enough so that they can be regarded as independent events,
as well as two consecutive coincidence detections. Let us consider
that in a certain time interval $\delta t$ there is a probability
$q$ of one pair detection and that the probability of detecting two
or more pairs in this time interval is negligible. Then, the probability
of having $n$ detections in a time interval $T=N\delta t$, with
$N>n$, is given by the binomial distribution
\begin{equation}
p(n;T)=\left(\begin{array}{c}
N\\
n
\end{array}\right)q^{n}(1-q)^{N-n},\label{eq:binomial}
\end{equation}
where $q^{n}(1-q)^{N-n}$ is the probability that the $n$ first time
intervals $\delta t$ are going to register a coincidence count while
in the last $N-n$ subintervals the detectors will not click, $\left(\begin{array}{c}
N\\
n
\end{array}\right)=\frac{N!}{n!(N-n)!}$ accounts for all the possible sequences of intervals with clicks
and without clicks. Provided that $\delta t$ is small enough to have
at most one detection in this interval, it can be chosen arbitrarily.
Thus, one can make it as small as desired while keeping the mean number
of detections in the finite time interval $T$ , $\langle n\rangle=Nq$
, constant. When this limit is applied to Eq. \eqref{eq:binomial},
the binomial distribution becomes the Poisson distribution \cite{mandel1995}
\begin{equation}
p(n;T)=\frac{e^{-\langle n\rangle}\langle n\rangle^{n}}{n!},\label{eq:poisson}
\end{equation}
which is a one-parameter distribution that only depends on the mean
number of occurrences in the interval $T$. Knowing that the down-conversion
events behave according to a Poisson distribution, Monte Carlo can
be applied to produce artificial experimental results. This is done
by randomly picking values of $n$ according to the distribution \eqref{eq:poisson},
with $\langle n\rangle$ being the number of coincidence counts averaged
over the few real measurements realized. To get rid of unlikely events
coming from the tail of the distribution, which could spoil the mean
value, we also eliminate measurements far from the median of the measured
values before calculating their mean.

\section{Generating path entanglement \label{sec:Generating-path-entanglement}}

In the works presented in the following chapters, we use discrete
path degrees of freedom generated by beam displacers. A beam displacer
(BD) is a birefringent crystal cut as to separate an input beam with
arbitrary polarization into two orthogonally-polarized parallel beams
as shown in Fig. \ref{fig:BDoperating}. 

The device is a parallelepipedic piece of a uniaxial crystal with
optical axis ($AB$ direction in the Figure) orthogonal to one of
the input-face edges and making a $\pi/2-\theta$ angle with the other
edge. Let us define horizontal the polarization in the $\hat{y}$
direction and vertical the one in the $\hat{x}$ direction. A beam
propagating in air in the $\hat{z}$ direction reaches the BD perpendicularly
to the input face. Snell's law guarantees that the two refracted beams
will have wave vectors also in the $\hat{z}$ direction as the incidence
angle is $0\text{°}$. If the BD is oriented in the way shown in Fig.
\ref{fig:BDoperating}-a), the horizontal polarization
is perpendicular to the optical axis and will be the ordinary wave,
which has energy propagation in the same direction of the wave vector
(see Appendix \eqref{chap:Anexo2}). The vertical polarization, on
the other hand, is the extraordinary wave (only the electric displacement
remains vertical inside the material) and the energy propagates with
angle $\alpha$ with respect to the wave vector given by Eq. \eqref{eq:alpha}
(Appendix \eqref{chap:Anexo2}). After leaving the crystal the two
beams keep having wave vectors in the $\hat{z}$ direction, therefore
propagating parallel to each other, but now their centers are vertically
separated, with separation $d$ determined by the angle $\theta$
and the length $\ell$ of the crystal. For example, the BDs used in
our experiments are made of calcite ($n_{\vartheta}=1.658$, $n_{e}=1.486$
\cite{yariv1984}) cut at $\theta=45^{\circ}$ and length of $2.5$
cm, what gives separation angle $\alpha=6.25^{\circ}$ and final separation
of $2.7$ mm. If the crystal is oriented as in Fig. \ref{fig:BDoperating}-b),
then the horizontal polarization turns out to be the extraordinary
wave and the two beams come out the BD horizontally separated. If
the BD is tilted around $\hat{z}$ by an angle $\phi$, then combinations
of horizontal and vertical polarizations will come out displaced along
a line also rotated by $\phi$. 

\begin{figure}[h]
\begin{centering}
\includegraphics[width=0.9\textwidth]{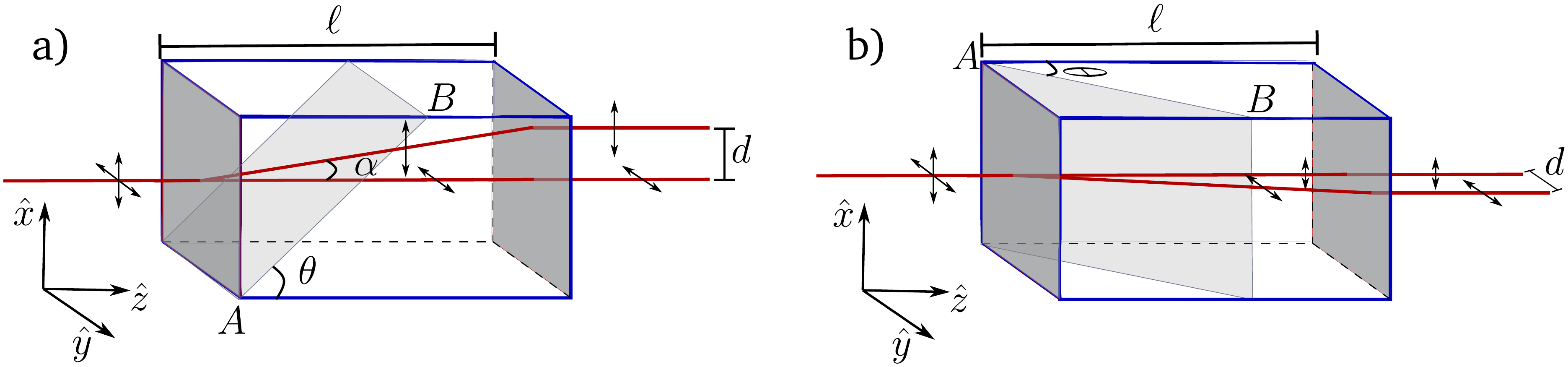}
\par\end{centering}
\caption{Beam displacer operating in the \textbf{a)} vertical and \textbf{b)}
horizontal polarizations.}
\label{fig:BDoperating}
\end{figure}

Given that the input beam is collimated and narrow compared to the
final separation, then the two diffracted beams define two independent
non-overlapping path modes we label simply by $0$ and $1$. For a
photon, these two spatial modes are two orthogonal states we represent
by $\ket{0}$ and $\ket{1}$, reducing the continuous momentum degree
of freedom to a discrete two level one. Moreover, if a photon pass
through a BD starting in a state $(\alpha\ket{H}+\beta\ket{V})\otimes\ket{0}$,
after the BD it will become $\alpha\ket{H}\ket{1}+\beta\ket{V}\ket{0}$,
considering the BD as in Fig. \ref{fig:BDoperating}-b),
creating entanglement between the polarization and path degrees of
freedom. Actually, this description is only effective. The index of
refraction for each beam is different causing an optical path difference
of $\ell(n_{\vartheta}-n_{e})$ that in our case is larger than the
coherence length of the down-converted photons we use. As one path
is delayed relatively to the other, this generates a temporal degree
of freedom and the state after the BD would be better described as
$\alpha\ket{H}\ket{1}\ket{t_{0}}+\beta\ket{V}\ket{0}\ket{t_{1}}$,
where $\ket{t_{0}}$ and $\ket{t_{1}}$ not necessarily are orthogonal
but have a small overlap. Thus, as we access only the polarization
and momentum degrees of freedom and trace out the temporal one, the
photons do not leave the BD in a pure state, but in a convex combination
of $\ket{H}\ket{1}$ and $\ket{V}\ket{0}$. The pure state is recovered
if one retrieves the possibility of interfering the two paths, i.e.,
coherently recombine them, recovering the out-of-diagonal terms of
the density matrix. This is done by using a second BD exactly equal
the first one generating the same delay to the previously non-displaced
path. To recombine the paths it is necessary to use a HWP to interchange
the beams polarization such that different beams are deviated in each
BD and both acquire the same total phase (see Fig. \ref{fig:BDBD}-a).
Now, using a second HWP we can measure the polarization in diagonal
basis for example and check for interference while changing the phase
difference $\phi$. 

\begin{figure}[h]
\begin{centering}
\includegraphics[width=0.95\textwidth]{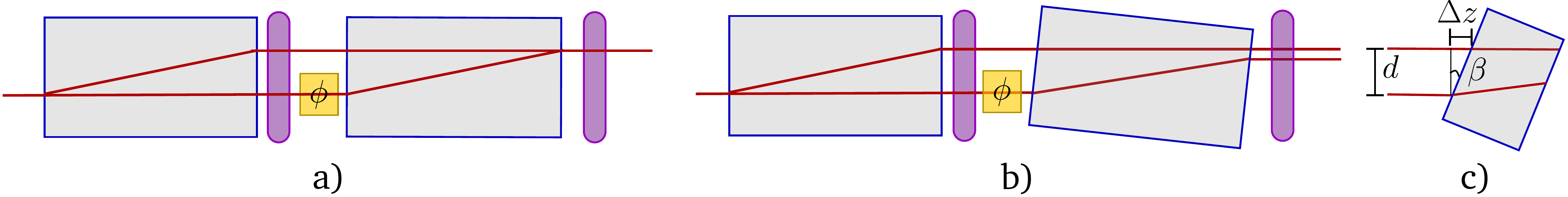}\caption{\label{fig:BDBD}Representation of an interferometer constructed with
two BDs. \textbf{a)} shows a perfectly aligned scenario, \textbf{b)}
shows an extreme case when a tilt of the second BD causes the two
beams to not overlap and \textbf{c)} shows in detail the path difference
caused by the tilt.}
\par\end{centering}
\selectlanguage{american}%
\end{figure}

If the two BDs are not well aligned as shown in Fig.\textbf{\textcolor{red}{{}
}}\ref{fig:BDBD}-b), the angle between the wave vector and the optical
axis will be different affecting the relative angle between the two
beams and possibly causing them to not overlap and consequently not
interfere at the output. This absence of overlap or a partial overlap
happens only in extreme cases when the angle of tilt is considerably
large. In a case of slight misalignment, the angle of deviation is
pretty much not affected \cite{paolino2007}, and what influences
the most in the attainment of the expected state at the output of
the second BD is the phase difference caused by the path difference
$\Delta z$ , exaggerated in Fig. \ref{fig:BDBD}-c). The phase difference,
given by $\Delta\phi=\frac{2\pi}{\lambda}\Delta z=\frac{2\pi}{\lambda}d\tan{\beta}$,
varies from zero to $2\pi$ with a small variation of approximately
$2.5\times10^{-4}$ rad in the tilt angle $\beta$ when $d=2.7$ mm
and $\lambda=650$ nm. This example shows the sensitivity of the two
BD interference with the relative tilt between them. In an experiment,
after a naked eye alignment, the second beam displacer is tilted until
a region of maximal visibility of interference is found, this means
that the beams are completely overlapping. Inside this region, the
second BD is placed in a position for which we have a maximal or minimal
power, depending on the projection we are realizing at the output. 

\section{Projective measurements\label{sec:Projective-measurements}}

Ideally one would like to be able to perform any projective measurement
in both degrees of freedom available. A projective measurement in
polarization is realized by means of a polarizing beam splitter (PBS).
A PBS (see Fig. \ref{fig:Polarizing-beam-splitter.}) is composed
by two triangular prisms made of the same transparent glass and glued
together in their hypotenuses. In the interface between the two prisms
there is a thin film layer designed such that the polarization component
parallel to the interface is completely reflected and the other is
completely transmitted \cite{macneille}. In this way this device
separates the horizontal and vertical polarization components of the
input beam in two orthogonal beams, horizontal polarization being
transmitted and vertical polarization being reflected. Then, a PBS
naturally provides a projective measurement in the basis $\left\{ \ket{H},\ket{V}\right\} $.

\begin{figure}[h]
\selectlanguage{american}%

\selectlanguage{english}%
\begin{centering}
\includegraphics[width=0.35\textwidth]{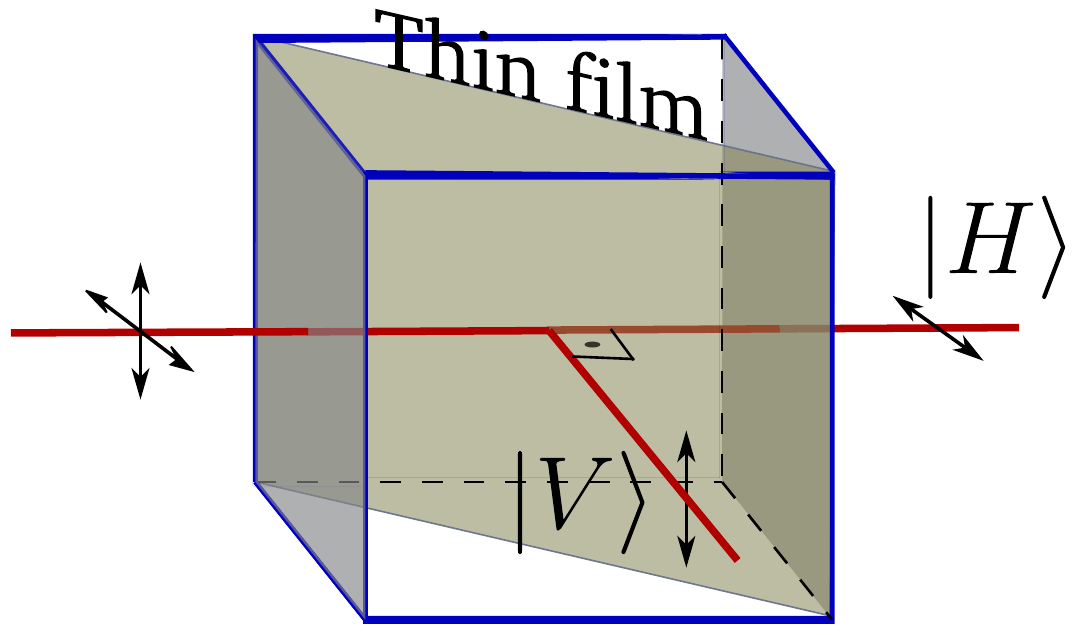}\caption{Polarizing beam splitter.\label{fig:Polarizing-beam-splitter.}}
\par\end{centering}
\selectlanguage{american}%
\end{figure}

It is possible to use a PBS also to project a polarization state into
any basis of the polarization Hilbert space if wave plates are used
to transform this basis into $\left\{ \ket{H},\ket{V}\right\} $.
Indeed, it is possible using the minimum set of a QWP followed by
a HWP. In order to understand this protocol easily, let us visualize
the effect of a HWP and a QWP in the Poincaré sphere representation.
In this representation, a general polarization state $\mathbf{E}=E_{0}\left(\cos(\theta/2),e^{i\phi}\sin(\theta/2)\right)$
becomes the normalized 3D vector 
\begin{equation}
\hat{r}_{\mathbf{E}}=\left(\cos\theta,\sin\theta\cos\phi,\sin\theta\sin\phi\right),\label{eq:poincare}
\end{equation}
whose components are the mean values of the Pauli matrices $(\left\langle \sigma_{z}\right\rangle ,\left\langle \sigma_{x}\right\rangle ,\left\langle \sigma_{y}\right\rangle )$
normalized by the total intensity \cite{damask}. It turns out that
the angle between this vector and the $x$ axis is $\theta$ and the
angle between the vector projection in the $yz$ plane and the $y$
axis is $\phi$, as illustrated in Fig. \ref{fig:a)-Polarization-state}-a).
All the linear polarization states lie in the equator of the sphere,
while the circularly polarized ones are situated in the poles. By
calculating the Poincaré vector after the action of a half wave plate
with optical axis forming angle $\gamma$ with the vertical {[}Eq.
\eqref{eq:HWP}{]}, it is easy to show that it is equivalent to apply
the operator
\begin{equation}
R_{\lambda/2}(\gamma)=\left(\begin{array}{ccc}
\cos\left(4\gamma\right) & \sin\left(4\gamma\right) & 0\\
\sin\left(4\gamma\right) & -\cos\left(4\gamma\right) & 0\\
0 & 0 & -1
\end{array}\right)
\end{equation}
to the initial Poincaré vector \eqref{eq:poincare}. This matrix is
a reflection with respect to the $x$ axis of the Poincaré space along
with a rotation of $4\gamma$ about the $z$ axis. Analogously, the
action of a QWP at angle $\alpha$ {[}Eq. \eqref{eq:QWP}{]} in Poincaré
space is obtained as the operator

\begin{equation}
R_{\lambda/4}(\alpha)=\left(\begin{array}{ccc}
\cos^{2}\left(2\alpha\right) & \sin\left(2\alpha\right)\cos\left(2\alpha\right) & \sin\left(2\alpha\right)\\
\sin\left(2\alpha\right)\cos\left(2\alpha\right) & \sin^{2}\left(2\alpha\right) & \cos\left(2\alpha\right)\\
-\sin\left(2\alpha\right) & \cos\left(2\alpha\right) & 0
\end{array}\right),
\end{equation}
which represents a $\pi/2$ rotation around the equator vector $\left(\cos(2\alpha),\sin(\alpha),0\right)$. 

\begin{figure}[h]
\begin{centering}
\includegraphics[width=0.6\textwidth]{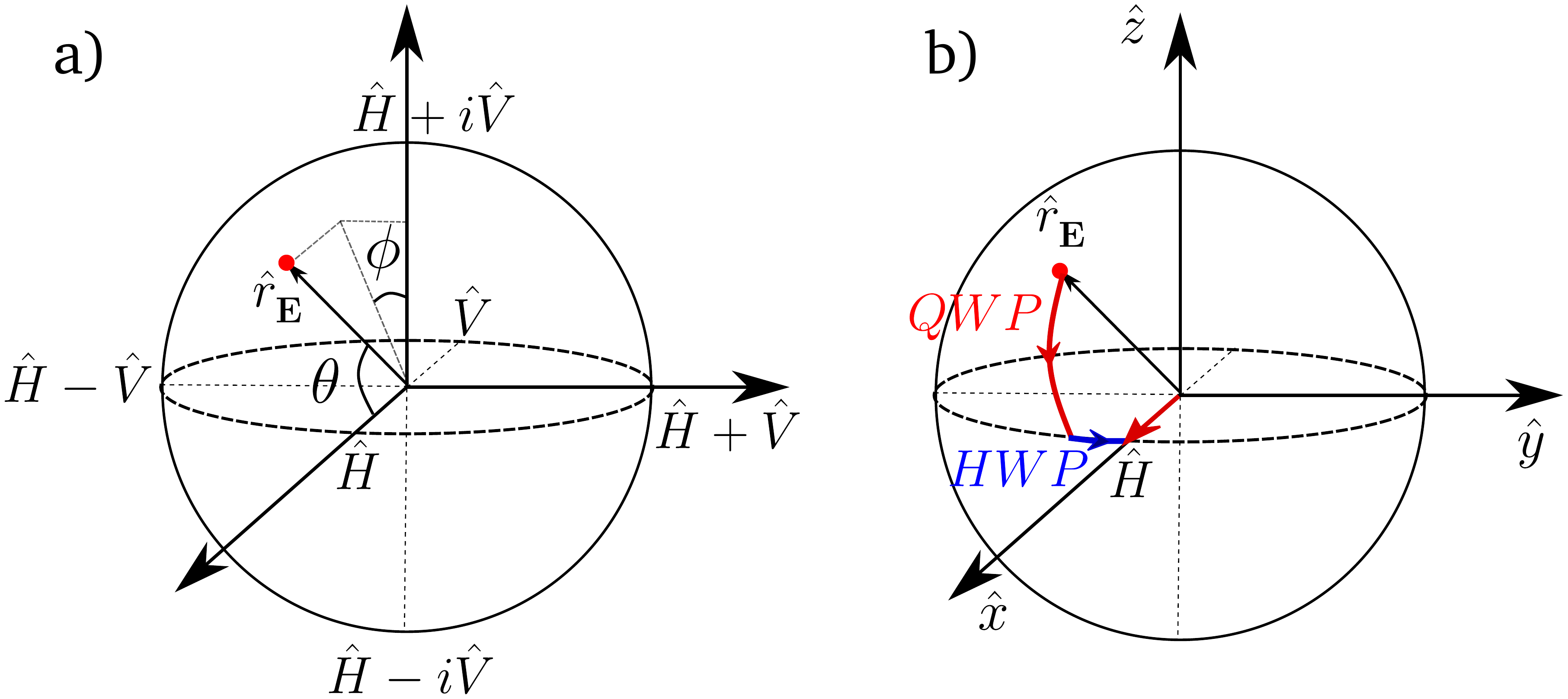}
\par\end{centering}
\caption{\textbf{a)} Polarization state representation in the Poincaré sphere.
\textbf{b)} Geometric representation of the transformation of an arbitrary
state into horizontal polarization: first a QWP rotates the state
to the equator and then a HWP rotates it to the horizontal state.\label{fig:a)-Polarization-state}}

\selectlanguage{american}%
\end{figure}

The transformation of any polarization state \eqref{eq:poincare}
into the horizontally polarized state $(1,0,0)$ can be made in two
steps represented in Fig. \ref{fig:a)-Polarization-state}-b):
\begin{enumerate}
\item Apply a QWP to the state with an angle such that the final state has
null $z$-component, lying in the equator of the Poincaré sphere.
It is achieved if the wave plate angle is chosen as to satisfy $\tan(2\alpha)=\tan\theta\cos\phi$.
The resulting state is
\[
\left(\begin{array}{c}
\cos\theta^{\prime}\\
\sin\theta^{\prime}\\
0
\end{array}\right)=\left(\begin{array}{c}
\cos^{2}\left(2\alpha\right)\cos\theta+\sin\left(2\alpha\right)\sin\theta\left[\cos\left(2\alpha\right)cos\phi+\sin\phi\right]\\
\sin^{2}\left(2\alpha\right)\sin\theta\cos\phi+\cos\left(2\alpha\right)\left[\sin\left(2\alpha\right)\cos\theta+\sin\theta\sin\phi\right]\\
0
\end{array}\right)
\]
\item Apply a HWP to the resulting state rotating it about the $z$ axis
and vanishing the second component. It is attained if the angle is
$\gamma=\theta^{\prime}/4$.
\end{enumerate}
Thus for a photon, after the combination of waveplates just described,
the original state $\ket{\theta,\phi}=\cos(\theta/2)\ket{H}+e^{i\phi}\sin(\theta/2)\ket{V}$
becomes $\ket{H}$and is transmitted by a PBS, as well as its orthogonal
state is transformed in $\ket{V}$ being reflected. If it is desirable
to really make a projection, producing state $\ket{\theta,\phi}$
after the measurement, one can reprepare the state using the inverted
sequence HWP+QWP at angles $-\gamma$ and $-\alpha$. 

The procedure was presented with this order of wave-plates to facilitate
its geometrical visualization. Nevertheless,it could be made in the
changing the order of HWP and QWP. For some states it is direct to
see this possibility. For example, to project over any state in the
equator , only a rotation around $z$ is required. The HWP can be
used for this purpose and the QWP may come after it if set to $\alpha=0$.
Another example, projecting over the right and left polarized states
in the poles require a $\pm\pi/2$ rotation about $y$ axis, which
is achieved by a QWP with $\alpha=\pm45^{\circ}$. Placing a HWP at
$0^{\circ}$ before this QWP only changes the sign of the angle $\alpha$
since it causes a minus sign in the vertical component.

\subsection{Path degree of freedom\label{subsec:Path-degree-of}}

One could directly have access to the intensity of each path, however
it is necessary to project over arbitrary path states to have access
to quantum superpositions of the momentum. It is possible by mapping
the path state into a polarization state. As discussed in Sec. \ref{sec:Generating-path-entanglement},
careful alignment of two BDs enables the coherent recombination of
the two path modes. Consider the particular case where the incoming
photons in the measurement stage in Fig. \ref{fig:Path-state-measurement:}
are in a separable state $\left(a\ket{0}+b\ket{1}\right)\left(c\ket{H}+d\ket{V}\right)$\footnote{This is the only case we consider here since it is what we have in
all the experiments.}. Then the lower path ($\ket{0}$) path passes through a HWP set at
$45^{\circ}$, which converts $\ket{H}$ into $\ket{V}$ and vice-versa.
A HWP at $0^{\circ}$ is placed in the upper path ($\ket{1}$) to
guarantee the coherence between the two paths by ensuring no optical-path-length
difference. When the photons pass through the BD, a new path is created
and the state becomes $ac\ket{0}\ket{V}+d\ket{1}(a\ket{H}+b\ket{V})+bc\ket{2}\ket{H}$
(assuming perfect coherent combination of paths). Thus, the initial
path state is transferred to the polarization state of mode $\ket{1}$
and measurements on polarization as explained before give access to
the path state.

\begin{figure}[h]
\begin{centering}
\includegraphics[width=0.4\textwidth]{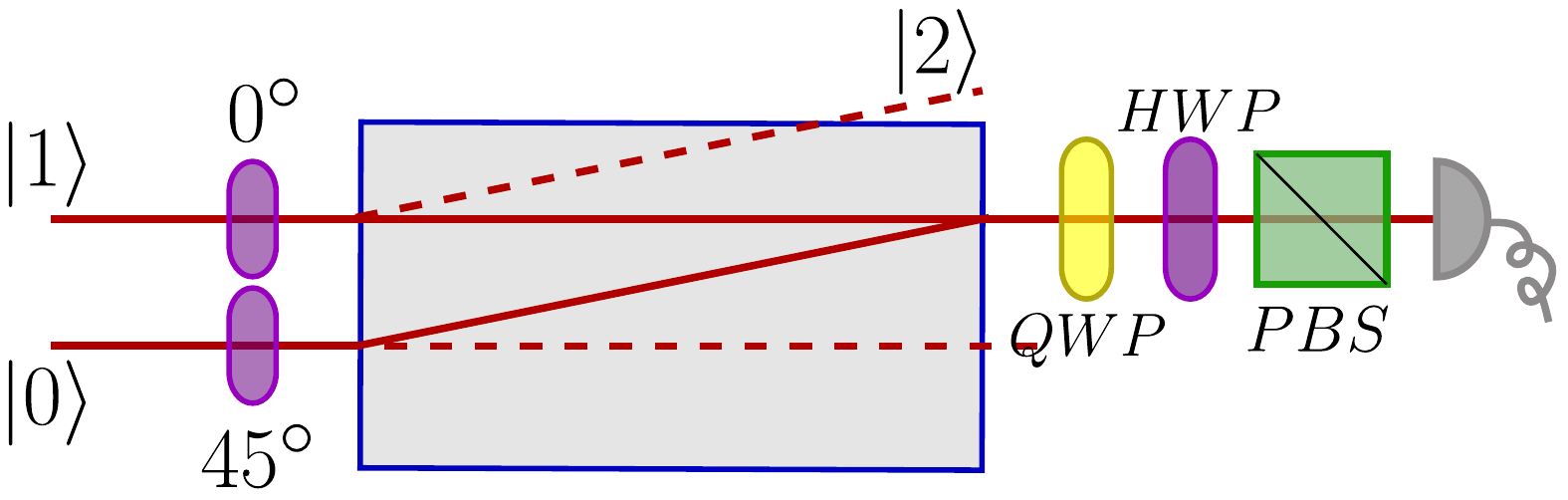}
\par\end{centering}
\caption{Path state measurement: the path state is mapped to polarization and
a projective measurement of polarization is performed. \label{fig:Path-state-measurement:}}

\selectlanguage{american}%
\end{figure}

\section{Unitary transformations \label{sec:Unitary-transformations}}

A unitary transformation of a qubit is equivalent to a rotation in
the Bloch sphere. As such, it can be specified by the rotation axis,
determined by the unit vector $\mathbf{n}=(n_{1},n_{2},n_{3})$, and
by the rotation angle $\xi$. It is expressed as
\begin{equation}
U=e^{-i\frac{\xi\mathbf{n}\cdot\boldsymbol{\sigma}}{2}}=\cos\frac{\xi}{2}\mathbb{1}-i\sin\frac{\xi}{2}(\mathbf{n}\cdot\boldsymbol{\sigma})
\end{equation}
or in matrix form
\begin{equation}
U=\left(\begin{array}{cc}
\cos\frac{\xi}{2}-in_{3}\sin\frac{\xi}{2} & -\sin\frac{\xi}{2}(in_{1}+n_{2})\\
-\sin\frac{\xi}{2}(in_{1}-n_{2}) & \cos\frac{\xi}{2}+i\sin\frac{\xi}{2}
\end{array}\right).
\end{equation}

As a unit vector, $\mathbf{n}$ is specified by two spherical-coordinates
angles. Thus, any unitary operator for a qubit is completely characterized
by three parameters. A configuration of optical elements devised to
implement any unitary transformation over a photon polarization qubit
should provide also this number of parameters to be changed as to
produce any values for $\xi$ and $\mathbf{n}$. In fact, the combination
(QWP@$\alpha_{1}$)-(HWP@$\gamma$)-(QWP@$\alpha_{2}$), represented
in Fig. \ref{fig:Unitary-operator-implementation:}, is able to realize
the unitary transformation given that the waveplate angles are chosen
to satisfy 
\begin{align}
\cos\frac{\xi}{2}=\cos\Theta\cos\Delta\quad & \quad n_{1}=\frac{\sin\Theta\cos\Delta}{\sqrt{\Omega}}\\
n_{2}=\frac{\cos\Theta\sin\Delta}{\sqrt{\Omega}}\quad & \quad n_{3}=\frac{\sin\Theta\sin\Delta}{\sqrt{\Omega}}
\end{align}
with $\Theta=\alpha_{1}-\alpha_{2}$, $\Delta=2\gamma-(\alpha_{1}+\alpha_{2})$,
and $\Omega=1-\cos^{2}\Theta\cos^{2}\Delta$. This can be verified
directly calculating the product of the waveplate operators. 

\begin{figure}[h]
\begin{centering}
\includegraphics[width=0.4\textwidth]{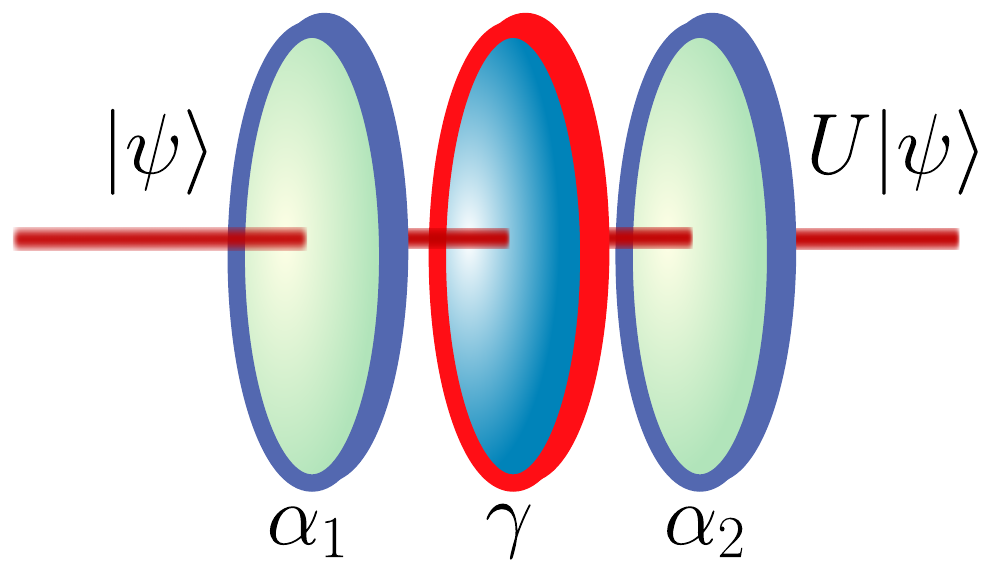}
\par\end{centering}
\caption{Unitary operator implementation on a polarization qubit: a light beam
with initial polarization state $\ket{\psi}$ passes through a sequence
of a QWP, a HWP and finally another QWP with angles chosen to implement
the unitary transformation $U$.\label{fig:Unitary-operator-implementation:}}

\selectlanguage{american}%
\end{figure}
\selectlanguage{american}

~\ihead{}

\ohead{\textbf{Chapter~\thechapter}~\leftmark}

\ifoot{}

\cfoot{}

\ofoot[
]{\thepage}

\chapter{Exposure of subtle multipartite quantum nonlocality \label{chap:Steering}}

Quantum systems can exhibit correlations that are stronger than the
ones allowed by classical physics, which can be classified as entanglement,
steering or Bell nonlocality depending on the level of characterization
of the parties involved. The definition of such quantum correlations
relies on the violation of a classical model. In this work, we show
an inconsistency on the current multipartite definition of steering
and Bell nonlocality. Namely, we show an apparent creation of such
correlations by applying a local operation on a system that is initially
believed to be uncorrelated. The inconsistency comes from the fact
that local operations are not able to increase or create nonlocal
correlations. This leads to a redefinition of these correlations,
according to which the conflicting models are allocated a subtle form
of correlation, which is exposed -- as opposed to created -- by
the local operations. Finally, we provide the first experimental demonstration
of both steering and Bell nonlocality exposure with three photonic
qubits. 

This work was done in collaboration with professors Leandro Aolita,
Gabriel Aguilar and Stephen Walborn, and with postdocs Márcio Taddei
and Ranieri Nery, all at UFRJ at the time the research was developed.
I contributed to the design of the experiment, and I was the main
experimentalist in the execution of the experiment and analysis of
data. The paper was submitted to Physical Review X and a preprint
can be found in \cite{taddei2019b}.

\section{Introduction}

Three forms of quantum correlations occur in nature --- entanglement,
Bell nonlocality and steering. The distinction between them is given
by the level of trust and control that one has on the systems involved,
as depicted in Fig. \ref{fig:The-different-levels-1}. Entanglement
(Fig. \ref{fig:The-different-levels-1}-c)), for instance, is naturally
formulated in the so-called device-dependent (DD) scenario \cite{Horodecki2009}.
There, one assumes that the system can be completely characterized
by the measurement apparatus, at least in principle. Thus, in this
scenario, the quantum state of the system is known and entanglement
is defined as the impossibility of finding a separable model for the
global state. Bell nonlocality (Fig. \ref{fig:The-different-levels-1}-d)),
in contrast, takes place in the device-independent (DI) description
\cite{Brunner2014}. There, measurement devices are treated as untrusted
black boxes whose actual measurement process is uncharacterized or
ignored, relying only on classical measurement settings (inputs) and
results (outputs). Here the description is given by the probability
distribution of the outcome results given the measurement choice.
Quantum steering (Fig. \ref{fig:The-different-levels-1}-e)), on the
other hand, is a hybrid type of correlation -- intermediate between
entanglement and Bell nonlocality -- that arises in semi-DI settings
\cite{Reid2009,Cavalcanti2017,Uola2019}. The latter involves both
DD and DI parties. In this case, the total system is described by
a hybrid mathematical object involving probability distributions and
quantum states, the so called assemblage which is presented in the
next section. 

\begin{figure}[h]
\begin{centering}
\includegraphics[width=1\textwidth]{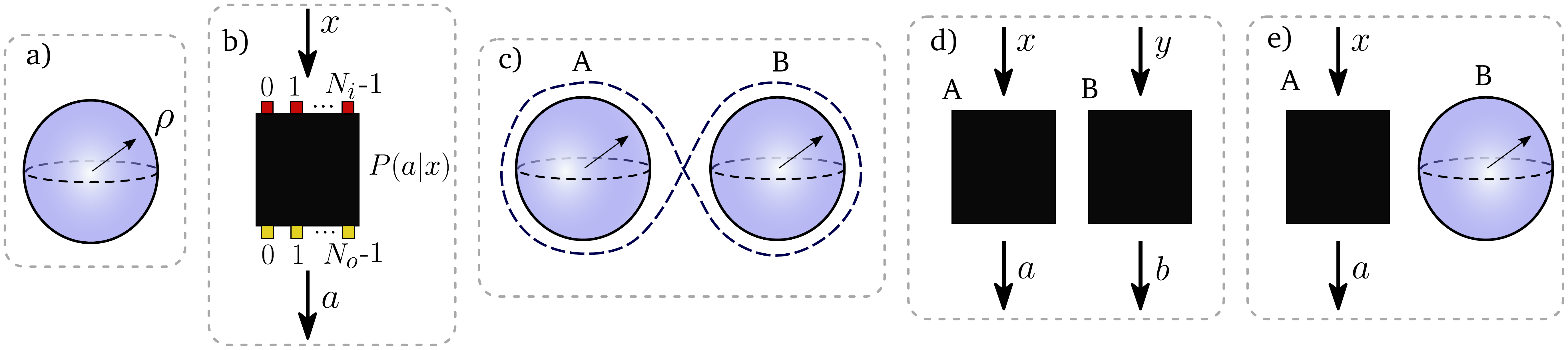}
\par\end{centering}
\caption{\textit{The different levels of characterization of a quantum system:}
\textbf{a)} the device-dependent scenario in which the system is completely
characterized and its quantum state $\rho$ is known, here represented
by a point inside a qubit Bloch sphere (obviously the system could
have any dimension); and \textbf{b)} the device-independent scenario
in which only the labels of measurement settings $x\in\{0,1,...N_{o}-1\}$
and the labels of measurement outcomes $a\in\{0,1,...N_{i}-1\}$ are
known, the underpinning mechanism being ignored, in this case the
conditional probabilities $P(a|x)$ are the only accessible information
about the system. \textit{The scenarios of the three forms of quantum
correlations for a bipartite system:} \textbf{c)} entanglement, \textbf{d)}
Bell nonlocality, \textbf{e)} quantum steering.\label{fig:The-different-levels-1}}

\selectlanguage{american}%
\end{figure}

Whereas entanglement is a resource for DD applications in quantum
information, Bell nonlocality is the key resource for DI applications
such as DI quantum key distribution \cite{Barrett2005,Acin2006,Acin2006a,Acin2007}
and DI certified randomness \cite{Colbeck2009,Colbeck2010,Pironio2010,Acin2016},
which are typically much more experimentally demanding than the corresponding
DD protocols. Steering is known to be the crucial resource for key
technological applications in the semi-DI scenario, which are generally
less technically difficult than their DI counterparts, while requiring
less assumptions than the corresponding DD protocols. These include
semi-DI entanglement certification \cite{Wiseman2007,Jones2007,Cavalcanti2017,Uola2019},
quantum key distribution \cite{Branciard2012,He2013}, certified-randomness
generation \cite{Skrzypczyk2018}, quantum secret sharing \cite{Kogias2017,Xiang2017},
as well as other useful protocols in multipartite quantum networks
\cite{Huang2019}. Moreover, there are tasks for which the presence
of steering, i. e. the capability of a entangled state to present
steering, gives the minimum amount of entanglement necessary for that
task to be successful. An example of this is the subchannel identification
task using a probe and an ancilla which are entangled, if only local
measurements and classical communication are allowed, then only steerable
states are useful \cite{Piani2015}.

These applications, as well as fundamental interest, motivated the
development of a resource theory of steering \cite{Gallego2015,Kaur2017}.
Resource theories constitute formal treatments of a physical property
as a resource, providing a complete toolbox for its quantification,
classification, and operational manipulation (see, e.g., \cite{Brandao2015,Brandao2015a,Coecke2016}).
They have been formulated for entanglement \cite{Horodecki2009} and
Bell nonlocality \cite{Gallego2012,deVicente2014,Gallego2017,Wolfe2019},
as well as for other interesting quantum properties \cite{Winter2016,Chitambar2016,Grudka2013,Amaral2018,Taddei2019,Wolfe2019}.
The cornerstone of any resource theory is the set of its \emph{free
operations}. These are unable to create the resource: they transform
every resourceless state into a resourceless state.

Interestingly, their study in fully-DI multipartite scenarios has
revealed an operational inconsistency at the very heart of the theory
\cite{Gallego2012,Bancal2013}. A fully DI description is cast in
terms of a \emph{Bell behavior}, given by a conditional probability
distribution of the outputs given the inputs. The inconsistency is
that, in a tripartite DI scenario, operations that are local in $AB$
can map tripartite Bell behaviors that are local in the $AB|C$ bipartition
into bipartite Bell behaviors that violate a Bell inequality across
$AB|C$. Bell locality implies that there exists a local-hidden-variable
(LHV) model, in which correlations are explained by a (hypothetical)
classical common cause (the hidden variable) within the common past
light-cone of the measurement events \cite{Bell1964}. Any Bell-inequality
violation implies incompatibility with LHV models, i.e. Bell nonlocality.
The observation above thus seems contradictory, as local wirings within
$AB$ are free operations of Bell nonlocality in $AB|C$ and therefore
unable to increase Bell-inequality violations. The problem, however,
lied in the definition of Bell nonlocality in multipartite scenarios
used previously \cite{Svetlichny1987}.

According to the traditional definition \cite{Svetlichny1987}, Bell
nonlocality across a system bipartition is incompatible with any LHV
model with respect to it. This includes so-called ``fine-tuned''
models \cite{Wood2015} with hidden signaling. These are LHV models
where, for each value of the hidden variable, the subsystems on each
side of the bipartition communicate, but for which the statistical
mixture over all values of the hidden variable renders the observable
correlations non-signaling. The problem is that the bilocal wiring
(taking the output of one black-box as the input of the other) can
conflict with the hidden communication in such models, giving rise
to a causal loop. For instance, to physically implement the wiring,
Bob must be in the causal future of Alice, which is inconsistent with
hidden communication from Bob to Alice. This explains why apparently
bilocal behaviors can lead to Bell violations after a bilocal wiring.
A redefinition of multipartite Bell nonlocality was then proposed
\cite{Gallego2012,Bancal2013}. This considers the correlations from
conflicting bilocal models already nonlocal across the bipartition,
so that the wiring simply exposes an already-existing subtle form
of Bell nonlocality. We refer to the latter form and effect as \emph{subtle
Bell nonlocality} and \emph{Bell-nonlocality exposure}, respectively.

The redefinition fixed the inconsistency, but also opened several
intriguing questions. First, no experimental observation of Bell-nonlocality
exposure has been reported. Second, even though steering theory is
relatively mature \cite{Cavalcanti2011,He2013,Armstrong2015,Taddei2016,Li2015},
little is known about \emph{steering exposure}. Operational consistency
relative to steering exposure was considered, in particular, in a
definition of multipartite steering \cite{He2013}, but based on models
where each party is probabilistically either trusted or untrusted.
On the other hand, a definition based on multipartite entanglement
detection in semi-DI setups with fixed trusted-versus-untrusted divisions
was proposed in Ref. \cite{Cavalcanti2015a}. There, bilocal hidden-variable
models (for multipartite assemblages) with an explicit quantum realization
are considered, which automatically rules out potentially-conflicting
fined-tuned models. Nevertheless, this has the side-effect of over-restricting
the set of unsteerable assemblages, thus potentially over-estimating
steering. Third, exposure as a resource-theoretic transformation is
yet unexplored territory. For instance, is it possible to obtain every
bipartite assemblage via exposure from some multipartite one? What
about Bell behaviors? Moreover, is there a single $N$-partite assemblage
from which all bipartite ones are obtained via exposure?

These are the questions we answer. To begin with, we show that, remarkably,
exposure of quantum nonlocality is a universal effect, in the sense
that every bipartite Bell behavior (assemblage) can be the result
of Bell-nonlocality (steering) exposure starting from some tripartite
one. This highlights the power of exposure as a resource-theoretic
transformation. However, we also delimit such power: we prove a no-go
theorem for multi-black-box universal steering bits: there exists
no single $N$-partite assemblage (with $N-1$ untrusted and 1 trusted
devices) from which all bipartite ones can be obtained through free
operations of steering. Interestingly, in the universal steering exposure
protocol, the starting behavior is not guaranteed to admit a physical
realization, i.e. it may be supra-quantum \cite{Sainz2015,Sainz2018a,Sainz2019}.
Therefore, we also derive an example that is manifestly within quantum
theory. Moreover, we show that the output assemblage of such protocol
is not only steerable but also Bell nonlocal (in the sense of producing
a nonlocal behavior upon measurements by Charlie). This is notable
as Bell nonlocality is a stronger form of quantum correlation than
steering. We refer to this effect as \emph{super-exposure of Bell
nonlocality}. In turn, we provide a redefinition of (both multipartite
and genuinely multipartite) steering to re-establish operational consistency.
Finally, we experimentally demonstrate exposure as well as super-exposure.
This is done using three degrees of freedom of two entangled photons
generated by spontaneous parametric down conversion, in a deterministic
protocol.

This chapter is organized as follows: in Section 5.2 the basic concepts
related to quantum steering are presented, including the current definition
of multipartite steering, postquantum steering, resource theory of
steering and the methods we use to detect and quantify steering in
the following sections. In the sequence, the general steering and
Bell nonlocality exposure protocols are presented together with a
quantum realizable example in Section 5.3. In Section 5.4 the experimental
implementation and experimental results are shown. Lastly, the proposed
redefinition of multipartite quantum steering is given in Section
5.5, this redefinition removes any inconsistency with the resource
theory of steering.

\section{Steering and the semi-DI setting}

The concept of quantum steering originates with the beginning of the
quantum theory. The name \textquotedbl steering\textquotedbl{} is
attributed to Schrödinger who was studying the possibility of producing
different ensembles of quantum states at a distance \cite{Schrodinger1935}
by performing local measurements. The formal treatment of steering
though was given only recently for a bipartite system \cite{Wiseman2007,Jones2007}. 

The scenario in which quantum steering is defined for a bipartite
system is as follows. A two-party system is shared between Alice and
Bob. Alice cannot characterize her measurement device such that all
the information she has is the classical input $x$ she gives to the
device and the classical output $a$. That is, Alice holds a black
box with $N_{o}$ possible choices of untrusted measurements she can
perform, and for each input $N_{i}$ different results can come out
with probability $P_{a|x}$ conditioned to the input. On the other
hand, Bob can realize tomographic measurements upon his particle to
figure out what is the quantum state he holds. If he performs quantum
state tomography conditioned to Alice's input and output, then what
he gets is a conditional state $\rho_{a|x}$, a state that has been
prepared by Alice at a distance while performing her local measurements. 

In this setting, the global system is completely characterized by
a mathematical object called an assemblage $\boldsymbol{\sigma}=\{\sigma_{a|x}\}_{a,x}$
defined as the set of sub-normalized states such that $\Tr\left[\sigma_{a|x}\right]=P_{a|x}$
and $\sigma_{a|x}/\Tr\left[\sigma_{a|x}\right]=\rho_{a|x}$ containing
all the combinations of inputs and outputs. If the joint system is
in a quantum state $\rho^{AB}$ then the assemblage elements are obtained
as $\sigma_{a|x}=\Tr_{B}\left[\left(M_{a|x}\otimes\mathbb{1}\right)\rho^{AB}\right]$
considering that Alice's action is described by the measurement operators
$M_{a|x}$. However, because of the semi-device independence, the
global state is unknown. We assume that $\boldsymbol{\sigma}$ satisfies
the no-signaling (NS) principle, by virtue of which measurement-outcome
correlations alone do not allow for communication. This physical situation
imposes the non-signaling condition to the assemblage
\begin{equation}
\sum_{a}\sigma_{a|x}=\sum_{a}\sigma_{a|x^{\prime}}=\varrho^{(B)},
\end{equation}
which means that if Bob does not know Alice's output (and he does
not without explicit communication) he cannot infer anything about
her input. Moreover, the normalization of Alice's probabilities require
\begin{equation}
\Tr\left[\sum_{a}\sigma_{a|x}\right]=1\qquad\forall x.
\end{equation}
The correlation between Alice's measurement and Bob's states is classified
as quantum steering if it cannot be explained by a classical model.
On the other hand, the assemblage $\boldsymbol{\sigma}$ is said to
be unsteerable if its elements admit a classical explanation in terms
of a classical hidden stochastic variable
\begin{equation}
\sigma_{a|x}=\sum_{\lambda}\,P_{\lambda}\:P_{a|x,\lambda}\,\varrho_{\lambda},\label{eq:LocalHS}
\end{equation}
i.e., a classical stochastic variable $\lambda$ is distributed to
Alice and Bob with probability $P_{\lambda}$, this variable is the
local common cause for Alice's probability distribution and for Bob's
state. They do not have access to this hidden variable and after unavoidably
averaging their assemblage over it, it seems that the quantum states
are nonlocaly correlated to the measurements. The description \eqref{eq:LocalHS}
is called local hidden state (LHS) model. 

\subsection{Multipartite steering}

The multipartite scenario is considerably richer than the bipartite
one. For the simplest case of three parties, the semi-device independent
setting allows for two configurations, either 1DD-2DI or 2DD-1DI ,
as shown in Fig. \ref{fig:Possible-semi-device-independent}. In this
work we focus in the former since it is enough to show the inconsistency
in the current definition of steering. Such systems are fully described
by a Bell behavior $\boldsymbol{P}^{(AB)}:=\{P_{a,b|x,y}\}_{a,b,x,y}$,
with $P_{a,b|x,y}$ the conditional probability of outputs $a,b$
given inputs $x,y$, for Alice and Bob, and an ensemble of conditional
quantum states $\varrho_{a,b|x,y}$ for Charlie. These can be encapsulated
in the assemblage $\boldsymbol{\sigma}:=\{\sigma_{a,b|x,y}\}_{a,b,x,y}$,
of sub-normalized conditional states $\sigma_{a,b|x,y}:=P_{a,b|x,y}\,\varrho_{a,b|x,y}$.
The NS-principle implies that the statistics observed by any subset
of users should be independent of the input(s) of the remaining user(s).
Mathematically, this condition reads \begin{subequations} 
\begin{alignat}{3}
 & \sum_{a}\sigma_{a,b|x,y} &  & =\sigma_{b|y}^{(BC)},\  &  & \quad\forall\ b,x,y,\label{eq:NS1}\\
 & \sum_{b}\sigma_{a,b|x,y} &  & =\sigma_{a|x}^{(AC)},\  &  & \quad\forall\ a,x,y,\label{eq:NS2}\\
 & \sum_{a}\sigma_{a|x}^{(AC)} &  & =\sum_{b}\sigma_{b|y}^{(BC)}=\varrho^{(C)},\  &  & \quad\forall\ x,y,\label{eq:NS3}
\end{alignat}
where $\boldsymbol{\sigma}^{(AC)}:=\{\sigma_{a|x}^{(AC)}\}_{a,x}$
and $\boldsymbol{\sigma}^{(BC)}:=\{\sigma_{b|y}^{(BC)}\}_{b,y}$ are
respectively the reduced assemblages on the $AC$ and $BC$ subsystems,
and $\varrho^{(C)}$ is the reduced state on $C$. \label{eq:NS}\end{subequations}

\begin{figure}[h]
\centering{}\includegraphics[width=0.6\textwidth]{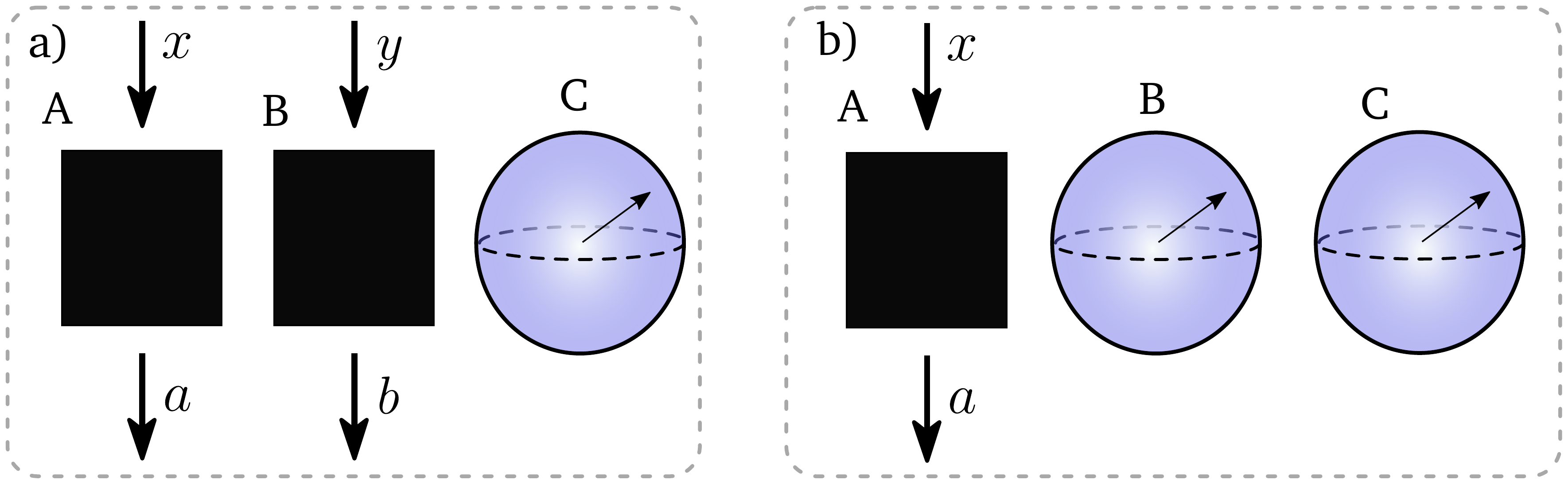}\caption{Possible semi-device independent settings: a) Two black boxes (A and
B) and a trustful device (C), i.e. 1DD-2DI, and b) one black box (A)
and two quantum systems (B and C), i.e. 2DD-1DI. \label{fig:Possible-semi-device-independent}}
\end{figure}

Unlike in Bell nonlocality or entanglement, semi-DI systems have a
natural bipartition: the one separating the trusted devices from the
untrusted ones. This is the bipartition with respect to which we define
steering throughout, unless otherwise explicitly stated. According
to the standard definition \cite{Uola2019}, $\boldsymbol{\sigma}$
is unsteerable if it admits a local hidden-state (LHS) model, namely,
if it can be decomposed as 
\begin{equation}
\sigma_{a,b|x,y}=\sum_{\lambda}\ P_{\lambda}\ \ P_{a,b|x,y,\lambda}\,\varrho_{\lambda}\ .\label{eq:LHS}
\end{equation}
Otherwise $\boldsymbol{\sigma}$ is steerable. Here, $P_{\lambda}$
is the probability of the hidden variable $\Lambda$ taking the value
$\lambda$, each $\boldsymbol{P}_{\lambda}^{(AB)}:=\{P_{a,b|x,y,\lambda}\}_{a,b,x,y}$
is a $\lambda$-dependent behavior, and $\varrho_{\lambda}$ is the
$\lambda$-th hidden state for $C$ (locally correlated with $AB$
only via $\Lambda$). Importantly, that $\boldsymbol{\sigma}$ is
non-signaling does not imply that so is each $\boldsymbol{P}_{\lambda}^{(AB)}$.
In fact, LHS models can exploit hidden communication between Alice
and Bob as long as actual communication at the observable level (i.e.
upon averaging $\Lambda$ out) is impossible. This effect is known
as fine-tuning \cite{Wood2015}; the standard definition of steering
imposes no restriction on fine-tuned LHS models. This turns out to
be critical. Indeed, we will see that unrestricted hidden signaling
is responsible for a stark conflict with the reasonable expectation
that local operations should not increase inter-party correlations.

The definition of steering as the violation of Eq. \eqref{eq:LHS}
is concerned only with the joint capability of Alice and Bob to steer
Charlie's state, without making any reference to the mechanism responsible
for it. If a violation of that model occurs, it could be the case
that only Alice is correlated to Charlie or only Bob or both. Also,
although they do not have the ability to steer Charlie, the joint
probability of Alice and Bob $P_{a,b|x,y,\lambda}$ can in principle
be non factorizable. In the particular case of a fully separable quantum
state $\rho_{sep}^{ABC}=\sum_{\lambda}\,P_{\lambda}\:\varrho_{\lambda}^{A}\otimes\varrho_{\lambda}^{B}\otimes\varrho_{\lambda}^{C}$
with the realization of the local measurements $\{M_{a|x}^{A}\}_{a,x}$
and $\{M_{b|y}^{B}\}_{b,y}$ in the parties $A$ and $B$, respectively,
the assemblage obtained is always unsteerable and moreover the Bell
behavior is separable 
\begin{equation}
\sigma_{a,b|xy}=\Tr_{AB}\left[\left(M_{a|x}^{A}\otimes M_{b|y}^{B}\otimes\mathbb{1}\right)\rho_{sep}^{ABC}\right]=\sum_{\lambda}\:P_{\lambda}\;P_{a|x,\lambda}\,P_{b|y,\lambda}\otimes\varrho_{\lambda}^{C}.\label{eq:sep}
\end{equation}
Another extreme case is that of a biseparable state $\rho_{bisep}^{ABC}=\sum_{\nu}P_{\nu}\varrho_{\nu}^{A}\otimes\varrho_{\nu}^{BC}+\sum_{\mu}P_{\mu}\varrho_{\mu}^{AB}\otimes\varrho_{\mu}^{C}+\sum_{\lambda}P_{\lambda}\varrho_{\lambda}^{B}\otimes\varrho_{\lambda}^{AC}$,
i. e. a state that is the mixture of states that are separable in
at least one bipartition. The violation of a biseparable state model
defines genuine tripartite entanglement. Analogously, a biseparable
assemblage model 
\begin{align}
\sigma_{a,b|xy} & =\Tr_{AB}\left[\left(M_{a|x}^{A}\otimes M_{b|y}^{B}\otimes\mathbb{1}\right)\rho_{bisep}^{ABC}\right]\nonumber \\
 & =\sum_{\nu}P_{\nu}P_{a|x,\nu}\sigma_{b|y,\nu}^{BC}+\sum_{\lambda}P_{\lambda}P_{b|y,\lambda}\sigma_{a|x,\lambda}^{AC}+\sum_{\mu}P_{\mu}P_{a,b|x,y,\lambda}\varrho_{\mu}^{C}\label{eq:bisep}
\end{align}
is when either Alice or Bob can steer Charlie's state (first and second
terms in the equation), but not collectively. The violation of such
a model defines genuine multipartite steering \cite{Cavalcanti2015a}.

\subsection{Post-quantum steering}

In obtaining Eqs. \eqref{eq:sep} and \eqref{eq:bisep} we used the
fact that the assemblage comes from a quantum state by performing
local quantum measurements, but by definition, steering occurs in
a semi-device independent scenario in which one does not have trustful
information about what measurements are being realized in the black-box
parties nor have access to the global quantum state. Possessing only
the Bell-behavior $P_{a,b|x,y}$ and the conditional states $\varrho_{a,b|x,y}$
satisfying the positivity condition $\sigma_{a,b|x,y}=P_{a,b|x,y}\varrho_{a,b|x,y}\geq0$,
the normalization condition $\Tr\left[\sum_{a,b}\sigma_{a,b|x,y}\right]=1$
for all $x,y$ and the no-signaling conditions \eqref{eq:NS}, one
may ask whether it is possible to find a quantum realization for such
assemblage. In other words, given a no-signaling assemblage $\sigma_{a,b|x,y}$,
is it always possible to find a tripartite state $\rho^{ABC}$ and
local measurement operators $\{M_{a|x}^{A}\}_{a,x}$ and $\{M_{b|y}^{B}\}_{b,y}$
such that $\sigma_{a,b|x,y}=\Tr_{AB}\left[\left(M_{a|x}^{A}\otimes M_{b|y}^{B}\otimes\mathbb{1}\right)\rho^{ABC}\right]$? 

In the bipartite case, the answer to this question is negative as
it is always possible to construct the bipartite state and measurement
operators yielding to any no-signaling assemblage. This is not true
for multipartite assemblages as is shown in \cite{Sainz2015}. A assemblage
for which there is no quantum realization is called postquantum. A
trivial example comes when one considers that the Bell-behavior $\{P_{a,b|x,y}\}_{a,b,x,y}$
alone has correlations stronger than the allowed by quantum theory,
as is the case of the Popescu-Rohrlich (PR) box where $P_{a,b|x,y}=1/2$
if $a\oplus b=xy$ and zero otherwise, $\oplus$ is sum modulo 2 \cite{Popescu1994}.
The authors of \cite{Sainz2015} also show cases for which there is
no postquantumness in the Bell-behavior, but still the assemblage
is postquantum, showing that it is an intrinsic feature of the assemblage
as a whole.

\subsection{Resource theory of steering }

Quantum steering is a resource for quantum information and can be
used for many tasks as mentioned before. Accordingly, a resource theory
for steering was built a few years ago \cite{Gallego2015}. For any
resource theory it is necessary to define the objects that do not
possess the resource, known as free states, and the operations that
take any free state into a free state, called free operations. In
the case of steering, the free states are those which admit a LHS
model. A useful set of free operations is the 1W-LOCCs (one way local
operations and classical communication). Consider a bipartition according
to the characterization of the parties, that is, all the DI parties
are grouped together in one partition as well as all the DD ones are
grouped in another partition. The initial assemblage $\sigma_{A|X}$
is transformed into the final assemblage $\sigma_{A_{f}|X_{f}}$.
The allowed operations that do not create steering are the following:
the quantum partition can perform stochastic generalized measurements
over her quantum system and communicate the result to the black-box
partition, which can realize the black box measurements and process
the classical information at disposal. The classical information processing
is called wiring. 

Two examples of free operations are shown in Fig. \ref{fig:Examples-of-free}.
In these examples, no quantum operation is realized in the quantum
partition and there is no classical communication from this partition
to the black boxes. In the black-box partition, classical information,
namely the classical inputs and outputs, is processed. It is intuitive
that these operations do not create the quantum correlation as they
are local in the black-box partition. In the first example in Fig.
\ref{fig:Examples-of-free}-a), Alice and Bob are no longer space-like
separated: she communicates her output to him and he uses this to
choose his input. This is an example of a bilocal \emph{wiring} (local
with respect to the bipartition $AB\vert C$). The tripartite assemblage
becomes equivalent to a bipartite one in the sense that Alice and
Bob work as only one black box with input $x$ and output $b$. In
the rest of this work we focus on this simple example and show that,
although such operations cannot create any correlations across the
bipartition, they can \emph{expose} a subtle form of multipartite
quantum nonlocality that otherwise does not violate any Bell or steering
inequality across the bipartition. In the second example in Fig. \ref{fig:Examples-of-free}-b),
a 4DI+1DD assemblage is mapped onto a 2DI+1DD one by a bilocal wiring
{[}$x_{2}=a_{3}$, $x_{3}=x_{4}$, and $a'_{1}=a_{1}\oplus a_{2}$
(sum modulo 2){]}, such that the final assemblage has only two classical
inputs ($x_{1}$ and $x_{3}$) and two outputs ($a_{1}^{\prime}$
and $a_{4}$). 

Such wirings can implement non-trivial resource-theoretic transformations.
One could ask whether there exists an $N$-partite assemblage with
$N-1$ black boxes and one quantum party from which all bipartite
ones can be produced, e.g., can be reached by means of reductions
on the number of inputs and outputs using classical information processing.
Below we this question in the negative. Although powerful, these wiring
operations are not enough to enable a multi-black-box universal \emph{steering
bit} even allowing for quantum operations and classical communication
from the DD party to the DI ones. This is formalized in the theorem
below whose demonstration is left to Appendix \ref{sec:No-go-theorem-for}.

\begin{thm}{[}No pure steering bit with higher number of parties{]}
\label{th:nobits-1} There does not exist any pure $(N-1)$-DI qubit
assemblage $\sigma_{\boldsymbol{a}|\boldsymbol{x}}^{\text{bit}}$,
where $\boldsymbol{a}=\{a_{1},...,a_{N-1}\}$, $\boldsymbol{x}=\{x_{1},...,x_{N-1}\}$
(with finite sets of input and output values), that can be transformed
via 1W-LOCCs into all qubit assemblages of minimal dimension $\sigma_{a|x}^{(\text{target})}$.
\end{thm}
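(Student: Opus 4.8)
The plan is to argue by contradiction: assume that some pure $(N-1)$-DI qubit assemblage $\sigma^{\text{bit}}_{\boldsymbol{a}|\boldsymbol{x}}$ is universal, i.e. every minimal bipartite qubit target $\sigma^{(\text{target})}_{a|x}$ can be reached from it by 1W-LOCCs. First I would write down the most general such free operation using the resource theory of steering presented above. Following its definition, Charlie, the single trusted qubit party, applies a quantum instrument $\{\mathcal{I}_\omega\}_\omega$ to his qubit, broadcasts the classical outcome $\omega$, and the $N-1$ black boxes wire their inputs and outputs, possibly using $\omega$, into the single effective box of the target, exactly as boxes $A$ and $B$ are merged in the bilocal-wiring example above. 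This produces an output assemblage of the form $\tau_{a|x}=\sum_\omega \sum_{\boldsymbol{a}',\boldsymbol{x}'} W_\omega(a|\boldsymbol{a}',x)\,\mathcal{I}_\omega\!\left(\sigma^{\text{bit}}_{\boldsymbol{a}'|\boldsymbol{x}'(x,\omega)}\right)$, which is a genuine convex combination over $\omega$ of valid normalized assemblages.

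The structural core is to exploit \emph{extremality}. I would test universality only against target assemblages that are extreme points of the no-signaling assemblage set, concretely the pure, maximally steerable qubit families, for which all conditional states are rank one. Since the reachable assemblage above is a convex combination over the broadcast outcome $\omega$ (and over any probabilistic wiring), extremality of the target forces the mixture to collapse: only branches producing one and the same extremal assemblage may contribute. Consequently a single instrument element, hence a single Kraus operator $K$, together with a deterministic wiring, must generate the whole target, so that each target element reads $\tau_{a|x}\propto K\,\ket{\phi_{\boldsymbol{a}'|\boldsymbol{x}'}}\!\bra{\phi_{\boldsymbol{a}'|\boldsymbol{x}'}}\,K^\dagger$ for one of the finitely many fixed pure states $\ket{\phi_{\boldsymbol{a}'|\boldsymbol{x}'}}$ of the bit, selected by the wiring.

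Now an invariant does the work. Identifying pure qubit states with points of the Riemann/Bloch sphere, the normalized filter action $\ket{\phi}\mapsto K\ket{\phi}/\|K\ket{\phi}\|$ is precisely a M\"obius transformation. Every extremal target reachable from the bit therefore arises by applying a single M\"obius map to four of the bit's fixed conditional states, with only finitely many relabellings to choose from. I would then invoke the cross-ratio of the four conditional states $\{\sigma^{(\text{target})}_{a|x}\}_{a,x\in\{0,1\}}$, which is a complete M\"obius invariant: it must coincide with the cross-ratio of the selected bit quadruple. As the bit supplies only finitely many quadruples, and hence finitely many cross-ratio values, whereas the minimal target family sweeps a continuum of cross-ratios, there exists a target whose four states match no admissible value. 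That target is unreachable, contradicting universality for every $N-1\ge 2$.

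The hardest part will be the reduction of the second paragraph, namely rigorously collapsing the sum over the broadcast outcome $\omega$. A priori distinct outcomes apply distinct Kraus operators, so different target elements could be images under different M\"obius maps, which would defeat the cross-ratio argument; the real content is to show that extremality of the target assemblage (not merely purity of its elements) pins all four conditional states to a single outcome and hence a single $K$. I would also have to dispose of degenerate cases, such as a rank-deficient $K$ collapsing several bit states onto one point, or coincidences among the four target states that make the cross-ratio ill defined, and to verify that the minimal target family really does realize a continuum of distinct cross-ratios. The remaining bookkeeping, namely normalizations, the no-signaling constraints introduced above, and checking that the finitely many wirings on finite alphabets keep the set of reachable extremal targets a proper, lower-dimensional subset of the full target family, I expect to be routine once the single-M\"obius reduction is secured.
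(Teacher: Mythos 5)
Your proposal is correct in substance and follows the same overarching strategy as the paper --- a contradiction built from a one-parameter continuum of pure target assemblages, reduced to the action of a single Kraus operator on the finitely many pure states of the candidate bit --- but both key technical steps are handled by genuinely different means, and the comparison is instructive. For the collapse of the instrument sum, the paper does not need extremality at all: since each target element is rank one, every $\omega$-term in the sum of positive operators must individually be null or proportional to that element, and normalization then guarantees that one fixed $\omega$ (hence one fixed $K_{\omega}$) already yields nonzero contributions for both values of the final input $x_f$; this is more elementary than your route, which requires proving that the target assemblage is an extreme point of the non-signaling set. That extremality claim is true for the right family, but note that rank-one elements alone do not suffice (an assemblage repeating the same ensemble for both inputs has rank-one elements yet is decomposable), so you would additionally need that the two target ensembles correspond to genuinely distinct bases. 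For the final contradiction, the paper parametrizes the four bit states explicitly and derives a trigonometric identity in the family parameter $\theta$ that must hold on a continuum, forcing bit states to coincide and contradicting non-degeneracy; your M\"obius/cross-ratio invariant is a more conceptual packaging of the same counting --- for the paper's family $\{\ket{0},\ket{1},\cos\theta\ket{0}+\sin\theta\ket{1},-\sin\theta\ket{0}+\cos\theta\ket{1}\}$ the cross-ratio is $-\tan^{2}\theta$, which sweeps a continuum while the bit supplies only finitely many quadruples --- and it buys two things: it dispenses with the explicit algebra, and it absorbs the single-state case (and more generally any bit with fewer than four distinct pure states) uniformly, which the paper must treat by a separate argument. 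The cleanest version of your proof would be a hybrid: take the paper's rank-one term-wise collapse, thereby sidestepping the extremality lemma you correctly flag as your hardest gap, and finish with your cross-ratio argument after excluding singular $K_{\omega}$, whose projective image is a single point and so cannot cover four distinct target states.
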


\begin{figure}[h]
\begin{centering}
\includegraphics[width=0.9\textwidth]{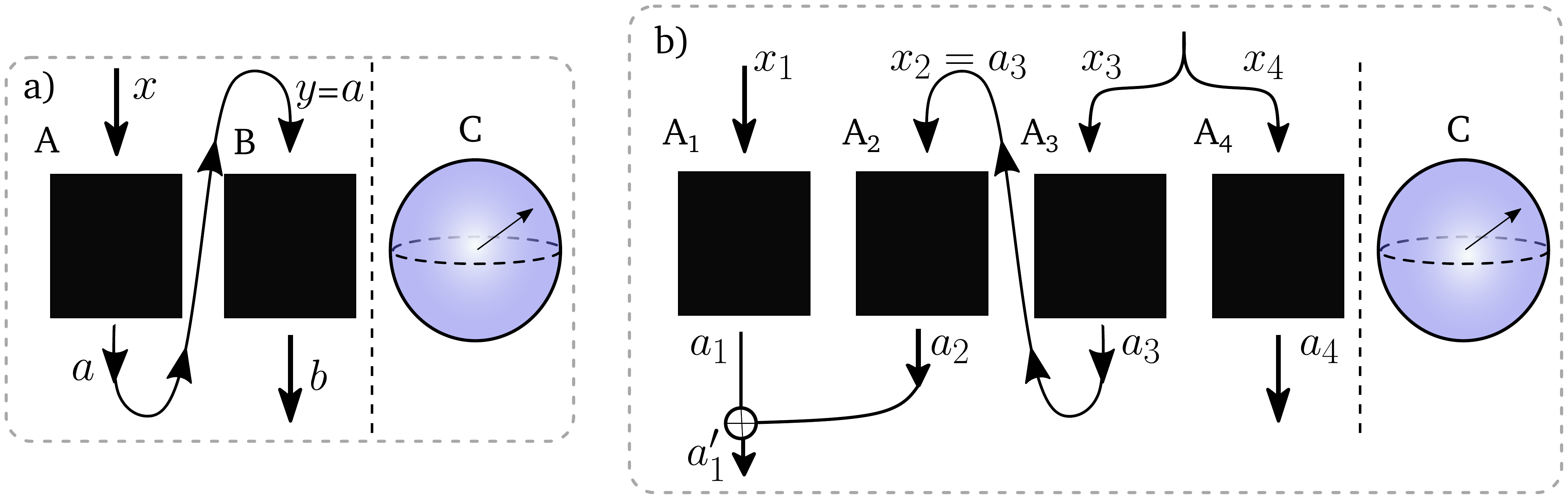}
\par\end{centering}
\caption{Examples of free operations of steering in scenarios with only one
trusted party.\label{fig:Examples-of-free}}

\selectlanguage{american}%
\end{figure}

\subsection{Steering detection, witnesses and quantifiers}

In order to detect whether a known assemblage is steerable or not,
one can directly use the definition of a non-steerable assemblage
as given by Eq. \eqref{eq:LocalHS} for bipartite assemblages or by
Eq. \eqref{eq:LHS} for a tripartite case. In this way, the problem
of steering detection amounts to search for states $\varrho_{\lambda}$
and probability distributions $P_{\lambda}$ and $P_{a|x,\lambda}$
(we consider here the bipartite case for simplicity) such that the
given assemblage can be described by a LHS model, if there do not
exist such mathematical objects, then the assemblage is steerable.
In principle, this is a hard problem since the sum in Eq. \eqref{eq:LocalHS}
has infinitely many terms. However, this problem can be stated in
a way that fits the semi-definite programming (SDP) paradigm, a class
of linear optimization problems over a convex set of positive semi-definite
operators known to be efficiently solvable with developed algorithms.

To begin with, let us consider the deterministic behaviors with one
input and one output, that is, the probability distributions such
that, given an input, it does not vanish for only one output. Given
that the black box has $N_{i}$ possible inputs and $N_{o}$ possible
outputs, the number of deterministic behaviors is $n=N_{o}^{N_{i}}$
and they are written as $D(a|x,\lambda^{\prime})=\delta_{a,\lambda^{\prime}(x)}$,
where $\lambda^{\prime}$ is defined as a string $\lambda^{\prime}=a_{x=0},...,a_{x=N_{i}-1}$
with the deterministic outputs related to each input. Any probability
distribution $P_{a|x,\lambda}$ can be written as a convex combination
of this extreme deterministic behavior with a weight that depends
on $\lambda$ 
\begin{equation}
P_{a|x,\lambda}=\sum_{\lambda^{\prime}=1}^{n}P_{\lambda^{\prime}|\lambda}D(a|x,\lambda^{\prime}).\label{eq:deterministic}
\end{equation}

Substituting Eq. \eqref{eq:deterministic} into Eq. \eqref{eq:LocalHS}
we get
\begin{equation}
\sigma_{a|x}=\sum_{\lambda^{\prime}=1}^{n}D(a|x,\lambda^{\prime})\sigma_{\lambda^{\prime}},
\end{equation}
a finite sum with only a finite number of positive semidefinite objects
$\sigma_{\lambda^{\prime}}=\sum_{\lambda}P_{\lambda^{\prime}|\lambda}P_{\lambda}\varrho_{\lambda}$
to be determined. In order to transform the problem of trying to find
the set $\{\sigma_{\lambda^{\prime}}\}_{\lambda^{\prime}}$ into an
optimization, one can define a number $\mu$ such that $\sigma_{\lambda^{\prime}}\geq\mu\mathbb{1}$.
Now the question of whether the known assemblage $\{\sigma_{a|x}\}_{a,x}$
is LHS can be written as the SDP
\begin{alignat}{1}
\text{given} & \quad\{\sigma_{a|x}\}_{a,x},\{D(a|x,\lambda^{\prime})\}_{a,x,\lambda^{\prime}}\nonumber \\
\underset{\{\sigma_{\lambda^{\prime}}\}}{\text{max}} & \quad\mu\nonumber \\
\text{s.t.} & \quad\sum_{\lambda^{\prime}=1}^{n}D(a|x,\lambda^{\prime})\sigma_{\lambda^{\prime}}=\sigma_{a|x}\quad\forall a,x\label{eq:xsdp}\\
 & \quad\sigma_{\lambda^{\prime}}\geq\mu\mathbb{1}\quad\forall\lambda^{\prime}\nonumber 
\end{alignat}
Put in this way, $\mu$ is allowed to be a negative number. The program
searches for the set $\{\sigma_{\lambda^{\prime}}\}_{\lambda^{\prime}}$
which maximizes $\mu$ keeping the LHS decomposition for the given
assemblage $\{\sigma_{a|x}\}_{a,x}$ valid. Because $\sigma_{\lambda^{\prime}}\geq0$
should be valid by definition, if the maximum value of $\mu$ is negative,
it means that there is no valid LHS decomposition for the assemblage
and therefore it is steerable. Otherwise, if the solution for $\mu$
is positive, then the assemblage is unsteerable and the program returns
its LHS decomposition. Here the problem is written for a bipartite
assemblage, but the same statement can be put forward for any number
of parties.

The SDP also has a so called dual program obtained by using Lagrange
multipliers for each of the constraints. The Lagrangian of the problem
is written as 
\begin{equation}
\mathcal{L}=\mu+\sum_{a,x}\Tr\left[w_{a|x}\left(\sigma_{a|x}-\sum_{\lambda}D(a|x,\lambda)\sigma_{\lambda}\right)\right]+\sum_{\lambda}\Tr\left[z_{\lambda}\left(\sigma_{\lambda}-\mu\mathbb{1}\right)\right],
\end{equation}
where the first term on the right hand side is the function to be
maximized, the second one is related to the first set of constraints
in \eqref{eq:xsdp} and vanishes in the optimal point, the third term
represents the second set of constraints in \eqref{eq:xsdp} and is
always bigger than zero if we impose that $z_{\lambda}\geq0$ for
all $\lambda$. Thus, the Lagrangian $\mathcal{L}$ is always bigger
than $\mu$ and its minimum value serves as a good upper bound for
$\mu$, in most problems the minimum value of $\mathcal{L}$ is actually
equal to the maximum value of $\mu$. Before writing the dual problem
as a minimization, some further simplification is possible. Let us
rewrite the Lagrangian grouping the terms related to each optimization
variable of the primal SDP
\begin{equation}
\mathcal{L}=\sum_{a,x}\Tr\left[w_{a|x}\sigma_{a|x}\right]-\sum_{\lambda}\Tr\left[\left(\sum_{a,x}w_{a|x}D(a|x,\lambda)-z_{\lambda}\right)\sigma_{\lambda}\right]+\left(1-\Tr\left[\sum_{\lambda}z_{\lambda}\right]\right)\mu.
\end{equation}
In the minimum point, the gradient of $\mathcal{L}$ vanishes, thus
the partial derivatives relative to each $\sigma_{\lambda}$ and also
relative to $\mu$ are also zero. It means that the coefficients accompanying
each of these variables must vanish due to the linearity of the Lagrangian
and the independence of the variables. Explicitly, the optimal point
satisfies 
\begin{eqnarray}
\Tr\left[\sum_{a,x}w_{a|x}D(a|x,\lambda)-z_{\lambda}\right] & = & 0\label{eq:condicao1}\\
1-\Tr\left[\sum_{\lambda}z_{\lambda}\right] & = & 0.\label{eq:condicao}
\end{eqnarray}
By substituting \eqref{eq:condicao} in \eqref{eq:condicao1} and
using the restriction that $z_{\lambda}\geq0$ for all $\lambda$,
we get the conditions for the dual minimization problem that eliminate
the variables of primal problem: $\Tr\left[\sum_{a,x,\lambda}w_{a|x}D(a|x,\lambda)\right]=1$
and $\sum_{a,x}w_{a|x}D(a|x,\lambda)\geq0$. We can finally write
down the dual problem as the minimization of the remaining term of
the Lagrangian
\begin{alignat}{1}
\text{given} & \quad\{\sigma_{a|x}\}_{a,x},\{D(a|x,\lambda)\}_{a,x,\lambda}\nonumber \\
\underset{\{w_{a|x}\}}{\text{min}} & \quad\sum_{a,x}\Tr\left[w_{a|x}\sigma_{a|x}\right]\nonumber \\
\text{s.t.} & \quad\Tr\left[\sum_{a,x,\lambda}w_{a|x}D(a|x,\lambda)\right]=1\label{eq:xsdp-1}\\
 & \quad\sum_{a,x}w_{a|x}D(a|x,\lambda)\geq0\quad\forall\lambda.\nonumber 
\end{alignat}
As the minimum of $\mathcal{L}$ coincides with the maximum of $\mu$,
only if the assemblage is steerable, the optimal value of the dual
program is negative. Moreover, in this case, the solutions $\{w_{a|x}\}_{a,x}$
define a steering witness, i.e., an inequality that, if violated,
guarantees that the assemblage is steerable. To see that, let us consider
again the decomposition of an LHS assemblage $\sigma_{a|x}^{\prime}=\sum_{\lambda}D(a|x,\lambda)\sigma_{\lambda}$
, multiply the second condition in the SDP \eqref{eq:xsdp-1} by $\sigma_{\lambda}$
and sum over $\lambda$. This results in $\sum_{a,x}w_{a|x}\sum_{\lambda}D(a|x,\lambda)\sigma_{\lambda}\geq0$
or 
\[
\Tr\left[\sum_{a,x}w_{a|x}\sigma_{a|x}^{\prime}\right]\geq0
\]
for all LHS assemblages. In particular, this inequality is violated
by the steerable assemblage that generated $\{w_{a|x}\}_{a,x}$ since
the left hand side is precisely the negative minimized Lagrangian.
The inequality can still be manipulated to change the bound or the
direction of it.

Although the above methods serve to detect steering they do not quantify
it. A good quantifier of steering must vanish for any LHS assemblage
and must not increase under 1W-LOCC operations. One such quantifier
is the steering robustness which can also be written as an SDP \cite{Piani2015}.
The steering robustness is defined as the minimum amount of an LHS
assemblage or, in other words, the minimum amount of noise that must
be mixed to the assemblage under question such that it becomes LHS,
that is 
\begin{align}
R(\sigma_{a|x}) & = & \underset{\{\xi_{\lambda}\},\{\sigma_{\lambda}\}}{\text{min}}\quad r\nonumber \\
 &  & \text{s.t.}\quad & \frac{\sigma_{a|x}+r\sum_{\lambda}D(a|x,\lambda)\xi_{\lambda}}{1+r}=\sum_{\lambda}D(a|x,\lambda)\sigma_{\lambda}\quad\forall a,x\nonumber \\
 &  &  & \sigma_{\lambda},\xi_{\lambda}\geq0\quad\forall\lambda,
\end{align}
which is clearly zero if the assemblage $\sigma_{a|x}$ already has
an LHS decomposition. Finding the robustness of an assemblage is an
optimization problem but it is not a SDP because it is not even linear.
In order to linearize it, we define $\xi_{\lambda}^{\prime}=r\xi_{\lambda}$
and $\sigma_{\lambda}^{\prime}=(1+r)\sigma_{\lambda}$ such that we
have $\sigma_{a|x}=\sum_{\lambda}D(a|x,\lambda)\left(\sigma_{\lambda}^{\prime}-\xi_{\lambda}^{\prime}\right)$
and $r=\Tr\left[\sum_{\lambda}\xi_{\lambda}^{\prime}\right]$ due
to the normalization $\sum_{\lambda}\Tr\left[\xi_{\lambda}\right]=1$.
The robustness now can be found by solving the SDP 
\begin{align}
R(\sigma_{a|x}) & = & \underset{\{\xi_{\lambda}^{\prime}\},\{\sigma_{\lambda}^{\prime}\}}{\text{min}}\quad & \Tr\left[\sum_{\lambda}\xi_{\lambda}^{\prime}\right]\nonumber \\
 &  & \text{s.t.}\quad & \sigma_{a|x}=\sum_{\lambda}D(a|x,\lambda)\left(\sigma_{\lambda}^{\prime}-\xi_{\lambda}^{\prime}\right)\quad\forall a,x\nonumber \\
 &  &  & \sigma_{\lambda}^{\prime},\xi_{\lambda}^{\prime}\geq0\quad\forall\lambda.\label{eq:rob}
\end{align}
Here we choose an LHS noise, but in fact the robustness can be defined
relatively to any subset or even to the whole convex set of assemblages
\cite{Sainz2016a}, for this reason it is better to call \eqref{eq:rob}
LHS-robustness. The motive for the name of this quantifier is obvious:
the larger the noise that must be added to extinguish the steering,
the more robust is the steering present in the assemblage.

All the methods and quantities discussed in this section can be extended
to more party assemblages. Although this quantities are well defined,
applying them to experimentally recovered assemblages to determine
if it is LHS or not may be challenging since this assemblages are
not even non-signaling in general. We discuss this issue later in
this chapter.

\subsection{Assemblage Fidelity}

To quantify the similarity between two assemblages $\boldsymbol{\sigma}_{1}=\{P_{1}(\mathbf{a}|\mathbf{x})\varrho_{1}(\mathbf{a}|\mathbf{x})\}$
and $\boldsymbol{\sigma}_{2}=\{P_{2}(\mathbf{a}|\mathbf{x})\varrho_{2}(\mathbf{a}|\mathbf{x})\}$,
we use a mean assemblage fidelity defined by 
\begin{equation}
F(\boldsymbol{\sigma}_{1},\boldsymbol{\sigma}_{2})=\frac{1}{N_{x}}\sum_{\mathbf{x},\mathbf{a}}\sqrt{P_{1}(\mathbf{a}|\mathbf{x})P_{2}(\mathbf{a}|\mathbf{x})}\mathcal{F}\left(\varrho_{1}(\mathbf{a}|\mathbf{x}),\varrho_{2}(\mathbf{a}|\mathbf{x})\right),
\end{equation}
where $\mathbf{x}$ ($\mathbf{a}$) is a list of inputs (outputs)
of all black boxes, $N_{x}$ is the number of different measurement
choices, and 
\begin{equation}
\mathcal{F}(\varrho_{1},\varrho_{2})=\Tr\sqrt{\sqrt{\varrho_{1}}\varrho_{2}\sqrt{\varrho_{1}}}\label{eq:fidelity}
\end{equation}
 is the usual fidelity between two quantum states. The above defined
fidelity can be seen as a mean of the fidelities of the quantum parts
weighted by the square root of blackbox probabilities. It has the
property of being $1$ if all elements of the two assemblages are
equal and vanishes if all quantum states are orthogonal.

\section{Steering exposure and super exposure of Bell-nonlocality}

The main result of this work is to show that the current definition
of multipartite quantum steering as the violation of model \eqref{eq:LHS}
is deficient since it presents inconsistencies with the resource theory
of steering. We begin by an exposure protocol for steering (Bell nonlocality)
that is universal in the sense of being capable of producing any bipartite
assemblage (behavior) whatsoever from an appropriate tripartite assemblage
(behavior) originally admitting an LHS (LHV) model. As in Ref. \cite{Gallego2012},
we exploit bilocal wirings as that of Fig.\,\ref{fig:Examples-of-free}-a),
which makes Bob's input $y$ equal to Alice's output $a$. This requires
that Bob's measurement is in the causal future of Alice's. Indeed,
after the wiring, systems $A$ and $B$ now behave as a single black
box with input $x$ and output $b$. In other words, exposure is a
form of conversion from tripartite correlations into bipartite ones.
Here, we restrict to the case of binary inputs and outputs (each one
can take only two values) for simplicity, where we prove the following
surprising result.

\noindent \vspace{0.2cm}
 Universal exposure of quantum nonlocality: \textit{ Any bipartite
assemblage $\boldsymbol{\sigma}^{(\text{target})}$ or Bell behavior
$\boldsymbol{P}^{(\text{target})}$ can be obtained via the wiring
$y=a$ on the tripartite assemblage $\boldsymbol{\sigma}^{(\text{initial})}$
or behavior $\boldsymbol{P}^{(\text{initial})}$, respectively, of
elements} \begin{subequations} \label{eq:universal} 
\begin{equation}
\sigma_{a,b|x,y}^{(\text{initial})}:=\frac{1}{2}\sigma_{b|x\oplus a\oplus y}^{(\text{target})}\ \label{eq:genactiv}
\end{equation}
\textit{or} 
\begin{equation}
P^{(\text{initial})}(a,b,c|x,y,z)=\frac{1}{2}P^{(\text{target})}(b,c|x\oplus a\oplus y,z)\ ,\label{eq:genblackboxactiv}
\end{equation}
\textit{where $\oplus$ stands for addition modulo 2. Moreover, $\boldsymbol{\sigma}^{(\text{initial})}$
and $\boldsymbol{P}^{(\text{initial})}$ admit respectively an LHS
and an LHV models across the $AB|C$ bipartition, for all $\boldsymbol{\sigma}^{(\text{target})}$
and $\boldsymbol{P}^{(\text{target})}$.} \end{subequations} \vspace{0.2cm}

\begin{proof}
It is straightforward to check that applying the wiring $y=a$ to
Eqs.\ \eqref{eq:genactiv} and \eqref{eq:genblackboxactiv}, the
target assemblage and behavior are obtained, i.e., $\sum_{a}\sigma_{a,b|x,y=a}^{(\text{initial})}=\sigma_{b|x}^{(\text{target})}$
and $\sum_{a}P^{(\text{initial})}(a,b,c|x,y=a,z)=P^{(\text{target})}(b,c|x,z)$.\par 
It is certainly not evident that the initial correlations are bilocal.
To prove this, we construct an explicit LHS model for the source assemblage
$\boldsymbol{\sigma}^{(\text{initial})}$. It is given by 
\begin{align}
P_{\lambda}=\frac{1}{2}\Tr\left(\sigma_{\lambda_{0}|\lambda_{1}}^{(\text{target})}\right),\ \  & \varrho_{\lambda}=\frac{\sigma_{\lambda_{0}|\lambda_{1}}^{(\text{target})}}{\Tr\left(\sigma_{\lambda_{0}|\lambda_{1}}^{(\text{target})}\right)},\label{eq:LHSorig1}\\
P_{a,b|x,y,\lambda} & =\delta_{\lambda_{0},b}\ \delta_{\lambda_{1},x\oplus a\oplus y}\ ,\label{eq:LHSorig2}
\end{align}
\label{eq:LHSorig} where $\lambda=(\lambda_{0},\lambda_{1})$ is
a two-bit hidden variable.\par For the Bell behavior, this expression
readily lends itself for a local hidden-variable decomposition of
$\boldsymbol{P}^{(\text{initial})}$ on $AB|C$, $P^{(\text{initial})}(a,b,c|x,y,z)=\sum_{\lambda}P_{\lambda}P_{a,b|x,y,\lambda}P(c|z;\lambda)$
with 
\begin{align}
P_{\lambda}=\frac{1}{2};\ \ \  & P(c|z;\lambda)=P^{(\text{target})}(\lambda_{0},c|\lambda_{1},z);%
\label{eq:LHVorig1}
\end{align}
and the same bipartite distribution from Eq.\ \eqref{eq:LHSorig2}.
\end{proof}

\noindent When the target assemblage (behavior) is steerable (Bell
nonlocal), exposure of steering (Bell nonlocality) is achieved. Furthermore,
apart from steerable, assemblages can also be Bell nonlocal in the
sense of giving rise to nonlocal behaviors under local measurements
\cite{Taddei2016}. Hence, when $\boldsymbol{\sigma}^{(\text{target})}$
is Bell nonlocal, a seemingly unsteerable system is mapped onto a
Bell nonlocal one, which is outstanding in view of the fact that unsteerable
assemblages form a strict subset of Bell-local ones.

The protocol above highlights the power of bilocal wirings as resource-theoretic
transformations. Remarkably, such wirings compose a strict subset
of well-known classes of free operations of quantum nonlocality (across
$AB|C$): local operations with classical communication (LOCCs) \cite{Horodecki2009}
for entanglement, one-way (1W) LOCCs from the trusted to the untrusted
parts \cite{Gallego2015} for steering, and local operations with
shared randomness \cite{Gallego2012,deVicente2014,Gallego2017} for
Bell nonlocality. However, there are also limitations to the power
of these wirings. In particular, in Supplementary Section VI we prove
a no-go theorem for universal steering bits in the $N$DI+1DD scenario
{[}exemplified in Fig.\ \ref{fig:Examples-of-free}-b) for $N=4${]}.
That is, we show there that there is no $N$-partite assemblage, for
all $N$, from which all bipartite ones can be obtained via arbitrary
1W-LOCCs.

Although the protocol above is universal, it is unclear whether it
can actually be physically implemented in general. This is due to
the fact that the tripartite initial correlations may be supra-quantum,
i.e.\ well-defined non-signaling correlations that can however not
be obtained from local measurements on any quantum state \cite{Sainz2015,Sainz2018a,Sainz2019,Popescu1994}.
Physical protocols for Bell-nonlocality exposure were devised in Refs.\ \cite{Gallego2012,Bancal2013},
but no such protocols have been reported for steering. Hence, we next
show an example for both steering exposure and Bell-nonlocality super-exposure
that is manifestly within quantum theory. This also exploits the bilocal
wirings of Fig.\ \ref{fig:Examples-of-free}-a), but starting from
a different initial assemblage. We describe the latter directly in
terms of its quantum realization. 

Consider a tripartite Greenberg-Horne-Zeilinger (GHZ) state $(|000\rangle+|111\rangle)/\sqrt{2}$,
with $\ket{0}$ and $\ket{1}$ the eigenvectors of the third Pauli
matrix $Z$. Bob makes von Neumann measurements on his share of the
state for both his inputs, for $y=0$ in the $Z+X$ basis and for
$y=1$ in the $Z-X$ basis, with $X$ the first Pauli matrix. Alice,
however, makes either a trivial measurement, given by the positive
operator-valued measure $\{\mathbb{1}/2,\mathbb{1}/2\}$, for $x=0$,
or a von Neumann $X$-basis measurement, for $x=1$. For the resulting
initial assemblage, $\boldsymbol{\sigma}^{(\text{GHZ})}$, the following
holds.

\vspace{0.2cm}
 Physically-realizable exposure and super-exposure:\emph{ The quantum
assemblage} $\boldsymbol{\sigma}^{(\text{GHZ})}$\emph{, of elements}
\begin{equation}
\sigma_{a,b|x,y}^{(\text{GHZ})}
=\frac{1}{8}\left\{ \mathbb{1}+\frac{(-1)^{b}}{\sqrt{2}}\left[Z+x(-1)^{a+y}X\right]\right\} \ \label{eq:quantumorig}
\end{equation}
\emph{admits an LHS model and, under the wiring $y=a$, is mapped
to the assemblage of elements 
\begin{equation}
\sigma_{b|x}
=\frac{1}{4}\left[\mathbb{1}+\frac{(-1)^{b}}{\sqrt{2}}\left(Z+xX\right)\right]\ ,\label{eq:quantumsteerable}
\end{equation}
which is both steerable and Bell-nonlocal.} \vspace{0.2cm}

Equation \eqref{eq:quantumorig} can be obtained in the same way as
in Eq. \eqref{eq:sep}, but substituting $\rho_{sep}^{ABC}$ by the
GHZ state. It is also straightforwad to show that the resulting wired
assemblage is that of Eq.\ \eqref{eq:quantumsteerable}. Now, we
proceed to prove that the physically-realizable source assemblage
$\boldsymbol{\sigma}^{(\text{GHZ})}$ in Eq.\ \eqref{eq:quantumorig}
admits an LHS model across the bipartition $AB|C$, and that the latter
is both steerable and Bell nonlocal. \begin{proof}
The LHS decomposition for Eq.\ \eqref{eq:quantumorig} is found via
solving the SDP \eqref{eq:xsdp}. The numerical results in this case
allow one to find analytic formulas for the decomposition, namely
\begin{align}
P_{\lambda}=\frac{1}{4};\ \ \ \varrho_{\lambda} & =\frac{\mathbb{1}}{2}+\frac{(-1)^{\lambda_{0}}}{2\sqrt{2}}\left[Z+(-1)^{\lambda_{1}}X\right];\label{eq:quantumLHSdecomp1}\\
P_{a,b|x,y,\lambda} & =\delta_{\lambda_{0},b}\frac{1+x(-1)^{a+y+\lambda_{1}}}{2}\ ,\label{eq:quantumLHSdecomp2}
\end{align}
 where again $\lambda=(\lambda_{0},\lambda_{1})$ is a two-bit hidden
variable.\par Let us now prove the steerability and Bell-nonlocality
of assemblage \eqref{eq:quantumsteerable}. Steerability: with the
SDP \eqref{eq:xsdp-1}, we have obtained an assemblage-like object
$W=\{w_{a|x}\}_{a,x}$ that serves as a steering witness, i.e.\ it
establishes the inequality $\sum_{a,x}\Tr\left[w_{a|x}\sigma_{a|x}\right]\leqslant1$,
which can only be violated if assemblage $\boldsymbol{\sigma}=\{\sigma_{a|x}\}_{a,x}$
is steerable. Optimized for assemblage \eqref{eq:quantumsteerable},
the witness returns a value of $1.0721$ and can be cast as 
\begin{equation}
w_{0|0}=\begin{bmatrix}p & -c\\
-c & 1-p
\end{bmatrix},\quad w_{0|1}=\begin{bmatrix}q & p/2\\
p/2 & -q
\end{bmatrix},\label{eq:witnessW}
\end{equation}
with $p=\frac{1}{2}(1+\frac{1}{\sqrt{5}}),\,c\approx0.1382,\,q\approx0.2236$,
and $w_{1|x}=Y\,w_{0|x}\,Y,\,x=0,1$. Bell-nonlocality: The necessary
and sufficient criterion from \cite{Taddei2016} yields an optimal
violation of the Clauser-Horne-Shimony-Holt (CHSH) inequality of $|-\frac{\sqrt{5}+1}{\sqrt{2}}|\approx2.29\nleqslant2$,
attained when Charlie makes von Neumann measurements in the eigenbases
of $2Z+X$ and $X$. \end{proof}

\section{Experimental implementation}

Because of experimental imperfections and even the finite statistics
inherent to the experimental data, it could be the case that, although
we have a quantum realizable example of steering exposure, we would
be unable to determine that the exposure has happened. In this section,
we present an implementation of that example and show that the exposure
of steering and Bell-nonlocality is a detectable phenomenon. 

The exposure procedure was experimentally implemented using entangled
photons produced via spontaneous parametric down conversion. The experimental
setup is shown in Fig.\,\ref{fig:setup}. A photon pair is generated
in the Bell state $|\Phi^{+}\rangle=\left(|00\rangle+|11\rangle\right)/\sqrt{2}$,
where $|0\rangle$ ($|1\rangle$) stands for horizontal (vertical)
polarization of the photons \cite{Kwiat99}. The photons in the signal
mode ($s$) pass through a calcite beam displacer (BD), which creates
two momentum modes (paths) depending on the polarization. This results
in a tripartite GHZ state, where the extra qubit is the path degree
of freedom of the photons in $s$. Alice's and Bob's qubits are the
polarization and path of the photons in mode $s$, respectively, while
Charlie's qubit is the polarization of the photon in mode $i$. The
measurements onto all the degrees of freedom required for the assemblage
production and tomography are performed as described below.

\begin{figure}[h]
\centering \includegraphics[width=0.8\columnwidth]{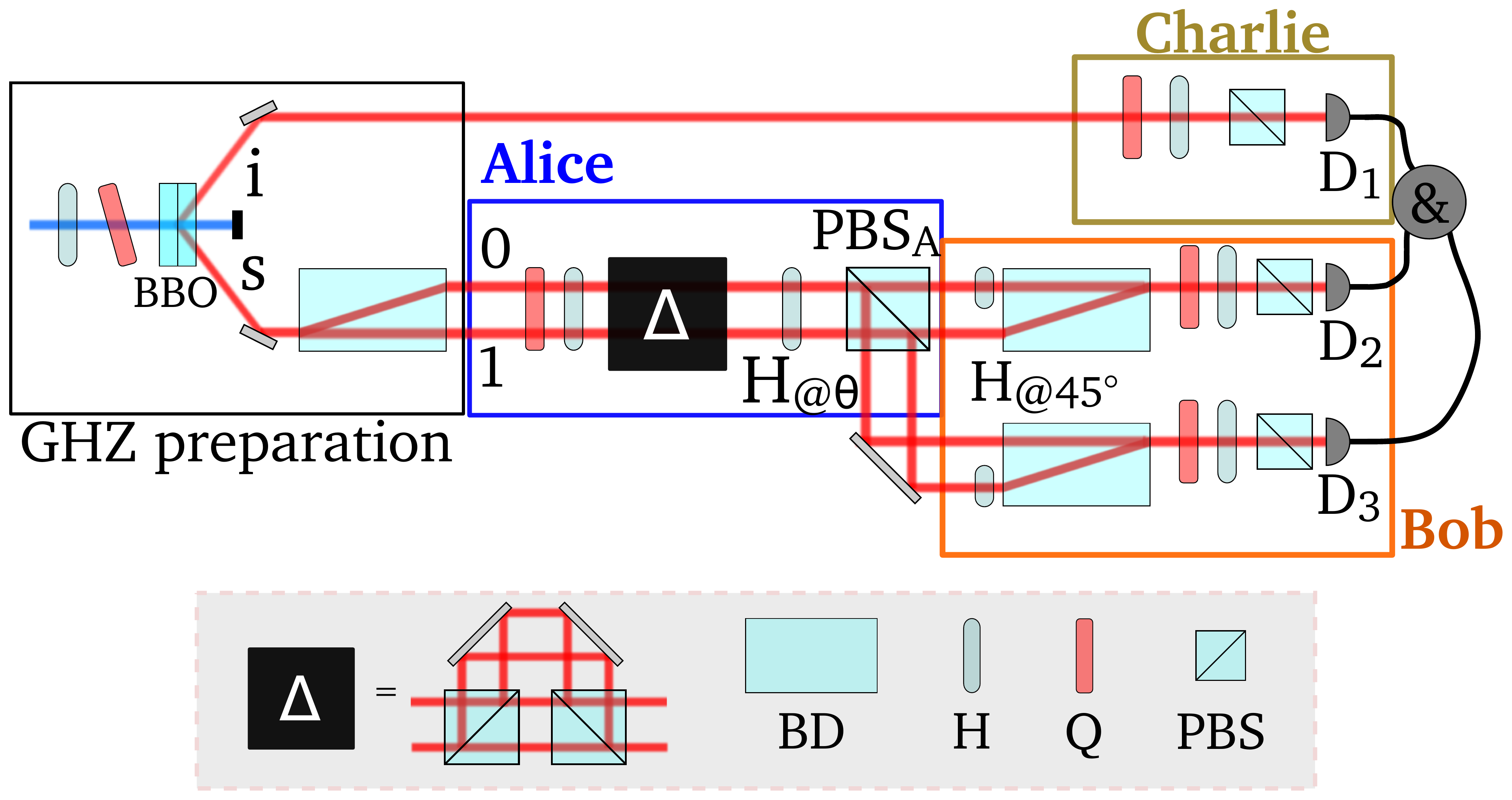}
\caption{Experimental setup. Two crossed-axis BBO crystals are pumped by a He-Cd laser centered at 325 nm, producing pairs of photons at 650 nm entangled in the polarization degree of freedom \protect\cite{Kwiat99}. The signal ($s$) photon is sent through a BD which deviates only the horizontal-polarization component, producing a tripartite GHZ state on two photons using polarization and path degrees of freedom. Idler ($i$) photons are sent directly to Charlie's polarization measurements. Signal photons are first measured in polarization by Alice, then Bob maps his path qubit onto a polarization qubit for his measurements. $H$ stands for half-wave plate, $Q$ for quarter-wave plate and $PBS$ for polarizing beam splitter.}
\label{fig:setup} 
\end{figure}

To implement the wiring from Fig. \ref{fig:Examples-of-free}-a),
Alice's polarization measurements are realized before Bob's measurements
onto the path degree of freedom. Alice's results are read from the
output of PBS$_{A}$, which determines whether D$_{2}$ ($a=0$) or
D$_{3}$ ($a=1$) clicks. For Alice's trivial measurement ($x=0$),
crucial for the original assemblage to be LHS-decomposable, both her
wave plates located before the imbalanced interferometer (represented
by $\Delta$) are kept at $0^{\circ}$, and H$_{@\theta}$ is adjusted
to $22.5^{\circ}$. The role of $\Delta$ is to remove the coherence
between horizontal and vertical polarization components, ensuring
that the photon exits PBS$_{A}$ randomly, independent of the input
polarization state. To see that it is indeed implementing the desired
measurement, consider an arbitrary pure polarization state $a\ket{H}+b\ket{V}$
entering $\Delta$. We can associate orthogonal states $\ket{s}$
and $\ket{l}$ for the photon going through the short and long paths
of the interferometer, respectively. As horizontally (vertically)
polarized photons take the short (long) path, the state of the photons
is $a\ket{H}\ket{s}+b\ket{V}\ket{l}$ inside the interferometer. After
exiting the interferometer, the paths recombine, but because of the
incoherence introduced between the two paths, the effect is equal
to tracing out the path degree of freedom obtaining the mixed state
$|a|^{2}\ket{H}\bra{H}+|b|^{2}\ket{V}\bra{V}$. Lastly, the photons
pass through the HWP after $\Delta$ which transforms horizontal (vertical)
into diagonal (anti-diagonal) polarization delivering the state $\frac{1}{2}\left[\ket{H}\bra{H}+\ket{V}\bra{V}+\left(|a|^{2}-|b|^{2}\right)\left(\ket{H}\bra{V}+\ket{V}\bra{H}\right)\right]$
which, regardless of $a$ and $b$, gives probability $1/2$ for detecting
the photon in horizontal or vertical polarization, going randomly
to D$_{2}$ or D$_{3}$. 

For $x=1$, Alice's wave plates before $\Delta$ are set to project
the polarization on the $X$ eigenstates, such that the interferometer
and H$_{@\theta}$ ($\theta=0^{\circ}$) play no role. Bob performs
his projective measurements by first mapping the path degrees of freedom
onto polarization using BDs and then projecting the polarization state
using his set of wave plates and PBSs, as was realized in Ref.\ \cite{Farias2012}
and described in Section \ref{subsec:Path-degree-of}. To reconstruct
the assemblage in Eq.\ \eqref{eq:quantumorig}, measurements for
$y=0$ and $y=1$ are made in both detectors D$_{2}$ and D$_{3}$,
by varying the angle of the wave plates in Bob's box. To collect the
data corresponding to the wired assemblage \eqref{eq:quantumsteerable}
only the $y=0$ measurement is made in D$_{2}$ ($a=0$) and only
$y=1$ is made in D$_{3}$ ($a=1$), enforcing that Bob's input equals
Alice's output ($y=a$).

Although we treat two of the qubits as black boxes, in order to ensure
that the resulting assemblage is generated by quantum measurements
performed onto a GHZ, we first performed state tomography to determine
the tripartite quantum state. This can be done without adding any
optical element to the setup. By varying the angles on Alice's quarter-wave
plate and half-wave plate before the unbalanced interferometer, we
set her apparatus to make any tomographic measurement in polarization
if we set $H_{@\theta}$ to $0^{\circ}$. The tomographic projections
for the path degree of freedom of photons in $s$ and polarization
of photons in $i$ is done using the set of wave plates just before
detectors D$_{1}$ and D$_{2}$, respectively. Using the collected
coincidence counts we reconstructed the tripartite quantum state by
maximum likelihood. The reconstructed density matrix is shown in Fig.\,\ref{fig:ghz}.
The experimental state presents fidelity (Eq. \eqref{eq:fidelity})
with GHZ state equals to $0.981\pm0.004$.

\begin{figure}[tb]
\centering \includegraphics[width=0.5\columnwidth]{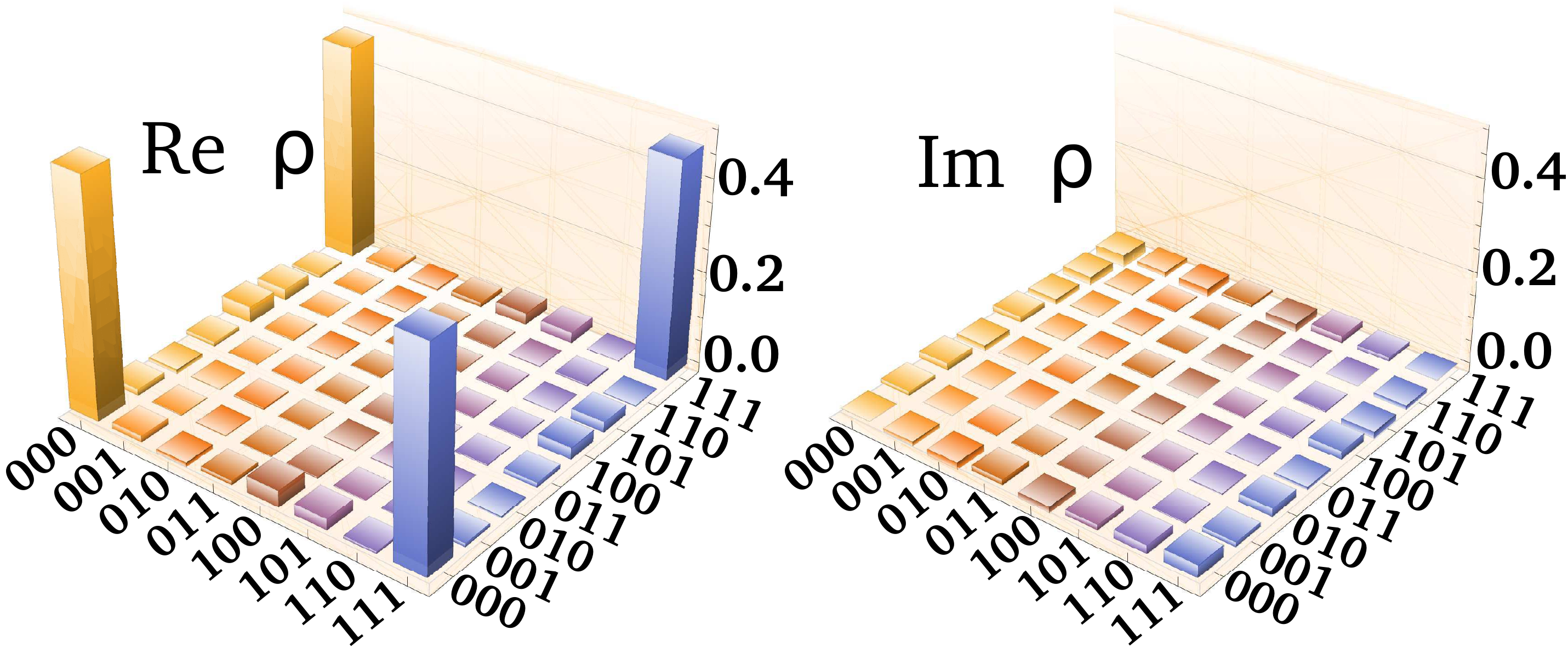}
\caption{Real and imaginary parts of the experimental reconstructed GHZ state.}
\label{fig:ghz} 
\end{figure}

The assemblage $\boldsymbol{\sigma}^{(\text{GHZ})}$ was obtained
experimentally by performing state tomography on Charlie's system
for each measurement setting and outcome of Alice and Bob. Sixteen
density matrices are obtained through maximum likelihood. Each element
of the tripartite assemblage is composed of Charlie's conditional
quantum state and the conditional probability $P_{a,b|x,y}$ for the
black boxes. All sixteen experimental density matrices of Charlie
are shown in Fig.\,\ref{fig:assembtrip} in comparison with the corresponding
theoretical ones. The associated conditional probabilities are also
shown. The assemblage presents a fidelity-like measure of $98.2\pm0.2\%$
compared to the theoretical one. 

\begin{figure}[!h]
\centering \includegraphics[width=0.8\textwidth]{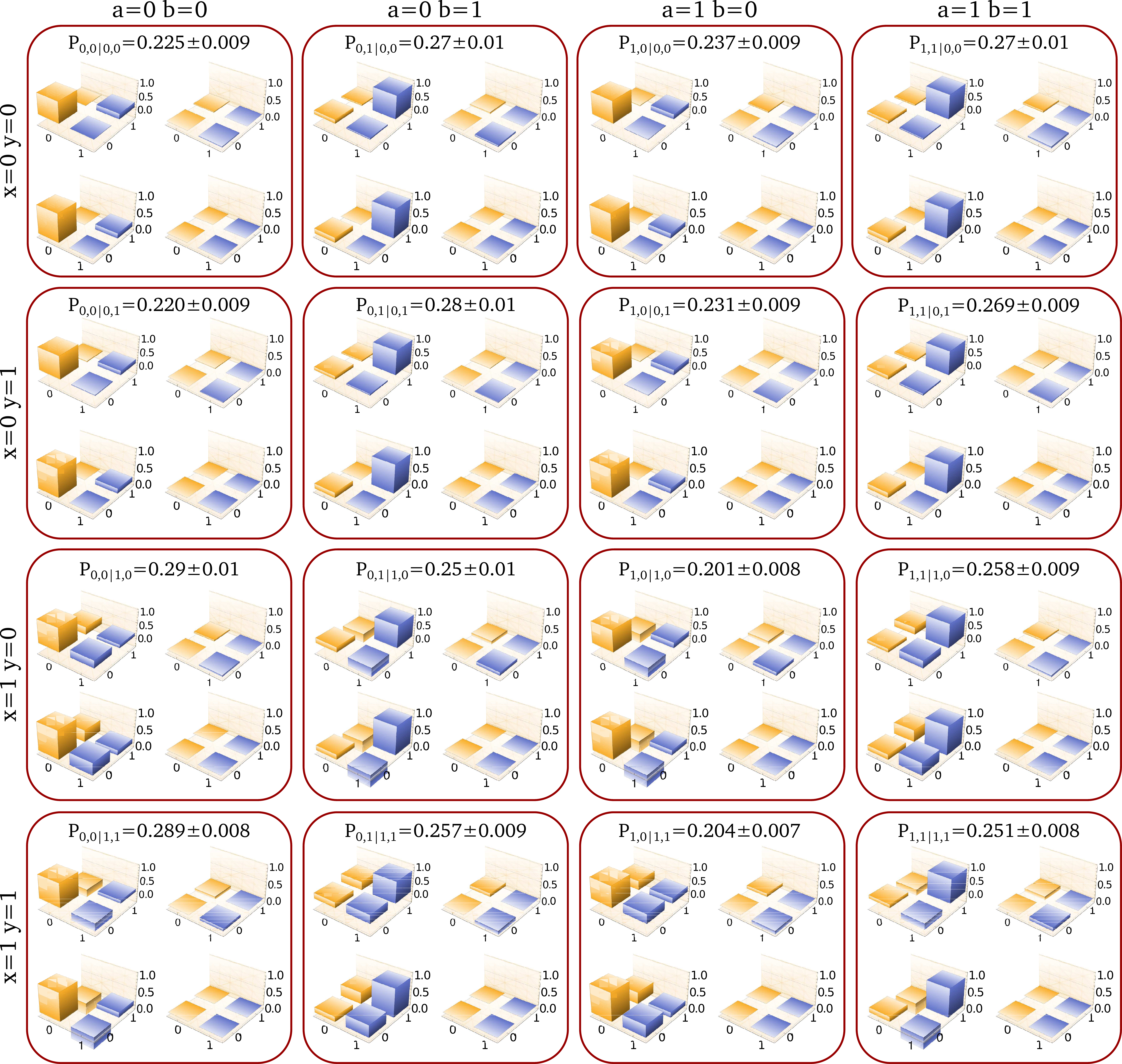}
\caption{Theoretical and experimental reconstructed assemblages for different
values of inputs $x,y$ and outputs $a,b$. Each box shows the joint
probability of measurement for the black boxes, real (top left) and
imaginary (top right) parts of the experimental density matrix of
Charlie's partition, and real (bottom left) and imaginary (bottom
right) parts of theoretical Charlie's density matrix. The theoretical
probability is $0.25$ for all measurement choices and measurement
outputs.}
\label{fig:assembtrip} 
\end{figure}

The experimental wired assemblage is shown in Fig.\ \ref{fig:wiredass}
a). For the wired assemblage, the expected conditional probability
of each outcome is $\frac{1}{2}$; the experimental values are $0.46\pm0.01$,
$0.54\pm0.01$, $0.49\pm0.01$, $0.51\pm0.01$ (following the order
in Fig.\ref{fig:wiredass}a). The imaginary components of the density
matrix average to $0.05\pm0.02$ (theoretical: zero). and returns
a fidelity of $98.1\pm0.6\%$ with respect to the theoretical wired
assemblage given in \eqref{eq:quantumsteerable}.

An exact LHS decomposition of the experimental assemblage is not feasible
due to imperfections and finite statistics --- in fact, assemblages
reproducing raw experimental data exactly are not even physical, since
they disobey the NS principle \cite{Cavalcanti2015a}. To show that
the experimental tripartite assemblage is statistically compatible
with an LHS decomposition, we proceed as follows: First, we assume
the photocounts obtained for each measured projector are averages
of Poisson distributions; with a Monte Carlo simulation, we sample
many times each of these distributions and reconstruct the corresponding
assemblages. Second, for each reconstructed assemblage, we find the
physical (NS) assemblage that best approximates it through maximum-likelihood
estimation, as well as the best LHS approximation for comparison.
As an initial indication of LHS-compatibility, the log-likelihood
error of both approximations is extremely similar, see Fig.\ref{fig:wiredass}c).
Third, for the NS approximations we calculate the LHS-robustness .
For comparison, we repeat the procedure starting with simulated finite-photocount
statistics from the theoretical LHS assemblage from Eq.\ \eqref{eq:quantumorig}.
In Fig.\ref{fig:wiredass}d) we see that the experimental robustness
has a sizable zero component and a distribution fully compatible with
that of an LHS assemblage under finite measurement statistics. To
show that the experimental wired assemblage is steerable, we tested
it on the optimal steering witness $W$ with respect to assemblage
\eqref{eq:quantumsteerable} (Eq. \eqref{eq:witnessW}). This returned
a value $1.015\pm0.009\nleqslant1$ (theoretical: $1.0721\nleqslant1$),
where the inequality violation implies steering, see Fig.\ref{fig:wiredass}b).
This allows us to conclude that the bipartite wired assemblage is
indeed steerable. The experimental error was calculated using 500
assemblages also from a Monte Carlo simulation of measurement results
with Poisson photocount statistics.

\begin{figure}[H]
\centering{}\centering \includegraphics[width=0.5\columnwidth]{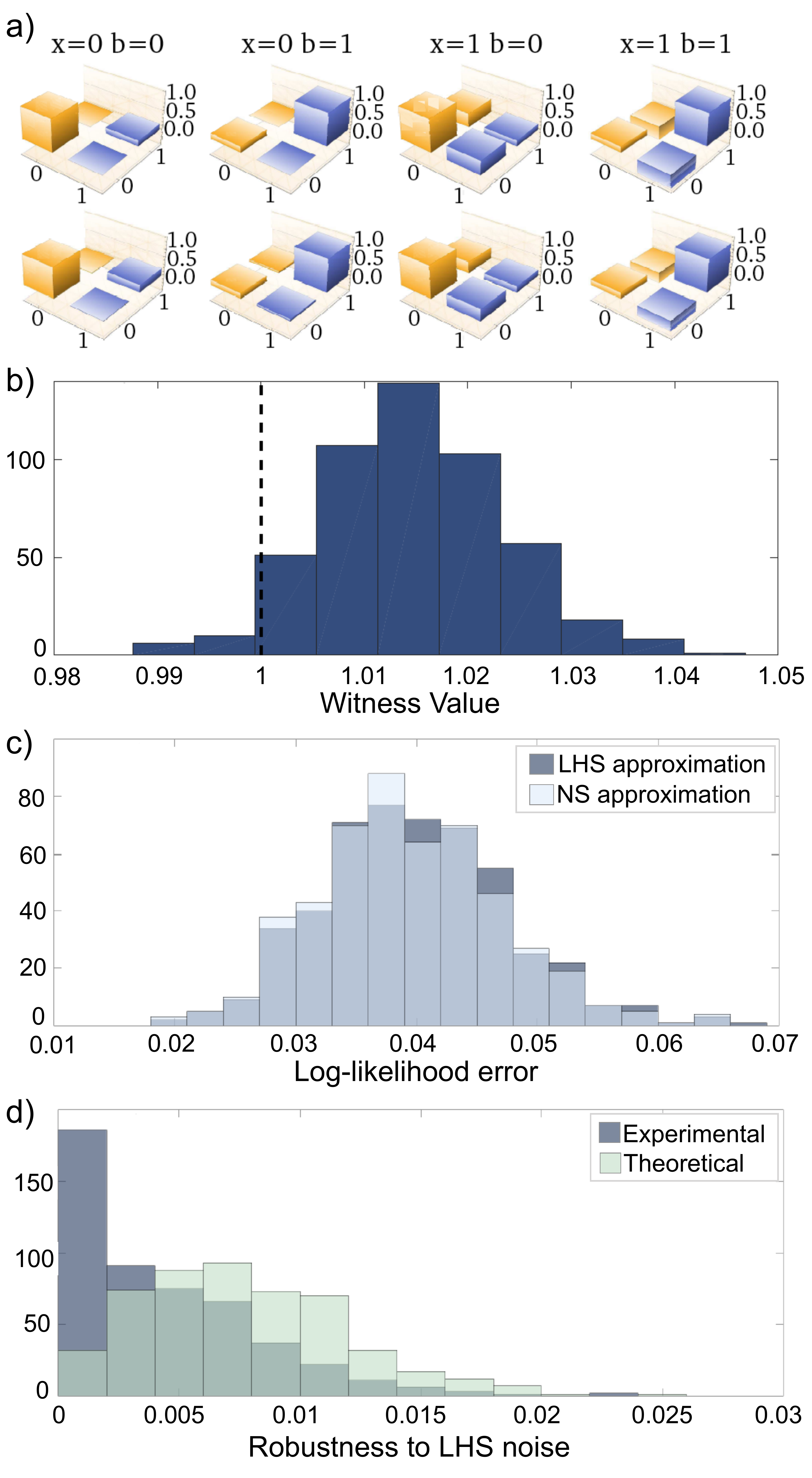}
\caption{a), b) Experimental assemblages after $y=a$ wiring. a) Real part
of Charlie's conditional density matrices, theoretical (top) and experimental
(bottom). b) Steering-witness histogram. The witness value is $1.015\pm0.009$,
meaning that the experimental assemblage is more than one standard
deviation above the steering threshold (dashed line). c), d) Compatibility
of the tripartite experimental assemblage with the naive (LHS) definition
of unsteerability {[}Eq.\ \eqref{eq:LHS}{]}. c) Histogram of the
error of approximating the tripartite assemblage by an NS and an LHS
assemblage, showing that the error of assuming the LHS decomposition
is as small as that of the physically necessary NS assumption. d)
From the best NS approximation to the experimental data, histogram
of the LHS-robustness, a measure of deviations from the set \textsf{LHS}.
Even with all experimental error, there is only a residual amount
of robustness, fully compatible with that of the theoretical LHS assemblage
solely under finite-statistics error. All histograms come from Monte
Carlo simulation assuming Poisson distributions.}
\label{fig:wiredass} 
\end{figure}

Using the same experimental setup, we can also experimentally demonstrate
super-exposure of Bell nonlocality. As argued above, the initial experimental
assemblage is compatible with an LHS model. Therefore, no matter what
measurement Charlie makes, the corresponding Bell behavior will be
compatible with an LHV model. Hence, we must only show that the experimental
wired assemblage is Bell nonlocal. In Ref.\ \cite{Taddei2016}, a
necessary and sufficient criterion for Bell nonlocality of assemblages
was derived: Given Alice and Bob's wired measurements ($y=a$) with
input bit $x$ and output bit $b$, to maximally violate a Bell inequality,
Charlie performs von Neumann measurements in the $2Z+X$ and $X$
bases, labeled by input bit $z$, obtaining binary output result $c$.
They thus obtain sixteen probabilities $P(b,c|x,z)$, which are used
to calculate the Clauser-Horne-Shimony-Holt (CHSH) inequality \cite{Clauser1969}.
We obtained an experimental violation of $2.21\pm0.04\nleqslant2$
(theoretical prediction: $2.29\nleqslant2$), showing Bell nonlocality
in a DI fashion.

Thus, we have experimentally demonstrated both exposure of steering
and super-exposure of Bell nonlocality.

\section{Redefinition of steering}

\noindent The results of the previous sections suggest the necessity
of a redefinition of steering in the multipartite scenario, since,
analogously to \cite{Gallego2012}, an assemblage can belong to LHS
and still be steerable. The existence of subtle steering implies a
stark inconsistency between the naive definition of steering from
LHS decomposability, Eq.\ \eqref{eq:LHS}, and the formulation of
its resource theory. Since the free operations that cause exposure
are classical and strictly local (fully contained in the $AB$ partition),
it is reasonable that they are unable to create not only steering
but also any form of correlations (even classical ones) across $AB|C$.
The alternative left is to redefine bipartite steering in multipartite
scenarios such that, e.g., the assemblages in Eqs.\ \eqref{eq:genactiv}
and \eqref{eq:quantumorig} are already steerable. Formally, we need
to exclude a subclass of LHS decompositions from the set of unsteerable
assemblages. In principle, no restriction must be imposed over the
probability distribution $P_{a,b|x,y,\lambda}$ in Eq.\ \eqref{eq:LHS},
once the NS conditions are satisfied for the visible assemblage. A
suitable choice is to restrict all signaling between Alice and Bob
also at the level of each $\lambda$ in Eq.\eqref{eq:LHS}; this defines
the set NS-LHS (non-signaling local hidden states). This restriction,
however, can be consistently relaxed to allow signaling between the
two as long as, for each $\lambda$, Alice and Bob's distribution
is compatible with both orders ($A$ before $B$ and $B$ before $A$);
this defines TO-LHS (time-ordered local hidden states), a strict superset
of NS-LHS; see Fig.\ref{fig:setsLHS}. This has consequences for genuine
multipartite correlations, including the possibility of certifying
genuine multipartite entanglement in a semi-DI scenario without steering.

To identify that subclass, let us apply the wiring $y=a$ to a general
$\boldsymbol{\sigma}$ fulfilling Eq.\ \eqref{eq:LHS}. This gives
$\boldsymbol{\sigma}^{(\text{wired})}$, of elements 
\begin{align}
\sigma_{b|x}^{(\text{wired})} & :=\sum_{a}\sigma_{a,b|x,a}=\sum_{\lambda}\ \ P_{\lambda}\Big(\sum_{a}P_{a,b|x,a,\lambda}\Big)\varrho_{\lambda}.\label{eq:wiredLHS}
\end{align}
This is a valid LHS decomposition as long as the term within brackets
yields a valid (normalized) conditional probability distribution (of
$B$ given $X$ and $\Lambda$). This is the case if every $\boldsymbol{P}_{\lambda}^{(AB)}$
in Eq.\ \eqref{eq:LHS} is non-signaling. In that case, by summing
over $b$ and applying the NS condition, one gets 
\begin{equation}
\sum_{a,b}P_{a,b|x,a,\lambda}=\sum_{a}P_{a|x,a,\lambda}\stackrel{\mathrm{NS}}{=}\sum_{a}P_{a|x,\lambda}=1\ ,\label{eq:norm}
\end{equation}
which renders $\boldsymbol{\sigma}^{(\text{wired})}$ indeed unsteerable.
However, this reasoning can in general not be applied if any $\boldsymbol{P}_{\lambda}^{(AB)}$
is signaling from Bob to Alice, i.e. if Alice's marginal distribution
for $a$ depends on $y$ (apart from $x$ and $\lambda$). In fact,
it can be checked that this is the case of the probability distributions
\eqref{eq:LHSorig2} and \eqref{eq:quantumLHSdecomp2} of the general
exposure protocol and of the quantum exposure example, respectively.

\begin{figure}[H]
\centering{}\includegraphics[width=0.5\columnwidth]{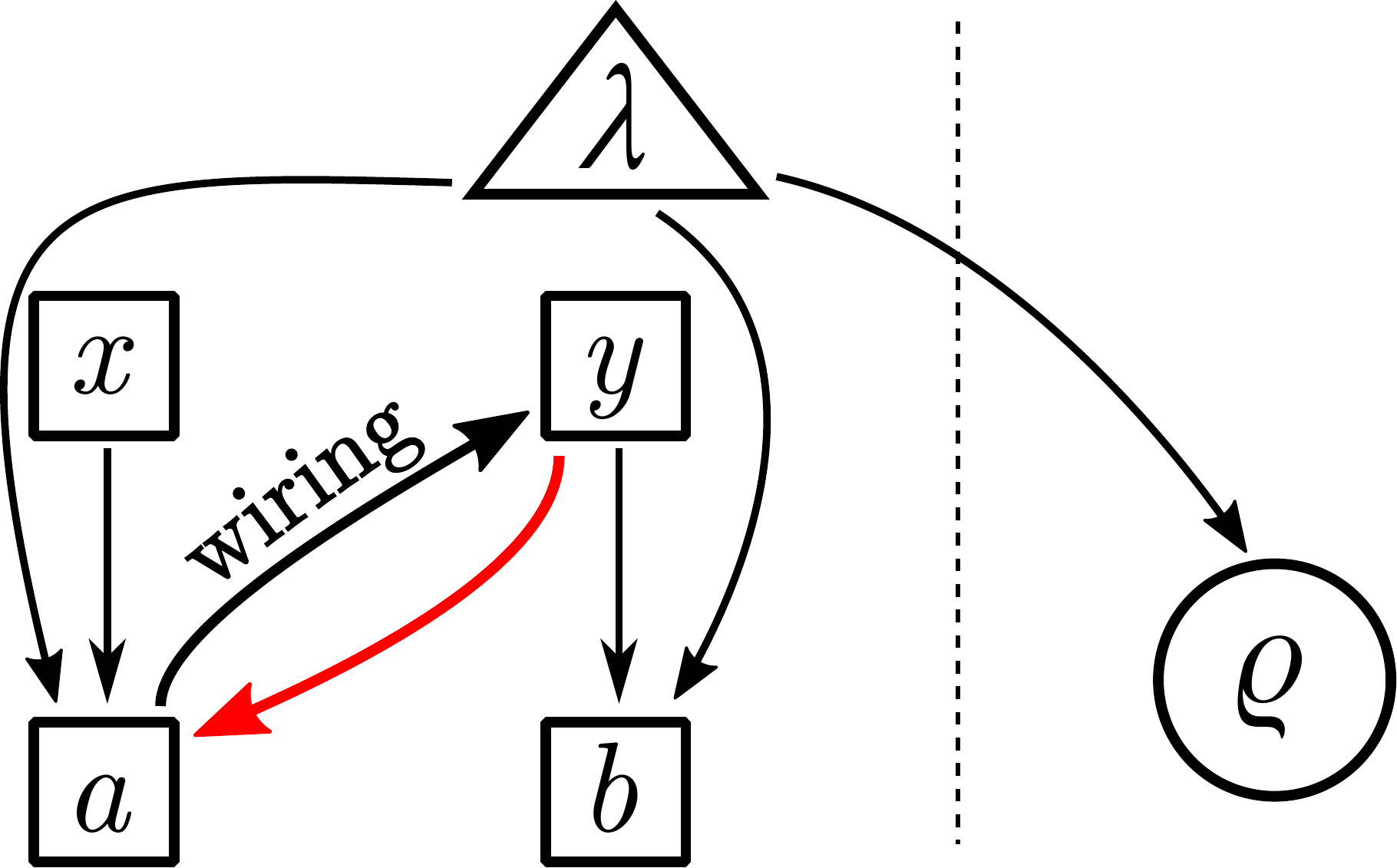}\caption{ Steering exposure as a causal loop. In the causal network underlying
LHS models, given by Eq.\ \eqref{eq:LHS}, the hidden variable $\lambda$
directly influences Charlie's quantum state $\varrho$ as well as
the Alice and Bob's outputs $a$ and $b$, which are in turn also
influenced by the inputs $x$ and $y$, respectively. Even though
the observed assemblage (after averaging $\lambda$ out) is non-signaling,
the model can still exploit hidden communication (i.e. at the level
of $\lambda$). For instance, for each $\lambda$, Alice's output
may depend (red arrow) on Bob's input in a different fine-tuned way
such that the dependence vanishes at the observable level. The wiring
of Fig.\ \ref{fig:Examples-of-free}a) forces $y=a$, closing a causal
loop that will in general conflict with the latter dependence for
some $\lambda$. As a consequence, the final assemblage resulting
from the wiring may not admit a valid LHS decomposition, exposing
steering. Hence, the exposure can in a sense be thought of as an operational
benchmark for hidden signaling in the LHS model describing the initial
assemblage. }
\label{fig:causal}
\end{figure}

Therefore, we see that the inconsistency is rooted in hidden signaling.
In fact, at the level of the underlying causal model, the phenomenon
of exposure can be understood as a causal loop between such signaling
and the applied wiring (see Fig.\ \ref{fig:causal}).

To restore consistency, hidden signaling must be restricted. An obvious
possibility would be to allow only for non-signaling $\boldsymbol{P}_{\lambda}^{(AB)}$'s
in Eq.\ \eqref{eq:LHS}. Interestingly, however, this turns out to
be over-restrictive. Following the redefinition of multipartite Bell
nonlocality \cite{Gallego2012,Bancal2013}, we propose the following
for bipartite steering in multipartite scenarios.

\vspace{0.2cm}
 Redefinition of steering: \textit{An assemblage $\boldsymbol{\sigma}$
is unsteerable if it admits }\textit{\emph{time-ordered LHS}}\textit{
(TO-LHS) decompositions both from $A$ to $B$ and from $B$ to $A$
simultaneously, i.e. if} 
\begin{align}
\sigma_{a,b|x,y}= & \sum_{\lambda}P_{\lambda}\ \ P_{a,b|x,y,\lambda}^{(A\to B)}\ \varrho_{\lambda}\label{eq:signAB}\\
= & \sum_{\lambda}P'_{\lambda}\ \ P_{a,b|x,y,\lambda}^{(B\to A)}\ \varrho'_{\lambda}\ ,\label{eq:signBA}
\end{align}
\textit{where each $\boldsymbol{P}_{\lambda}^{(A\to B)}$ is non-signaling
from Bob to Alice and each $\boldsymbol{P}_{\lambda}^{(B\to A)}$
from Alice to Bob. Otherwise $\boldsymbol{\sigma}$ is steerable.}
\label{eq:TOdef} \vspace{0.2cm}

The validity of both time orderings simultaneously prevents conflicting
causal loops. More precisely, if a wiring from Alice to Bob is applied
on $\boldsymbol{\sigma}$, one uses decomposition \eqref{eq:signAB}
to argue with the $\boldsymbol{P}_{\lambda}^{(A\to B)}$'s {[}as in
Eq.\ \eqref{eq:norm}{]} that the wired assemblage is unsteerable.
Analogously, if a wiring from Bob to Alice is performed, one argues
using the $\boldsymbol{P}_{\lambda}^{(B\to A)}$'s from decomposition
\eqref{eq:signBA}. Hence, no exposure is possible for TO-LHS assemblages,
guaranteeing consistency with bilocal wirings (as well as generic
1W-LOCCs from trusted to untrusted parts) as free operations of steering.
On the other hand, when all $\lambda$-dependent behaviors in Eqs.\ \eqref{eq:TOdef}
are fully non-signaling, then the assemblage is called \emph{non-signaling
LHS} (NS-LHS). There exists TO-LHS assemblages that are not NS-LHS,
which proves that the latter is a strict subset of the former. One
could criticize our results by arguing that it is a mere mathematical
statement and no physical realization of a TO-LHS assemblage outside
NS-LHS set is possible. To show the importance of this redefinition,
in App. \ref{app:strict_subset} we provide a quantum and a supra-quantum
example of TO-LHS assemblages that are not NS-LHS. In Fig. \ref{fig:setsLHS}
a pictorial representation of the structure of the set of NS assemblages
is shown.

In either case, the redefinition above automatically implies also
a redefinition of genuinely multipartite steering (GMS). We present
this explicitly in App. \ref{sec:def_gen_multipartite}. There, we
follow the approach of Ref. \cite{Cavalcanti2015a} in that a fixed
trusted-versus-untrusted partition is kept. However, instead of defining
GMS as incompatibility with quantum-LHS assemblages (i.e. with $\lambda$-dependent
behaviors with explicit quantum realizations) as in \cite{Cavalcanti2015a},
we use the more general TO-LHS ones. This reduces the set of genuinely
multipartite steerable assemblages safely, i.e. without introducing
room for exposure. Interestingly, this enables genuine multipartite
entanglement to be certified in the semi-DI scenario without steering
(App. \ref{sec:def_gen_multipartite}). 

\begin{figure}[H]
\centering{}\includegraphics[width=0.4\columnwidth]{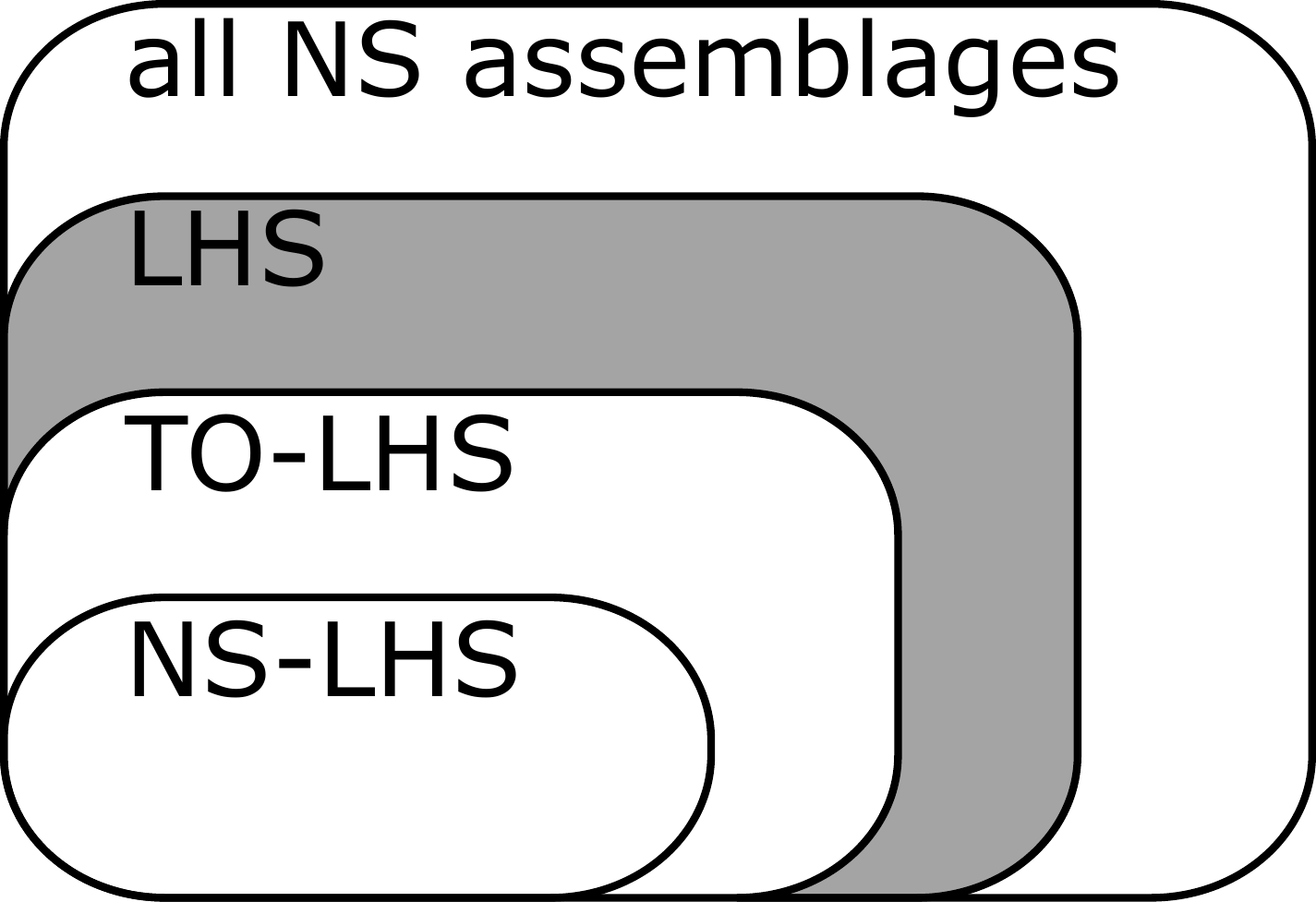}\caption{ Pictorial representation of inner structure of the set of all non-signaling
assemblages in the tripartite scenario. The subset of generic local-hidden-state
(LHS) assemblages strictly contains the subset \textsf{TO-LHS} of
time-ordered LHS ones, which in turn strictly contains the subset
\textsf{NS-LHS} of non-signaling LHS ones. The shaded region represents
the set of assemblages with subtle steering. Bilocal wirings can expose
such steering by mapping that region to the set of (bipartite) steerable
assemblages.}
\label{fig:setsLHS}
\end{figure}


\section{Concluding remarks}

We have demonstrated that the traditional definition of multipartite
steering for more than one untrusted party based on decomposability
in terms of generic bilocal hidden-state models presents inconsistencies
with a widely accepted, basic operational framework for the resource.
We have also shown how, according to such a definition, a broad set
of steerable (exposure) and even Bell-nonlocal (super-exposure) assemblages
would be created seemingly from scratch, e.g. by bilocal wirings acting
on an unsteerable assemblage. A surprising discovery that we have
made is the fact that exposure of quantum nonlocality is a universal
effect, in the sense that all steering assemblages as well as Bell
behaviors can be obtained as the result of an exposure protocol starting
from bilocal correlations in a scenario with one more untrusted party.
This highlights the power of exposure as a resource-theoretic transformation.
However, we also delimit such power: we prove a no-go theorem for
multi-black-box universal steering bits: there exists no single assemblage
with many untrusted and one trusted party from which all assemblages
with one untrusted and one trusted party can be obtained through generic
free operations of steering. To restore operational consistency, we
offer a redefinition of both bipartite steering in multipartite scenarios
and genuinely multipartite steering that does not leave room for exposure.
Finally, both steering exposure and Bell nonlocality super-exposure
have been demonstrated experimentally using an optical implementation.
This is to our knowledge the first experimental observation of exposure
of quantum nonlocality reported, not only in semi device-independent
scenarios but also in fully device-independent ones, as originally
predicted in \cite{Gallego2012,Bancal2013}.\selectlanguage{american}

~\ihead{}

\ohead{\textbf{Chapter~\thechapter}~\leftmark}

\ifoot{}

\cfoot{}

\ofoot[
]{\thepage}

\chapter{Detection of quantum non-Markovianity close to the Born-Markov approximation\label{chap:Markov}}

In this chapter we explore memory effects in quantum systems. The
presence of such memory is called non-Markovianity and appears when
a quantum system is interacting with an environment. Under some approximations,
however, this memory effects are so weak that they are not detected
by the majority of memory indicators. Here we present a study of non-Markovianity
for the decay dynamics of a two-level system in a bosonic bath. We
use an operational indicator, the so-called conditional past-future
(CPF) correlation, which relies on three measurement interventions
on the system. This indicator is able to detect memory effects even
close to the weak-interaction Born-Markov approximation. We also show
an experimental realization of the measurement of the CPF correlation
for a photonic qubit, showing the limitations of this approach in
a realistic experimental scenario.

This work was realized in collaboration with Adrián Budini from Consejo
Nacional de Investigaciones Científicas y Técnicas (CONICET - Argentina)
and the UFRJ professors Gabriel Aguilar, Marcelo Santos and Stephen
Walborn. My contribution to this work was in designing the experiment
and realizing it, as well as the data analysis. A resulting paper
is accepted for publication in Physical Review A and a preprint can
be found in Ref. \cite{thais2020}.

\section{Introduction}

Giving the time evolution of a quantum system's state as a unitary
operation is an accurate description only if the system is completely
isolated. Usually, the more general treatment of open quantum systems
(OQS) is necessary , either because it is impossible to perfectly
isolate a real system and spurious interactions with the surroundings
remain or because desirable interactions with a large uncontrolled
environment is present. In both cases one can only control and access
a small system of interest within a larger one and the reduced time
evolution of this portion is no longer unitary. The interaction with
the environment gives rise to energy dissipation, irreversible dynamics
and losses of quantum coherence and correlations. 

A first approach to the OQS problem is the so called Born-Markov approximation
(BMA) which considers that the coupling between system and environment
is much smaller then the other energy values involved and that the
environment is large enough such that its state is kept unchanged
during the OQS evolution \cite{breuerbook}. This approximation leads
to a memoryless dynamics meaning that the future state of the reduced
system only depends on its present state and not on its past story.
The BMA has been used extensively, providing excellent agreement with
many experiments in the context of quantum optics and magnetic resonance.
However the actual time evolution of any real system departures to
some extent from this idealized dynamics. For many situations it is
also the case that one has only partial information about the microscopic
details of the dynamics and it is necessary to quantify the degree
of non-Markovianity present \cite{rivas2010}. As we shall present
in the next section, most of the typical non-Markovianity measures
are based at least on the possibility of performing tomographic measurements
over the OQS state for different times during the evolution or process
tomography, or still on optimizations. From a experimental perspective,
in many cases, this renders determining the non-Markovian character
of the evolution almost impossible. The measure we employ here, the
conditional past-future (CPF) correlation \cite{budini}, relies only
on three subsequent measurements for different times, being a useful
and practical alternative to the other measurements capable of univocally
determine the non-Markovianity of the dynamics. 

To illustrate the CPF correlation capability of detecting non-Markovianity
close to BMA, in this work we study the non-Markovian features of
the spin-boson model which describes the decay of a two-level system
in a Bosonic bath. In contrast to previous memory indicators, we show
that the system propagator by itself is unable to detect quantum non-Markovianity
close the BMA. Instead, a self-convolution of the system propagator
weighted by the environment correlation becomes the proper memory
indicator. A photonic experiment that allows to measure the CPF correlation
for this system is also implemented, providing experimental support
to our main findings.

Before presenting our results, the concept of classical Markovianity
is introduced in Section \ref{sec:Markov-processes-in}. In order
to extend this concept to quantum physics, some basic elements on
quantum open systems dynamics are presented together with the Born-Markov
approximation and the quantum memory indicator used here (Section
\ref{sec:Open-quantum-system}). In sequence, it is presented the
model of a two-level system in a bosonic bath with the CPF correlation
for this model (Section \ref{sec:The-spin-boson-model}). At last,
the experiment and experimental results are presented (Section \ref{sec:Experiment}).

\section{Markov processes in classical Physics\label{sec:Markov-processes-in}}

This kind of stochastic process was named after the Russian mathematician
Andrei Andreevich Markov (1856-1922) whose contributions to number
and probability theory were fundamental to solve many subsequent problems
in science and technology \cite{basharina2004}. Roughly speaking,
Markov processes are those that do not possess memory, what means
that a result in the future, given a certain present condition, will
be the same regardless what happened in the past. 

To get to a formal definition, consider the stochastic process $X(t)$
taking discrete values $\{x_{i}\}_{i\in\mathbb{N}}$ on a finite set
$\chi$ for discrete time instants $\{t_{i}\}_{i\in\mathbb{N}}$.
It can be characterized by the hierarchy of joint probabilities $P_{n}(x_{n},t_{n};x_{n-1},t_{n-1};\cdots;x_{1},t_{1})$
that values $x_{k}$ occur at time $t_{k}$ for a given initial condition
$\{x_{0},t_{0}\}$, with $n\in\mathbb{N}$ and $t_{n}>t_{n-1}>\cdots>t_{1}>t_{0}$.
This process is said to be Markovian if the conditional probability
of the random variable at time $t_{n+1}$
\begin{equation}
P_{1|n}(x_{n+1},t_{n+1}|x_{n},t_{n},\cdots x_{1},t_{1})=\frac{P_{n+1}(x_{n+1},t_{n+1};x_{n},t_{n};\cdots;x_{1},t_{1})}{P_{n}(x_{n},t_{n};x_{n-1},t_{n-1};\cdots;x_{1},t_{1})}\label{eq:joint-prob}
\end{equation}
does not depend on the previous values of $X(t)$ but $x_{n}$, i.e.
\begin{equation}
P_{1|n}(x_{n+1},t_{n+1}|x_{n},t_{n},\cdots x_{1},t_{1})=P_{1|1}(x_{n+1},t_{n+1}|x_{n},t_{n})\quad\forall n\in\mathbb{N},\label{eq:markov}
\end{equation}
where $P_{j|k}$ is the probability distribution of $j$ events given
$k$ precedent events \cite{BreuerReview}.

Although by its definition, one would need to ensure the validity
of Eq. \eqref{eq:markov} for an infinity hierarchy of probability
distributions in order to characterize the Markovianity of a process,
there are a plethora of methods used to determine the non-Markovianity
of it which require much less information. This is the case of the
conditional past-future correlations presented in Section \eqref{subsec:Conditional-past-future}. 

An interesting property that comes directly from \eqref{eq:markov}
is the Chapman-Kolmogorov equation
\begin{equation}
P_{1|1}(x_{3},t_{3}|x_{1},t_{1})=\sum_{x_{2}\in\chi}P_{1|1}(x_{3},t_{3}|x_{2},t_{2})P_{1|1}(x_{2},t_{2}|x_{1},t_{1}),\quad\forall t_{3}>t_{2}>t_{1}\geq0\label{eq:divisibility}
\end{equation}
which can be interpreted as the possibility of writing the evolution
from the initial time $t_{1}$ to the final time $t_{3}$ as the composition
of the evolution from $t_{1}$ to an intermediate time $t_{2}$and
then from $t_{2}$ to $t_{3}$. This property is called divisibility
\cite{plenioReview}. 

\section{Open quantum system dynamics and Markovianity\label{sec:Open-quantum-system}}

The quantum mechanical description of nature is intrinsically stochastic.
Thus it is natural to translate the concept of Markovianity to processes
described by quantum mechanics. The mathematical object used to calculate
probability distributions of a quantum system is its density matrix
denoted by $\rho$. In this section we give the basic concepts of
dynamics of OQS density matrices.

The time evolution of a OQS is given by a dynamical map $\Lambda_{t}:S(\mathcal{H})\rightarrow S(\mathcal{H})$
acting on the convex set of physical states belonging to the Hilbert
space $\mathcal{H}$ of the system. If the system is initially prepared
in the state $\rho_{0}$ then its state in a posterior time $t\geq0$
is given as 
\begin{equation}
\rho_{t}=\Lambda_{t}[\rho_{0}].
\end{equation}
To be regarded as a physical map, $\Lambda_{t}$ must take physical
density matrices into physical density matrices. In other words, the
map must be trace preserving (TP) and positive, and also preserve
Hermiticity. A map is positive if $\Lambda[\rho]\geq0$ for all $\rho\geq0$.
The first property ensures the conservation of the probability and
the second one guarantees that all probabilities remain positive numbers
as they must be. A further requirement may be imposed, namely complete
positivity (CP) . If one considers that the system may be correlated
to another quantum system that is not under the action of the map,
still the map should take the composite-system initial state into
a physical state, no matter what dimension the additional Hilbert
space has. Mathematically this is expressed as $(\Lambda_{t}\otimes\mathbb{1}_{A})[\rho_{sa}]>0$
for all composite states $\rho_{sa}\in S(\mathcal{H}\otimes\mathcal{H}_{a})$
of OQS and ancilla, $\mathbb{1}_{a}$ is the identity operator in
the ancilla Hilbert space $\mathcal{H}_{a}$. For finite dimensional
Hilbert space $\mathcal{H}$ complete positivity is equivalent to
positivity in a $[\textrm{dim}(\mathcal{H})]^{2}$ space, i. e., in
the case when the ancilla Hilbert space has the same dimension as
the OQS \cite{choi}. 

A microscopic description of system and environment is usually used
to obtain the dynamical map for the OQS. The total system Hilbert
space is the tensor product of the Hilbert spaces of the OQS and the
environment $\mathcal{H}\otimes\mathcal{H}_{e}$. The reduced state
of the OQS (environment) is obtained by partial trace of the total
state over $\mathcal{H}_{e}$ ($\mathcal{H}$). The composed system
is closed and its time evolution is given by the von Neumann equation
\begin{equation}
i\frac{d}{dt}\rho_{se}(t)=\left[H_{se},\rho_{se}(t)\right]\label{eq:vonNeumman}
\end{equation}
 with Hamiltonian
\begin{equation}
H_{se}=H\otimes\mathbb{1}_{e}+\mathbb{1}_{s}\otimes H_{e}+H_{I},
\end{equation}
where $H$ ($H_{e}$) is the free Hamiltonian of the OQS (environment)
and $H_{I}$ is the interaction between system and environment. The
solution is a unitary evolution of the total initial state $\rho_{se}$.
Accordingly, the system reduced state in time $t$ is obtained as
\begin{equation}
\rho(t)=\Tr_{e}[U(t)\rho_{se}U^{\dagger}(t)],\label{eq:timeevol}
\end{equation}
being $U(t)$ the unitary operator associated with the Hamiltonian
$H_{se}$ and $\Tr_{e}[\cdot]$ denotes the partial trace over the
environment degrees of freedom.

Equation \eqref{eq:timeevol} provides the dynamical map $\Lambda_{t}$
once we can write the right hand side as the transformation of the
initial OQS state $\rho_{0}$. In the case of initially separable
state\footnote{This is a common case, since usually the system is prepared in a initial
state through a measurement, destroying any previous existing correlation
between system and environment. } $\rho_{se}=\rho_{0}\otimes\rho_{e}$ it is easy to obtain a physical
CPTP map as follows. First let us write the bath state in its spectral
decomposition $\rho_{e}=\sum_{q}p_{q}\ket{q}\bra{q}$ with $p_{q}>0$
and $\sum_{q}p_{q}=1$. Now, it is easy to identify the Kraus decomposition
of the dynamical map directly from \eqref{eq:timeevol} as 
\begin{equation}
\rho_{t}=\Lambda_{t}\left[\rho_{0}\right]=\sum_{l}E_{l}(t)\rho_{0}E_{l}^{\dagger}(t),\label{eq:Kraus}
\end{equation}
with the Kraus operators $E_{l}(t)=\sqrt{p_{q\text{\textasciiacute}}}\bra{q}U(t)\ket{q\text{\textasciiacute}}$,
$l=\{q,q\text{\textasciiacute}\}$. Because of the unitarity of the
total evolution the Kraus operators fulfill the property $\sum_{l}E_{l}(t)E_{l}^{\dagger}(t)=\mathbb{1}$.
The existence of a Kraus form for the map already ensures the complete
positivity of it \cite{kraus1983}. 

On the other hand, if the initial state of system and environment
has quantum correlations, then the dynamical map is not necessarily
CP \cite{dominy2013}. In fact, in this case it is not possible to
define such a dynamical map consistently defined for any initial system
state because the environment state is different for different system
states and the dynamics is changed. Even starting with a separable
total state, entanglement between OQS and its environment is typically
created along the evolution not allowing for the definition of dynamical
maps from intermediate times. As we are going to see in the next section,
this is behind the non-Markovianity feature of the dynamics.

\subsection{Born-Markov approximation}

Instead of directly obtaining the dynamical map, a typical approach
to OQS problems is to build a model yielding to a dynamical equation
for the OQS density matrix , the so called master equation, in which
the environment is part only by means of characteristic parameters.
It is usually a very hard problem to exactly obtain a master equation
from the unitary dynamics of the composite system and many approximations
are made necessary. Among these approximations, the most celebrated
is the Born-Markov approximation (BMA) that is presented is this section,
more details can be found in \cite{breuerbook} or any other book
on the subject of OQS. 

In the interaction picture the time evolution for the total system
is given by

\begin{equation}
i\frac{d}{dt}\rho_{se}^{I}(t)=\left[H_{I}(t),\rho_{se}^{I}(t)\right],
\end{equation}
where $\rho_{se}^{I}(t)=e^{iH_{0}t}\rho_{se}(t)e^{-iH_{0}t}$, $H_{I}(t)=e^{iH_{0}t}H_{I}e^{-iH_{0}t}$,
and $H_{0}=H\otimes\mathbb{1}_{e}+\mathbb{1}_{s}\otimes H_{e}$. The
formal solution come by integration as 
\begin{equation}
\rho_{se}^{I}(t)=\rho_{se}^{I}(0)-i\int_{0}^{t}\,d\tau\,\left[H_{I}(\tau),\rho_{se}^{I}(\tau)\right],
\end{equation}
and after the first iteration
\begin{equation}
\rho_{se}^{I}(t)=\rho_{se}^{I}(0)-i\int_{0}^{t}\,d\tau\,\left[H_{I}(\tau),\rho_{se}^{I}(0)\right]-\int_{0}^{t}\,d\tau\,\int_{0}^{\tau}\,d\tau\text{\textasciiacute}\,\left[H_{I}(\tau),\left[H_{I}(\tau\text{\textasciiacute}),\rho_{se}^{I}(\tau\text{\textasciiacute})\right]\right].\label{eq:iteration}
\end{equation}

The reduced system state is obtained from \eqref{eq:iteration} by
partial trace over the environment degrees of freedom. Moreover, a
differential equation satisfied by the reduced state is obtained by
taking the time derivative after the partial trace. Now come a series
of considerations. First of all, let us suppose that the initial state
commutes with the interaction Hamiltonian, such that the second term
in the right hand side of \eqref{eq:iteration} vanishes. After this
consideration and performing also a change of variables, one is left
with the following integro-differential equation
\begin{equation}
\frac{d}{dt}\rho(t)=-\int_{0}^{t}\,d\tau\Tr_{e}\left[H_{I}(t),\left[H_{I}(t-\tau),\rho_{se}^{I}(t-\tau)\right]\right],\label{eq:preBMA}
\end{equation}
where the superscript $I$ was suppressed only for simplicity. The
global state can be written as $\rho_{se}(t)=\rho(t)\otimes\rho_{e}(t)+\chi_{se}(t)$,
the sum of a separable part with a traceless part which contains the
correlations. Assuming that the correlations vanish in a time that
is small compared to the relaxation time of the OQS and also that
the environment is large enough such that its state is hardly affected
by the presence of the system we can approximate $\rho_{se}(t)\approx\rho(t)\otimes\rho_{e}$.
Moreover, the kernel inside the integral in Eq. \eqref{eq:preBMA},
which contains the correlation functions of the bath $\Tr_{e}\left[H_{I}(t)H_{I}(t-\tau)\rho_{e}\right]$,
vanish for $\tau$ larger than $\tau_{c}$ the correlation time of
the bath. If this correlation time is again small as compared with
the relaxation time of the OQS, for the values of $\tau$ that the
integrand does not vanish, we can approximate $\rho(t-\tau)\approx\rho(t)$
and extend the integral superior limit. The final master equation
that gives the OQS state time evolution is read
\begin{equation}
\frac{d}{dt}\rho(t)=-\int_{0}^{\infty}\,d\tau\Tr_{e}\left[H_{I}(t),\left[H_{I}(t-\tau),\rho(t)\otimes\rho_{e}\right]\right].\label{eq:BMA}
\end{equation}

The set of approximations performed above is called the \textit{Born-Markov
approximation} and it is widely used since it allows for writing a
pure differential equation for the OQS state. In the very limit of
BMA , the correlation function can be regarded as a Dirac delta function.
As can be noticed, in general this approximation can be valid only
for large enough time, such that all the correlation have been destroyed.
Increasing the time resolution to which one has access demands treating
the OQS beyond this approximation. 

Under general assumptions, the BMA equation \eqref{eq:BMA} can be
put in the form $\frac{d}{dt}\rho(t)=\mathcal{L}\rho(t)$ , whose
solutions have the semigroup expression $\rho(t)=e^{\mathcal{L}t}\rho_{0}$
, being $\mathcal{L}$ a Lindblad superoperator.

\subsection{Quantum Markovianity\label{subsec:Measures-of-non-Markovianity} }

The direct translation of the classical Markovianity to quantum theory
leads to issues related to the disturbance caused by measurement on
quantum systems \cite{vacchini2011}. Because of it, many different
definitions of Markovianity in this context have arisen in the past
few years \cite{plenioReview}.

In the quantum realm the joint probability distributions analogous
to those that define classical Markovianity are calculated from the
quantum state of the system and measurement operators $\mathcal{M}_{x}$.
Given that we chose to measure a non-degenerate observable $X=\sum_{x}x\ket{x}\bra{x}$,
the measurement operators can be the projectors over the eigenstates
of it: $\mathcal{M}_{x}=\ket{x}\bra{x}$. Not only does the joint
probability distribution obtained not satisfy basic conditions valid
for the classical counterparts \cite{BreuerReview}, but also the
measurement of the system alters the total system-environment state,
completely destroying its correlations and consequently strongly altering
the subsequent dynamics. One can easily be convinced of that from
the construction of the dynamical map shown in the previous section
and from the fact that, if the total state is $\rho_{se}(t_{i})$
in time $t_{i}$ and at this time the eigenvalue $x_{i}$ is measured,
then the total state immediately after the intervention becomes 
\begin{equation}
\rho\text{\textasciiacute}(t_{i})=\frac{\mathcal{M}_{x_{i}}\rho_{se}(t_{i})\mathcal{M}_{x_{i}}}{\Tr\left[\mathcal{M}_{x_{i}}\rho_{se}(t_{i})\mathcal{M}_{x_{i}}\right]}=\ket{x}\bra{x}\otimes\rho_{e}^{x_{i}}(t_{i}),
\end{equation}
where $\rho_{e}^{x_{i}}(t_{i})$ is the altered bath state possibly
dependent on the measurement result. 

In order to define a notion of quantum Markovianity that resembles
the classical one and rely only on the dynamics itself and not on
a particular measurement scheme, it must be related to the dynamical
map itself. One of the most used definitions of Markovianity uses
the concept of divisibility \eqref{eq:divisibility}, valid for the
classical Markovian probability distributions. A map $\Lambda_{t}$
is divisible if it is possible to define a two parameter trace preserving
map $\Lambda_{t_{2},t_{1}}$ such that $\Lambda_{t,0}=\Lambda_{t}$,
$\Lambda_{t,t}=\mathbb{1}$ and 
\begin{equation}
\Lambda_{t_{2},0}=\Lambda_{t_{2},t_{1}}\Lambda_{t_{1},0},\quad t_{2}>t_{1}>0.\label{eq:mapMarkov}
\end{equation}

A Markovian dynamics is defined as that given by a CP-divisible time
evolution map, i.e., a divisible map for which $\Lambda_{t_{2},t_{1}}$
is completely positive for all $t_{2},t_{1}\geq0$ \cite{rivas2010}.
If this is the case, then roughly speaking there is a valid CP time
evolution from any $t_{1}$ to any $t_{2}$ independent of the history
of the system or the environment, thus no memory is entailed is this
process. Clearly the solutions of the BMA master equation \eqref{eq:BMA}
are Markovian because of the semigroup property with $\Lambda_{t_{2},t_{1}}=e^{\mathcal{L}(t_{2}-t_{1})}$. 

The definition of Markovianity directly gives rise to a measure of
non-Markovianity. A divisible map is CP-divisible if and only if $\left[\Lambda_{t_{2},t_{1}}\otimes\mathbb{1}\right]\ket{\Phi}\bra{\Phi}\geq0$,
with $\ket{\Phi}=\frac{1}{\sqrt{\textrm{dim}\mathcal{H}}}\sum_{i=1}^{\textrm{dim}\mathcal{H}}\ket{n_{i}}\bra{n_{i}}$
the maximally entangled state in $\mathcal{H}\otimes\mathcal{H}$\cite{choi}.
Consider the trace norm defined as $||A||=\Tr\sqrt{A^{\dagger}A}$
that is equal to $\sum_{i}|a_{i}|$, the sum of the modulus of the
eigenvalues of $A$ if the operator is selfadjoint. Then, because
of the trace preservation property, it is true that $\left\Vert \left[\Lambda_{t_{2},t_{1}}\otimes\mathbb{1}\right]\ket{\Phi}\bra{\Phi}\right\Vert $
is equal to $1$ if the map is CP and larger than $1$ otherwise.
Then the function 
\begin{equation}
g(t)=\lim_{\epsilon\rightarrow0^{+}}\frac{\left\Vert \left[\Lambda_{t+\epsilon,t}\otimes\mathbb{1}\right]\ket{\Phi}\bra{\Phi}\right\Vert -1}{\epsilon}
\end{equation}
is positive only if the dynamics is non-Markovian and the quantity
$\Gamma=\int_{I}g(t)\:dt$ is a measure of the non-Markovianity of
the evolution in the time interval $I$ \cite{rivas2010}. This is
the commonly called RHP measure.

Another possible definition of quantum Markovianity that is in some
cases nonequivalent to the previous one can be given in terms of the
trace distance of states for different time instants \cite{haikka2011}.
The interaction between system and environment tends to diminish the
distinguishability between any two system states, which can be interpreted
as a loss of information from the system to the environment. Non-Markovian
dynamics would be mainly characterized by a reversed flow of information
from the environment to the system causing an increase in the distinguishability
of states during some interval of time \cite{breuer2009}. The distinguishability
between any two states $\rho_{1}$ and $\rho_{2}$ can be measured
by the trace distance $D(\rho_{1},\rho_{2})=\frac{1}{2}\left\Vert \rho_{1}-\rho_{2}\right\Vert $between
them. A non-Markovian dynamics is then characterized by 
\begin{equation}
\sigma(t)=\frac{d}{dt}D(\rho_{1}(t),\rho_{2}(t))>0,
\end{equation}
for any value of time and for any pair of possible initial states
$\rho_{1}$ and $\rho_{2}$, with $\rho_{i}(t)=\Lambda_{t}\rho_{i}$.
This definition also leads to a measure of non-Markovianity, the BLP
measure, defined as $\Gamma=\text{max}_{\rho_{1,2}(0)}\int_{\sigma>0}dt\,\sigma(t)$,
the maximization is over all possible pairs of initial states of the
OQS.

It is possible to show that correlations between system and environment
play a central role in non-Markovianity. As mentioned before, if the
initial total state is entangled, then the dynamical map may not be
CP. The time evolution ususally produces entanglement between system
and environment, that is why the map linking two intermediate states
in Eq. \eqref{eq:mapMarkov} is in general not CP characterizing non-Markovianity.
It is also possible to show that the trace distance always decreases
monotonically in the absence of correlations \cite{smirne2013}.

\subsection{Conditional past future correlation\label{subsec:Conditional-past-future}}

The previous mathematical definitions of classical and quantum Markovianity
are not practical in the sense that they require absolute knowledge
on infinity hierarchies of conditional probabilities (classical case)
or on the whole dynamical map (quantum case). An operational definition
or witness of non-Markovianity, on the other hand, should be obtained
from the mathematical definitions and would ideally rely on the minimum
number of measurements possible. 

Lets consider a classical or quantum system on which three sequential
measurements are performed at time instants $t_{x}<t_{y}<t_{z}$.
This is the minimum number of measurements one can think of when trying
to determine memory effects. A lack of memory is characterized by
a complete independence of the future measurement outcome $O_{z}$
on the past measurement outcome $O_{x}$ for a given present outcome
$O_{y}$, i.e. the future result is completely determined by the present
state. Mathematically, it means that the conditional-past-future (CPF)
correlation, conditioned to a fixed present outcome, defined as
\begin{equation}
C_{pf}|_{y}=\left\langle O_{x}O_{z}\right\rangle _{y}-\left\langle O_{x}\right\rangle _{y}\left\langle O_{z}\right\rangle _{y}=\sum_{x,y}\left[P(x,z|y)-P(x|y)P(z|y)\right]O_{x}O_{z},\label{eq:CPFdef}
\end{equation}
vanishes for Markov (memoryless) processes. Here $\left\langle \cdot\right\rangle _{y}$
denotes the mean value conditioned to the intermediate outcome labeled
by $y$ and the sum runs over all possible outcomes. In order to verify
that this statement is consistent with the classical definition of
a Markovian process, lets write Bayes rules for the various conditional
and joint probabilities of this triple measurement procedure. First,
one can write the triple joint probability in terms of different conditionals
as\begin{subequations} \label{eq:classprob} 
\begin{eqnarray}
P(x,y,z) & =P(z|x,y)P(y|x)P(x),\\
P(x,y,z) & =P(x,z|y)P(y).
\end{eqnarray}
 \end{subequations}Bayes rule can be applied again to invert the
present-past dependence since it gives $P(x,y)=P(y|x)P(x)=P(x|y)P(y)$.
Therefore, in general Eqs. \eqref{eq:classprob} yield $P(x,z|y)=P(z|x,y)P(x|y)$
and the CPF correlation \eqref{eq:CPFdef} is not null. For Markovian
processes, on the other hand, Eq. \eqref{eq:markov} tells that $P(z|x,y)=P(z|y)$
and hence the correlation is zero regardless the value of the present
outcome and also the time instants at which the system is measured. 

The CPF correlation can also be calculated for a quantum system. In
this case, a measurement is represented by a set of operators $\{\Omega_{x}\}$
satisfying $\sum_{x}\Omega_{x}\Omega_{x}^{\dagger}=\mathbb{1}$, $x$
running over all possible measurement outcomes. The probability of
a result $x$ of a measurement performed on a system in a state $\rho$
is calculated as $P(x)=\Tr\left[\Omega_{x}\rho\Omega_{x}^{\dagger}\right]$
and , after the measurement, the state of the system is $\rho^{(x)}=\Omega_{x}\rho\Omega_{x}^{\dagger}/\Tr\left[\Omega_{x}\rho\Omega_{x}^{\dagger}\right]$.
For the CPF correlation protocol, three sequential measurements represented
by the possibly different sets of operators $\{\Omega_{x}\}$, $\{\Omega_{y}\}$
and $\{\Omega_{z}\}$ are realized. Between the measurements the system
evolves in contact with its environment according to Eq. \eqref{eq:vonNeumman}
as illustrated in Fig. \ref{fig:CPFrep}. The initial state $\rho_{se}^{(0)}$
goes to the state $\rho_{se}^{(x)}=\Omega_{x}\rho_{se}^{(0)}\Omega_{x}^{\dagger}/\Tr\left[\Omega_{x}\rho_{se}^{(0)}\Omega_{x}^{\dagger}\right]$
after the first (past) measurement with probability $P(x|0)=\Tr\left[\Omega_{x}\rho_{se}^{(0)}\Omega_{x}^{\dagger}\right]$,
where $\Omega_{x}$ must be understood as $\Omega_{x}\otimes\mathbb{1}_{e}$.
After a period $t=t_{y}-t_{x}$ of interaction between OQS and bath,
the second (present) measurement delivers the result $y$ with probability
$P(y|x)=\Tr\left[\Omega_{y}U_{t}\rho_{se}^{(x)}U_{t}^{\dagger}\Omega_{y}^{\dagger}\right]$,
letting the composite system in the state 
\begin{equation}
\rho_{se}^{(y)}(t)=\frac{\Omega_{y}U_{t}\rho_{se}^{(x)}U_{t}^{\dagger}\Omega_{y}^{\dagger}}{\Tr\left[\Omega_{y}U_{t}\rho_{se}^{(x)}U_{t}^{\dagger}\Omega_{y}^{\dagger}\right]}.
\end{equation}
The retrodicted probability of past given present can then be calculated
as $P(x|y)=P(x,y)/P(y)$, where $P(y)=\sum_{x}P(x,y)$ and the joint
probability is obtained from the predictive probabilities as $P(x,y)=P(y|x)P(x|0)$.
The last stage is a time evolution for period $\tau=t_{z}-t_{y}$
followed by a last (future) measurement on the OQS, the probability
of an outcome $z$ being $P(z|x,y)=\Tr\left[\Omega_{z}U_{\tau}\rho_{se}^{(y)}(t)U_{\tau}^{\dagger}\Omega_{y}^{\dagger}\right]$.
Provided that the state $\rho_{se}^{(y)}(t)$ after the present intervention
does not depend on the past measurement, $P(z|x,y)=P(z|y)$, the independence
between past and future is attained, and the CPF correlation given
by Eq. \eqref{eq:CPFdef} vanishes. Usually, this is not the case.
If the intermediate measurement is projective over the eigenstates
of an observable $O=\sum_{y}O_{y}\ket{y}\bra{y}$, the resulting composite
state is
\begin{equation}
\rho_{se}^{(y)}(t)=\ket{y}\bra{y}\otimes\rho_{e}^{(y|x)}(t)=\ket{y}\bra{y}\otimes\Tr_{s}\left[\frac{\Omega_{y}U_{t}\rho_{se}^{(x)}U_{t}^{\dagger}\Omega_{y}^{\dagger}}{\Tr\left[\Omega_{y}U_{t}\rho_{se}^{(x)}U_{t}^{\dagger}\Omega_{y}^{\dagger}\right]}\right].
\end{equation}
The interaction between system and environment creates entanglement
between the two parts causing the reduced state of the environment
$\rho_{e}^{(y|x)}(t)$ to depend on the measurement result of the
system in the past intervention, although the state of the system
itself does not depend on the past.

If the system is not interacting with the environment $(H_{I}=0)$,
then $\rho_{e}^{(y|x)}(t)$ does not depend on $x$ and $C_{pf}|_{y}=0$
for any measurement choice. It means that the measurement process
itself does not violate the past-future independence, any violation
comes from the memory effects induced by the environment, the characteristic
of non-Markovian dynamics.

Another remarkable particular case is when the Born-Markov approximation
is valid. In this case, the state of the bath is approximately constant,
and $\rho_{e}^{(y|x)}(t)=\rho_{e}$ equals the initial state leading
to a null CPF correlation. Thus, the CPF correlation can also be seen
as a measure of departure from the BMA.

\begin{figure}[h]
\centering
\includegraphics[width=0.7\columnwidth]{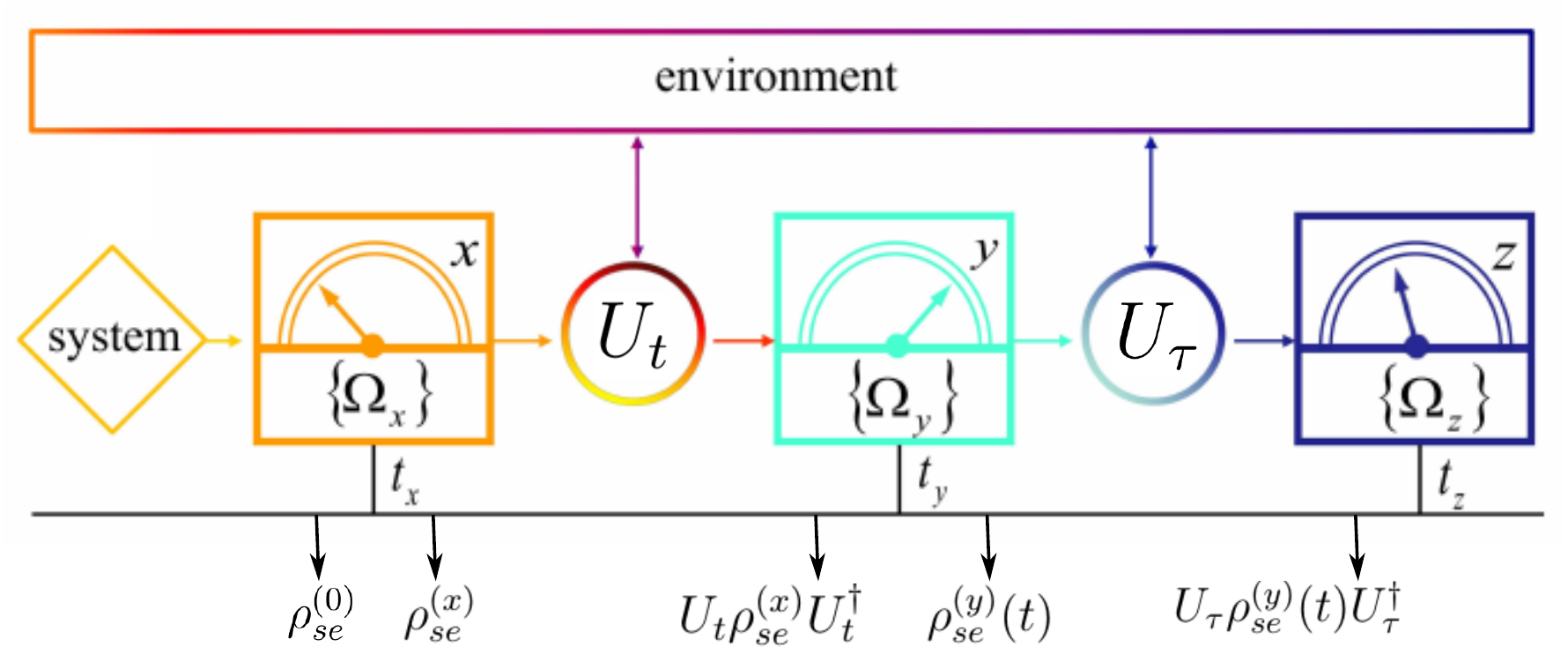}
\caption{Representation of the CPF correlation protocol. Here we consider that
system and environment do not interact at the before $t_{x}$, but
this is not mandatory for the protocol. Three measurements are performed
on the OQS without any intervention in the environment, between two
measurements system and environment are let to interact and unitarily
evolve. The composite state before and after each measurement is represented
bellow the time line. Adapted from \protect\cite{budini}. }
\label{fig:CPFrep}
\end{figure}

\section{The spin-boson model\label{sec:The-spin-boson-model}}

The spin-boson model describes a two-level system (spin) interacting
with a quantum environment composed of a continuum of bosonic modes
\cite{breuerbook}. This is a paradigmatic model in quantum optics
used to describe the dipole interaction of a two-level atom with the
electromagnetic field \cite{scullybook} , as well as in condensed
matter physics where it is used to describe defects in a solid interacting
with the phononic environment \cite{leggett1987}, only to give two
examples.

The total Hamiltonian is given by
\begin{equation}
H_{\mathrm{tot}}=\frac{\omega_{0}}{2}\sigma_{z}+\sum_{k}\omega_{k}b_{k}^{\dag}b_{k}+\sum_{k}\sigma_{x}(g_{k}b_{k}+g_{k}^{\ast}b_{k}^{\dag}),\label{Bosonic-1}
\end{equation}
where $\sigma_{z}$ is the $z$-Pauli matrix, $\omega_{0}$ is the
energy difference between the two levels of the qubit. The index $k$
labels the reservoir mode with frequency $\omega_{k}$ which couples
to the qubit with coupling constant $g_{k}$. The bosonic operators
satisfy the relations $[b_{k},b_{k}^{\dag}]=1.$ The first two terms
on the right hand side are the free energies of system and environment,
respectively, while the last term gives the interaction between the
two sub-parts.

Here the rotating wave approximation is considered, the terms of the
Hamiltonian that do not conserve energy are dropped off and it becomes
\begin{equation}
H_{\mathrm{tot}}=\frac{\omega_{0}}{2}\sigma_{z}+\sum_{k}\omega_{k}b_{k}^{\dag}b_{k}+\sum_{k}(g_{k}\sigma_{+}b_{k}+g_{k}^{\ast}\sigma_{-}b_{k}^{\dag}),\label{Bosonic-2}
\end{equation}
where $\sigma_{+}=\ket{\uparrow}\bra{\downarrow}$ and $\sigma_{-}=\ket{\downarrow}\bra{\uparrow}$
are the raising and lowering operators of the qubit in the natural
base \{$\ket{\uparrow},\ket{\downarrow}\}.$

As usual, we assume that the total initial state is the pure separable
wave vector $|\Psi_{0}\rangle=(a\ket{\uparrow}+b\ket{\downarrow})\otimes|0\rangle,$
where the environment vacuum state is $|0\rangle\equiv\prod_{k}|0\rangle_{k}$.
As the commutator of the total number of excitations $N=\sigma_{+}\sigma_{-}+\sum_{k}b_{k}^{\dagger}b_{b}$
with the total Hamiltonian vanishes ($[H_{tot},N]=0$), this quantity
is conserved and a good ansatz for the evolved state in the interaction
picture is 
\begin{equation}
|\Psi_{t}\rangle=\Big{[}a(t)\ket{\uparrow}+b(t)\ket{\downarrow}+\ket{\downarrow}\sum_{k}c_{k}(t)b_{k}^{\dag}\Big{]}|0\rangle,\label{evolved_state_t-1}
\end{equation}
as it has only terms with at most one excitation. The time evolution
in this representation is given by 
\begin{equation}
i\frac{d\ket{\Psi_{t}}}{dt}=H_{I}(t)\ket{\Psi_{t}},\label{eq:SchEq}
\end{equation}
with $H_{I}(t)=\sum_{k}(g_{k}e^{i\omega_{0}t}\sigma_{+}e^{i\omega_{k}b_{k}^{\dag}b_{k}}b_{k}e^{-i\omega_{k}b_{k}^{\dag}b_{k}}+h.c.)$.
After solving the Schrödinger equation \eqref{eq:SchEq} (the details
are left for Appendix ), the system density matrix $\rho_{t}=\mathrm{Tr}_{e}[|\Psi_{t}\rangle\langle\Psi_{t}|]$
in the interaction representation can be found as
\begin{equation}
\rho_{t}=\left(\begin{array}{cc}
|a|^{2}|G(t)|^{2} & ab^{\ast}G(t)\\
a^{\ast}bG^{\ast}(t) & 1-|a|^{2}|G(t)|^{2}
\end{array}\right),\label{Rho(t)-1}
\end{equation}
which fulfills the non-Markovian master equation $(d\rho_{t}/dt)=\frac{-i}{2}\omega(t)[\sigma_{z},\rho_{t}]+\gamma(t)([\sigma_{-}\rho_{t},\sigma_{+}]+[\sigma_{-},\rho_{t}\sigma_{+}])$
\cite{breuerbook}. The time-dependent decay rate and frequency are
defined as $\gamma(t)+i\omega(t)=-(d/dt)\ln[G(t)].$ The ``wave vector
propagator\textquotedblright \ $G(t)$ is defined by 
\begin{equation}
\frac{d}{dt}G(t)=-\int_{0}^{t}f(t-t^{\prime})G(t^{\prime})dt^{\prime},\label{G-1}
\end{equation}
where the memory kernel is defined by the bath correlation $f(t)\equiv\sum_{k}|g_{k}|^{2}\exp[+i(\omega_{0}-\omega_{k})t]$\footnote{This function is called bath correlation because it can be written
as $f(t-t^{\prime})=\textrm{Tr}_{e}\left(B(t)B^{\dagger}(t^{\prime})\rho_{e}\right)e^{i\omega(t-t^{\prime})}$,
where $B=\sum_{k}g_{k}b_{k}$ is the bath operator which participate
in the interaction, and $\rho_{e}=\ket{0}\bra{0}$ is the environment
initial state.}. For a continuous of modes with isotropic interaction the bath correlation
becomes $f(t)=\int d\omega\,J(\omega)\exp[+i(\omega_{0}-\omega)t]$,
$J(\omega)$ is called spectral function.

In the Born-Markov limit, the bath correlation remains for a very
short time and the bath correlation function can be approximated by
a Dirac delta function. In this limit the Green function that satisfies
Eq. \eqref{G-1} represents a pure exponential decay and the dynamics
is Markovian. The usual measures as the ones described in Section
\eqref{subsec:Measures-of-non-Markovianity} are able to detect non-Markovianity
only if the decay rate $\gamma(t)$ is negative in some time interval
\cite{addis2014}. For the model (\ref{Bosonic-2}), standard memory
witness, such as the ones presented in Section \ref{subsec:Measures-of-non-Markovianity},
coincide \cite{breuer}. In fact, these measures are able to detect
non-Markovianity only if the decay rate $\gamma(t)$ is negative in
some time interval \cite{addis2014}. Equivalently, this means that
if $|G(t)|^{2}$ decays monotonically, giving place to a monotonous
decay from the upper level $\ket{\uparrow}$ to the lower state $\ket{\downarrow}$,
then the dynamics is considered Markovian. Nevertheless, in this regime
it is not necessarily within the BMA. Non-Markovianity close to the
BMA can be detected with a CPF correlation \cite{budini,budiniPRA}.

\subsection{CPF correlation }

For different measurement schemes, the CPF correlation associated
to the dynamics Eq. (\ref{Bosonic-1}) can be calculated in an exact
way. Considering the initial condition $|\Psi_{0}\rangle=(a\ket{\uparrow}+b\ket{\downarrow})\otimes|0\rangle$
and performing three projective measurements in the $\sigma_{z}$
direction of the Bloch sphere ($\sigma_{z}-\sigma_{z}-\sigma_{z}$),
which implies $x=\pm1,$ $y=\pm1,$ $z=\pm1,$ the exact CPF correlation
reads (see Appendixes) $C_{pf}(t,\tau)|_{y=+1}\underset{\sigma_{z}\sigma_{z}\sigma_{z}}{=}0,$
while for the conditional $y=-1,$ it reads
\begin{equation}
C_{pf}(t,\tau)|_{y=-1}\underset{\sigma_{z}\sigma_{z}\sigma_{z}}{=}\left\{ \frac{4|a|^{2}|b|^{2}}{[(1-|G(t)|^{2})|a|^{2}+|b|^{2}]^{2}}\right\} |G(t,\tau)|^{2}.\label{zzz}
\end{equation}
Alternatively, by performing the successive measurement in the $\sigma_{x}-\sigma_{z}-\sigma_{x}$
directions, we get $C_{pf}(t,\tau)|_{y=+1}\underset{\sigma_{x}\sigma_{z}\sigma_{x}}{=}0,$
while for the conditional $y=-1,$ it reads
\begin{equation}
C_{pf}(t,\tau)|_{y=-1}\underset{\sigma_{x}\sigma_{z}\sigma_{x}}{=}-\left\{ \frac{1-[2\mathrm{Re}(ab^{\ast})]^{2}}{1-|G(t)|^{2}/2}\right\} \mathrm{Re}[G(t,\tau)].\label{xzx}
\end{equation}
In the previous two expressions, the function $G(t,\tau)$ is
\begin{equation}
G(t,\tau)\equiv\int_{0}^{t}dt^{\prime}\int_{0}^{\tau}d\tau^{\prime}f(\tau^{\prime}+t^{\prime})G(t-t^{\prime})G(\tau-\tau^{\prime}).\label{GG}
\end{equation}

The exact result $C_{pf}(t,\tau)|_{y=+1}\underset{\sigma_{z}\sigma_{z}\sigma_{z}}{=}0$
jointly with $C_{pf}(t,\tau)|_{y=+1}\underset{\sigma_{x}\sigma_{z}\sigma_{x}}{=}0,$
follow from the symmetry of the problem. In fact, the conditional
$y=+1$ implies that the system evolution during the first two measurements
(interval $t$) is exactly the same than in the interval between the
second and third measurements (interval $\tau$). Thus, the CPF correlation
vanishes \cite{budini,budiniPRA}. This accidental symmetry does not
appear for the conditional $y=-1.$

Besides normalization factors proportional to the initial system condition
and the propagator $G(t),$ Both Eq. (\ref{zzz}) and (\ref{xzx})
are determined by $G(t,\tau)$ {[}Eq. (\ref{GG}){]}. Thus, in contrast
to previous approaches, where $G(t)$ takes the main role, here $G(t,\tau)$
is the main mathematical object capturing the memory effects. It consist
in a convolution involving two system propagators mediated by the
environment correlation. It is simple to check that $G(t,\tau)\rightarrow0$
when $f(t)$ approaches a delta function. Consequently, $G(t,\tau)$
measures departures with respect to the BMA, even close to its validity.

\textit{Backflow of information}: Given that the underlying dynamics
admits an exact treatment, a simple relation between a non-operational
backflow of information \cite{breuer} and an operational one can
be established as follows: Let us consider that the system is at the
initial time in the upper state, a non-monotonous decay of the conditional
probability $P(\uparrow,t|\uparrow,0)=|G(t)|^{2}$ determines the
presence of an environment-to-system backflow of information (non-operational
way). In contrast, under the same initial condition, an operational
backflow of information can be defined by the probability $P(\uparrow,t+\tau|\downarrow,t;\uparrow,0)=|G(t,\tau)|^{2}/[1-|G(t)|^{2}]$,
which measures the capacity of the environment of reexciting the system
after it has been found in the lower state at an intermediate time.
This probability only vanishes in the Markovian limit. These two clearly
different physical scenarios determine the possibility of detecting
departure from the BMA or not, which in turn\ may be read as different
notions of environment-to-system backflow of information.

\section{Experiment\label{sec:Experiment}}


In order to demonstrate the experimental feasibility of measuring
memory effects close to the BMA, we developed a photonic platform
that simulates the non-Markovian system dynamics. 
The CPF correlation is measured through the sequence $X\rightarrow U(t)\rightarrow Y\rightarrow U(\tau)\rightarrow Z,$\ where
$X,$ $Y,$ and $Z$ are the measurement processes while $U(t)$ and
$U(\tau)$ are the unitary transformation maps associated to the total
Hamiltonian (\ref{Bosonic-2}). These maps represent the system-environment
total changes between consecutive measurement processes. Although
the real environment is composed of an infinite number of modes, the
system reduced dynamical map can be obtained if the environment is
regarded also as a two-level system \cite{ruskai2002}. The map $U(t)$
is defined by the transformations \begin{subequations} \label{eqmap1}
\begin{align}
\left\vert {\downarrow}\right\rangle \otimes\left\vert {0}\right\rangle  & \rightarrow\left\vert {\downarrow}\right\rangle \otimes\left\vert {0}\right\rangle ,\\
\left\vert {\uparrow}\right\rangle \otimes\left\vert {0}\right\rangle  & \rightarrow\cos(2\theta)\left\vert {\uparrow}\right\rangle \otimes\left\vert {0}\right\rangle +\sin(2\theta)\left\vert {\downarrow}\right\rangle \otimes\left\vert {1}\right\rangle .
\end{align}
Here, $\left\vert {0}\right\rangle $ and $\left\vert {1}\right\rangle $
represent the bath in its ground state and (first) excited state respectively.
The angle $\theta$ is such that $\cos(2\theta)=G(t).$ Given that
the intermediate (second) measurement may leave the system in its
ground state and the bath in an excited state, the map associated
to $U(\tau)$ involves one extra initial state, \end{subequations}
\begin{subequations} \label{eqmap2} 
\begin{eqnarray}
\left\vert {\downarrow}\right\rangle \otimes|0\rangle & \rightarrow & \left\vert {\downarrow}\right\rangle \otimes|0\rangle,\\
\left\vert {\uparrow}\right\rangle \otimes|0\rangle & \rightarrow & \cos(2\tilde{\theta})\left\vert {\uparrow}\right\rangle \otimes|0\rangle+\sin(2\tilde{\theta})\left\vert {\downarrow}\right\rangle \otimes|1\rangle,\\
\left\vert {\downarrow}\right\rangle \otimes|1\rangle & \rightarrow & \sin(2\tilde{\theta}^{\prime})\left\vert {\uparrow}\right\rangle \otimes|0\rangle+\cos(2\tilde{\theta}^{\prime})\left\vert {\downarrow}\right\rangle \otimes|1\rangle.\ \ \ \ 
\end{eqnarray}
The angles are given by the relations $\cos(2\tilde{\theta})=G(\tau),$
and $\sin(2\tilde{\theta}^{\prime})=G(t,\tau)/\sqrt{1-|G(t)|^{2}}.$
This last term measures the capacity of the environment of reexciting
the system. It involves a normalization proportional to the decay
probability in the interval $(0,t)$ (see Appendix \ref{chap:Anexo1}).

The previous maps can be experimentally simulated by encoding the
system states $\{|{\downarrow\rangle},\left\vert {\uparrow}\right\rangle \}$
into polarization of a photon $\{|{H\rangle},\left\vert {V}\right\rangle \},$
while the bath states are encoded into the path degree of freedom
of the same photons. Angles $\{\theta,\tilde{\theta},\tilde{\theta}^{\prime}\}$
are chosen as a function of the simulated bath properties \cite{brasil,rio}.
We approach the spectral density by a Lorentzian $J(\omega)=(1/2\pi)\gamma\tau_{c}^{-2}/[(\omega-\omega_{0})^{2}+\tau_{c}^{-2}]$,
which implies the exponential correlation $f(t)=(\gamma/2\tau_{c})\exp[-|t|/\tau_{c}]$,
$\gamma$ is the strength of the coupling between system and environment
and $\tau_{c}$ is correlation time of the bath. In this case,\ the
propagator (\ref{G-1}) reads \end{subequations} 
\begin{equation}
G(t)=e^{-t/2\tau_{c}}\Big{[}\cosh(\frac{t\chi}{2\tau_{c}})+\frac{1}{\chi}\sinh(\frac{t\chi}{2\tau_{c}})\Big{]},
\end{equation}
where $\chi\equiv\sqrt{1-2\gamma\tau_{c}}.$ Furthermore, Eq. (\ref{GG})
becomes
\begin{equation}
G(t,\tau)=\frac{2\gamma\tau_{c}}{\chi^{2}}e^{-(t+\tau)/2\tau_{c}}\sinh(\frac{t\chi}{2\tau_{c}})\sinh(\frac{\tau\chi}{2\tau_{c}}).
\end{equation}
As $\omega(t)=0$ ($G(t)$ real), in the considered case there is
no time-dependent energy shift. In the weak coupling limit $\gamma\ll1/\tau_{c},$
where the correlation time $\tau_{c}$\ of the bath is the minor
time scale of the problem, it follows that $G(t)\simeq\exp[-\gamma t/2],\ G(t,\tau)\simeq0,$
which in turn implies that, independently of the measurement scheme,
a Markovian limit is approached $C_{pf}(t,\tau)|_{y}\simeq0.$

The specific experimental setup is illustrated in Fig. \ref{fig:setup-1}.
A continuous-wave (CW) laser, centered at 325 nm, is sent to a beta-barium-borate
(BBO) crystal. Degenerated pairs of photons (wavelength centered at
650 nm), are produced in the modes signal ``s\textquotedblright \ and
idler ``i\textquotedblright \ via spontaneous-parametric-down-conversion
\cite{Kwiat99}. The photons in mode i are sent directly to detection
as they only herald the presence of photons in mode s, while the photons
in mode s pass through nested interferometers, which emulate the maps
$U(t)$ and $U(\tau)$ \cite{rio}. Projective measurements are introduced
in modules $X,$ $Y,$ $Z.$ The CPF correlation is extracted by using
coincidence counts for all the different combinations of past, present
and future outcomes.

\begin{figure}[h]
\centering{}\includegraphics[width=0.7\columnwidth]{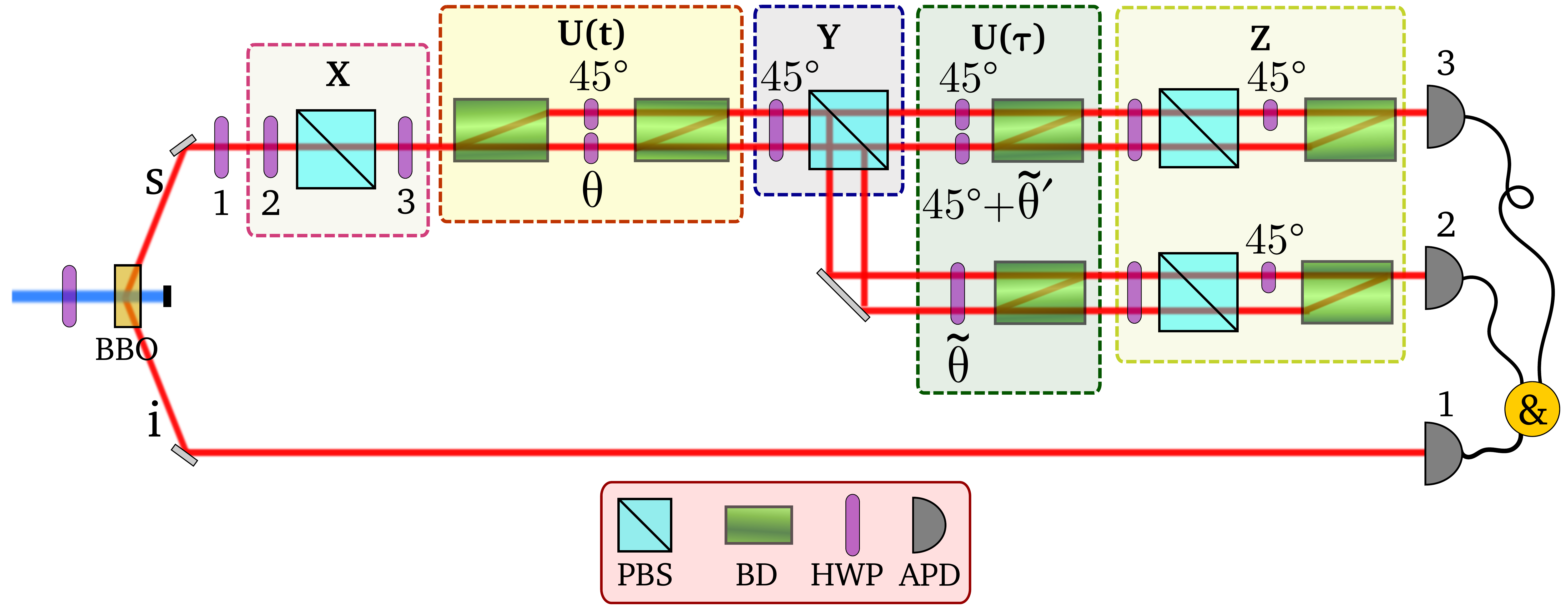}
\caption{Experimental Setup. Modules $X,$ $Y,$ and $Z$ perform the projective
measurements. Modules $U(t)$ and $U(\tau)$ implement the unitary
system-environment maps. From coincidence counting, the avalanche
photon detectors (APD) allow measuring the CPF correlation for photon
signal (``s''), while photon idler (``i'') only heralds its presence
(see text). Spatial mode 0 (1) corresponds to the upper (lower) path
after the first beam displacer (BD). PBS - polarizing beam splitter;
HWP - half wave plate. }
\label{fig:setup-1} 
\end{figure}


Given that the photons created in the BBO crystal are horizontally
polarized, we prepare any initial linear polarization state $(a\left\vert {H}\right\rangle +b\left\vert {V}\right\rangle ,\quad a,b\in\mathbb{R})$\ using
a half-wave plate (HWP$_{1}$). The past measurement $X$ is performed
using a set of two HWPs and a polarizing beam-splitter (PBS), which
transmits the horizontal polarization and reflects the vertical one.
In this measurement, the angle set in HWP$_{2}$ selects the linear
polarization state mapped to $H$ and hence transmitted by the PBS,
while HWP$_{3}$ prepares the projected state from the transmitted
horizontal polarization. After this module, the map $U(t)$ {[}Eq.
(\ref{eqmap1}){]} is implemented by coupling the polarization with
the path degrees of freedom. For this, we use an interferometer composed
of two beam-displacers (BD), each one transmitting (deviating) the
vertical (horizontal) polarization, and two HWPs, one at each path
mode. HWP$_{\theta}$ rotates the polarization such that part of the
light exits the interferometer in (spatial) mode $|0\rangle$ (upper
path) and part in mode $|1\rangle$ (lower path), depending on $\theta$.
HWP$_{45\lyxmathsym{º}}$ simply rotates the photons from $H$ to
$V,$ such that all photons of this mode are mapped to mode $|0\rangle$
at the output of the interferometer. Posteriorly, measurement $Y$
is performed using a HWP and a PBS. We restrict ourselves to perform
projections onto the $\sigma_{z}$ basis. This is done by fixing a
HWP at 45º to correct the polarization state such that the $H$-polarized
photons are transmitted and $V$-polarized ones are reflected. The
map $U(\tau)$ {[}Eq. (\ref{eqmap2}){]}, characterized by angles
$\tilde{\theta}$ and $\tilde{\theta}^{\prime},$ is implemented in
a similar way, noticing that slightly different dynamics take place
depending on the result of the Y measurement ($|\downarrow\rangle$
or $|\uparrow\rangle$, equivalent here to transmitted or reflected).
The photons on both path are coherently combined at the two BDs. The
final $Z$ measurement is also implemented by two sets of HWP and
PBS, one set for the transmitted light and the other to the reflected
light. The last two BDs, which are just before the detectors Det$_{2}$
and Det$_{3}$, are used to trace out the path degrees of freedom.

An example of angle values used in the experiment is shown in Fig.
\ref{fig:Angle-values-as}. Specially for the HWP$_{\tilde{\theta}^{\prime}}$
the angle changes must be performed very carefully and for this purpose
the half-wave plates responsible for the three dynamical angles $\{\theta,\tilde{\theta},\tilde{\theta}^{\prime}\}$
are motorized and moved with precision of $0.001^{\circ}$.

\begin{figure}[h]
\begin{centering}
\includegraphics[width=0.5\columnwidth]{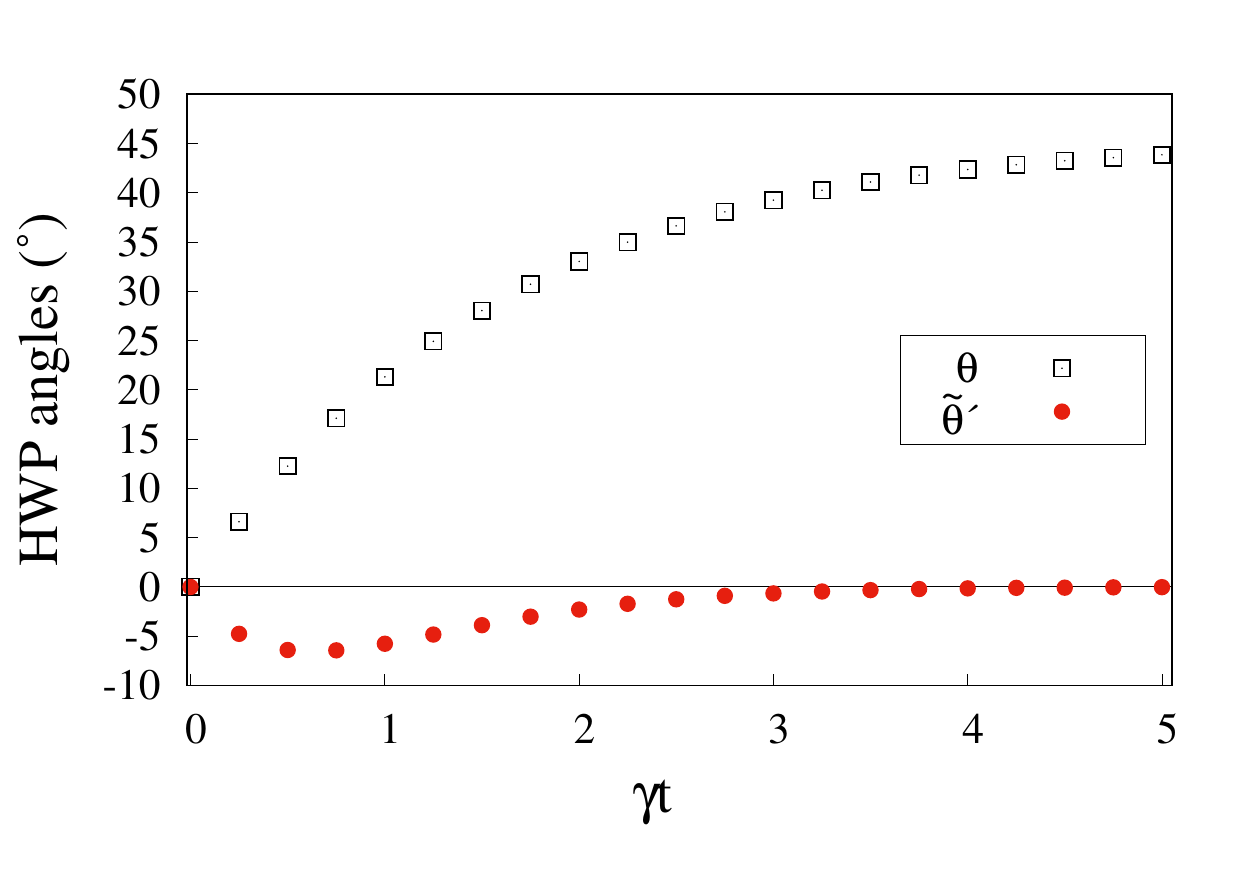}
\par\end{centering}
\caption{Calculated angle values as functions of time for the case of $\gamma\tau_{c}=0.5$.\label{fig:Angle-values-as}}

\selectlanguage{american}%
\end{figure}

From an experimental viewpoint, to condition the probabilities on
the result $y$ of the intermediate measurement means to consider
only the coincidence counts between Det$_{1}$ and Det$_{3}$ (Det$_{1}$
and Det$_{2}$) for $y=-1$ ($y=+1$). Let $N_{z,x}^{(j)}$ denotes
the number of coincidences registered between Det$_{1}$ and Det$_{j}$
when the past and future projective measurements are set to $x$ and
$z$ correspondent eigenvectors, respectively. Let also $y^{(j)}$
be the value of the intermediate outcome corresponding to Det$_{j}$.
The probabilities used to calculate the CPF correlation \eqref{eq:CPFdef}
can be obtained as $P(z,x|y^{(j)})=N_{z,x}^{(j)}/\left(\sum_{x^{\prime},z^{\prime}}N_{z^{\prime},x^{\prime}}^{(j)}\right)$,
while $P(z|y^{(j)})=\sum_{x}P(z,x|y^{(j)})$ and $P(x|y^{(j)})=\sum_{z}P(z,x|y^{(j)})$.

\subsection{Results}

In Fig. \ref{fig:CPFexperimental} we plot both the theoretical results
(full lines) as well as the experimental ones (symbols) for the CPF
correlation at equal times, $C_{pf}(t,t)|_{y=-1}.$ Both the $\hat{z}$-$\hat{z}$-$\hat{z}$
{[}Eq. (\ref{zzz}){]} and $\hat{x}$-$\hat{z}$-$\hat{x}$ {[}Eq.
(\ref{xzx}){]} measurement schemes were implemented (upper and lower
curves respectively). While for the chosen bath correlation parameters
the propagator $G(t)$ decays in a monotonous way, detection of memory
close to the BMA is confirmed for different bath correlation times
$\tau_{c}.$ An excellent agreement between theory and experiment
is observed. In particular, at time $t=0,$ null values of the CPF
correlation are experimentally observed, meaning that correlation
between the system and environment are negligible at the preparation
stage \cite{budiniChina}. While the modulus of $C_{pf}(t,t)|_{y=-1}$
depends on the initial system state, \ we note that it is smaller
in the $\hat{z}$-$\hat{z}$-$\hat{z}$ scheme when compared with
the $\hat{x}$-$\hat{z}$-$\hat{x}$ measurement scheme. In fact,
$|G(t,\tau)|^{2}\leq|\mathrm{Re}[G(t,\tau)]|$ {[}see Eqs.~(\ref{zzz})
and Eq. (\ref{xzx}){]}. This feature also reflects that in the former
case, in contrast to the last one, the dynamics between measurements
is incoherent.

\begin{figure}[h]
\centering{}\includegraphics[width=0.6\columnwidth]{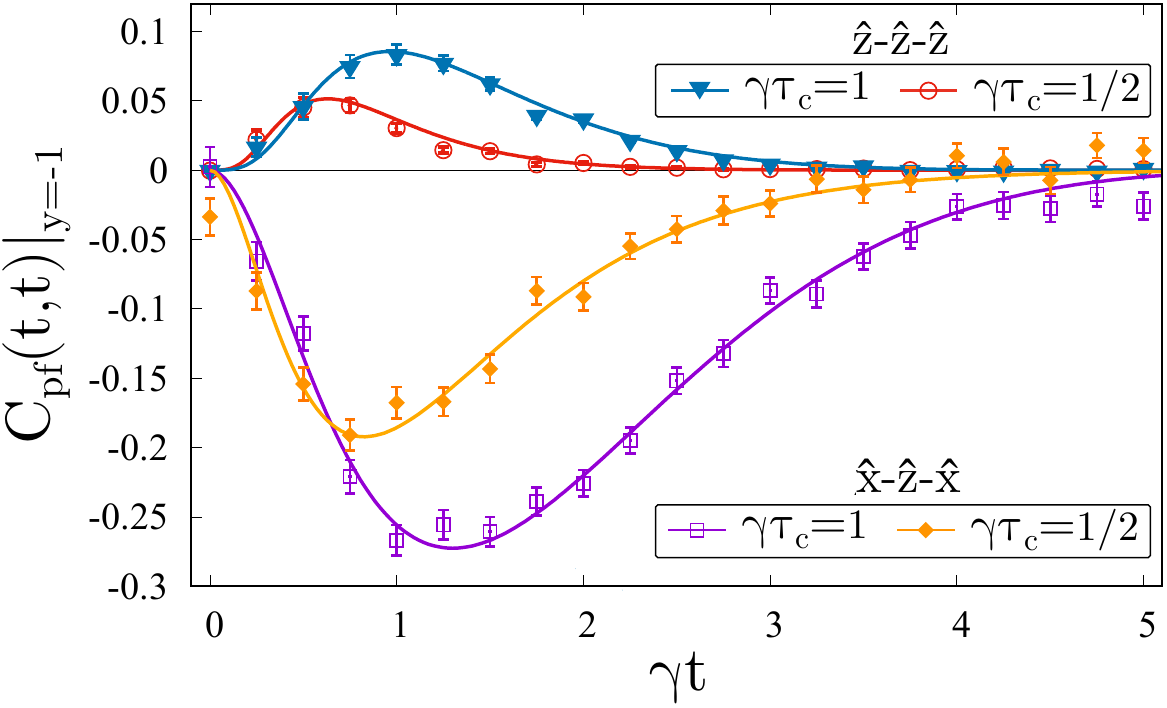}
\caption{CPF correlation for different projective measurements and bath correlation
times. Theoretical results (full lines), experimental results (symbols).
The two upper curves correspond to the $\hat{z}$-$\hat{z}$-$\hat{z}$
measurements and the lower ones to $\hat{x}$-$\hat{z}$-$\hat{x}$
measurements. The initial system state is $(\sqrt{p}\left\vert {\uparrow}\right\rangle +\sqrt{1-p}\left\vert {\downarrow}\right\rangle )$
with $p=0.8$ (upper curves) and $p=1$ (lower curves). From top to
bottom, the bath parameters are $\gamma\tau_{c}=1,$ $1/2,$ $1/2,$
$1.$}
\label{fig:CPFexperimental} 
\end{figure}

We also used the experimental setup for measuring memory effects even
closer to the BMA, that is, for smaller bath correlation times. Experimental
limitations emerge due to different aspects, as explained in the next
section . For instance, reduced visibility in the interferometers
degrades the quality of our operations, weakening agreement between
theory and experiment. The finite count statistics also become more
relevant when approaching the Markovian limit, as it becomes unclear
if a nonnull CPF comes from memory or fluctuation effects. In spite
of these limitations, our experiment demonstrates the total feasibility
of measuring quantum non-Markovian effects close and beyond the BMA.

The CPF correlation was also measured for different time intervals
for the two unitary evolution steps $t\neq\tau$ and the result is
shown in Fig. \ref{fig:CPFtTau} together with the expected theoretical
correlation. For $t=0$ the correlation vanishes meaning that the
initial state presents no correlation between system and environment.
For $t,\tau>0$ the correlation is as big as the time intervals are
close to the correlation time of the bath $\tau_{c}$. As the dynamics
goes to the Born-Markov limit ($t>>\tau_{c}$) the correlation vanishes.
This experimental results suffer even more from the aforementioned
finite statistics fluctuations as we have used approximately one fifth
of the total counts we used in the $t=\tau$ case to build each experimental
point. 

\begin{figure}[H]
\includegraphics[width=1\textwidth]{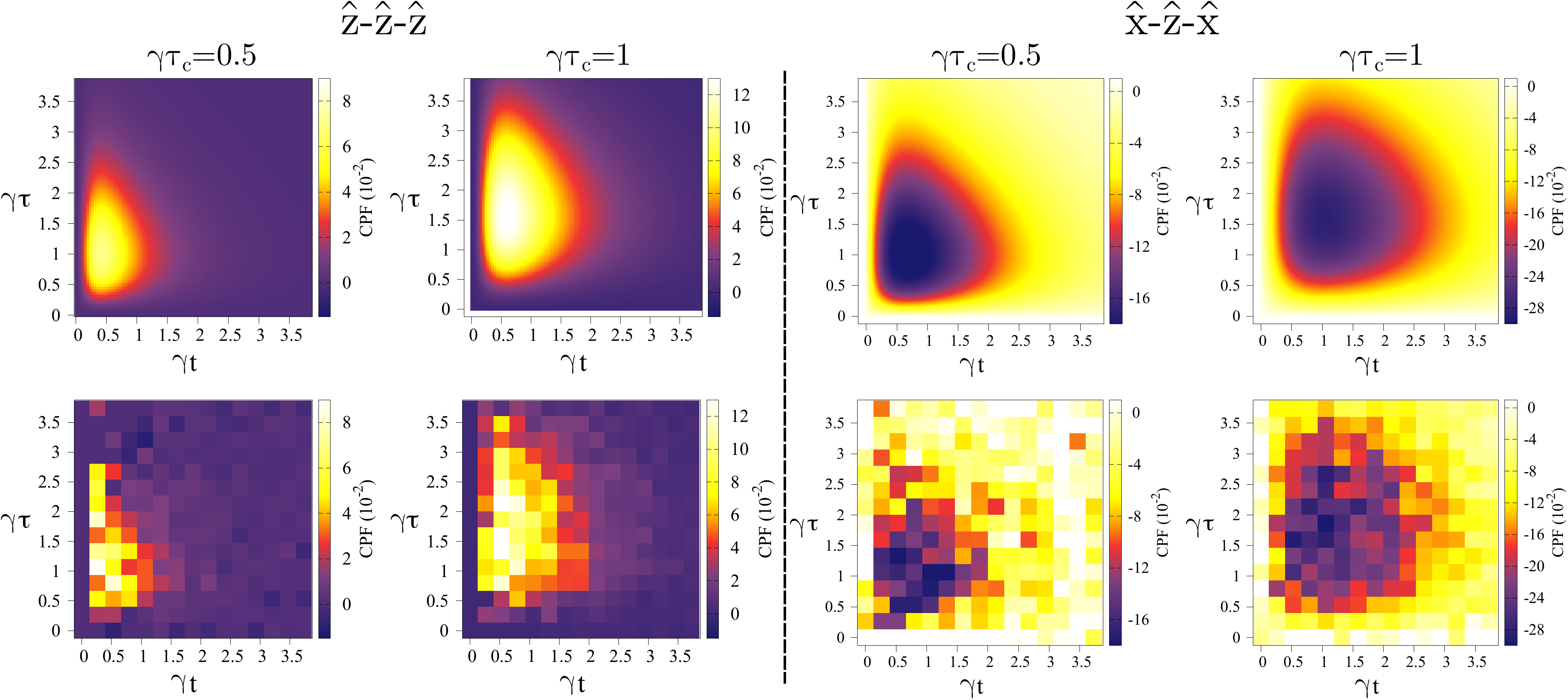}

\caption{CPF correlation as a function of the time intervals $t$ and $\tau$
of the two unitary system-environment evolutions. The upper row shows
the theoretical result and in the line below, the experimental data
is shown. The initial system state is $(\sqrt{p}\left\vert {\uparrow}\right\rangle +\sqrt{1-p}\left\vert {\downarrow}\right\rangle )$
with $p=0.8$ ($\hat{z}$-$\hat{z}$-$\hat{z}$ scheme) and $p=1$
($\hat{x}$-$\hat{z}$-$\hat{x}$ scheme).\label{fig:CPFtTau}}

\selectlanguage{american}%
\end{figure}

\subsection{Robustness of the experimental setup}

In this section we study the behavior of the CPF correlation in real
world implementations. In particular, we consider two limitations
of our experimental setup, namely the finite counts statistics and
the non-unit visibility of the interferometers. The last one is an
issue only for the $\hat{x}$-$\hat{z}$-$\hat{x}$ scheme, since
the evolution in the $\hat{z}$-$\hat{z}$-$\hat{z}$ scheme is incoherent
and no interference take place in this case. In Fig. \ref{fig:CPFsimulado}
we show results of simulations when these issues are considered. In
Fig \ref{fig:CPFsimulado}-a) we show in black hollow squares the
results for the ideal case of visibility V equals to one and infinite
counts. In red circles, we also show results for V$=1$ but considering
finite counts such as the ones we have in the experiment (around $10000$
events in total). One can see that the circles are dispersed around
the theoretical prediction, giving rise to values of the CPF correlation
up to 15$\%$ greater than what is expected theoretically. This shows
that the CPF correlation is quite sensitive to statistical fluctuations.
In Fig.\ref{fig:CPFsimulado}-b) we show results of simulations for
V$=0.9$. The results do not coincide with the theoretical prediction
even in the case of infinite counts (blue hollow squares). Moreover,
when non perfect visibility and finite counts are considered together,
experimental values could differ from theory for more than 25$\%$.
When V$=0.8$, results in Fig. \ref{fig:CPFsimulado}-c), the dispersion
of the simulated values is even larger, obtaining high discrepancy
between theory and data. As consequence, to restore the agreement
between theory and experiment it would be necessary to introduce dephasing
in the theoretical description.

\begin{figure}[h]
\centering{}\includegraphics[width=0.8\columnwidth]{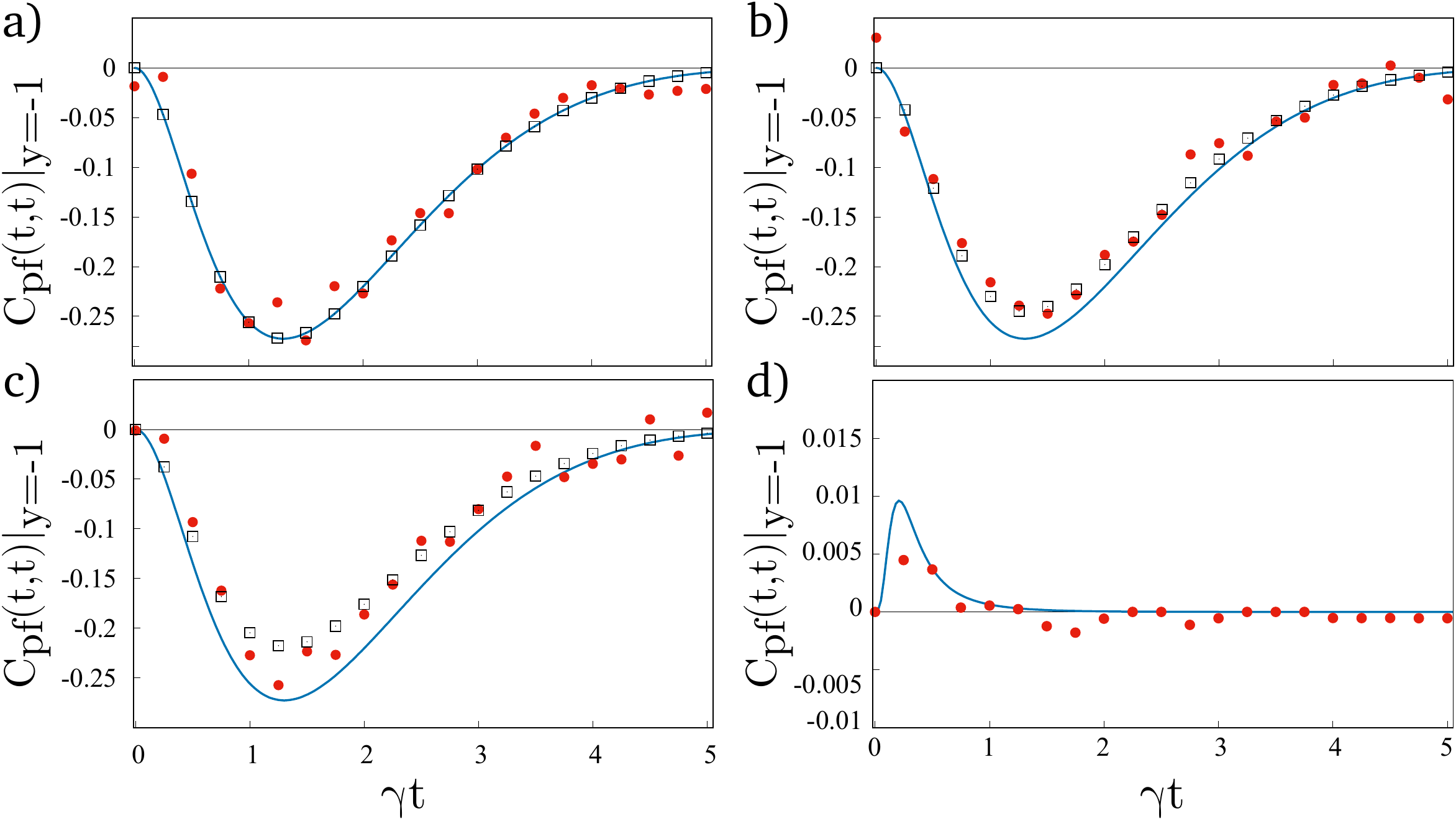}
\caption{CPF correlation as a function of time for simulated data taking into
account experimental issues. a), b) and c): $\hat{x}$-$\hat{z}$-$\hat{x}$
scheme of measurement with $\gamma\tau_{c}=1,$ and initial state
$|\uparrow\rangle$. In d) the $\hat{z}$-$\hat{z}$-$\hat{z}$ scheme
of measurement is used with initial state $\sqrt{p}|\uparrow\rangle+\sqrt{1-p}|\downarrow\rangle$,
$p=0.8$. More details in the text. }
\label{fig:CPFsimulado} 
\end{figure}


As mentioned above, we find further experimental issues closer to
BMA limit ($\tau_{c}\,\rightarrow\,0$). In Fig. \ref{fig:CPFsimulado}-d)
we show the exact value of the CPF correlation (blue solid curve)
and a theoretical simulation including finite statistic effects (red
circles) for $\tau_{c}\gamma=0.1$ in the $\hat{z}$-$\hat{z}$-$\hat{z}$
scheme of measurement. In this case, the values of CPF correlation
and its experimental variations due to fluctuations in the number
of counts are comparable. This alone prevents us to assign a non vanishing
correlation to memory effects instead of fluctuations, without considering
any other experimental issue.

\begin{figure}[t!]
\centering{}\vspace{10pt}
 \includegraphics[width=0.8\columnwidth]{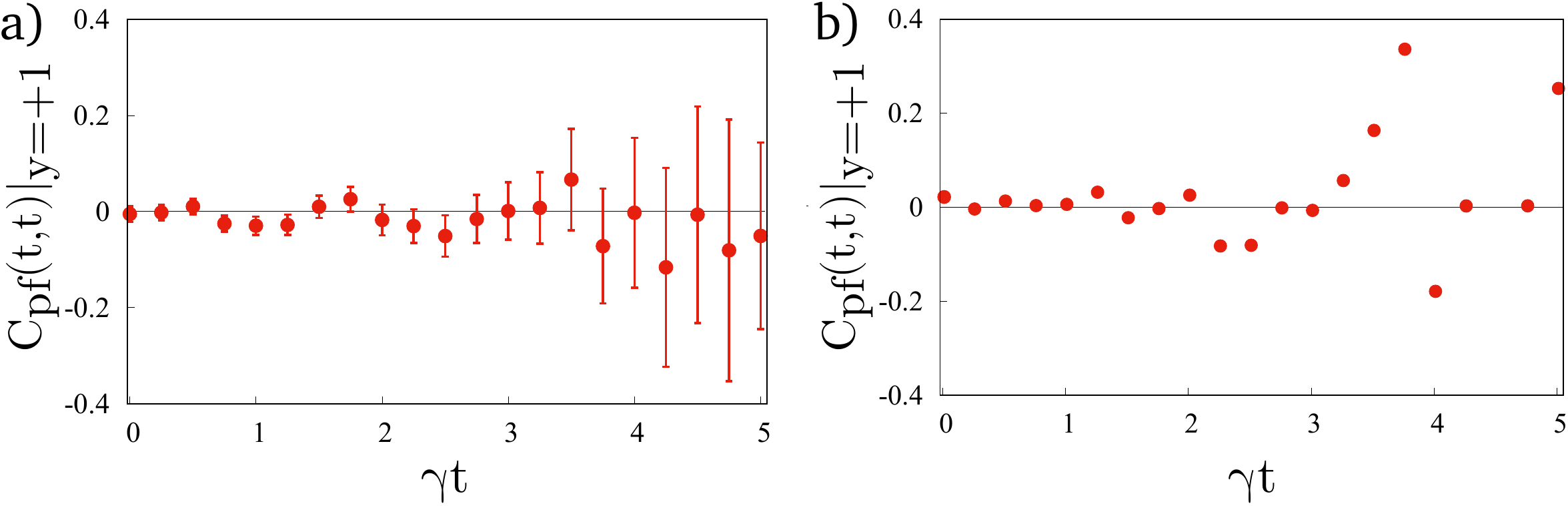}
\caption{ CPF correlation when $y=+1$ in the $\hat{x}$-$\hat{z}$-$\hat{x}$.
$\gamma\tau_{c}=1,$ and $p=1$ are used in this case. a) Experimental
data and b) simulation assuming Poissonian fluctuations. }
\label{fig:CPFzero} 
\end{figure}


In Fig. \ref{fig:CPFzero}-a), we plot the experimental values of
CPF correlation when the measurement outcome in the present is $y=+1$.
In this case, the correlation is null within the error bars, in agreement
with what is predicted theoretically. One can see that the error bars
increase substantially while time passes. This is related with the
fact that the system excitation tends to decay to the reservoir, making
the probabilities to find it in an excited state $(y=+1)$ almost
null for values of $\gamma t$ larger that 3. In our setup, this is
translated as a reduction of the number of coincidence counts, causing
the probabilities to be much more sensitive to statistical fluctuations.
The fluctuations observed experimentally are compatible with finite
count statistics as shown in Fig. \ref{fig:CPFzero}-b), where we
plot the result of a simulation assuming Poissonian fluctuations around
the ideal theoretical value of the counts.

\section{Conclusions}

Detection of quantum non-Markovianity close to the Born-Markov approximation
was characterized through an operational-based memory witness. The
CPF correlation was calculated for the decay dynamics of a two-level
system coupled to a bosonic environment. Instead of the propagator,
here the relevant object associated to memory effects consists in
the convolution of two system propagators weighted by the environment
correlation. This structure can be related to an alternative formulation
of the phenomenon of environment-to-system backflow of information,
where an intermediate condition on the system state allows to detects
memory effects even close to the validity of the BMA. A photonic experiment
corroborates the feasibility of detecting quantum memory effects close
to the BMA with excellent agreement with the theory.

These results provide a relevant contribution to the understanding
of operational-based quantum memory witnesses. In particular, our
study elucidates which structure replaces the system propagator when
studying these alternative approaches. The validity of the present
conclusions to arbitrary system-environment dynamics can be established
by using perturbation techniques \cite{bonifacio}.\selectlanguage{american}

~\ihead{}

\ohead{\textbf{Chapter~\thechapter}~\leftmark}

\ifoot{}

\cfoot{}

\ofoot[
]{\thepage}

\chapter{Experimental realization of an arbitrary qubit channel: a proposal\label{chap:GAD}}

In Section \ref{sec:Open-quantum-system} we briefly discussed the
general evolution of open quantum systems and how it is given by maps
that can be regarded as unitary evolutions for larger systems. In
that chapter however, the focus was to use this treatment to present
the concept of quantum non-Markovianity. In this chapter, we are concerned
only about the completely positive maps themselves, also called quantum
channels. Here we show how to use the experimental platform presented
in chapters 5 and 6 to implement a large class of quantum channels
for a photonic qubit using the path degrees of freedom of the photon
itself as ancillary systems, which allows us to introduce decoherence
in a controllable way in the qubit evolution (the polarization degree
of freedom of the photon). This is still only a proposal for an experiment
that we hope will be implemented soon.

Aiming to be self-contained, the chapter starts with a brief review
of quantum channels already presented in Chapter \ref{chap:Markov},
followed by the particular case of qubit transformations. Before presenting
the most general protocol, we motivate it in Section \ref{sec:Generalized-amplitude-damping}
by showing how to implement a particular channel called ``generalized
amplitude damping''. The general protocol is presented in Section
\ref{sec:General-protocol}, together with many examples of how to
set the experimental parameters to implement some particular channels.
The chapter ends with a brief description of quantum process tomography,
that should be used to verify what channel is indeed being realized
by the setup.

This work is being developed in collaboration with Gabriel Aguilar
(UFRJ), Gabriel Landi (USP), and the students of the Quantum Optics
Lab Rodrigo Piera and Thiago Guimarães. I contributed to the theoretical
part and design of the experiment. At this moment the experiment is
being set up.

\section{Quantum maps}

In Chapter \ref{chap:Markov}, we have introduced the concept of quantum
maps: the mathematical transformations leading from an initial to
a final state. Although there we presented this by tracing out the
environmental degrees of freedom from the global time-evolved state,
once defined, the map does not need to make reference to the passage
of time. 

Let us consider a system of interest whose Hilbert space is $\mathcal{H}$
and $S(\mathcal{H})$ denotes the set of positive self-adjoint operators
acting on $\mathcal{H}$ with trace $1$, i.e. $S(\mathcal{H})$ is
the set of all possible density matrices of the system. A quantum
map $\Lambda:S(\mathcal{H})\rightarrow S(\mathcal{H})$ describes
mathematically the state transformation from a initial state $\rho\in S(\mathcal{H})$
to a final state $\Lambda[\rho]\in S(\mathcal{H})$, without necessarily
mentioning the underlying physical process or the time it takes. To
be a proper quantum map, $\Lambda$ must keep the basic features which
define a density matrix:
\begin{itemize}
\item It must be trace preserving such that the total probability is conserved;
\item It must be positive such that the transformed state is a positive
semidefinite operator, which means that all the probabilities are
always positive numbers;
\item As the system can be a part in a bigger system , it should maintain
the positivity also of a global state in a joint Hilbert space $\mathcal{H}\otimes\mathcal{H}_{e}$,
regardless the dimension of the environment Hilbert space $\mathcal{H}_{e}$.
This feature is called complete positivity\footnote{In Section \ref{sec:Open-quantum-system}, it was mentioned that it
is not always possible to define a complete positive map if the initial
state has entanglement between system and environment. Here we disregard
these cases when we considered maps defined in the whole set of states
$S(\mathcal{H})$.}.
\end{itemize}
An operation obeying all this requirements is called a completely-positive
and trace-preserving (CPTP) map. If the nonunitary dynamics is the
result of a partial trace of a unitarily evolved global state, then
all these requirements are fulfilled. However, it is not always mathematically
friendly to solve or even to enunciate the global evolution problem.
Therefore it is often convenient to heuristically find the map and
to deal only with the smaller system problem, ignoring the source
of dissipation and decoherence. 

Every completely positive map has a operator-sum decomposition
\begin{equation}
\Lambda[\rho]=\sum_{k}E_{k}\rho E_{k}^{\dagger},\label{eq:OSD}
\end{equation}
which is not unique \cite{kraus1983}. The $E_{k}$s are called Kraus
operators. The trace preservation condition $\Tr{\Lambda[\rho]}=\sum_{k}\Tr{E_{k}E_{k}^{\dagger}\rho}=\Tr{\rho}$,
valid for all $\rho\in S(\mathcal{H})$, is attained if $\sum_{k}E_{k}E_{k}^{\dagger}=\mathbb{1}$.
The number of Kraus operators required to represent a map is not fixed,
but there is a minimal number of operators which is at most equal
to $d^{2}$ (the Hilbert space dimension squared) \cite{choi1975}.
If the initial state of the environment and the global Hamiltonian
evolution are known, the operator-sum representation can be obtained
as shown in Eq. \eqref{eq:Kraus}. It is possible to see that the
maximum number of Kraus operators depends on the Hilbert space dimension
of the environment and that the Kraus decomposition is not unique
since it depends on the environmental Hilbert space basis being used.

There is an interpretation for the operator-sum representation in
terms of measurements of the environment. Consider that the total
system is initially in the separable state $\rho\otimes\ket{e_{0}}\bra{e_{0}}$,
which evolves under the unitary $U$. If the environment is measured
and its state after the intervention is $\ket{e_{k}}$, then the system
state becomes proportional to
\begin{equation}
\Tr_{e}\{\ket{e_{k}}\bra{e_{k}}U(\rho\otimes\ket{e_{0}}\bra{e_{0}})\ket{e_{k}}\bra{e_{k}}\}.
\end{equation}
It is easy to see from Eq. \eqref{eq:Kraus} that the normalized state
is given by 
\[
\rho_{k}=\frac{E_{k}\rho E_{k}^{\dagger}}{\Tr\left[E_{k}\rho E_{k}^{\dagger}\right]}.
\]
Now, the probability of getting the outcome $k$ is given by 
\begin{equation}
p(k)=\Tr\left[\ket{e_{k}}\bra{e_{k}}U(\rho\otimes\ket{e_{0}}\bra{e_{0}})\ket{e_{k}}\bra{e_{k}}\right]=\Tr\left[E_{k}\rho E_{k}^{\dagger}\right].
\end{equation}
Thus, if the measurement outcome is kept unrevealed , the state of
the system is 
\begin{equation}
\rho^{\prime}=\sum_{k}p(k)\rho_{k}\label{eq:unrevealing}
\end{equation}
which is exactly the operator-sum decomposition \eqref{eq:OSD}. Thus
the effect of the quantum channel is equivalent to taking the initial
state and randomly replacing it by the states $\rho_{k}$, the inherent
randomness coming from the unknown measurement of the reservoir.

Once pursuing a Kraus decomposition, it is possible to solve the converse
problem and obtain a unitary transformation in a larger Hilbert space
that gives rise to the reduced transformation given by the map. Many
environmental dimensions and interactions may give rise to the same
dynamics for the main system. Considering that the environment is
initially in a pure state, the minimal dimension of the environment
Hilbert space is given by the minimal number of Kraus operators in
the map decomposition. In this case, the unitary evolution operator
$U$ can be obtained as to satisfy
\begin{equation}
U\ket{\psi}\ket{e_{0}}=\sum_{k}E_{k}\ket{\psi}\ket{e_{k}},
\end{equation}
where $\ket{\psi}$ is an arbitrary pure state of the main system,
the sum runs over the orthogonal basis states $\{e_{k}\}$ of the
environment, and the environment is initially in the state $\ket{e_{0}}$
of the basis \cite{chuang2000}. In this construction, $U$ is not
uniquely defined because its action is not prescribed for the basis
states but $\ket{e_{0}}$. 

In what follows we restrict ourselves to the case of a bi-dimensional
Hilbert space. It is tempting to think that in this case it is possible
to build any map from the unitary evolution with a qubit environment
initially prepared in a mixed state \cite{Horodecki1999}. Although
there are many quantum channels for which it is indeed possible, there
are indeed a few well-known and relevant counterexamples \cite{Terhal1999}.

\section{Quantum maps of qubits}

The simplest but also one of the most interesting systems for quantum
computation consists of a two level system or a qubit. As well as
the qubit is the basic unit for unitary quantum computation, it can
be thought of as the primitive for open quantum system protocols as
well \cite{wang2013}.

The identity and the Pauli matrices form a basis for $2\times2$ matrices
with complex coefficients. In particular, any density matrix, i.e.,
any positive Hermitian operator with trace equals to one can be decomposed
in this basis as $\rho=\frac{1}{2}\left[\mathbb{1}+\mathbf{r}\cdot\boldsymbol{\sigma}\right]$
with $\mathbf{r}\in\mathbb{R}^{3}$ and $|\mathbf{r}|\leq1$ to ensure
positivity, $\boldsymbol{\sigma}$ denotes a vector with the three
Pauli matrices $\sigma_{x}$, $\sigma_{y}$ and $\sigma_{z}$ as components.
It defines the so called Bloch sphere, a 3-dimensional sphere with
radius equals to unity, inside which all the qubit states are uniquely
represented through their vector $\mathbf{r}$, with the pure states
all lying on the surface. As a positive trace preserving transformation
takes states into states, it must change only the 3-dimensional vector
$\mathbf{r}$, thus it consists of rotations, reflections, contractions
and translations provided that the vector stays inside the Bloch sphere.
The map can thus be represented as
\begin{equation}
\Lambda[\rho]=\Lambda\left[\frac{1}{2}\left[\mathbb{1}+\mathbf{r}\cdot\boldsymbol{\sigma}\right]\right]=\frac{1}{2}\left[\mathbb{1}+(\mathbf{t}+T\mathbf{r})\cdot\boldsymbol{\sigma}\right],\label{eq:mapbloch}
\end{equation}
where $\mathbf{t}$ is a real 3D vector and $T$ is a real $3\times3$
matrix \cite{king2001}. It must be pointed that not all maps \eqref{eq:mapbloch}
admit a Kraus form or are completely positive. In this parameterization
it is clear that any map is characterized by at most 12 independent
parameters (the 9 elements of $T$ plus the 3 components of $\mathbf{t}$)
which is of course the same number of parameters of a Kraus decomposition
as it has at most four $2\times2$ matrices with a $2\times2$ completeness
conditions. 

Before moving to the simulation of a more general channel, let us
present a particular case, the generalized amplitude damping channel,
as a motivation for the more general protocol.

\section{Generalized amplitude damping channel\label{sec:Generalized-amplitude-damping}}

The generalized amplitude damping (GAD) channel for one qubit is defined
by its action over a qubit state $\rho$ 
\begin{equation}
\Lambda_{GAD}[\rho]=\sum_{j=1}^{4}E_{j}\rho E_{j}^{\dagger}\label{eq:gad}
\end{equation}
with the four Kraus operators \cite{chuang2000}
\begin{align}
E_{1}=\sqrt{p}\left[\begin{array}{cc}
1 & 0\\
0 & \sqrt{\eta}
\end{array}\right]\qquad & E_{2}=\sqrt{p}\left[\begin{array}{cc}
0 & \sqrt{1-\eta}\\
0 & 0
\end{array}\right]\label{eq:GADKraus}\\
E_{3}=\sqrt{1-p}\left[\begin{array}{cc}
\sqrt{\eta} & 0\\
0 & 1
\end{array}\right]\qquad & E_{4}=\sqrt{1-p}\left[\begin{array}{cc}
0 & 0\\
\sqrt{1-\eta} & 0
\end{array}\right],\nonumber 
\end{align}
where the channel parameters $\eta$ and $p$ are positive numbers
in the interval $[0,1]$. When $p=1$ it becomes the so-called amplitude
damping channel, which characterizes the interaction of the qubit
with a bath initially with no excitation, i.e. at zero temperature.
If the system is prepared in the excited state $\ket{1}$, then it
can decay and emit one excitation to the bath with probability $1-\eta$.
This channel was presented before as the map \eqref{eqmap1}. Many
works exploring the effects of the amplitude damping channel over
entanglement with a third party and system-environment entanglement
dynamics as well as quantum Markovianity have been published using
as basic tool for photonic simulations \cite{Farias2012,brasil,Knoll2016,Almeida2007,Farias2009,Haseli2014,Salles2008-2}.
Although this particular case channel has been widely studied experimentally,
the more general GAD is lacking a proper implementation as far as
we know.

When applied to a pure state $\ket{\psi}=a\ket{0}+b\ket{1}$, the
map \eqref{eq:gad} leads to the mixed state 
\begin{multline}
\Lambda_{GAD}[\ket{\psi}\bra{\psi}]=|a|^{2}\left\{ p\rho_{0}+(1-p)\left[\eta\rho_{0}+(1-\eta)\rho_{1}\right]\right\} \\
+|b|^{2}\left\{ (1-p)\rho_{1}+p\left[\eta\rho_{1}+(1-\eta)\rho_{0}\right]\right\} +\sqrt{\eta}\left(ab^{*}\ket{0}\bra{1}+a^{*}b\ket{1}\bra{0}\right),\label{result}
\end{multline}
where $\rho_{0}=\ketbra{0}{0}$ and $\rho_{1}=\ketbra{1}{1}$. Thus
a qubit originally in state $\ket{1}$ ($a=0$) has probability $1-p$
to remain in this state and probability $p$ of passing through a
simple amplitude damping with coupling $1-\eta$ between system and
environment. A similar statement is valid for initial state $\ket{0}$,
but in this case the system can absorb one excitation from the bath.
If the original state is a coherent combination of the two states
of the basis, then the channel reduces its coherence by a factor $\sqrt{\eta}$.

Let us encode the qubit in the polarization degree of freedom of photons
\[
\ket{0}\longrightarrow\ket{H}\qquad\ket{1}\longrightarrow\ket{V}.
\]

\begin{figure}[h]
\centering \includegraphics[width=0.9\columnwidth]{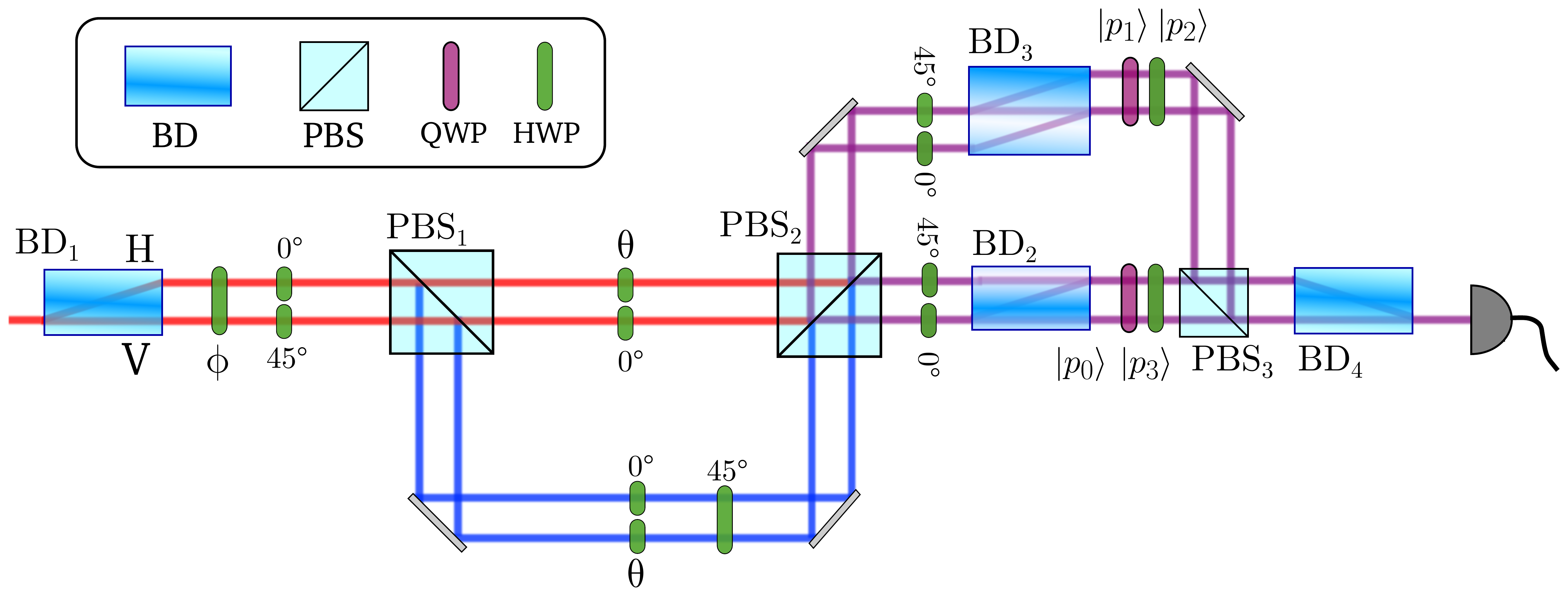}
\caption{Implementation of GAD channel for a qubit encoded in the polarization
of single photons.\label{fig:setupGad} }
\end{figure}

A setup implementing the GAD channel is shown in Fig.\ref{fig:setupGad}.
Consider a pure initial state $\ket{\psi}=\alpha\ket{H}+\beta\ket{V}$.
In the sequence that follows, the state of the photon is written after
each optical element:
\begin{itemize}
\item Beam displacer $BD_{1}$: creates a path qubit by displacing only
the horizontal polarization. Two parallel paths, up and down, come
out of this element. 
\begin{equation}
a\ket{H}+b\ket{V}\longrightarrow a\ket{H}\ket{u}+b\ket{V}\ket{d}
\end{equation}
\item Half wave plate set to an angle $\phi/2$ (the angle is chosen such
that $\sin{\phi}^{2}=p$) followed by half wave plates at $45^{\circ}$
(down path) and $0^{\circ}$ (up path, only compensates the optical
path difference caused by the other plate). 
\begin{equation}
a\ket{H}\ket{u}+b\ket{V}\ket{d}\longrightarrow a(\cos{\phi}\ket{H}-\sin{\phi}\ket{V})\ket{u}+b(-\cos{\phi}\ket{H}+\sin{\phi}\ket{V})\ket{d}
\end{equation}
\item Polarizing beam splitter ($PBS_{1}$): it creates two new paths, which
we call short ($s$) and long ($l$), by transmitting horizontally
polarized photons while reflecting the vertically polarized ones.
Because of the short coherence length of the heralded photons, the
optical path difference is enough to cause decoherence between the
two paths. This decoherence is already being considered when we attribute
orthogonal states to each path, which are traced out at the end of
the unbalanced interferometer. 
\begin{equation}
\longrightarrow a(\cos{\phi}\ket{H}\ket{s}-\sin{\phi}\ket{V}\ket{l})\ket{u}+b(-\cos{\phi}\ket{H}\ket{s}+\sin{\phi}\ket{V}\ket{l})\ket{d}
\end{equation}
\item Half wave plates at angle $\theta/2$ such that $\cos^{2}{\theta}=\eta$
\begin{align}
\longrightarrow a\left[\cos{\phi}(\cos{\theta}\ket{H}+\sin{\theta}\ket{V})\ket{s}+\sin{\phi}\ket{V}\ket{l}\right]\ket{u}\\
+b\left[-\cos{\phi}\ket{H}\ket{s}+\sin{\phi}(\sin{\theta}\ket{H}-\cos{\theta}\ket{V})\ket{l}\right]\ket{d}\nonumber 
\end{align}
\item Half wave plate at $45^{\circ}$ only in the long path 
\begin{align}
\longrightarrow a\left[\cos{\phi}(\cos{\theta}\ket{H}+\sin{\theta}\ket{V})\ket{s}+\sin{\phi}\ket{H}\ket{l}\right]\ket{u}\\
+b\left[-\cos{\phi}\ket{H}\ket{s}+\sin{\phi}(\sin{\theta}\ket{V}-\cos{\theta}\ket{H})\ket{l}\right]\ket{d}\nonumber 
\end{align}
\item Polarizing beam splitter ($PBS_{2}$): although there are only two
output ports after the PBS, we add two extra outputs to take into
account the path difference decoherence. Let us make the following
associations $\ket{H}\ket{s}\rightarrow\ket{H}\ket{p0}$, $\ket{V}\ket{s}\rightarrow\ket{V}\ket{p1}$,
$\ket{H}\ket{l}\rightarrow\ket{H}\ket{p2}$ and $\ket{V}\ket{l}\rightarrow\ket{V}\ket{p3}$.
The real paths are indicated in Fig.\ref{fig:setupGad}. 
\begin{multline}
\longrightarrow a\left[\cos{\phi}(\cos{\theta}\ket{H}\ket{p0}+\sin{\theta}\ket{V}\ket{p1})+\sin{\phi}\ket{H}\ket{p2}\right]\ket{u}\\
+b\left[-\cos{\phi}\ket{H}\ket{p0}+\sin{\phi}(\sin{\theta}\ket{V}\ket{p3}-\cos{\theta}\ket{H}\ket{p2})\right]\ket{d}
\end{multline}
\begin{multline}
=\left[a\cos{\phi}\cos{\theta}\ket{H}\ket{u}-b\cos{\phi}\ket{H}\ket{d}\right]\ket{p0}+a\cos{\phi}\sin{\theta}\ket{V}\ket{u}\ket{p1}\\
+\left[a\sin{\phi}\ket{H}\ket{u}-b\sin{\phi}\cos{\theta}\ket{H}\ket{d}\right]\ket{p2}+b\sin{\phi}\sin{\theta}\ket{V}\ket{d}\ket{p3}
\end{multline}
\item Half wave plates at $45^{\circ}$ and beam displacers: in order to
assure the right reduction on the final coherence we coherently recombine
the polarization components on paths 0 and 2 by using the wave plates
and beam displacers $BD_{2}$ and $BD_{3}$. To compensate the optical
path difference created by $BD_{1}$ only the down path is displaced.
\begin{multline}
\longrightarrow\left[a\cos{\phi}\cos{\theta}\ket{V}\ket{u}-b\cos{\phi}\ket{H}\ket{u}\right]\ket{p0}+a\cos{\phi}\sin{\theta}\ket{H}\ket{u}\ket{p1}\\
+\left[a\sin{\phi}\ket{V}\ket{d}-b\sin{\phi}\cos{\theta}\ket{H}\ket{d}\right]\ket{p2}-b\sin{\phi}\sin{\theta}\ket{V}\ket{d}\ket{p3}
\end{multline}
\item Recombining the paths all together incoherently is equivalent to tracing
out the path degrees of freedom, what leads to 
\begin{multline}
|a|^{2}\left[\sin^{2}{\phi}\ketbra{V}{V}+\cos^{2}{\phi}(\cos^{2}{\theta}\ketbra{V}{V}+\sin^{2}{\theta}\ketbra{H}{H})\right]\\
+|b|^{2}\left[\cos^{2}{\phi}\ketbra{H}{H}+\sin^{2}{\phi}(\cos^{2}{\theta}\ketbra{H}{H}+\sin^{2}{\theta}\ketbra{V}{V})\right]\\
-\cos{\theta}(ab^{*}\ketbra{V}{H}+a^{*}b\ketbra{H}{V}).
\end{multline}
The last expression is equivalent to \eqref{result} with the given
relations for the wave plate angles if we invert the polarization.
Instead of adding more HWP to correct this, we can take it into consideration
when setting the angles of the pair of HWP and QWP we use before $PBS_{3}$
to make measurements on polarization. In this scheme, the last PBS
serves not only to trace out some of the path information but also
to perform the measurement. The remaining path information is erased
by $BD_{4}$.
\end{itemize}

\subsection{Accessing the environment state}

First of all, it is necessary to identify what the environment is.
If the combination system plus environment is considered as a closed
system, then its evolution must be given by a unitary transformation.
We can consider a separable initial state $\rho_{se}$ that evolves
through the global unitary $U$: 
\begin{equation}
\rho_{se}=\rho_{s}\otimes\rho_{e}\longrightarrow U\rho_{s}\otimes\rho_{e}U^{\dagger}.
\end{equation}
The transformation over the system state is recovered by tracing out
the environment: 
\begin{equation}
\rho_{s}\longrightarrow\sum_{k}\bra{k}U\rho_{s}\otimes\rho_{e}U^{\dagger}\ket{k},\label{gen_map}
\end{equation}
where $\{\ket{k}\}$ is a orthonormal basis of the environment Hilbert
space.

If $\rho_{e}$ is a pure state, say $\rho_{e}=\ketbra{0}{0}$, then
we identify the Kraus operators with $E_{k}=\bra{e_{k}}U\ket{0}$
and the allowed number of independent Kraus operators is equal to
the environment Hilbert space dimension. For example, a GAD channel
would require a four-dimensional environment. Instead of using this,
let us recall the existence of a isomorphism between CPTP maps for
qudit states and density matrices of two qudits \cite{Horodecki1999}.
This isomorphism does not exist for all qubit channels \cite{Terhal1999},
but particularly for the GAD it does. For this isomorphism to hold
we need to allow the second qudit to be in a mixed initial state.

Thinking about the physical interpretation of the GAD channel, it
makes sense to consider the environment in a mixed initial state,
it is actually desired to be in a thermal state whose temperature
determines the channel parameter $p$. Since a thermal state includes
an infinite number of states for the reservoir, we can identify a
qubit whose state $\ket{0}$ corresponds to the ground state of the
bath and all excited states are encoded in qubit state $\ket{1}$
\begin{equation}
\rho_{e}=\frac{e^{-\beta\mathcal{H}_{e}}}{\sum_{k}e^{-\beta\mathcal{E}_{k}}}=\frac{e^{-\beta\mathcal{E}_{0}}\ketbra{0}{0}}{\sum_{k}e^{-\beta\mathcal{E}_{k}}}+\frac{\sum_{j}e^{-\beta\mathcal{E}_{j}}\ketbra{j}{j}}{\sum_{k}e^{-\beta\mathcal{E}_{k}}}\longrightarrow p\ketbra{0}{0}+(1-p)\ketbra{1}{1},
\end{equation}
 with $p=e^{-\beta\mathcal{E}_{0}}/{\sum_{k}e^{-\beta\mathcal{E}_{k}}}$.
The energy values $\mathcal{E}_{k}$ are the eigenvalues of the environment
free Hamiltonian $\mathcal{H}_{e}$ and $\beta^{-1}$ is its temperature.

Thus we can consider the reservoir as being a qubit and the initial
state of larger system (system and reservoir) as $\rho_{se}=\rho_{s}\otimes\left[p\ketbra{0}{0}+(1-p)\ketbra{1}{1}\right]$.
Plugging this state into equation (\ref{gen_map}) leads to 
\begin{equation}
\rho_{S}\longrightarrow\sum_{k=0,1}\left[\sqrt{p}\bra{k}U\ket{0}\right]\rho_{S}\left[\sqrt{p}\bra{0}U^{\dagger}\ket{k}\right]+\left[\sqrt{1-p}\bra{k}U\ket{1}\right]\rho_{S}\left[\sqrt{1-p}\bra{1}U^{\dagger}\ket{k}\right].
\end{equation}
This is a CPTP map with Kraus operators 
\begin{equation}
E_{1}=\sqrt{p}\bra{0}U\ket{0}\qquad E_{2}=\sqrt{p}\bra{1}U\ket{0}\qquad E_{3}=\sqrt{1-p}\bra{1}U\ket{1}\qquad E_{4}=\sqrt{1-p}\bra{0}U\ket{1}.\label{UKraus}
\end{equation}

Let us consider the following map (the first entry is the system and
the second one is the reservoir) 
\begin{eqnarray}
\ket{00} & \longrightarrow & \ket{00}\nonumber \\
\ket{01} & \longrightarrow & \sqrt{\eta}\ket{01}+\sqrt{1-\eta}\ket{10}\label{map}\\
\ket{10} & \longrightarrow & -\sqrt{1-\eta}\ket{01}+\sqrt{\eta}\ket{10}\nonumber \\
\ket{11} & \longrightarrow & \ket{11}\nonumber 
\end{eqnarray}
associated to the two-qubit unitary transformation 
\begin{equation}
U=\left[\begin{array}{cccc}
1 & 0 & 0 & 0\\
0 & \sqrt{\eta} & -\sqrt{1-\eta} & 0\\
0 & \sqrt{1-\eta} & \sqrt{\eta} & 0\\
0 & 0 & 0 & 1
\end{array}\right].
\end{equation}
Using \eqref{UKraus} we see that this map produces the GAD channel.

In our experiment, we would like to be able to monitor also the environment,
tracing out the system. In order to do so, first we need to identify
what are the optical states representing each state of the environment
in our description above. It cannot be the up and down paths since
they were inserted only as ancillaries allowing the amplitude damping
to be realized for the two polarization components. It remains to
consider the output ports of the PBS's. If we recombined the up and
down paths after $PBS_{1}$ we would find (remember that short and
long paths do not recombine coherently) 
\begin{equation}
(\alpha\ket{V}-\beta\ket{H})(\alpha^{*}\bra{V}-\beta^{*}\bra{H})\otimes(\cos^{2}\phi\ketbra{s}{s}+\sin^{2}\phi\ketbra{l}{l}),
\end{equation}
which is the initial state we want (after half wave plate transformations)
and path $\ket{l}$ ($\ket{s}$) is the state $\ket{0}$ ($\ket{1}$)
of the environment. Now we can analyze the output ports of $PBS_{2}$.
For that aim we consider the extreme cases $p=0$ ($\phi=0$ and reservoir
initially in excited state) and $p=1$ ($\phi=90^{\circ}$ and reservoir
initially in ground state). Each joint basis vector evolves as follows:
\begin{eqnarray}
\ket{H}\ket{l} & \longrightarrow & \ket{H}\ket{p2}\nonumber \\
\ket{H}\ket{s} & \longrightarrow & \sqrt{\eta}\ket{H}\ket{p0}+\sqrt{1-\eta}\ket{V}\ket{p1}\label{mapexp}\\
\ket{V}\ket{l} & \longrightarrow & -\sqrt{1-\eta}\ket{V}\ket{p3}+\sqrt{\eta}\ket{H}\ket{p2}\nonumber \\
\ket{V}\ket{s} & \longrightarrow & \ket{H}\ket{p0}\nonumber 
\end{eqnarray}

Notice that the final polarization generated from the initial states
that are vertically polarized is inverted, it is corrected by the
wave plates before the beam displacers. Comparing (\ref{map}) with
(\ref{mapexp}) leads to the conclusion that the paths coming out
from the up port of $PBS_{2}$ are related to reservoir state $\ket{0}$.
In the same way, the right paths represent reservoir state $\ket{1}$.
Measuring the populations on each reservoir states can be realized
by projecting the photons in each output of $PBS_{2}$ regardless
of the polarization. To obtain the coherence between environment states
$\ket{0}$ and $\ket{1}$, we first notice that the only terms which
can produce coherence are those coming from the same initial reservoir
state (because they start in an incoherent superposition) and with
the same final polarization state (because we are tracing out the
polarization in this case). This can be done by correcting the polarization
of the initially vertical state (down path) before $PBS_{2}$. This
makes the final polarization components associated with different
reservoir states coming from the same path (short or long) to recombine
in one of the beam displacers $BD_{2}$ and $BD_{3}$.

\section{General protocol\label{sec:General-protocol}}

The setup proposed for the GAD channel can be extended to implement
a more general class of maps. It is clear that if one observes that
the same angle $\theta$ is set in the HWP unitaries in up short and
down long paths, also no transformation is carried out for down-short
and up-long paths (Fig. \ref{fig:setupGad}). In fact, slightly changing
this setup it is possible to implement any qubit channel of a large
class as we argue in this section.

Recall the parameterization for qubit maps given by Eq. \eqref{eq:mapbloch}.
It would be more convenient if the matrix $T$ were diagonal. In fact,
it can be diagonalized through its singular value decomposition (SVD)
which asserts that any real $n\times n$ matrix $T$ can be written
as the product 
\begin{equation}
T=R_{1}DR_{2}^{T}
\end{equation}
of two rotation matrices $R_{1}$ and $R_{2}^{T}$ and a diagonal
matrix $D$ \footnote{Actually the SVD is more general and applies for any rectangular matrix.
Also, it states that any matrix can be written as the product of two
orthogonal matrices and a diagonal positive semidefinite matrix, but
as any orthogonal matrices are either a rotation or a product of a
rotation by a inversion, our statement is correct if we disregard
the positivity of $D$.}, $^{T}$ denotes transposition. Moreover, a rotation of the Bloch
sphere vector is the effect of a unitary operation over the density
matrix. That said, the map can be rewritten as 
\begin{equation}
\Lambda[\rho]=U_{1}\left(\Lambda_{\mathbf{t}^{\prime},D}\left[U_{2}\rho U_{2}^{\dagger}\right]\right)U_{1}^{\dagger},
\end{equation}
where $U_{1}$ and $U_{2}$ are the unitary operators associated with
the rotations $R_{1}$ and $R_{2}$, respectively, and $\Lambda_{\mathbf{t}^{\prime},D}$
is the map parameterized by 
\begin{equation}
\mathbf{t}^{\prime}=R_{1}^{T}\mathbf{t}\qquad T^{\prime}=D=\left(\begin{array}{ccc}
\lambda_{1} & 0 & 0\\
0 & \lambda_{2} & 0\\
0 & 0 & \lambda_{3}
\end{array}\right).\label{eq:diago}
\end{equation}
The unitary transformations do not alter complete positivity, consequently,
any analysis made over $\Lambda_{\mathbf{t}^{\prime},D}$ can be automatically
extended to $\Lambda$. The image of such a channel when applied to
the Bloch sphere vectors is the ellipsoid 
\begin{equation}
\left(\frac{x_{1}^{\prime}-t_{1}^{\prime}}{\lambda_{1}}\right)^{2}+\left(\frac{x_{2}^{\prime}-t_{2}^{\prime}}{\lambda_{2}}\right)^{2}+\left(\frac{x_{3}^{\prime}-t_{3}^{\prime}}{\lambda_{3}}\right)^{2}\leq1,\label{eq:ellipsoid}
\end{equation}
which must be contained inside the Bloch sphere for positivity preservation,
a necessary condition for this is $|t_{k}^{\prime}|+|\lambda_{k}|\leq1$,
$k=1,2,3$. An example of the image of a qubit map is shown in Fig.
\ref{fig:The-Bloch-sphere}, this map takes all points inside the
Bloch sphere to points inside the ellipsoid. Being contained inside
the Bloch sphere does not guarantees complete positivity.

\begin{figure}[h]
\begin{centering}
\includegraphics[width=0.45\textwidth]{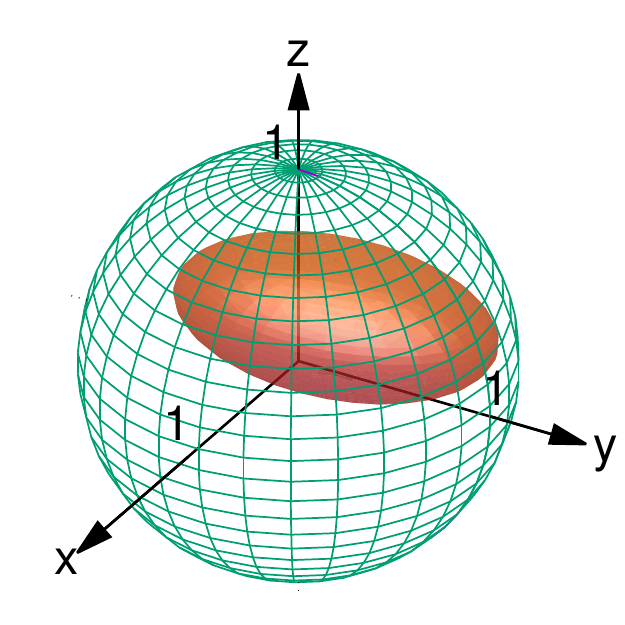}
\par\end{centering}
\caption{The Bloch sphere is represented in green. Inside of it, an example
of ellipsoid containing all the image points of a map as given by
Eq. \eqref{eq:ellipsoid}.\label{fig:The-Bloch-sphere}}

\selectlanguage{american}%
\end{figure}

To find what is the condition on the channel's parameters in order
to assure complete positivity, one may recall the result of Choi \cite{choi1975}
that a qubit map is CP if and only if the matrix
\begin{equation}
\beta(\Lambda)=\left(\mathbb{1}_{2}\otimes\Lambda\right)\left[\ket{\psi^{+}}\bra{\psi^{+}}\right]
\end{equation}
is positive semidefinite, where $\ket{\psi^{+}}=\frac{\ket{00}+\ket{11}}{\sqrt{2}}$
is one of the Bell states of two qubits. This matrix is a density
matrix in a Hilbert space of dimension 4 and as such can be put in
the block form 
\begin{equation}
\beta(\Lambda)=\left(\begin{array}{cc}
A & C\\
C^{\dagger} & B
\end{array}\right)
\end{equation}
with the $2\times2$ matrices $A$, $B$ and $C$. The positivity
condition requires that $A\ge0$, $B\ge0$ and $C=A^{\frac{1}{2}}R\,B^{\frac{1}{2}}$
for some contraction $R$ \cite{ruskai2002}. By definition a contraction
satisfies $\mathbb{1}-RR^{\dagger}\ge0$. The extreme points of the
set of CPTP maps are those for which the equality is attained, which
means that for these maps the contraction is actually a unitary matrix.
Now, using the SVD for a contraction, one gets
\begin{equation}
R=V\left(\begin{array}{cc}
\cos\theta_{1} & 0\\
0 & \cos\theta_{2}
\end{array}\right)W^{\dagger}=\frac{1}{2}V\left(\begin{array}{cc}
e^{i\theta_{1}} & 0\\
0 & e^{i\theta_{2}}
\end{array}\right)W^{\dagger}+\frac{1}{2}V\left(\begin{array}{cc}
e^{-i\theta_{1}} & 0\\
0 & e^{-i\theta_{2}}
\end{array}\right)W^{\dagger}.\label{eq:contraction}
\end{equation}
 Once $V$ and $W$ are unitary, Eq. \eqref{eq:contraction} states
that any contraction is the sum of two unitary matrices, each one
corresponding to one extreme channel. Thus, because of the linearity
of the map and the one-to-one association between map and $\beta(\Lambda)$,
it implies on the \textbf{\textit{Theorem 14}} of \cite{ruskai2002}:
\textit{Any CPTP map of a qubit can be written as the convex combination
of two extreme points of the set of all CPTP maps}. Mathematically,
this means that 
\begin{equation}
\Lambda[\rho]=p\mathcal{E}_{1}[\rho]+(1-p)\mathcal{E}_{2}[\rho],\label{eq:decomp}
\end{equation}
where $\mathcal{E}_{1}$ and $\mathcal{E}_{2}$ are extreme maps and
$0\leq p\leq1$.

The above decomposition is quite useful and has been widely used in
qubit channel simulation \cite{wang2013,Luhe2017,McCutcheon2018}.
What makes it suitable for practical purposes is that any extreme
channel has its diagonal form \eqref{eq:diago} with only one $t_{k}$
component nonnull, which can always be chosen as the third one. Moreover
these channels admit the parameterization 
\begin{equation}
\mathbf{t}=(0,0,\sin u\sin v)\quad T=\left(\begin{array}{ccc}
\cos u & 0 & 0\\
0 & \cos v & 0\\
0 & 0 & \cos u\cos v
\end{array}\right),
\end{equation}
which gives the two Kraus operators
\begin{equation}
E_{1}=\left(\begin{array}{cc}
\cos\alpha & 0\\
0 & \cos\beta
\end{array}\right)\qquad E_{2}=\left(\begin{array}{cc}
0 & \sin\beta\\
\sin\alpha & 0
\end{array}\right),\label{eq:krausextreme}
\end{equation}
with $\alpha=u-v$ , $\beta=u+v$, $u\in[0,2\pi)$ and $v\in[0,\pi)$.
Thus an extreme channel resembles a generalized amplitude damping
in which the probability of exciting the ground state ($|\sin\alpha|^{2}$)
is different from the probability of decay of the excited state $(|\sin\beta|^{2})$,
and these two processes occur in a coherent way.

All the aforementioned works using the extreme-channel decomposition
for a channel simulation use it directly to try to find the parameters
of the decomposition that fits the simulated channel. The problem
in doing this is that the decomposition \eqref{eq:decomp} has 17
free parameters: the convex parameter $p$, the four parameters of
the extreme maps in their diagonal form, plus the 6 parameters of
the unitaries used to diagonalize each of the extreme maps. Thus the
problem is over complicated, since the solution must be done before
the quantum simulation in a classical computer which requires computational
power that grows with the number of parameters. Obviously, the two
extreme channels in the decomposition are related somehow as can be
seen from Eq. \eqref{eq:contraction}. It is left for a future work
to simplify this decomposition in order to reach the number of 12
parameters, as is required for characterizing any qubit channel. Furthermore,
in the already mentioned implementations of qubit quantum channels
the convex combination is carried out classically by selecting which
extreme channel is going to be realized in each round. Contrary, our
initial goal was to realize any qubit channel in one shot. It is still
work in progress, so far we can already do this for a particular set
of quantum maps.

Let us consider a restricted class of maps for which the two extreme
points participating on its decomposition \eqref{eq:decomp} are diagonal
in the sense of \eqref{eq:diago} in the same basis or that their
SVD differs by at most one rotation. Mathematically, we are considering
maps of the form
\begin{equation}
\Lambda\left[\rho\right]=U_{3}\left(p\mathcal{E}_{1}^{\prime}\left[U_{1}\rho U_{1}^{\dagger}\right]+(1-p)U_{2}\mathcal{E}_{2}^{\prime}\left[U_{1}\rho U_{1}^{\dagger}\right]U_{2}^{\dagger}\right)U_{3}^{\dagger},\label{eq:restrict}
\end{equation}
the $\mathcal{E}_{i}$ are extreme maps whose diagonal form is denoted
by $\mathcal{E}_{i}^{\prime}$, and $U_{i}$ are unitary operators.
Although it possibly does not contain the entire set of qubit maps,
many interesting examples can be represented in this way, as is shown
in the next section.

The proposed setup for the implementation of maps of kind \eqref{eq:restrict}
is shown in Fig. \ref{Fig:setupchannel}. A realization of such a
map starts with the unitary transformation $U_{1}$ and ends also
with a unitary transformation $U_{3}$. Regarding the qubit as the
polarization of single photons, these operations are performed in
the manner described in Sec. \ref{sec:Unitary-transformations}. In
order to have all four Kraus operators implemented in each shot of
the experiment, one can proceed as for the GAD channel, transferring
the polarization state to a path degree of freedom using a BD and
reseting the polarization to state $\ket{H}$ (pink box in Fig.\ref{Fig:setupchannel}).
The path works as a qubit ancilla with possible states $\ket{d},\:\ket{u}$,
for down and up paths, respectively. The qubit state at this stage
becomes 
\begin{equation}
\left(a\ket{H}+b\ket{V}\right)\ket{d}\rightarrow\ket{H}\left(a\ket{u}+b\ket{d}\right),
\end{equation}
where $a$ and $b$ are the coefficients of the polarization state
after the unitary $U_{1}$. We are assuming the initial state to be
pure without any lost of generality. The polarization now becomes
free to be used as a control for which extreme channel is going to
be applied with the right probability. The probability is controlled
by the angle $\phi$ of a HWP set such that $\cos^{2}2\phi=p$. When
the photon passes through a PBS, two new paths are created, one for
each extreme channel (yellow box in Fig.\ref{Fig:setupchannel}).
Again this new path degree of freedom works as a qubit ancilla with
the two states $\ket{s},\:\ket{l}$, for short and long paths. The
coherence between the two extreme-maps action is removed by the optical
path difference between long and short paths which is greater than
the coherence length of the photons. 

\begin{figure}[h]
\centering{}\includegraphics[width=0.95\textwidth]{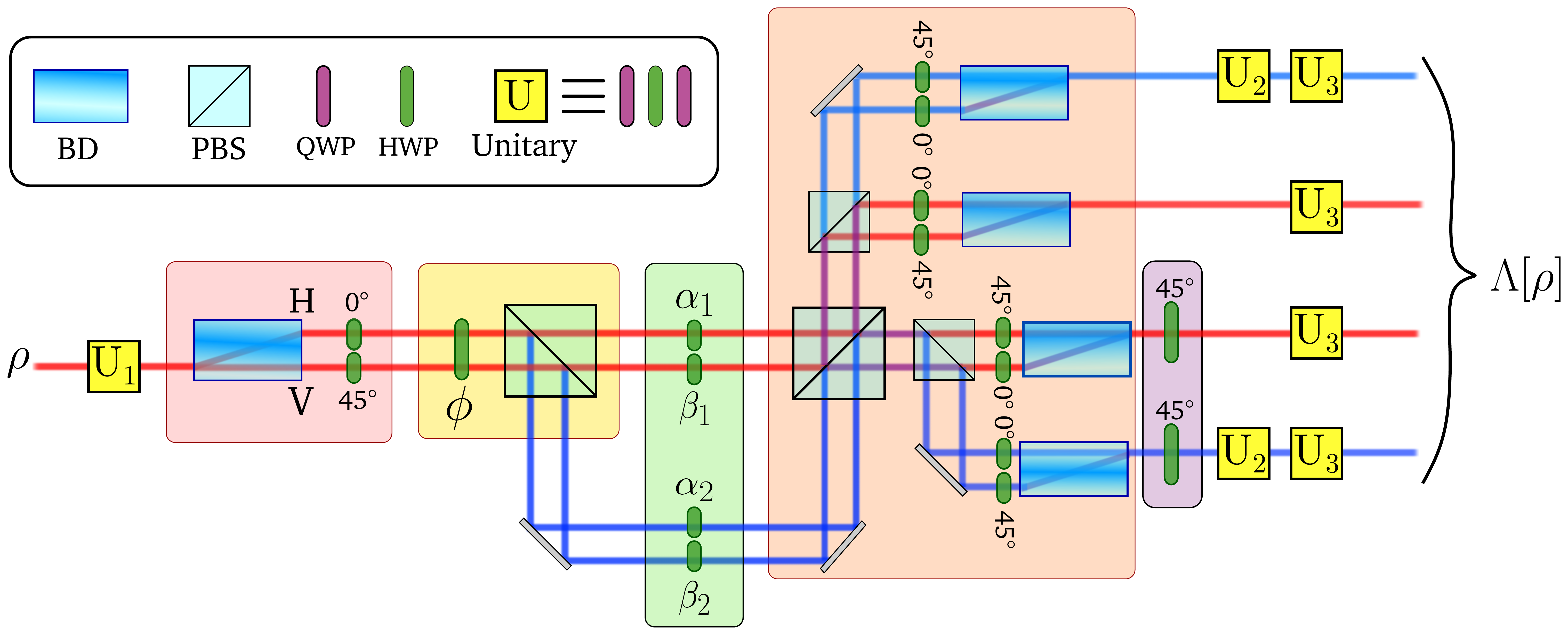}\caption{Setup for the channel simulation. A photon in a polarization state
$\rho$ ends up in a polarization state $\Lambda[\rho]$ after passing
through the setup if the final path information is erased. The beam
colors are different only to indicate which beam is coming from the
short and long paths in the unbalanced interferometer. \label{Fig:setupchannel}}
\end{figure}

Now let us consider the action of an extreme map in its diagonal basis
in Bloch sphere, when its Kraus operators are given by Eq. \eqref{eq:krausextreme}
with angles $\alpha_{1}$ and $\beta_{1}$. The Kraus operators $E_{1}$
and $E_{2}$ transform a pure state, respectively , as 
\begin{align}
a\ket{0}+b\ket{1} & \rightarrow a\cos\alpha_{1}\ket{0}+b\cos\beta_{1}\ket{1}\label{eq:kreus}\\
a\ket{0}+b\ket{1} & \rightarrow b\sin\beta_{1}\ket{0}+a\sin\alpha_{1}\ket{1},\nonumber 
\end{align}
the map itself being the convex sum of this two non-normalized states.
Each Kraus operator alone keeps the coherence between the two basis
states. To implement this map over the photon qubit, we use two HWPs
set to $\alpha_{1}/2$ and $\beta_{1}/2$, positioned in the up and
down paths, respectively, in the short arm of the interferometer.
The resulting transformation is given by
\begin{equation}
\ket{H}\left(a\ket{u}+b\ket{d}\right)\rightarrow a\left(\cos\alpha_{1}\ket{H}+\sin\alpha_{1}\ket{V}\right)\ket{u}+b\left(\cos\beta_{1}\ket{H}+\sin\beta_{1}\ket{V}\right)\ket{d}.
\end{equation}
Notice that the terms that should be coherently recombined to recover
\eqref{eq:kreus} are those in orthogonal path states but with the
same polarization. The orthogonal polarizations are separated in a
PBS and then the up and down paths are recombined in BDs. This recombination
is coherent if there is no optical-path difference between the two
path states which is ensured by the waveplates before the BD which
shifts the down path up, equaling the optical-path of the up state.
Analogously, the second extreme map of the composition is realized
in the long arm of the interferometer with HWPs set to $\alpha_{2}/2$
and $\beta_{2}/2$.

After the BDs recombination the unitary transformation $U_{2}$ is
applied to the long arm photons and the unitary transformation $U_{3}$
is applied to all path states. If the four out-coming paths are traced
out, the polarization state is exactly the one given by the action
of map \eqref{eq:restrict}. To trace them out it is enough to detect
the photons with a large aperture detector, such that all photons
are detected regardless their path states. Although, if the photons
are intended to be used for a further purpose, then one should be
able to gather all the output paths together in a single spatial mode.
A realistic experimental design for this is left for a future work,
possibly using devices that are reflective on one side and transmissive
when light is incident on the other side, as the one proposed in \cite{clikeman}.
Another option for our setup would be to use fiber couplers, which
acts as a beam splitter for two input fiber paths , taking them to
two output paths, each of them being the 50:50 combination of the
inputs.\textcolor{red}{{} }Without tracing out the paths, according
to the interpretation of Eq. \eqref{eq:unrevealing}, what we have
is one unrevealed measurement result of the environment in each output
path, since each path results from one Kraus operator application.

The circuit representation of the process just described is shown
in Fig. \ref{fig:Circuit-representation-of}. The protocol is not
the most efficient possible, as it is not intended to be. It requires
two ancilla qubits and many two and even three-qubits operations.
Still, because of the features of our system it is feasible, since
the ancillary qubits are degrees of freedom of the same system that
provides the main qubit of the computation, so controlled multi-qubit
operations can be implemented deterministically with common optical
elements. 

\begin{figure}[h]
\centering{}\includegraphics[width=0.9\textwidth]{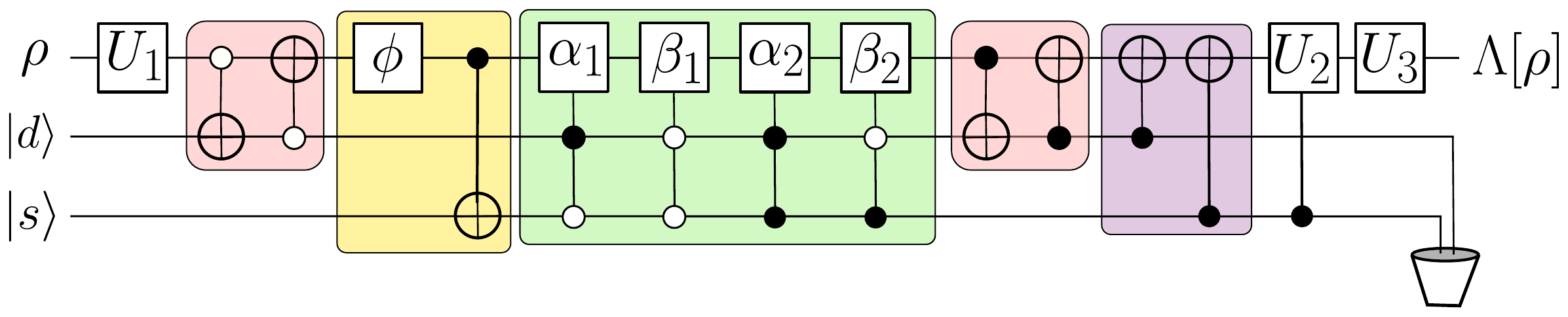}\caption{Circuit representation of the channel simulation protocol. The colors
match the ones used in the setup (Fig. \ref{Fig:setupchannel}) to
indicate equivalent steps. The two ancillary qubits are initialized
in their ground states. $\oplus$ is a ``NOT'' operation. Single
qubit operations can be controlled by the excited $(\bullet)$ or
ground $(\circ)$ state of another qubit. A sequence of two C-NOTs,
like the ones in the pink boxes, are equivalent to a SWAP operation.
The white boxes are either arbitrary unitary operators or the transformation
of a HWP. In the end the two ancillas are discarded. \label{fig:Circuit-representation-of}}
\end{figure}

\subsection{Examples}

In this section some examples of channels are shown together with
the proper parameter applicable for their simulation. This examples
are interesting since they represent much of the intuitive effects
one can think of taking place over a qubit state. Also, via change
of basis, infinitely many other channels can be obtained from these
textbook examples \cite{chuang2000}.

\subsubsection*{Bit flip channel}

The bit-flip channel is given as the action of the two Kraus operators
\begin{equation}
E_{1}=\sqrt{p}\left(\begin{array}{cc}
1 & 0\\
0 & 1
\end{array}\right)\qquad E_{2}=\sqrt{1-p}\left(\begin{array}{cc}
0 & 1\\
1 & 0
\end{array}\right),
\end{equation}
with $0\leq p\leq1$. Thus, a qubit passing through a bit-flip channel
has probability $p$ of remaining in the same state and with probability
$1-p$ it will flip from $\ket{0}$ to $\ket{1}$ and vice-versa.
There are at least two ways of implementing this channel writing it
as combination of extreme maps (Eq.\eqref{eq:restrict}), both of
them without the necessity of unitary transformations. One could either
chose $\phi=0^{\circ}$ and $\alpha_{1}=\beta_{1}=\cos^{-1}\sqrt{p}$
, or $2\phi=\cos^{-1}p$ and $\alpha_{1}=\beta_{1}=0^{\circ}$ and
$\alpha_{2}=\beta_{2}=90^{\circ}$.

\subsubsection*{Phase flip channel}

The phase-flip channel has the two Kraus operators
\begin{equation}
E_{1}=\sqrt{p}\left(\begin{array}{cc}
1 & 0\\
0 & 1
\end{array}\right)\qquad E_{2}=\sqrt{1-p}\left(\begin{array}{cc}
1 & 0\\
0 & -1
\end{array}\right),
\end{equation}
and corresponds to a $\pi$ phase shift between the two computational
basis vectors with probability $1-p$. This channel can be realized
by setting $2\phi=\cos^{-1}p$ , $\alpha_{1}=\beta_{1}=\alpha_{2}=0^{\circ}$
and $\beta_{2}=180^{\circ}$.

\subsubsection*{Depolarizing channel}

The effect of a depolarizing channel is to let the qubit state untouched
with probability $1-p$ and turn it into a complete mixture with probability
$p$. Its transformation is given as 
\begin{equation}
\Lambda[\rho]=\frac{p}{2}\mathbb{1}+(1-p)\rho,
\end{equation}
or in terms of Pauli matrices
\begin{equation}
\Lambda[\rho]=\left(1-\frac{3p}{4}\right)\rho+\frac{p}{4}\left(\sigma_{x}\rho\sigma_{x}+\sigma_{y}\rho\sigma_{y}+\sigma_{z}\rho\sigma_{z}\right).
\end{equation}
The last expression is useful for our purposes because it explicit
the operator-sum decomposition of the depolarizing channel. Before
proceeding to the simulation parameters, let us consider a slightly
more general channel
\begin{equation}
\Lambda[\rho]=p_{0}\rho+p_{1}\sigma_{x}\rho\sigma_{x}+p_{2}\sigma_{y}\rho\sigma_{y}+p_{3}\sigma_{z}\rho\sigma_{z},\label{eq:depolgeral}
\end{equation}
with $\sum_{i=0}^{3}p_{i}=1$. This channel class actually comprises
all the previous cases presented. It is possible to represent this
transformation in the form of Eq. \eqref{eq:restrict} with two extreme
channels diagonalized in the same basis. In such a convex combination,
the identity and $\sigma_{x}$ compose the first extreme map, while
$\sigma_{y}$ and $\sigma_{z}$ are the Kraus operators of the second
one. For the equality between Eqs. \eqref{eq:restrict} and \eqref{eq:depolgeral}
to hold, the parameters must obey
\begin{align}
p=p_{0}+p_{1}\qquad\alpha_{1}=\beta_{1}=\tan^{-1}\sqrt{\frac{p_{1}}{p_{0}}}\\
\alpha_{2}=\beta_{2}-180^{\circ}=\tan^{-1}\sqrt{\frac{p_{2}}{p_{3}}}\nonumber 
\end{align}

\subsubsection*{Generalized amplitude damping channel}

Finally, the largely discussed GAD channel whose Kraus operators are
given in Eq. \eqref{eq:GADKraus} can also be seen as a combination
of extreme channels with the parameters choice
\begin{equation}
\alpha_{1}=\beta_{2}=0\qquad\beta_{1}=\alpha_{2}=\cos^{-1}\sqrt{\eta}.
\end{equation}

\section{Quantum process tomography\label{sec:Quantum-process-tomography}}

We would like to certify that indeed our setup is performing the desired
operation we designed it to do. If we have access to a trustful source
of input states and we are also able to faithfully determine the output
state after the channel, then we can find out what map produced that
resulting transformation and know if it is the desired one. This method
is named quantum process tomography (QPT) \cite{Chuang1997}. 

Consider a set of pure states $\{\ket{\psi_{i}}\}_{i=1}^{N}$ such
that the associated density matrices set $\{\rho_{i}\}_{i=1}^{N}$
forms a basis for the $d\times d$ matrices, $d$ being the dimension
of the Hilbert space of the system. This implies that the set has
$N=d^{2}$ linearly independent elements. If we prepare each state
of the set, sending them through the channel and for each one a quantum
state tomography is realized, so the set of states $\{\rho_{i}^{\prime}=\Lambda[\rho_{i}]\}_{i=1}^{N}$
is obtained, then the channel is determined since for any other state
we can write
\begin{equation}
\rho=\sum_{i=1}^{N}\lambda_{i}\rho_{i}\rightarrow\Lambda[\rho]=\sum_{i=1}^{N}\lambda_{i}\rho_{i}^{\prime}.
\end{equation}

In order to determine the Kraus operators of the channel, one needs
also to set a basis for these operators, let us say $\{\tilde{E}_{n}\}$.
Any Kraus operator can be expressed as a sum $E_{k}=\sum_{n}e_{kn}\tilde{E}_{n}$.
In this operator basis the map becomes
\begin{equation}
\Lambda[\rho]=\sum_{n,m}\chi_{nm}\tilde{E}_{n}\rho\tilde{E}_{m}^{\dagger},\label{eq:QPT}
\end{equation}
being completely described by the numbers $\chi_{nm}=\sum_{k}e_{kn}e_{km}^{*}$.
Now one can look at the action of this channel upon the basis matrices
\begin{equation}
\Lambda[\rho_{j}]=\sum_{k}\lambda_{jk}\rho_{k},\label{eq:QPT2}
\end{equation}
the $\lambda_{jk}$'s are numbers experimentally determined from the
quantum state tomography. On the other hand, one can also write it
using \eqref{eq:QPT} as 
\begin{equation}
\Lambda[\rho_{j}]=\sum_{m,n,k}\chi_{nm}\beta_{jk}^{nm}\rho_{k},\label{QPT3}
\end{equation}
where we have defined $\beta_{jk}^{nm}$ from $\tilde{E}_{n}\rho_{j}\tilde{E}_{m}^{\dagger}=\sum_{m,n,k}\beta_{jk}^{nm}\rho_{k}$.
The numbers $\beta_{jk}^{nm}$ are known since they are calculated
using only the states and operators bases. By comparing \eqref{eq:QPT2}
and \eqref{QPT3} we get 
\begin{equation}
\sum_{m,n}\chi_{nm}\beta_{jk}^{nm}=\lambda_{jk},\label{QPT4}
\end{equation}
which determines operator-sum representation for the channel from
the experimental data $\lambda_{jk}$. 

For the case of a qubit, the standard choice of pure states prepared
and measured in an experiment is $\left\{ \ket{0},\ket{+}=\frac{\ket{0}+\ket{1}}{\sqrt{2}},\ket{L}=\frac{\ket{0}-i\ket{1}}{\sqrt{2}},\ket{1}\right\} $.
A possible operator basis is the set $\{\tilde{E}_{1}=\mathbb{1},\tilde{E}_{2}=\sigma_{x},\tilde{E}_{3}=-i\sigma_{y},\tilde{E}_{4}=\sigma_{z}\}$.
In this basis, Eq. \eqref{QPT4} gives 
\begin{equation}
\chi=\Omega\left(\begin{array}{cc}
\rho_{1}^{\prime} & \rho_{2}^{\prime}\\
\rho_{3}^{\prime} & \rho_{4}^{\prime}
\end{array}\right)\Omega,
\end{equation}
$\chi$ is a $4\times4$ matrix with elements $\chi_{nm}$, 
\begin{equation}
\Omega=\frac{1}{2}\left(\begin{array}{cc}
\mathbb{1} & \sigma_{x}\\
\sigma_{x} & -\mathbb{1}
\end{array}\right),
\end{equation}
and
\begin{align*}
\rho_{1}^{\prime} & =\Lambda[\rho_{1}]\\
\rho_{2}^{\prime} & =\Lambda[\rho_{2}]-i\Lambda[\rho_{3}]-(1-i)\frac{\Lambda[\rho_{1}]+\Lambda[\rho_{4}]}{2}\\
\rho_{3}^{\prime} & =\Lambda[\rho_{2}]+i\Lambda[\rho_{3}]-(1+i)\frac{\Lambda[\rho_{1}]+\Lambda[\rho_{4}]}{2}\\
\rho_{4}^{\prime} & =\Lambda[\rho_{4}],
\end{align*}
$\rho_{i}$ are the states being prepared and $\Lambda[\rho_{i}]$
is an experimentally tomographed state.

\section{Conclusion}

In this chapter a proposal for quantum channel simulation for a qubit
was presented. The same way as qubit unitary transformations are the
basis for closed-system quantum computation, qubit quantum channels
can become the basic entity for open-system quantum computation \cite{Vertraete2009}.
Thus it would be interesting to design a platform to implement an
arbitrary qubit channel. Our attempt to do so is based on the decomposition
of a qubit map in a convex combination of the extremes of the set
of maps. As a work in progress, some details are missing. For example,
the setup devised so far is able to simulate only a particular class
of channels. At the same time, what we call a particular class may
still be the whole set, since it offers even more free parameters
than is necessary to describe an arbitrary channel. However, up to
now we could not show that this is the case. An interesting feature
of our proposal in comparison to other works is that we are able to
implement the entire channel for each photon, without the necessity
to realize the convex sum in a classical probabilistic way. Moreover,
the interpretation of the operator-sum representation as unrevealed
measurements of the environment occurs exactly in our setup, since
each Kraus operator leads to a different output path and is associated
with a different states of the environment (ancillas). Our proposal
also represents an experimental problem, since it would be desirable
to gather the outputs of all Kraus operators in a single resulting
path as to have the complete transformed state.\selectlanguage{american}

~\ihead{}

\ohead{\leftmark}

\ifoot{}

\cfoot{}

\ofoot[
]{\thepage}

\addchap{Final Remarks}

During my PhD the goal was two explore as many techniques and thus
as many degrees of freedom of light as available in the Quantum Optics
Laboratory of Federal University of Rio de Janeiro. It was possible
by also exploring many different aspects of the quantum theory itself.
The result is the series of experiments presented in this thesis.
It contained two experiments using the transverse degrees of freedom
of classical light beams (which could equivalently be performed using
single photons with the same spatial profile), one of them uses also
the polarization degree of freedom. The SLM, one of the crucial devices
used, is applied in two different ways: as a phase modulator or as
a amplitude mask. These two experiments also differ in their detection
method which is an intensity profile captured by a CCD camera in one
case, and the detection of the number of photons in the attenuated
beam using a free space avalanche detector in the other case. The
thesis also contains one experiment using the entanglement in polarization
and path degrees of freedom between two photons produced in a nonlinear
crystal by spontaneous parametric down conversion. At last, two experiments
(one of them only proposed but not realized yet) using the SPDC process
as a source of single photons are also presented. In these experiments,
the polarization of single photons represent the computational qubit
while the path degrees of freedom are used as the qubit environment
or simply as ancillary systems.

As the results of the individual works were summarized at the end
of each chapter, here I would like to just point some clear open questions
and further investigations regarding some of the works realized. Relative
to the content of Chapter 2, it is still open if a slight change in
our simulation protocol would enable us to simulate an interacting
particle. Maybe more interesting and direct is to investigate if it
is possible to observe the Hegerfeldt paradox in our simulation. This
paradox is related to the superluminal propagation of relativistic
wavefunctions that were initially localized in a finite region of
space \cite{hegerfeldt1974,hegerfeldt1985}. Using our experiment,
we can check its occurrence or not in both Dirac and FW representations.
The two major open questions about the content presented in Chapter
3 are: is there any practical application of the nice construction
we showed? And as done in Ref. \cite{Ketterer16}, is it possible
to relate the PCG MUMs to angular momentum for quantum information
processing? At last, about the subject of Chapter 7 it is left to
prove or disprove that the parameterization we simulate is able to
describe a general qubit channel. Also, as was mentioned, one must
solve the experimental issue of grouping all the output paths, transforming
them in a single output which state is the result of the map application
to the initial state. Moreover, we are now seeking for applications
of the experimental platform in the investigation of correlation dynamics,
since the photon used can be entangled with the herald photon, or
quantum thermodynamics phenomena.\selectlanguage{american}

\cleardoublepage{}

\appendix

\ihead{}

\ohead{\textbf{Appendix~\thechapter}~\leftmark}

\ifoot{}

\cfoot{}

\ofoot{\thepage}

\chapter{Birefringent materials\label{chap:Anexo2}}

Most of the linear optical devices for manipulating polarization available
are made of birefringent materials. This birefringence or double refraction
property is present in many crystals and liquid crystals were a preferred
direction is naturally defined by the internal symmetry of the material.
In the experiments presented in this thesis many are the examples
of birefringence-based devices: wave plates, polarizing beam splitters,
beam displacers and spatial light modulators. In many cases, it makes
necessary to understand the physical principals behind the desired
final effect such to correct imperfections and obtain the best performance
of the devices. This Section is devoted to present the general principles
valid for propagation and incidence in any birefringent media. As
it is valid for all dispositives used in the experiments, it is considered
a non-magnetic and lossless crystal. The treatment presented is completely
classical but it extends to the modes of quantum electromagnetic field. 

In a homogeneous linear medium, an electric field $\mathbf{E}$ produces
a linearly dependent polarization $\mathbf{P}$ and thus a linear
electric displacement $\mathbf{D}=\epsilon_{0}\mathbf{E}+\mathbf{P}=\epsilon\mathbf{E}$.
$\epsilon_{0}$ is the permittivity of vacuum and $\epsilon$ is the
permittivity of the medium. If the medium is in addition isotropic,
then its permittivity is just a scalar. If it is not the case and
the medium presents anisotropy, as is the common case in crystals,
then the polarization direction and strength depends on the electric
field direction and is no longer co-linear with it. In this case,
the electric permittivity is an order 2 tensor and, because of energy
conservation considerations, it can be represented by a $3\times3$
Hermitian matrix \cite{zangwill}. Thus its eigenvectors are all orthogonal
and in this eigenbasis, called the principal directions of the material,
the permittivity tensor reads
\begin{equation}
\boldsymbol{\epsilon}=\left(\begin{array}{ccc}
\epsilon_{1} & 0 & 0\\
0 & \epsilon_{2} & 0\\
0 & 0 & \epsilon_{3}
\end{array}\right).\label{eq:epsilon-1}
\end{equation}
In terms of electromagnetic waves, if the polarization of the wave
is an eigenvector of the electric permittivity , then the polarization
is not altered during propagation inside the medium.

For the so called uniaxial crystals like calcite and quartzo, because
of the symmetry present, the electric responses in two of the principal
directions are equal $\epsilon_{1}=\epsilon_{2}=\epsilon_{\vartheta}$.
The direction with different value of permittivity $\epsilon_{3}=\epsilon_{e}$
define the optical axis of the crystal.

Let us consider a plane wave with frequency $\omega$ propagating
through the crystal, such that all the fields vary in space and time
as $e^{i\omega t-i\mathbf{k}\cdot\mathbf{r}}$, $\mathbf{k}=n\frac{\omega}{c}\mathbf{s}$
is the wave vector in the direction of the unity vector $\mathbf{s}$,
$n$ is the index of refraction and $c$ is the light velocity. The
Maxwell equations impose the following relations for the amplitudes
of the fields
\begin{equation}
\begin{array}{cc}
\mathbf{k}\times\mathbf{E}=\omega\mu\mathbf{H}\quad & \quad\mathbf{k}\times\mathbf{H}=-\omega\mathbf{D}\\
\mathbf{k}\cdot\mathbf{D}=0\quad & \quad\mathbf{k}\cdot\mathbf{H}=0,
\end{array}\label{eq:fields-1}
\end{equation}
$\mu$ is the scalar magnetic permeability. The relation between the
field vectors and wave vector is shown in Fig. \ref{fig:vectors-1}-a).
The first two equations imply that $\mathbf{k}$ is perpendicular
to the electric displacement $\mathbf{D}$ and to the magnetic fields
$\mathbf{B}$, $\mathbf{H}$. Also $\mathbf{E}$ and $\mathbf{D}$
are both perpendicular to $\mathbf{B}$, $\mathbf{H}$. Thus the vectors
$\mathbf{k}$, $\mathbf{E}$ and $\mathbf{D}$ lie in the same plane,
but as $\mathbf{E}$ and $\mathbf{D}$ are not colinear, the electric
field is not , in general, perpendicular to the direction of propagation
is this material. Moreover, as the energy flow is given by the Poynting
vector $\mathbf{E}\times\mathbf{H}$, surprisingly the wave vector
is not in the energy flow direction generally.

\begin{figure}[h]
\begin{centering}
\includegraphics[width=0.95\textwidth]{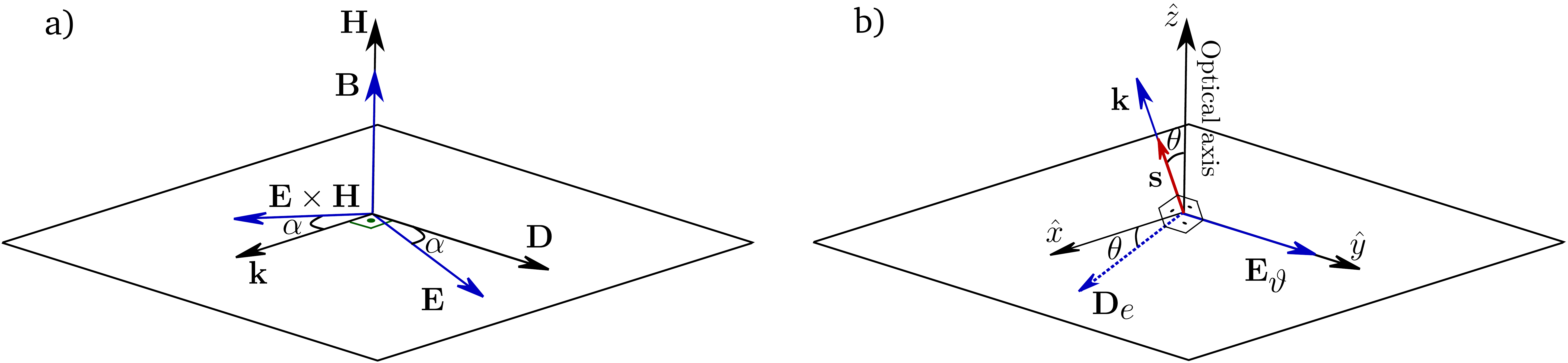}
\par\end{centering}
\caption{\foreignlanguage{english}{\label{fig:vectors-1}Representation of the field vectors for an anisotropic
material.\textbf{ a)} All the field vectors for an arbitrary material
and an arbitrary mode of propagation. Vectors with the same color
are orthogonal. \textbf{b)} The ordinary and extraordinary modes of
propagation in a uniaxial crystal.The plane perpendicular to the optical
axis contains all the directions for which $\epsilon=\epsilon_{\vartheta}.$}\selectlanguage{american}
}
\end{figure}

By eliminating $\mathbf{H}$ in Eqs. (\ref{eq:fields-1}) gives for
the electric field $\mathbf{k}\times(\mathbf{k}\times\mathbf{E})+\omega^{2}\mu\epsilon\mathbf{E}=0$
or explicitly in the principal direction basis
\begin{equation}
\left(\begin{array}{ccc}
\frac{\epsilon_{\vartheta}}{\epsilon_{0}}-n^{2}(s_{y}^{2}+s_{z}^{2}) & n^{2}s_{x}s_{y} & n^{2}s_{x}s_{z}\\
n^{2}s_{x}s_{y} & \frac{\epsilon_{\vartheta}}{\epsilon_{0}}-n^{2}(s_{x}^{2}+s_{z}^{2}) & n^{2}s_{z}s_{y}\\
n^{2}s_{x}s_{z} & n^{2}s_{z}s_{y} & \frac{\epsilon_{e}}{\epsilon_{0}}-n^{2}(s_{x}^{2}+s_{y}^{2})
\end{array}\right)\left(\begin{array}{c}
E_{x}\\
E_{y}\\
E_{z}
\end{array}\right)=N\mathbf{E}=0,
\end{equation}
where the substitution $\mathbf{k}=n\frac{\omega}{c}\mathbf{s}$ was
made. The nontrivial solutions for $\mathbf{E}$ come when the determinant
of $N$ vanishes. It implies that the index of refraction $n$ must
satisfy the equation \cite{yariv1984}
\begin{equation}
\left[\frac{n^{2}}{n_{\vartheta}^{2}}-1\right]\left[\frac{n^{2}}{n_{e}^{2}}\left(s_{x}^{2}+s_{y}^{2}\right)+\frac{n^{2}}{n_{\vartheta}^{2}}s_{z}^{2}-1\right]=0,\label{eq:index-1}
\end{equation}
 the principal indexes of refraction being defined as $n_{\vartheta}^{2}\equiv\frac{\epsilon_{\vartheta}}{\epsilon_{0}}$
and $n_{e}^{2}\equiv\frac{\epsilon_{e}}{\epsilon_{0}}$ . The above
equation has two solutions, what means that, for each propagation
direction, there are two propagating modes with different indexes
of refraction. The first one is independent of the direction of propagation
and has index of refraction $n=n_{\vartheta}$. This mode is called
the ordinary wave. It is possible to show that the electric field
of the ordinary mode is perpendicular to the wave vector, causing
all the optical phenomena to have the same behavior as for isotropic
materials. The other solution is called extraordinary wave. Lets choose
the $x$ and $y$ axis such that $\mathbf{k}$ is contained in the
$xz$ plane, as shown in Fig. \ref{fig:vectors-1}-b). This arbitrary
choice is possible because $\epsilon_{\vartheta}$ is degenerate.
According to Eq. (\ref{eq:index-1}), the index of refraction of the
extraordinary wave depends on the angle $\theta$ between the wave
vector and the optical axis as 
\begin{equation}
\frac{1}{n^{2}}=\frac{\cos^{2}\theta}{n_{\vartheta}^{2}}+\frac{\sin^{2}\theta}{n_{e}^{2}}.
\end{equation}

From the fact that the two modes are known to be orthogonal, the electric
displacement $\mathbf{D}_{e}$ lies in the $xz$ plane, being written
as $\mathbf{D}_{e}=\epsilon_{\vartheta}E_{x}\hat{x}+\epsilon_{e}E_{z}\hat{z}$.
Using the orthogonality of $\mathbf{k}$ and $\mathbf{D}_{e}$ and
the scalar product $\mathbf{D}_{e}\cdot\mathbf{E}_{e}$ give the angle
$\alpha$ between the electric field and the electric displacement
as 
\begin{equation}
\cos\alpha=\cos\theta\left(1+\frac{n_{\vartheta}^{2}}{n_{e}^{2}}\tan\theta\right)\left[1+\left(\frac{n_{\vartheta}^{2}}{n_{e}^{2}}\tan\theta\right)^{2}\right]^{-\frac{1}{2}},\label{eq:alpha}
\end{equation}
this is also the angle between the wave vector and the direction of
energy flow. As particular cases: if the two principal indexes of
refraction are equal then $\cos\alpha=1$ and $\alpha=0$ and the
two modes are perpendicular to the wave vector as expected for a isotropic
material; if $\theta=0$ then $\alpha=0$ and the two modes are ordinary
waves; if $\theta=\frac{\pi}{2}$ the above equation is actually not
valid , in this case the ordinary electric field and electric displacement
are parallel and in the direction of the optical axis, the index of
refraction being $n=n_{e}$.

When a electromagnetic wave is propagating in a isotropic medium and
reach an interface with an anisotropic material, the same boundary
conditions as if both media are isotropic are still valid. As the
index of refraction depends on the direction of propagation and on
the polarization of the wave, two different waves are refracted. In
particular, all the wave vectors are contained in the incidence plane
and the phase of the wave at the interface must be continuous leading
to a Snell's law for each refracted wave 
\begin{equation}
n_{i}\sin\theta_{i}=n_{\vartheta}\sin\theta_{\vartheta}=n(\theta_{e})\sin\theta_{e},
\end{equation}
$n_{i}$ is the index of refraction of the incident medium, $\theta_{i}$
is the angle of incidence, $\theta_{\vartheta}$ and $\theta_{e}$
are the angles of refraction of the ordinary and extraordinary waves,
respectively. Again, the ordinary wave behaves like if it was a isotropic
medium while to find the angle of refraction for the extraordinary
wave it is necessary to solve a quartic equation because of the dependence
of the index of refraction on the direction of propagation.

\ihead{}

\ohead{\textbf{Appendix~\thechapter}~\leftmark}

\ifoot{}

\cfoot{}

\ofoot{\thepage}

\chapter{Further experimental results of several PCG MUM\label{chap:Anexo0}}

In this appendix, the measurements for all combinations of preparation
and measurement phase space directions are shown. All plots show the
Shannon entropy as a function of the period of the measurement mask
($T_{j}$ should be changed by $T_{k}$). The preparation mask is
fixed in $M_{0}^{(j)}(q_{j},T_{j})$ with $T_{j}$ satisfying the
MUM condition for the $m_{jk}$ values shown in Table \ref{tab:1}.
In this way, it was expected that for the same measurement direction,
all three preparations would give the maximum entropy for the same
measurement period values. This can be verified in the plots and particularly
this happens for the periods corresponding to the right $m_{jk}$
values.

\subsubsection*{Measurement $k=0$}

\begin{figure}[H]
\noindent \begin{centering}
\subfloat[]{\noindent \begin{centering}
\includegraphics[width=0.45\columnwidth]{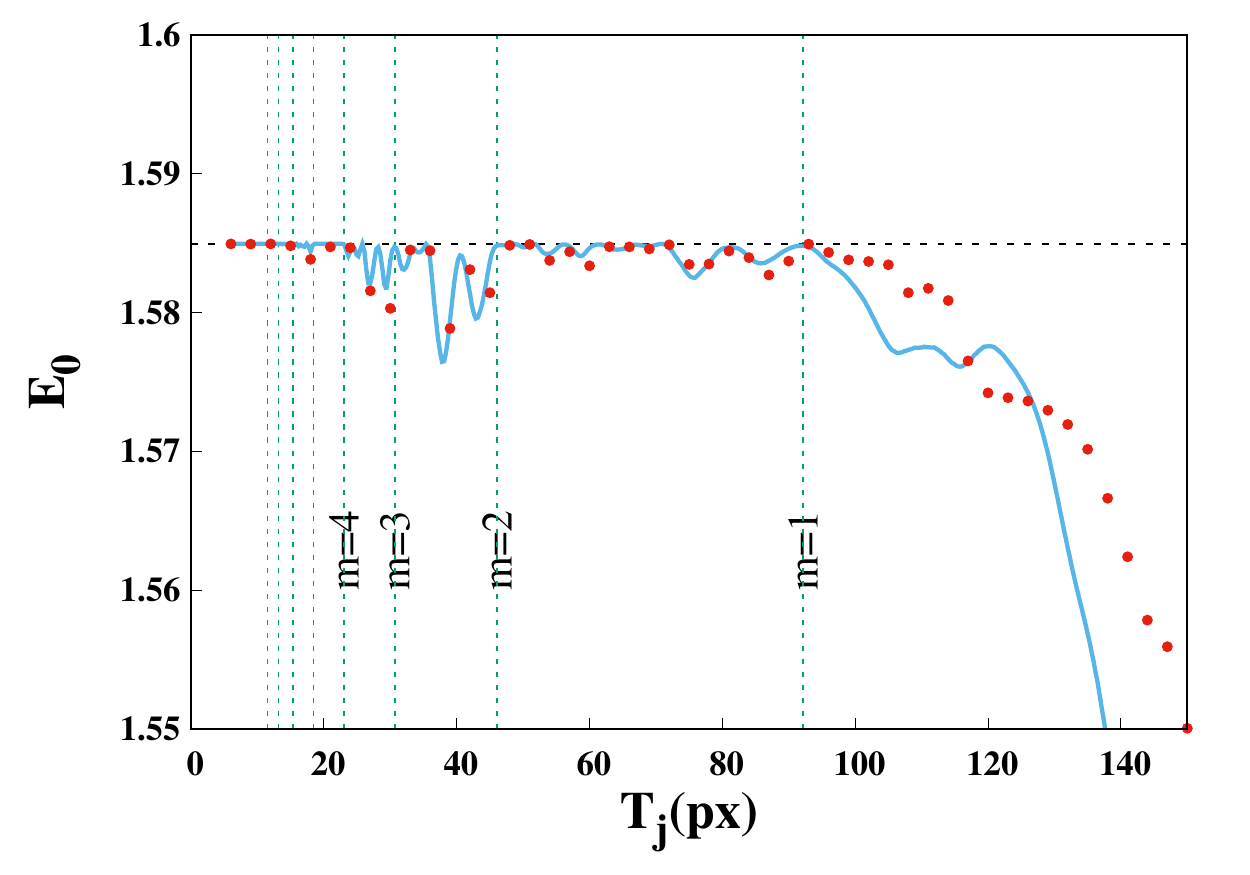}
\par\end{centering}
} \subfloat[]{\noindent \centering{}\includegraphics[width=0.45\columnwidth]{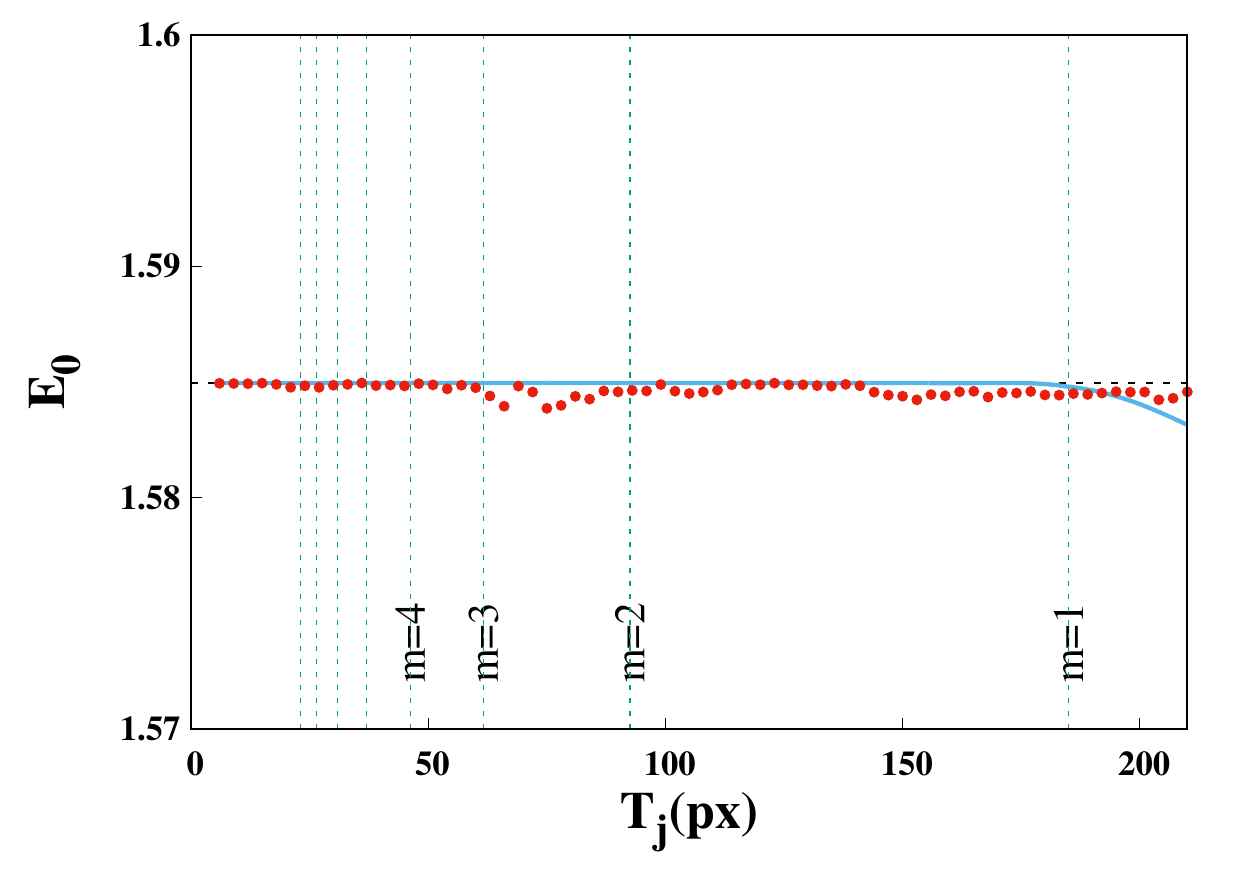}}
\par\end{centering}
\noindent \begin{centering}
\subfloat[]{\noindent \centering{}\includegraphics[width=0.45\columnwidth ]{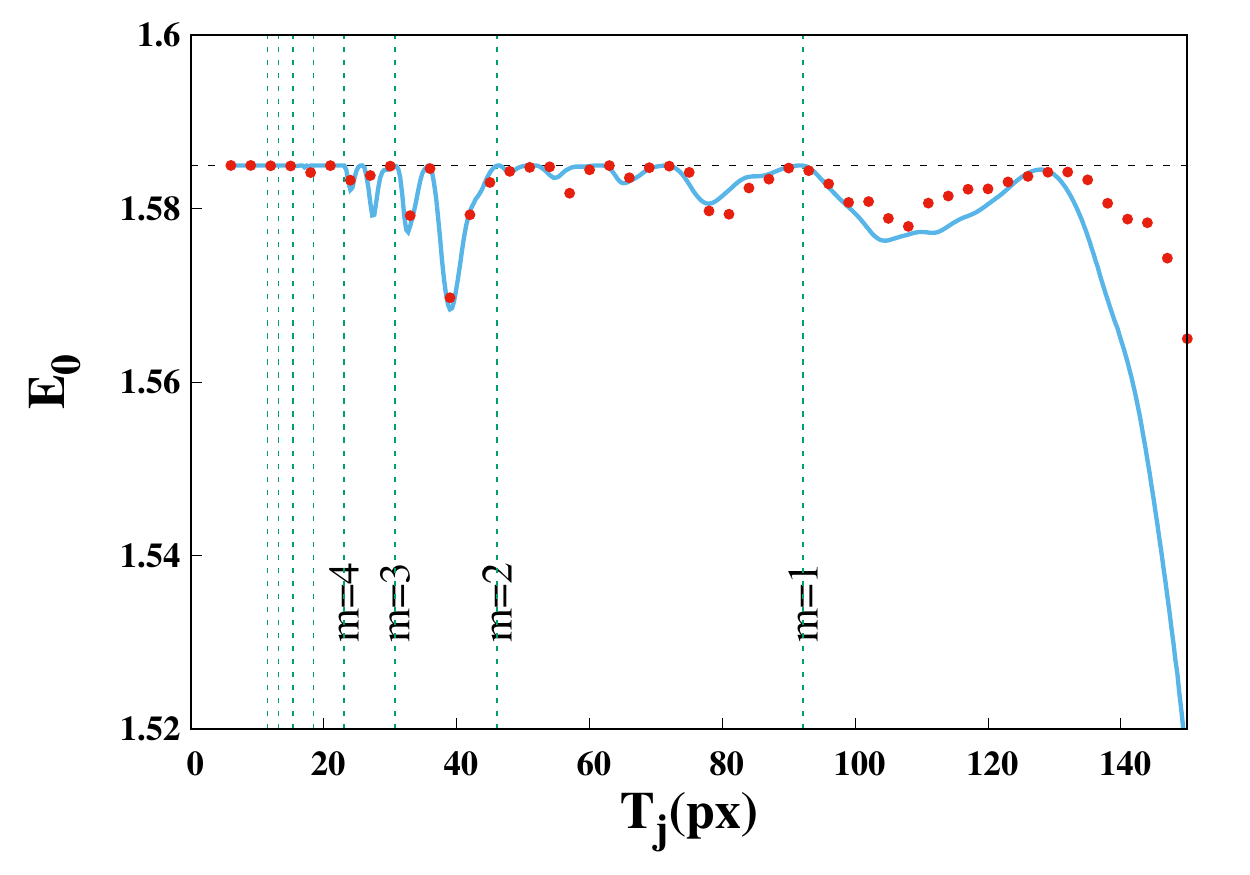}}
\par\end{centering}
\caption{Measurements on $k=0$ direction for preparation a) $j=1$, b) $j=2$
and c) $j=3$.}

\end{figure}

\subsubsection*{Measurement $k=1$}

\begin{figure}[H]
\noindent \begin{centering}
\subfloat[]{\noindent \centering{}\includegraphics[width=0.45\columnwidth ]{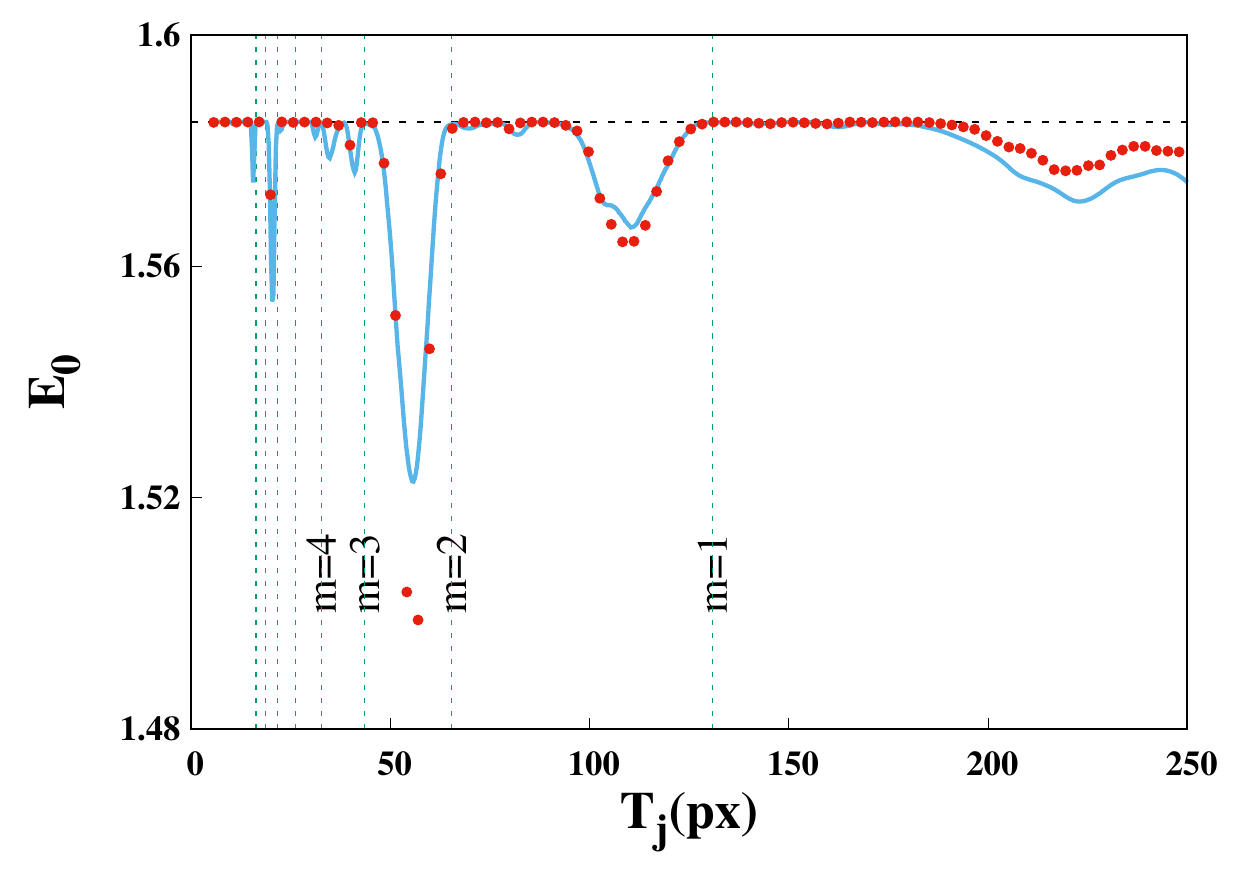}}
\subfloat[]{\noindent \centering{}\includegraphics[width=0.45\columnwidth ]{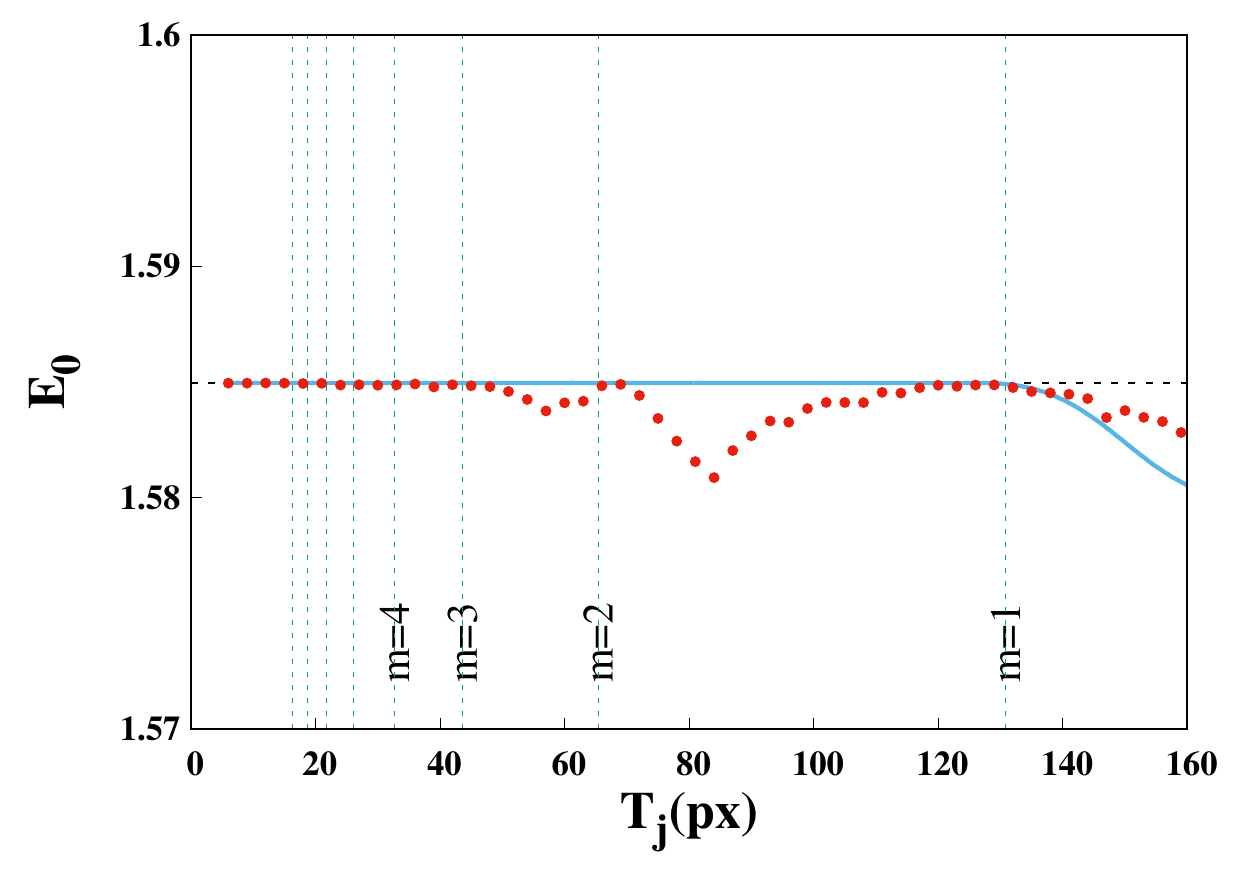}}
\par\end{centering}
\noindent \begin{centering}
\subfloat[]{\noindent \centering{}\includegraphics[width=0.45\columnwidth ]{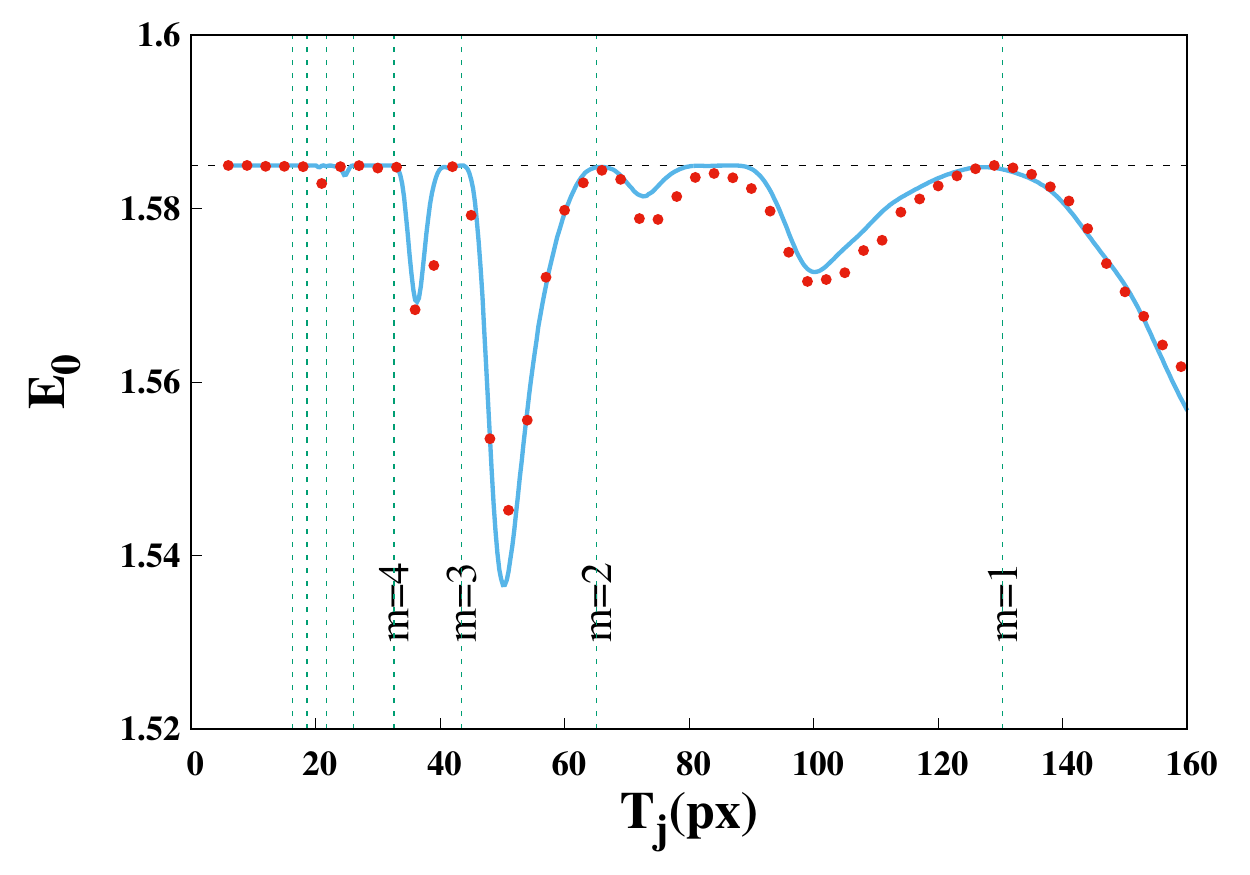}}
\par\end{centering}
\caption{Measurements on $k=1$ direction for preparation a) $j=0$, b) $j=2$
and c) $j=3$.}
\end{figure}

\subsubsection*{Measurement $k=2$}

\begin{figure}[H]
\noindent \begin{centering}
\subfloat[]{\noindent \centering{}\includegraphics[width=0.45\columnwidth ]{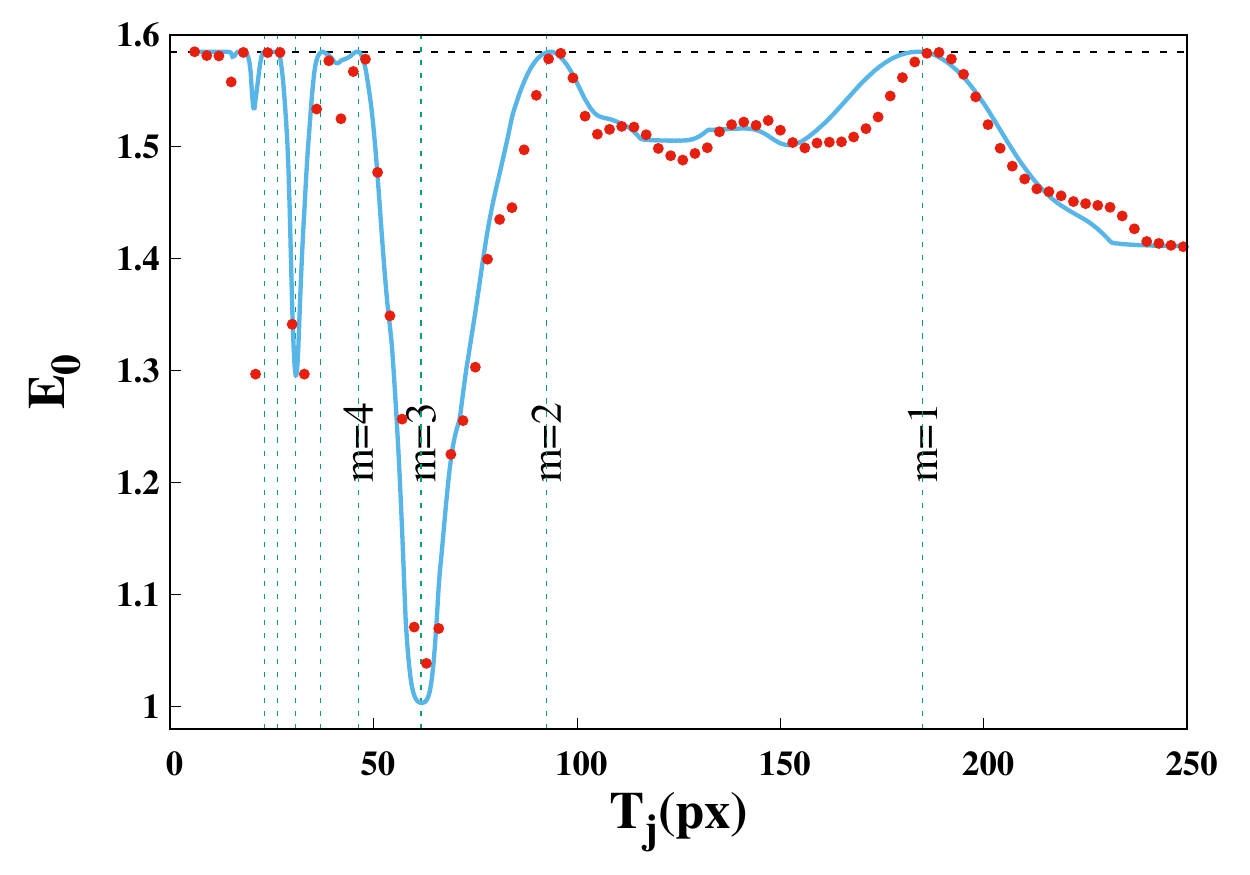}}
\subfloat[]{\noindent \centering{}\includegraphics[width=0.45\columnwidth ]{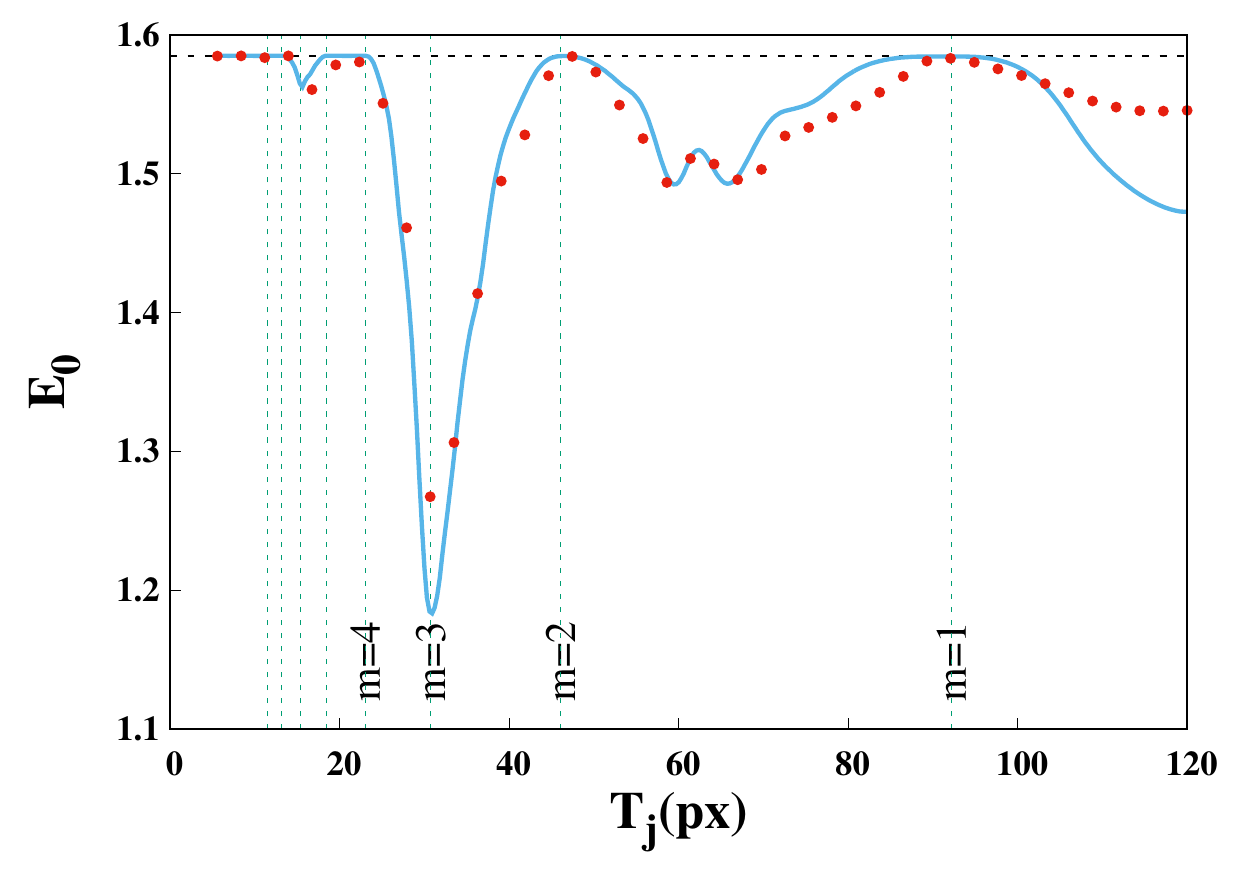}}
\par\end{centering}
\noindent \begin{centering}
\subfloat[]{\noindent \centering{}\includegraphics[width=0.45\columnwidth ]{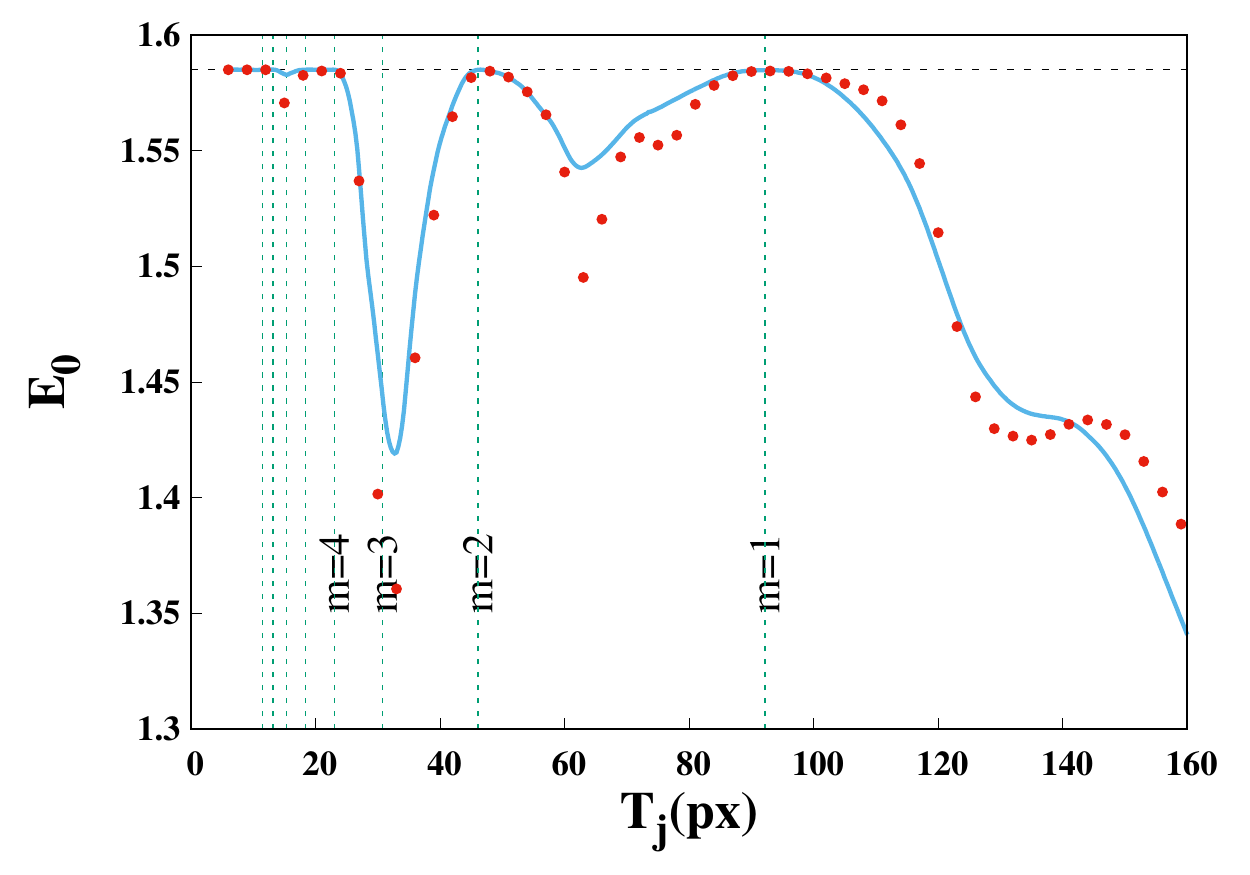}}
\par\end{centering}
\caption{Measurements on $k=2$ direction for preparation a) $j=0$, b) $j=1$
and c) $j=3$.}
\end{figure}

\subsubsection*{Measurement $k=3$}

\begin{figure}[H]
\noindent \begin{centering}
\subfloat[]{\noindent \centering{}\includegraphics[width=0.45\columnwidth ]{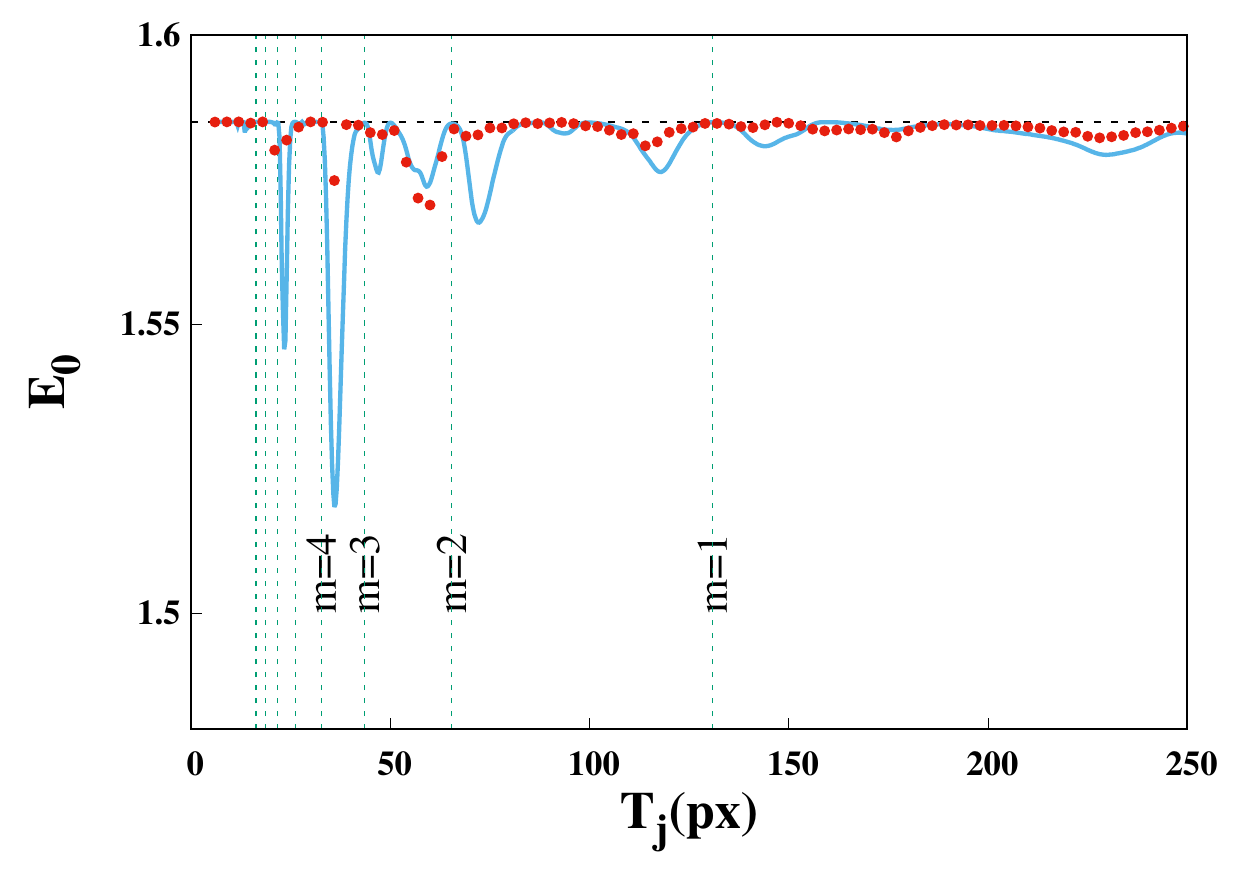}}
\subfloat[]{\noindent \centering{}\includegraphics[width=0.45\columnwidth ]{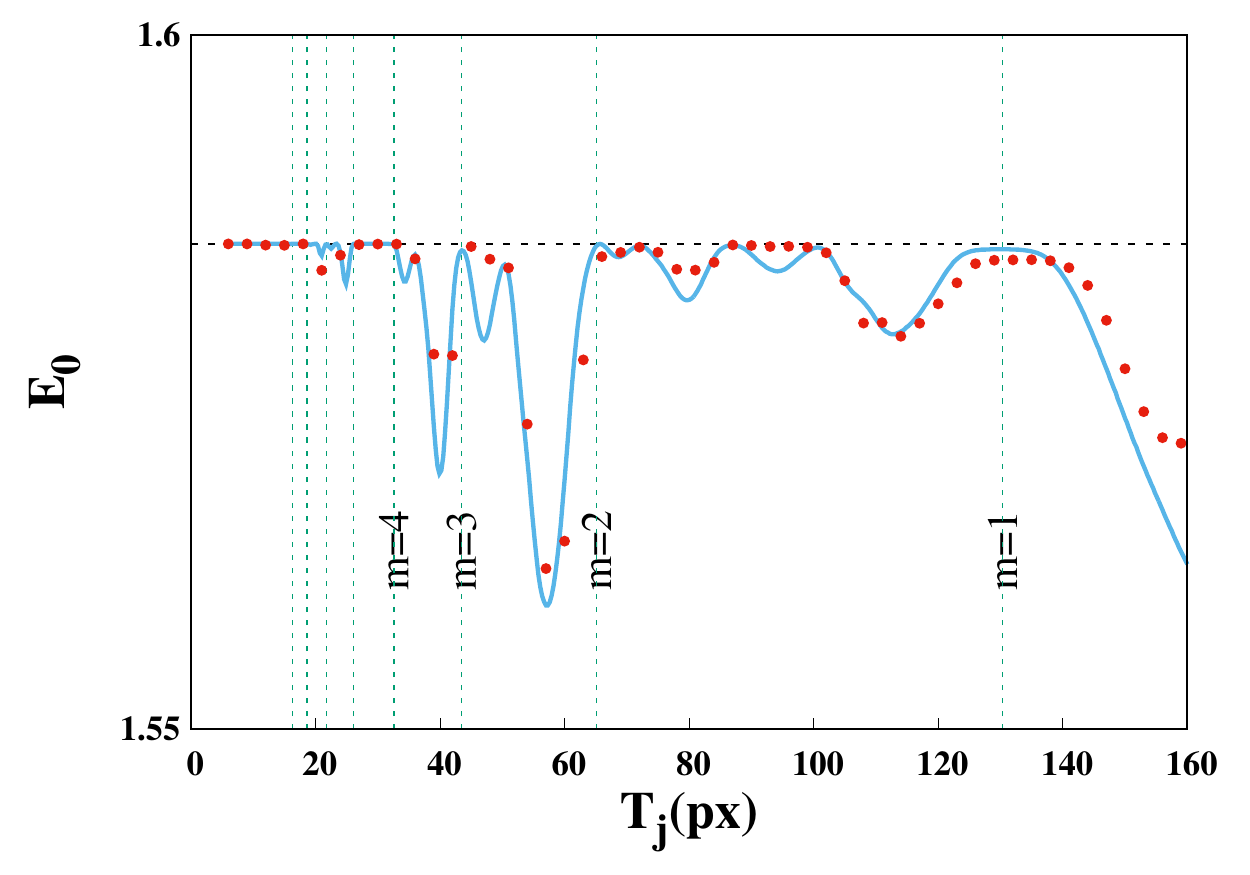}}
\par\end{centering}
\noindent \begin{centering}
\subfloat[]{\noindent \centering{}\includegraphics[width=0.45\columnwidth ]{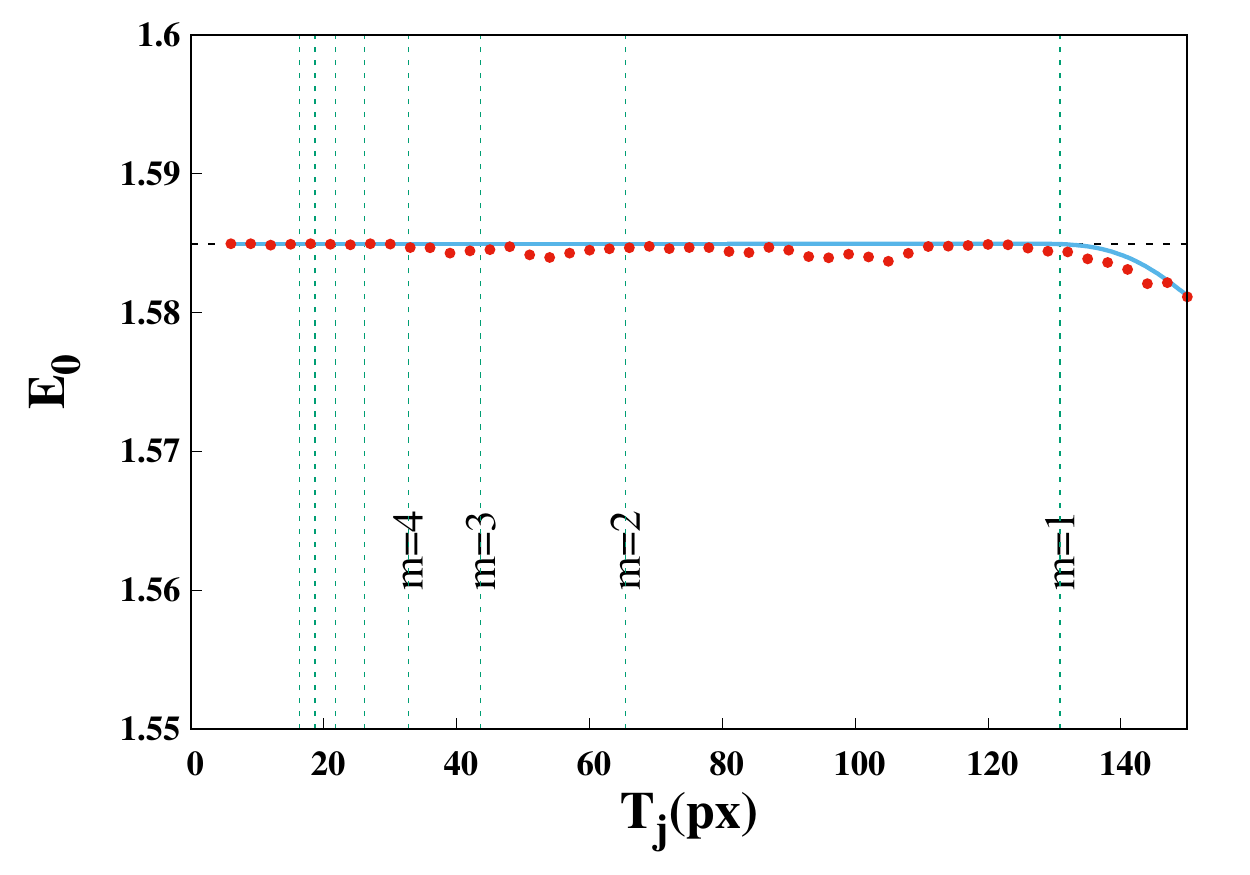}}
\par\end{centering}
\caption{Measurements on $k=3$ direction for preparation a) $j=0$, b) $j=1$
and c) $j=2$.}
\end{figure}

\ihead{}

\ohead{\textbf{Appendix~\thechapter}~\leftmark}

\ifoot{}

\cfoot{}

\ofoot{\thepage}

\chapter{Supplement to Steering Exposure\label{chap:Anexo4}}

\section{No-go theorem for multi-black-box universal steering bits\label{sec:No-go-theorem-for}}

\label{app:proof_nobits} In contrast to the protocols exploring the
capabilities of wirings within the $AB$ partition, in this section
we present a no-go theorem limiting their transformation power. Since
it is known \cite{Gallego2015} that in minimal dimension there is
no steering bit --- i.e. no ``universal'' minimal-dimension assemblage
that can be transformed into any other under 1W-LOCCs --- one can
ask whether reduction from a higher number of inputs, outputs or parties
allows such a steering bit to be established. We answer in the negative
even in minimal dimension.

\begin{thm}{[}No pure steering bit with higher number of parties{]}
\label{th:nobits} There does not exist any pure $(N-1)$-DI qubit
assemblage $\sigma_{\boldsymbol{a}|\boldsymbol{x}}^{\text{bit}}$,
where $\boldsymbol{a}=\{a_{1},...,a_{N-1}\}$, $\boldsymbol{x}=\{x_{1},...,x_{N-1}\}$
(with finite sets of input and output values), that can be transformed
via 1W-LOCCs into all qubit assemblages of minimal dimension $\sigma_{a|x}^{(\text{target})}$.
\end{thm}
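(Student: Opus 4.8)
The plan is to track the effect of the purported universal transformation on the Bloch ball of the single trusted party $C$ and to exploit the rigidity forced by the purity of the source. First I would write the most general one-round $1$W-LOCC from $C$ to the untrusted parties: the trusted party applies a quantum instrument $\{\mathcal{E}_\omega\}$ (with $\sum_\omega \mathcal{E}_\omega$ trace preserving) to her qubit and broadcasts the classical outcome $\omega$, after which the untrusted parties apply a (stochastic, possibly multipartite) wiring that maps the target input $x$ to source inputs and coarse-grains the source outputs onto the target output $a$. The resulting element is $\sigma^{(\text{target})}_{a|x}=\sum_\omega \mathcal{E}_\omega[\Xi_{a|x,\omega}]$, where $\Xi_{a|x,\omega}$ is the wired source assemblage in branch $\omega$, a positive combination of the source conditional operators $\sigma^{\text{bit}}_{\boldsymbol a|\boldsymbol x}$. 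The structural point I would stress is that $\mathcal{E}_\omega$ acts on $C$ independently of the target input $x$, while the wiring may depend on $x$.

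Next I would establish a purity-rigidity lemma. Since $\varrho^{(\text{target})}_{a|x}$ is rank one and equals a sum of positive operators $\mathcal{E}_\omega[\Xi_{a|x,\omega}]$, every summand must be proportional to that same projector. Two elementary Bloch-ball facts then do the work: (i) a qubit channel maps the interior of the ball strictly into the interior of its image ellipsoid, so a pure output forces a pure input, whence each $\Xi_{a|x,\omega}$ must be proportional to a single source pure state; because $\Xi_{a|x,\omega}$ is a positive combination of the pure states $\{\sigma^{\text{bit}}_{\boldsymbol a|\boldsymbol x}\}$, the wiring may only group source outcomes carrying identical conditional states. (ii) Consequently, in every contributing branch $\mathcal{E}_\omega$ sends genuine source pure states to the target pure states.

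I would then upgrade this to unitarity and angle preservation. For a steerable minimal target the two input bases are non-parallel, so their four conditional Bloch vectors include two distinct axes. Writing $\mathcal{E}_\omega$ in canonical form $\vec s\mapsto M\vec s+\vec t$, the requirement that the two central antipodal chords $\pm\hat n_0$ and $\pm\hat n_1$ lie on the image ellipsoid forces $\vec t=0$, and the requirement that their unit preimages be pure forces at least two canonical singular values of $M$ to equal one; the complete-positivity (Fujiwara--Algoet) conditions then force the third to equal one as well, so $\mathcal{E}_\omega$ is a rotation. A rotation preserves the relative angle between the two target bases, so that angle must already be present, up to a global rotation, between two wired, coarse-grained input bases of the source, and the reduced state of $C$ must be maximally mixed.

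Finally I would close with a finiteness contradiction. The source has finite input and output alphabets and only finitely many wiring strategies, so it offers only finitely many effective input-basis directions on the Bloch sphere, hence finitely many pairwise relative angles; targets whose reduced-state length differs from the source's are unreachable outright. But the family of pure, maximally steerable minimal targets parametrized by the relative angle $\gamma\in(0,\pi)$ between their two antipodal bases is a continuum, which cannot be matched, contradicting universality. The main obstacle I anticipate is discharging the purity-rigidity and unitarity lemmas in full generality --- adaptive or multi-round instruments, arbitrary party number $N$, and genuinely multipartite wirings --- in particular checking that ``pure output forces pure input'' and the singular-value argument survive when $\Xi_{a|x,\omega}$ arises from multipartite wirings; a secondary care point is to confirm that the full set of wiring strategies still yields only finitely many effective source bases.
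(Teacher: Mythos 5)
Your overall strategy---use purity of the target to rigidify each branch of the 1W-LOCC, then play a one-parameter continuum of target assemblages against the finiteness of the source---is exactly the paper's strategy, and your step 1 and the purity-rigidity observation of step 2 match the paper's setup. The gap is in step 3, and it is fatal as written. The maps $\mathcal{E}_\omega$ are instrument elements, i.e.\ trace \emph{non-increasing} CP maps (after fine-graining, single Kraus operators $K_\omega$ with $K_\omega^\dagger K_\omega\le\mathbb{1}$). The affine Bloch representation $\vec s\mapsto M\vec s+\vec t$, the image ellipsoid, and the Fujiwara--Algoet conditions all pertain to trace-\emph{preserving} maps and simply do not apply branch by branch; the normalized action of an invertible Kraus operator on pure states is a M\"obius transformation of the sphere, not an affine map, and it is generically not an isometry. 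Worse, the conclusion you want is false: the filter $K=\mathrm{diag}(1,\epsilon)$ is a legitimate instrument element whose normalized action is a bijection of the Bloch sphere (so its ``image'' contains every antipodal pair), which fixes the orthogonal pair $\ket 0,\ket 1$ and maps the \emph{non-orthogonal} pair $\cos\gamma\ket 0\pm\sin\gamma\ket 1$ (with $\tan\gamma=1/\epsilon$) onto the orthogonal pair $\ket\pm$ --- Procrustean filtering. Such non-orthogonal pairs do occur as the conditional states of a perfectly valid pure source assemblage (e.g.\ the one generated by $\cos\gamma\ket{00}+\sin\gamma\ket{11}$ with $Z$- and $X$-basis measurements), so a contributing branch need not be a rotation, and the target angle need not ``already be present'' among wired source bases. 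Your angle-counting contradiction in step 4 therefore does not follow; also note that wirings are stochastic, so ``finitely many wiring strategies'' is not literally true (what is finite is the set of source conditional states, which is what must carry the argument).

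What does survive per branch is M\"obius rigidity rather than rotational rigidity, and this is in effect how the paper closes the argument. An invertible $K_\omega$ must send four specific source pure states proportionally onto the four target states $\ket{\psi^\theta(a_f,x_f)}$; writing the source states as $\cos\varphi\ket 0+e^{i\alpha}\sin\varphi\ket 1$ and eliminating the matrix elements of $K_\omega$, the paper obtains a consistency identity of the form $\tan^2\theta\,A+B=0$, where $A$ and $B$ are built from the finitely many source-state parameters. Since the identity must hold for a continuum of $\theta$ while $(A,B)$ ranges over a finite set, one forces $A=B=0$, which collapses several source states into one and contradicts the non-single-state assumption. (Equivalently: an invertible Kraus operator preserves the cross-ratio of four points of $\mathbb{CP}^1$; the target four-tuple's cross-ratio varies continuously with $\theta$, the source offers only finitely many cross-ratios.) Two cases you gloss over must also be handled separately, as the paper does: rank-one Kraus operators, for which ``pure output forces pure input'' fails outright (excluded because a rank-one $K_\omega$ would have to hit states of both target bases, which are never proportional for $\theta\in\,]0,\pi/2[$), and single-state source assemblages (excluded by the same non-proportionality). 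To repair your write-up you would need to replace the rotation lemma by this cross-ratio/algebraic rigidity and add those two case analyses.
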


\begin{proof} The proof is similar in spirit to that of Theorem 5
of \cite{Gallego2015}. We consider a pure $(N-1)$-DI qubit assemblage
as a candidate for higher-dimensional ``bit'' assemblage. With the
more detailed notation of \cite{Gallego2015}, it reads 
\begin{align}
\sigma_{\boldsymbol{a}|\boldsymbol{x}}^{\text{bit}}=P_{\boldsymbol{A}|\boldsymbol{X}}(\boldsymbol{a}|\boldsymbol{x})\ \ket{\psi({\boldsymbol{a},\boldsymbol{x}})}\bra{\psi({\boldsymbol{a},\boldsymbol{x}})}\ .%
\label{eq:N-1DI}
\end{align}
We assume the NS principle only between the DD party and all others,
the $N-1$ DI parties may signal to each other at will. We will show
that no single choice of $\sigma_{\boldsymbol{a}|\boldsymbol{x}}^{\text{bit}}$
can be freely transformed into members of a family of minimal-dimension
assemblages $\sigma_{a_{f}|x_{f}}^{\theta}=\frac{1}{2}\ket{\psi^{\theta}(a_{f},x_{f})}\bra{\psi^{\theta}(a_{f},x_{f})}$
for all $\theta\in{}]0,\pi/2[{}$, where \begin{subequations}
\begin{align}
\ket{\psi^{\theta}(0,0)} & =\ket{0}\label{eq:psitheta00}\\
\ket{\psi^{\theta}(1,0)} & =\ket{1}\label{eq:psitheta10}\\
\ket{\psi^{\theta}(0,1)} & =\ \ \cos\theta\ket{0}+\sin\theta\ket1\label{eq:psitheta01}\\
\ket{\psi^{\theta}(1,1)} & =-\sin\theta\ket{0}+\cos\theta\ket1\ .\label{eq:psitheta11}
\end{align}
\label{eq:psitheta}\end{subequations}\par The most general form
of a 1W-LOCC applied to $\sigma_{\boldsymbol{a}|\boldsymbol{x}}^{\text{bit}}$
is 
\begin{equation}
\sum_{\boldsymbol{a},\boldsymbol{x},\omega}P_{\boldsymbol{X}|X_{f},\Omega}^{\theta}(\boldsymbol{x}|x_{f},\omega)\ P_{A_{f}|\boldsymbol{A},\boldsymbol{X},\Omega,X_{f}}^{\theta}(a_{f}|\boldsymbol{a},\boldsymbol{x},\omega,x_{f})\ P_{\boldsymbol{A}|\boldsymbol{X}}(\boldsymbol{a}|\boldsymbol{x})\ K_{\omega}^{\theta}\ket{\psi({\boldsymbol{a},\boldsymbol{x}})}\bra{\psi({\boldsymbol{a},\boldsymbol{x}})}K_{\omega}^{\theta\dagger}\ ,\label{eq:1WLOCCapplied}
\end{equation}
where $\Omega$ is a variable (with values $\omega$) representing
information sent by the quantum party to the classical ones, $P_{\boldsymbol{X}|X_{f},\Omega}^{\theta}$
and $P_{A_{f}|\boldsymbol{A},\boldsymbol{X},\Omega,X_{f}}^{\theta}$
are conditional probability distributions, and $K_{\omega}^{\theta}$
is a Kraus operator \cite{Gallego2015}; the three may depend on $\theta$.
Since this transformed assemblage is intended to equal the rank-1
assemblage $\sigma_{a_{f}|x_{f}}^{\theta}$, we can conclude that
$\forall\ a_{f},x_{f}$ 
\begin{multline}
\sum_{\boldsymbol{a},\boldsymbol{x}}P_{\boldsymbol{X}|X_{f},\Omega}^{\theta}(\boldsymbol{x}|x_{f},\omega)P_{A_{f}|\boldsymbol{A},\boldsymbol{X},\Omega,X_{f}}^{\theta}(a_{f}|\boldsymbol{a},\boldsymbol{x},\omega,x_{f})P_{\boldsymbol{A}|\boldsymbol{X}}(\boldsymbol{a}|\boldsymbol{x})\ K_{\omega}^{\theta}\ket{\psi({\boldsymbol{a},\boldsymbol{x}})}\bra{\psi({\boldsymbol{a},\boldsymbol{x}})}K_{\omega}^{\theta\dagger}\\
\sim\ket{\psi^{\theta}(a_{f},x_{f})}\bra{\psi^{\theta}(a_{f},x_{f})}\ ,\label{eq:1WLOCCequality}
\end{multline}
where $\sim$ signifies ``is either null or proportional to'' and
we have used the fact that the relation, valid for the sum in $\omega$,
is also valid for each $\omega$ term.\par We will assume for now
that $\sigma_{\boldsymbol{a}|\boldsymbol{x}}^{\text{bit}}$ is not
a single-state assemblage, i.e., there is no state $\ket{\psi_{\text{single}}}$
such that $\ket{\psi(\boldsymbol{a},\boldsymbol{x})}=\ket{\psi_{\text{single}}}$
for all $\boldsymbol{a},\boldsymbol{x}$ (for our purposes throughout
this proof, states are equal if they differ only by an global phase).\par We
now notice that, due to normalization, $\forall\ x_{f},\omega$, $\exists\ \tilde{\boldsymbol{x}},\tilde{\boldsymbol{a}},\tilde{a}_{f}$
such that $P_{\boldsymbol{X}|X_{f},\Omega}^{\theta}(\tilde{\boldsymbol{x}}|x_{f},\omega)\times P_{A_{f}|\boldsymbol{A},\boldsymbol{X},\Omega,X_{f}}^{\theta}(\tilde{a}_{f}|\tilde{\boldsymbol{a}},\tilde{\boldsymbol{x}},\omega,x_{f})\times P_{\boldsymbol{A}|\boldsymbol{X}}(\tilde{\boldsymbol{a}}|\tilde{\boldsymbol{x}})\neq0$.
For these values, then, 
\begin{equation}
K_{\omega}^{\theta}\ket{\psi(\tilde{\boldsymbol{a}},\tilde{\boldsymbol{x}})}\sim\ket{\psi^{\theta}(\tilde{a}_{f},x_{f})}\ .\label{eq:Kpsi}
\end{equation}
In fact, there must be at least two different values $\tilde{\boldsymbol{a}}$
for each $\tilde{\boldsymbol{x}}$ for which Eq.\ (\ref{eq:Kpsi})
is true, with the corresponding pure states $\ket{\psi(\tilde{\boldsymbol{a}},\tilde{\boldsymbol{x}})}$
being not all equal
: if, for some $\tilde{\boldsymbol{x}}$ there is a single $\tilde{\boldsymbol{a}}$
with $P_{\boldsymbol{A}|\boldsymbol{X}}(\tilde{\boldsymbol{a}}|\tilde{\boldsymbol{x}})\neq0$,
then by purity and the NS property between the DD and DI partitions,
$\sigma_{\boldsymbol{a}|\boldsymbol{x}}^{\text{bit}}$ would be a
single-state assemblage; if for all values $\tilde{\boldsymbol{a}}$,
$\ket{\psi(\tilde{\boldsymbol{a}},\tilde{\boldsymbol{x}})}$ is the
same, it would also be a single-state assemblage due to NS and purity.\par Let
us now exclude the possibility of $K_{\omega}^{\theta}\ket{\psi(\boldsymbol{a},\boldsymbol{x})}=0$
with $K_{\omega}^{\theta}\neq0$. If that were the case, $K_{\omega}^{\theta}$
would have a rank-1 support, hence a rank-1 span: $K_{\omega}^{\theta}\ket{\psi(\boldsymbol{a},\boldsymbol{x})}\sim\ket{k_{\omega}^{\theta}}\ \forall\ \boldsymbol{a},\boldsymbol{x}$.
From (\ref{eq:1WLOCCequality}) and the independence of $x_{f}$ from
$\omega$, this would require either $\ket{\psi^{\theta}(a_{f},0)}\propto\ket{k_{\omega}^{\theta}}\propto\ket{\psi^{\theta}(\tilde{a}_{f},1)}$
{[}contradiction with Eq.\ (\ref{eq:psitheta}){]} or that, for some
value of $x_{f}$, for the corresponding $\tilde{\boldsymbol{x}}$,
$K_{\omega}^{\theta}\ket{\psi(\tilde{\boldsymbol{a}},\tilde{\boldsymbol{x}})}=0$
for all $\tilde{\boldsymbol{a}}$ with $P_{\boldsymbol{A}|\boldsymbol{X}}(\tilde{\boldsymbol{a}}|\tilde{\boldsymbol{x}})\neq0$
{[}contradiction with there existing two different states $\ket{\psi(\tilde{\boldsymbol{a}},\tilde{\boldsymbol{x}})}${]}.\par Finally,
we can conclude from the dependencies of the three probabilities $P_{\boldsymbol{X}|X_{f},\Omega}^{\theta},$
$P_{A_{f}|\boldsymbol{A},\boldsymbol{X},\Omega,X_{f}}^{\theta},$
$P_{\boldsymbol{A}|\boldsymbol{X}}$ on $x_{f},\omega,\boldsymbol{x},\boldsymbol{a},\tilde{a}_{f}$,
that 
\begin{equation}
K_{\omega}^{\theta}\ket{\psi(\tilde{\boldsymbol{a}},\tilde{\boldsymbol{x}})}\propto\ket{\psi^{\theta}(\tilde{a}_{f},x_{f})}\ .\label{eq:Kpsiprop}
\end{equation}
The validity conditions of this equation are as follows: for all $(x_{f},\omega)$,
there exists some value $\tilde{\boldsymbol{x}}$ for which (\ref{eq:Kpsiprop})
holds; for each $\tilde{\boldsymbol{x}}$, there are at least two
values $\tilde{\boldsymbol{a}}$ for which (\ref{eq:Kpsiprop}) holds;
and for each choice of $(x_{f},\omega,\tilde{\boldsymbol{x}},\tilde{\boldsymbol{a}})$
there is some value $\tilde{a}_{f}$ for which (\ref{eq:Kpsiprop})
holds. Moreover, for given $\tilde{\boldsymbol{x}}$, the corresponding
$\ket{\psi(\tilde{\boldsymbol{a}},\tilde{\boldsymbol{x}})}$ (for
varying $\tilde{\boldsymbol{a}}$) are not all equal.\par Let us
explore the possible ways of satisfying Eq.\ (\ref{eq:Kpsiprop})
by case analysis. A first possibility is that, for the two different
values $x_{f}=0,1$, the values of $\tilde{\boldsymbol{x}}$ for which
(\ref{eq:Kpsiprop}) holds intersect at some value $\tilde{\boldsymbol{x}}_{\text{int}}$.
Then $\exists\ \tilde{\boldsymbol{a}},\tilde{a}_{f0},\tilde{a}_{f1}$
such that 
\begin{equation}
\begin{split}K_{\omega}^{\theta}\ket{\psi(\tilde{\boldsymbol{a}},\tilde{\boldsymbol{x}}_{\text{int}})} & \propto\ket{\psi^{\theta}(\tilde{a}_{f0},x_{f}=0)}\ ,\\
K_{\omega}^{\theta}\ket{\psi(\tilde{\boldsymbol{a}},\tilde{\boldsymbol{x}}_{\text{int}})} & \propto\ket{\psi^{\theta}(\tilde{a}_{f1},x_{f}=1)}\ ,
\end{split}
\label{eq:x_int}
\end{equation}
which is incompatible with Eq.\ (\ref{eq:psitheta}). We are then
left with the values $\tilde{\boldsymbol{x}}$ for $x_{f}=0$ and
$x_{f}=1$ being all different. Taking the liberty to relabel our
variables, let us consider a value $\tilde{\boldsymbol{x}}=\boldsymbol{0}$
for $x_{f}=0$ and a value $\tilde{\boldsymbol{x}}=\boldsymbol{1}$
for $x_{f}=1$, ignoring the other possible values of $\tilde{\boldsymbol{x}}$
for which Eq.\ (\ref{eq:Kpsiprop}) holds. Let us call $\tilde{\boldsymbol{a}}=\boldsymbol{0}$
and $\tilde{\boldsymbol{a}}=\boldsymbol{1}$ the two values of $\tilde{\boldsymbol{a}}$
for which, given $\tilde{\boldsymbol{x}}$, Eq.\ (\ref{eq:Kpsiprop})
holds. We see that $\tilde{a}_{f}$ could take any value for each
$\tilde{\boldsymbol{a}}$. However, if $\tilde{a}_{f}$ is the same
for the same $(x_{f},\tilde{\boldsymbol{x}})$ and two different $\tilde{\boldsymbol{a}}$,
e.g., 
\begin{equation}
\begin{split} & K_{\omega}\ket{\psi(\boldsymbol{0},\boldsymbol{1})}\propto\ket{\psi^{\theta}(0,1)}\\
 & K_{\omega}\ket{\psi(\boldsymbol{1},\boldsymbol{1})}\propto\ket{\psi^{\theta}(0,1)}\ ,
\end{split}
\label{eq:cases_b_c}
\end{equation}
then Eq.\ (\ref{eq:Kpsiprop}) cannot be satisfied for all $x_{f}$.
This is because $\{\ket{\psi(\boldsymbol{0},\boldsymbol{1})},\ket{\psi(\boldsymbol{1},\boldsymbol{1})}\}$
form a basis of the qubit Hilbert space, hence $K_{\omega}$ has a
1-rank span given by $\ket{\psi^{\theta}(0,1)}$, which does not span
$\ket{\psi^{\theta}(\tilde{a}_{f},0)}$ as needed. Hence $\tilde{a}_{f}$
is different for each $\tilde{\boldsymbol{a}}$ value.\par We can
then conclude that, up to relabeling, there must be states $\ket{\psi(\tilde{\boldsymbol{a}},\tilde{\boldsymbol{x}})}$
belonging to $\boldsymbol{\sigma}^{\text{bit}}$ which obey \begin{subequations}
\begin{align}
 & K_{\omega}\ket{\psi(\boldsymbol{0},\boldsymbol{0})}\propto\ket{\psi^{\theta}(0,0)}\label{eq:case_a00}\\
 & K_{\omega}\ket{\psi(\boldsymbol{1},\boldsymbol{0})}\propto\ket{\psi^{\theta}(1,0)}\label{eq:case_a10}\\
 & K_{\omega}\ket{\psi(\boldsymbol{0},\boldsymbol{1})}\propto\ket{\psi^{\theta}(0,1)}\label{eq:case_a01}\\
 & K_{\omega}\ket{\psi(\boldsymbol{1},\boldsymbol{1})}\propto\ket{\psi^{\theta}(1,1)}\label{eq:case_a11}
\end{align}
\label{eq:case_a}\end{subequations} to obtain the family of assemblages
$\{\boldsymbol{\sigma}^{\theta}\}_{\theta\in{}]0,\pi/2[{}}$. We will
choose the following parametrization: 
\begin{equation}
\ket{\psi(\tilde{\boldsymbol{a}},\tilde{\boldsymbol{x}})}=\cos(\varphi_{\tilde{\boldsymbol{a}},\tilde{\boldsymbol{x}}})\ket0+e^{i\alpha_{\tilde{\boldsymbol{a}},\tilde{\boldsymbol{x}}}}\sin(\varphi_{\tilde{\boldsymbol{a}},\tilde{\boldsymbol{x}}})\ket1\ ,\label{eq:parametrization}
\end{equation}
where $\varphi_{\tilde{\boldsymbol{a}},\tilde{\boldsymbol{x}}}\in[0,\pi/2]$.
It should be noted that $(\varphi_{\tilde{\boldsymbol{a}},\tilde{\boldsymbol{x}}},\alpha_{\tilde{\boldsymbol{a}},\tilde{\boldsymbol{x}}})$
may depend on $\theta$ through $\tilde{\boldsymbol{a}},\tilde{\boldsymbol{x}}$:
because $P_{\boldsymbol{X}|X_{f},\Omega}^{\theta}$ may depend on
$\theta$, the values $\tilde{\boldsymbol{a}},\tilde{\boldsymbol{x}}$
for which Eq.\ (\ref{eq:Kpsiprop}) holds may vary for different
values of $\theta$. However, for finitely many values of $\boldsymbol{a}$,
$\boldsymbol{x}$, there are only finitely many states and finitely
many $(\varphi_{{\boldsymbol{a}},{\boldsymbol{x}}},\alpha_{{\boldsymbol{a}},{\boldsymbol{x}}})$
to pick from, so some choice of states as in Eq.\ (\ref{eq:parametrization})
must still be able to satisfy Eq.\ (\ref{eq:case_a}) for a continuous
set of values $\theta$.\par Substituting Eqs.\ (\ref{eq:psitheta})
and (\ref{eq:parametrization}) in (\ref{eq:case_a00},\ref{eq:case_a10}),
respectively, we see that 
\begin{equation}
\frac{K_{\omega00}^{\theta}}{K_{\omega01}^{\theta}}=-e^{i\alpha_{\boldsymbol{1}\boldsymbol{0}}}\tan\varphi_{\boldsymbol{1}\boldsymbol{0}};\ \frac{K_{\omega10}^{\theta}}{K_{\omega11}^{\theta}}=-e^{i\alpha_{\boldsymbol{0}\boldsymbol{0}}}\tan\varphi_{\boldsymbol{0}\boldsymbol{0}};\label{eq:Kraus_1}
\end{equation}
where $K_{\omega ij}^{\theta}:=\braket{i}{K_{\omega}^{\theta}|j}$.
Doing the same in (\ref{eq:case_a01},\ref{eq:case_a11}) and substituting
(\ref{eq:Kraus_1}), we find, respectively, 
\begin{align}
\frac{K_{\omega11}^{\theta}}{K_{\omega01}^{\theta}} & =\tan\theta\frac{\tan\varphi_{\boldsymbol{0}\boldsymbol{1}}e^{i\alpha_{\boldsymbol{0}\boldsymbol{1}}}-\tan\varphi_{\boldsymbol{1}\boldsymbol{0}}e^{i\alpha_{\boldsymbol{1}\boldsymbol{0}}}}{\tan\varphi_{\boldsymbol{0}\boldsymbol{1}}e^{i\alpha_{\boldsymbol{0}\boldsymbol{1}}}+\tan\varphi_{\boldsymbol{0}\boldsymbol{0}}e^{i\alpha_{\boldsymbol{0}\boldsymbol{0}}}}\label{eq:Kraus_2}\\
\frac{K_{\omega11}^{\theta}}{K_{\omega01}^{\theta}} & =\frac{-1}{\tan\theta}\frac{\tan\varphi_{\boldsymbol{1}\boldsymbol{1}}e^{i\alpha_{\boldsymbol{1}\boldsymbol{1}}}-\tan\varphi_{\boldsymbol{1}\boldsymbol{0}}e^{i\alpha_{\boldsymbol{1}\boldsymbol{0}}}}{\tan\varphi_{\boldsymbol{1}\boldsymbol{1}}e^{i\alpha_{\boldsymbol{1}\boldsymbol{1}}}-\tan\varphi_{\boldsymbol{0}\boldsymbol{0}}e^{i\alpha_{\boldsymbol{0}\boldsymbol{0}}}}.\label{eq:Kraus_3}
\end{align}
Equating the two, we have 
\begin{equation}
\begin{split}\tan^{2}\theta\left(\frac{\tan\varphi_{\boldsymbol{0}\boldsymbol{1}}e^{i\alpha_{\boldsymbol{0}\boldsymbol{1}}}-\tan\varphi_{\boldsymbol{1}\boldsymbol{0}}e^{i\alpha_{\boldsymbol{1}\boldsymbol{0}}}}{\tan\varphi_{\boldsymbol{0}\boldsymbol{1}}e^{i\alpha_{\boldsymbol{0}\boldsymbol{1}}}+\tan\varphi_{\boldsymbol{0}\boldsymbol{0}}e^{i\alpha_{\boldsymbol{0}\boldsymbol{0}}}}\right)++\left(\frac{\tan\varphi_{\boldsymbol{1}\boldsymbol{1}}e^{i\alpha_{\boldsymbol{1}\boldsymbol{1}}}-\tan\varphi_{\boldsymbol{1}\boldsymbol{0}}e^{i\alpha_{\boldsymbol{1}\boldsymbol{0}}}}{\tan\varphi_{\boldsymbol{1}\boldsymbol{1}}e^{i\alpha_{\boldsymbol{1}\boldsymbol{1}}}-\tan\varphi_{\boldsymbol{0}\boldsymbol{0}}e^{i\alpha_{\boldsymbol{0}\boldsymbol{0}}}}\right)=0\ ,\end{split}
\label{eq:tantheta}
\end{equation}
which, for fixed $\varphi_{\tilde{\boldsymbol{a}},\tilde{\boldsymbol{x}}},\alpha_{\tilde{\boldsymbol{a}},\tilde{\boldsymbol{x}}}$,
must hold for a continuous set of values $\theta$. This is only possible
if both parentheses are zero, which in turn implies $(\varphi_{\boldsymbol{0},\boldsymbol{1}},\alpha_{\boldsymbol{0},\boldsymbol{1}})=(\varphi_{\boldsymbol{1},\boldsymbol{0}},\alpha_{\boldsymbol{1},\boldsymbol{0}})=(\varphi_{\boldsymbol{1},\boldsymbol{1}},\alpha_{\boldsymbol{1},\boldsymbol{1}})$,
or $\ket{\psi(\boldsymbol{0},\boldsymbol{1})}=\ket{\psi(\boldsymbol{1},\boldsymbol{0})}=\ket{\psi(\boldsymbol{1},\boldsymbol{1})}$,
contradicting the established relation $\ket{\psi(\boldsymbol{0},\boldsymbol{1})}\neq\ket{\psi(\boldsymbol{1},\boldsymbol{1})}$.
This concludes the demonstration for non-single-state assemblages.\par Finally,
let us show that a single-state assemblage is unable to do the task.
From (\ref{eq:1WLOCCapplied}), 
\begin{equation}
\begin{split}\sum_{\boldsymbol{a},\boldsymbol{x}}P_{\boldsymbol{X}|X_{f},\Omega}^{\theta}(\boldsymbol{x}|x_{f},\omega) & P_{A_{f}|\boldsymbol{A},\boldsymbol{X},\Omega,X_{f}}^{\theta}(a_{f}|\boldsymbol{a},\boldsymbol{x},\omega,x_{f})P_{\boldsymbol{A}|\boldsymbol{X}}(\boldsymbol{a}|\boldsymbol{x})\times\\
 & \times K_{\omega}^{\theta}\ket{\psi_{\text{single}}}\bra{\psi_{\text{single}}}K_{\omega}^{\theta\dagger}\sim\ket{\psi^{\theta}(a_{f},x_{f})}\bra{\psi^{\theta}(a_{f},x_{f})}\ .
\end{split}
\label{eq:1WLOCC_singlestate}
\end{equation}
 The sum on the left-hand side is not zero for at least two pairs
$(a_{f},x_{f})$, hence $K_{\omega}^{\theta}\ket{\psi_{\text{single}}}$
must be proportional to $\ket{\psi^{\theta}(a_{f},x_{f})}$ for both
these pairs. This is incompatible with Eq.\ (\ref{eq:psitheta}),
since none of the $\ket{\psi^{\theta}(a_{f},x_{f})}$ are proportional
to one another. \end{proof}

\section{Redefinition of genuinely multipartite steering}

\label{sec:def_gen_multipartite} Although our discussion has focused
on steering along a fixed bipartition, it has a bearing on genuine
multipartite steering as well. This concept hinges on bi-separability
over all possible bipartitions, as used by D. Cavalcanti \emph{et
al} to define genuine multipartite steering in \cite{Cavalcanti2015a}.
Interestingly, however, our results can be used to generalize that
definition.

\noindent Redefinition of genuinely multipartite steering: \emph{An
assemblage $\boldsymbol{\sigma}$ is genuinely multipartite steerable
if it does not} admit a decomposition of the form \begin{subequations}
\begin{eqnarray}
\sigma_{a,b|x,y}=\sum_{\mu}p_{\mu}^{A|BC} & P_{a|x;\mu} & \sigma_{b|y}^{C}(\mu)\label{eq:biseparableassemblage1}\\
+\sum_{\nu}p_{\nu}^{B|AC} & P_{b|y;\nu} & \sigma_{a|x}^{C}(\nu)\label{eq:biseparableassemblage2}\\
+\sum_{\lambda}p_{\lambda}^{AB|C} & P_{a,b|x,y,\lambda}\  & \varrho^{C}(\lambda)\label{eq:biseparableassemblage3}
\end{eqnarray}
\label{eq:biseparableassemblage}\end{subequations} \emph{where the
last sum can be any TO-LHS assemblage.}

The difference from D. Cavalcanti \emph{et al}'s definition is that
they consider assemblages obtained from a quantum realization with
bi-separable states. Reproducing Eqs.\ (4,5,6) of \cite{Cavalcanti2015a},
a tripartite state $\varrho^{ABC}$ is bi-separable when decomposable
as \begin{subequations}
\begin{eqnarray}
\varrho^{ABC}=\sum_{\mu}p_{\mu}^{A|BC} & \ \varrho_{\mu}^{A} & \otimes\varrho_{\mu}^{BC}\label{eq:biseparablestate1}\\
+\sum_{\nu}p_{\nu}^{B|AC} & \ \varrho_{\nu}^{B} & \otimes\varrho_{\nu}^{AC}\label{eq:biseparablestate2}\\
+\sum_{\lambda}p_{\lambda}^{AB|C} & \ \varrho_{\lambda}^{AB} & \otimes\varrho_{\lambda}^{C}\ .\label{eq:biseparablestate3}
\end{eqnarray}
\label{eq:biseparablestate}\end{subequations} Under local measurements
on the $A$ and $B$ partitions, this yields a 2DI+1DD assemblage
of the form (\ref{eq:biseparableassemblage}) (akin to Eqs.\ (7,8,9)
of \cite{Cavalcanti2015a}), but with a distribution $P_{a,b|x,y,\lambda}$
in Eq.\ (\ref{eq:biseparableassemblage3}) necessarily quantum-realizable
(a subset of NS distributions). In other words, they only allow the
sum in Eq.\ (\ref{eq:biseparableassemblage3}) to be quantum-realizable
NS-LHS assemblages. Our redefinition, then, reduces the set of genuinely
multipartite steerable assemblages.

Morover, we show in Section \ref{app:strict_subset} that there are,
in fact, quantum-realizable assemblages affected by this change. These
assemblages are decomposable as in Eq.\ (\ref{eq:biseparableassemblage})
only with a TO-LHS (not NS-LHS) term in Eq.\ (\ref{eq:biseparableassemblage3}),
and hence their quantum realization requires genuinely multipartite
entangled states {[}i.e.\ not decomposable as Eq.\ (\ref{eq:biseparablestate}){]}.
Interestingly, in this case genuine multipartite entanglement is certified
in the semi-DI scenario without steering: the need for a TO-LHS term
in Eq.\ (\ref{eq:biseparableassemblage3}) implies the inexistence
of a bi-separable decomposition (\ref{eq:biseparablestate}) for the
underlying quantum state, and also implies unsteerability.

\section{On the sets of LHS assemblages, TO-LHS assemblages, and NS-LHS assemblages}

\label{app:strict_subset}

{\scriptsize{}}
\begin{table*}[t]
{\scriptsize{}\setlength{\tabcolsep}{0.5em} 
}%
\begin{tabular}{cc|cc|c}
{\scriptsize{}$a$} & {\scriptsize{}$b$} & {\scriptsize{}$x$} & {\scriptsize{}$y$} & {\scriptsize{}$\sigma_{a,b|x,y}^{W}$ }\tabularnewline
\hline 
{\scriptsize{}0} & {\scriptsize{}0} & {\scriptsize{}0} & {\scriptsize{}0} & {\scriptsize{}$\frac{1}{6}\left[2\eta^{2}\ket0\bra0+(1+\sqrt{1-\eta^{2}}-\eta^{2}/2)\ket1\bra1+\eta(1+\sqrt{1-\eta^{2}})X\right]$ }\tabularnewline
{\scriptsize{}0} & {\scriptsize{}1} & {\scriptsize{}0} & {\scriptsize{}0} & {\scriptsize{}$\frac{1}{6}\left[2(1-\eta^{2})\ket0\bra0+\eta^{2}/2\ket1\bra1-\eta\sqrt{1-\eta^{2}}X\right]$ }\tabularnewline
{\scriptsize{}1} & {\scriptsize{}0} & {\scriptsize{}0} & {\scriptsize{}0} & {\scriptsize{}$\frac{1}{6}\left[2(1-\eta^{2})\ket0\bra0+\eta^{2}/2\ket1\bra1-\eta\sqrt{1-\eta^{2}}X\right]$ }\tabularnewline
{\scriptsize{}1} & {\scriptsize{}1} & {\scriptsize{}0} & {\scriptsize{}0} & {\scriptsize{}$\frac{1}{6}\left[2\eta^{2}\ket0\bra0+(1-\sqrt{1-\eta^{2}}-\eta^{2}/2)\ket1\bra1-\eta(1-\sqrt{1-\eta^{2}})X\right]$ }\tabularnewline
\hline 
{\scriptsize{}0} & {\scriptsize{}0} & {\scriptsize{}0} & {\scriptsize{}1} & {\scriptsize{}$\frac{1}{12}\left[2(1+2\eta\sqrt{1-\eta^{2}})\ket0\bra0+(1-\eta+\sqrt{1-\eta^{2}}-\eta\sqrt{1-\eta^{2}})\ket1\bra1+(1+\eta+\sqrt{1-\eta^{2}}-2\eta^{2})X\right]$ }\tabularnewline
{\scriptsize{}0} & {\scriptsize{}1} & {\scriptsize{}0} & {\scriptsize{}1} & {\scriptsize{}$\frac{1}{12}\left[2(1-2\eta\sqrt{1-\eta^{2}})\ket0\bra0+(1+\eta+\sqrt{1-\eta^{2}}+\eta\sqrt{1-\eta^{2}})\ket1\bra1-(1-\eta+\sqrt{1-\eta^{2}}-2\eta^{2})X\right]$ }\tabularnewline
{\scriptsize{}1} & {\scriptsize{}0} & {\scriptsize{}0} & {\scriptsize{}1} & {\scriptsize{}$\frac{1}{12}\left[2(1-2\eta\sqrt{1-\eta^{2}})\ket0\bra0+(1-\eta-\sqrt{1-\eta^{2}}+\eta\sqrt{1-\eta^{2}})\ket1\bra1-(1+\eta-\sqrt{1-\eta^{2}}-2\eta^{2})X\right]$ }\tabularnewline
{\scriptsize{}1} & {\scriptsize{}1} & {\scriptsize{}0} & {\scriptsize{}1} & {\scriptsize{}$\frac{1}{12}\left[2(1+2\eta\sqrt{1-\eta^{2}})\ket0\bra0+(1+\eta-\sqrt{1-\eta^{2}}-\eta\sqrt{1-\eta^{2}})\ket1\bra1+(1-\eta-\sqrt{1-\eta^{2}}-2\eta^{2})X\right]$ }\tabularnewline
\hline 
{\scriptsize{}$a$} & {\scriptsize{}$b$} & {\scriptsize{}1} & {\scriptsize{}0} & {\scriptsize{}$\sigma_{a,b|1,0}^{W}=\sigma_{b,a|0,1}^{W}$}\tabularnewline
\hline 
{\scriptsize{}0} & {\scriptsize{}0} & {\scriptsize{}1} & {\scriptsize{}1} & {\scriptsize{}$\frac{1}{6}\left[2(1-\eta^{2})\ket0\bra0+(1-\eta-(1-\eta^{2})/2)\ket1\bra1+\sqrt{1-\eta^{2}}(1-\eta)X\right]$ }\tabularnewline
{\scriptsize{}0} & {\scriptsize{}1} & {\scriptsize{}1} & {\scriptsize{}1} & {\scriptsize{}$\frac{1}{6}\left[2\eta^{2}\ket0\bra0+(1-\eta^{2})/2\ket1\bra1+\eta\sqrt{1-\eta^{2}}X\right]$ }\tabularnewline
{\scriptsize{}1} & {\scriptsize{}0} & {\scriptsize{}1} & {\scriptsize{}1} & {\scriptsize{}$\frac{1}{6}\left[2\eta^{2}\ket0\bra0+(1-\eta^{2})/2\ket1\bra1+\eta\sqrt{1-\eta^{2}}X\right]$ }\tabularnewline
{\scriptsize{}1} & {\scriptsize{}1} & {\scriptsize{}1} & {\scriptsize{}1} & {\scriptsize{}$\frac{1}{6}\left[2(1-\eta^{2})\ket0\bra0+(1+\eta-(1-\eta^{2})/2)\ket1\bra1-\sqrt{1-\eta^{2}}(1+\eta)X\right]$ }\tabularnewline
\end{tabular}{\scriptsize{}\caption{Example quantum assemblage to demonstrate strict inclusion of \textsf{NS-LHS}
in \textsf{TO-LHS}.}
\label{tab:Wassemblage} }
\end{table*}
{\scriptsize\par}

We now state a theorem that sustains Fig.\ref{fig:setsLHS} b), concerning
the inclusion relations between the sets NS-LHS, TO-LHS, and LHS.
\begin{thm} NS-LHS $\subset$ TO-LHS $\subset$ LHS, and these relations
also hold strictly if we restrict to quantum-realizable assemblages.\label{th:sets_LHS}\end{thm}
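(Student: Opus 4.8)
The plan is to dispatch the two non-strict inclusions directly from the definitions, and then to separate the three sets with explicit quantum-realizable witnesses, reusing the exposure results already proved.

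First I would establish the inclusions. For $\textsf{NS-LHS}\subseteq\textsf{TO-LHS}$, the key observation is that a \emph{fully} non-signaling $\lambda$-behavior is, in particular, both non-signaling from Bob to Alice and non-signaling from Alice to Bob. Hence a single \textsf{NS-LHS} decomposition already serves simultaneously as the two time-ordered decompositions \eqref{eq:signAB} and \eqref{eq:signBA} required by \textsf{TO-LHS}: one simply takes $P'_\lambda=P_\lambda$, $\varrho'_\lambda=\varrho_\lambda$, and $\boldsymbol{P}^{(A\to B)}_\lambda=\boldsymbol{P}^{(B\to A)}_\lambda$ equal to the given non-signaling behaviors. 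For $\textsf{TO-LHS}\subseteq\textsf{LHS}$, note that either one of the time-ordered decompositions, say \eqref{eq:signAB}, is already a generic LHS decomposition \eqref{eq:LHS}, since the latter imposes no constraint whatsoever on $P_{a,b|x,y,\lambda}$. Both inclusions are therefore immediate, and they hold a fortiori when restricted to quantum-realizable assemblages.

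Next I would prove that $\textsf{TO-LHS}\subsetneq\textsf{LHS}$ strictly, using the quantum GHZ-based source assemblage $\boldsymbol{\sigma}^{(\text{GHZ})}$ of Eq.\,\eqref{eq:quantumorig}. We already exhibited an LHS decomposition for it [Eqs.\,\eqref{eq:quantumLHSdecomp1}--\eqref{eq:quantumLHSdecomp2}], so $\boldsymbol{\sigma}^{(\text{GHZ})}\in\textsf{LHS}$. If it were \textsf{TO-LHS}, it would in particular admit an $A\to B$ decomposition \eqref{eq:signAB}; applying the wiring $y=a$ and running the argument of Eq.\,\eqref{eq:norm} would then force the resulting bipartite assemblage to be unsteerable. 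This contradicts the fact, certified by the steering witness $W$ of Eq.\,\eqref{eq:witnessW} (which returns a value exceeding $1$), that the wired assemblage \eqref{eq:quantumsteerable} is steerable. Hence $\boldsymbol{\sigma}^{(\text{GHZ})}\in\textsf{LHS}\setminus\textsf{TO-LHS}$, and the separation is realized by a quantum assemblage.

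Finally, for $\textsf{NS-LHS}\subsetneq\textsf{TO-LHS}$ I would use the quantum assemblage $\boldsymbol{\sigma}^{W}$ of Table~\ref{tab:Wassemblage}, which depends on a parameter $\eta$, in two steps. Membership in \textsf{TO-LHS} is shown constructively: I would exhibit one explicit decomposition \eqref{eq:signAB} whose $\lambda$-behaviors are non-signaling from Bob to Alice, and then obtain the mirror decomposition \eqref{eq:signBA} for free from the $A\leftrightarrow B$ exchange symmetry of $\boldsymbol{\sigma}^{W}$ visible in the table (e.g.\ $\sigma^{W}_{a,b|1,0}=\sigma^{W}_{b,a|0,1}$). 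The hard half is certifying that $\boldsymbol{\sigma}^{W}$ admits \emph{no} \textsf{NS-LHS} decomposition. Since inputs and outputs are binary, the bipartite non-signaling polytope is finitely generated (the $16$ local deterministic vertices together with the $8$ PR-type vertices), so \textsf{NS-LHS} membership reduces to a finite semidefinite feasibility program in the sub-normalized hidden states $\{\sigma_v\}$ indexed by these vertices, exactly as in the SDP \eqref{eq:xsdp} but with the deterministic behaviors replaced by the non-signaling vertices. I would rule out feasibility by producing a dual certificate, i.e.\ a functional on assemblages that is nonnegative on every \textsf{NS-LHS} assemblage yet strictly negative on $\boldsymbol{\sigma}^{W}$, valid on a suitable range of $\eta$. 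The main obstacle is precisely this non-membership proof: both \textsf{TO-LHS} and \textsf{NS-LHS} yield unsteerable wired assemblages in every time ordering, so wirings cannot distinguish them and the exposure argument used for the GHZ case is unavailable here; the separation must be detected at the level of the decomposition itself, and the laborious part is verifying nonnegativity of the witness over the whole \textsf{NS-LHS} set (rather than on a finite vertex sample) and pinning down the admissible $\eta$. A supra-quantum instance follows from the same program with the positivity constraints on the $\varrho_\lambda$ dropped, which typically makes the infeasibility easier to certify and completes the strictness in the unrestricted (non-signaling) regime as well.
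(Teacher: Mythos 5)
Your overall strategy coincides with the paper's own proof. The two non-strict inclusions are obtained exactly as you argue, directly from the definitions. The strictness of $\textsf{TO-LHS}\subsetneq\textsf{LHS}$ is established, as you propose, by the exposure phenomenon applied to the quantum GHZ assemblage of Eq.~\eqref{eq:quantumorig}: it lies in \textsf{LHS} by Eqs.~\eqref{eq:quantumLHSdecomp1}--\eqref{eq:quantumLHSdecomp2}, yet cannot lie in \textsf{TO-LHS} because the wiring $y=a$ maps it to the steerable assemblage \eqref{eq:quantumsteerable}, contradicting the argument of Eq.~\eqref{eq:norm}. For $\textsf{NS-LHS}\subsetneq\textsf{TO-LHS}$, your two-step plan (explicit time-ordered decomposition on one side, an SDP-dual witness certifying non-membership in \textsf{NS-LHS} on the other) is precisely the mechanism of the paper: the witness of Table~\ref{tab:NS-LHSwitness} is the dual of the vertex-based feasibility SDP you describe, and the explicit time-ordered decomposition is given in Eq.~\eqref{eq:DeterministicTOLHS} and Table~\ref{tab:sigmalambda}. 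Your symmetry shortcut for obtaining the $B\to A$ decomposition \eqref{eq:signBA} from the $A\to B$ one \eqref{eq:signAB} is legitimate, since the assemblage in question satisfies $\sigma_{a,b|x,y}=\sigma_{b,a|y,x}$.

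There is, however, one genuine gap in your third step: you work with the \emph{pure} $W$-state assemblage $\boldsymbol{\sigma}^{W}$ of Table~\ref{tab:Wassemblage} and hope that tuning the measurement parameter $\eta$ alone yields a window where it is \textsf{TO-LHS} but not \textsf{NS-LHS}. No such window should be expected, and you give no argument for its existence: $\boldsymbol{\sigma}^{W}$ arises from non-trivial local measurements on the pure, genuinely entangled state $\ket{W}$ and is steerable across $AB|C$ for the relevant parameters, so it is not even \textsf{LHS}, let alone \textsf{TO-LHS}. The paper's example is the noisy assemblage $\sigma^{\text{noisy }W}_{a,b|x,y}=v\,\sigma^{W}_{a,b|x,y}+(1-v)\,\mathbb{1}^{C}/8$ of Eq.~\eqref{eq:noisyWassemb} with $\eta\approx0.97177$ held fixed: the white-noise admixture is what creates the required window, since an explicit time-ordered decomposition exists at $v=0.64$ (hence for all $v\le0.64$, mixtures of \textsf{TO-LHS} assemblages with the maximally mixed one remaining \textsf{TO-LHS}), while the \textsf{NS-LHS} witness is still violated for $v\gtrsim0.58$. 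Your plan needs this second, noise parameter; $\eta$ alone cannot do the job. A secondary and inessential slip: a supra-quantum instance does \emph{not} follow by ``dropping the positivity constraints on the $\varrho_\lambda$'' (that only enlarges the feasible set of the membership program, and supra-quantum assemblages have positive elements in any case); the paper's supra-quantum example instead comes from the GYNI-violating time-ordered behavior of the device-independent literature. Since the strictness claim for quantum-realizable assemblages is carried entirely by the quantum examples, this last point does not affect the main argument.
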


\begin{proof} From the definitions in Eqs.\ (\ref{eq:LHS},\ref{eq:TOdef}),
it is clear that NS-LHS $\subseteq$ TO-LHS $\subseteq$ LHS. The
phenomenon of exposure implies that the assemblages in Eqs.\ (\ref{eq:genactiv},\ref{eq:quantumorig})
belong to LHS, but not to TO-LHS, so the inclusion of one in the other
is strict (notice that assemblage (\ref{main-eq:quantumorig}) is
quantum realizable). To prove that NS-LHS is a strict subset of TO-LHS,
we need an example of a TO-LHS assemblage that does not belong to
NS-LHS. 
One way to do so is to follow the reasoning of \cite{Gallego2012}:
take the time-ordered decomposition of the distribution $\boldsymbol{P}$
from \cite{Gallego2011} that violates the guess-your-neighbor's-input
(GYNI) inequality and find the $\varrho_{\lambda}$ that best mimic
the marginal $P_{a|x,\lambda}$ --- this effectively amounts to a
one-time program \cite{Roehsner2018}. The resulting TO-LHS assemblage
violates GYNI, hence is not NS-LHS, but it is also supra-quantum,
since no quantum state can violate the GYNI inequality.\par To find
a quantum-realizable assemblage that belongs to TO-LHS, but not to
NS-LHS, we take inspiration from Bancal \emph{et al} \cite{Bancal2013},
who have found Bell behaviors obtainable from noisy $W$ states with
the analogous DI-scenario property (TO-LHV, but not NS-LHV). A pure
$W$ state is given by $\ket W:=(\ket{001}+\ket{010}+\ket{100})/\sqrt{3}$,
its noisy version with visibility $v$, by 
\begin{equation}
\rho_{W}=v\ \ket W\bra W+(1-v)\ \mathbb{1}^{(ABC)}/8\ .\label{eq:noisyWstate}
\end{equation}
Alice and Bob make von Neumann measurements on the bases $\eta X+\sqrt{1-\eta^{2}}Z$
($x$ or $y=0$) and $\sqrt{1-\eta^{2}}X-\eta Z$ ($x$ or $y=1$),
with $\eta\approx0.97177$, which yields the assemblage 
\begin{equation}
\sigma_{a,b|x,y}^{\text{noisy }W}=v\ \sigma_{a,b|x,y}^{W}+(1-v)\ \mathbb{1}^{C}/8\ ,\label{eq:noisyWassemb}
\end{equation}
where $\sigma_{a,b|x,y}^{W}$ is given in Table \ref{tab:Wassemblage}.
These measurements, together with an appropriate measurement by Charlie,
yield in \cite{Bancal2013} a DI-inequality violation requiring minimal
visibility.\par We obtain the optimal NS-LHS witness $\boldsymbol{W}=\{W_{abxy}\}_{a,b,x,y}$
for $\sigma_{a,b|x,y}^{\text{noisy }W}$ for $v=0.58$, i.e. $\boldsymbol{W}$
satisfies the property 
\begin{equation}
-1\leq\sum_{a,b,x,y}\Tr[W_{abxy}\,\sigma_{a,b\vert x,y}^{\text{NS-LHS}}]\leq0\label{eq:witness_bound}
\end{equation}
for every NS-LHS assemblage $\boldsymbol{\sigma}^{\text{NS-LHS}}$.
Its components $W_{abxy}$ are given in Table \ref{tab:NS-LHSwitness}.
This witness is violated by $\sigma_{a,b|x,y}^{\text{noisy }W}$ from
$v\approx0.58$ onwards; for $v=0.64$, it returns 0.0301.\par 
\global\long\def\mspc{\hphantom{-}}%

{\footnotesize{} \begin{table*}[htb]
\setlength{\tabcolsep}{0.5em} 
\normalsize\centering 
{\footnotesize \begin{tabular}{|c|cccc|}\hline \diagbox{$x,y$}{$a,b$} &$00$&$01$&$10$&$11$\\ \hline  &&&&\\[-2ex] $00$ & $\begin{bmatrix}-0.0056 & \mspc 0.1194 \\ \mspc 0.1194 & -0.1205	\end{bmatrix}$& $\begin{bmatrix}-0.1394 & -0.0603\\ -0.0603 & \mspc 0.0662				\end{bmatrix}$& $\begin{bmatrix}-0.1394 & -0.0603\\ -0.0603 & \mspc 0.0662				\end{bmatrix}$& $\begin{bmatrix}\mspc 0.0239 & -0.0656 \\ -0.0656 & -0.1869				\end{bmatrix}$\\[2ex] $01$ & $\begin{bmatrix}\mspc 0.0233 & -0.0324 \\ -0.0324 & -0.1706				\end{bmatrix}$& $\begin{bmatrix}-0.2194 & \mspc 0.1346 \\ \mspc 0.1346 & -0.0079	\end{bmatrix}$& $\begin{bmatrix}-0.0560 &\mspc 0.1109\\ \mspc 0.1109 &\mspc 0.0114\end{bmatrix}$& $\begin{bmatrix}-0.0417 & -0.1490 \\ -0.1490 & -0.1079						\end{bmatrix}$\\[2ex] $10$ & $\begin{bmatrix}\mspc 0.0233 & -0.0324 \\-0.0324 & -0.1706				\end{bmatrix}$& $\begin{bmatrix}-0.0560 &\mspc 0.1109 \\ \mspc 0.1109&\mspc 0.0114\end{bmatrix}$& $\begin{bmatrix}-0.2194 &\mspc 0.1346 \\ \mspc 0.1346 & -0.0079		\end{bmatrix}$& $\begin{bmatrix}-0.0417 & -0.1490 \\ -0.1490 & -0.1079						\end{bmatrix}$ \\[2ex] $11$& $\begin{bmatrix}-0.0410 & -0.0560 \\ -0.0560 & \mspc 0.0863				\end{bmatrix}$& $\begin{bmatrix}\mspc 0.0665 &\mspc 0.0431\\ \mspc 0.0431 &-0.2194\end{bmatrix}$& $\begin{bmatrix}\mspc 0.0665 &\mspc 0.0431\\ \mspc 0.0431 &-0.2194\end{bmatrix}$& $\begin{bmatrix}	-0.4431 &  -0.0727 \\   -0.0727 &  \mspc 0.0239	\end{bmatrix}$ \\[2ex] \hline \end{tabular}} \caption{Elements of witness $W_{abxy}$ used to demonstrate strict inclusion of \textsf{NS-LHS} in \textsf{TO-LHS}.} \label{tab:NS-LHSwitness} 
\end{table*}

\begin{table*}[htb]
{\scriptsize \begin{tabular}{|cc||cc||cc||cc|}\hline \ \ \ $\lambda$ \ \ \ & $\sigma_\lambda$&\ \ \ $\lambda$\ \ \ &$\sigma_\lambda$&\ \ \ $\lambda$\ \ \ &$\sigma_\lambda$&\ \ \ $\lambda$\ \ \ & $\sigma_\lambda$ \rule{0pt}{2.5ex} \\[.5ex] \hline {0}&$\begin{bmatrix} 	\mspc0.0045 & \mspc0.0013\\	\mspc0.0013 & \mspc0.0009\\\end{bmatrix}$& {1}&$\begin{bmatrix}	\mspc0.0928 & \mspc0.0246\\	\mspc0.0246 & \mspc0.0070\\\end{bmatrix}$& {2}&$\begin{bmatrix}	\mspc0.0036 & \mspc0.0011\\	\mspc0.0011 & \mspc0.0009\\\end{bmatrix}$& {3}&$\begin{bmatrix}	\mspc0.0244 & \mspc0.0068\\	\mspc0.0068 & \mspc0.0024\\\end{bmatrix}$\rule{0pt}{4ex}\\[2ex] \hline {4}&$\begin{bmatrix}	\mspc0.0055 & \mspc0.0058\\	\mspc0.0058 & \mspc0.0071\\\end{bmatrix}$& {5}&$\begin{bmatrix}	\mspc0.0084 & \mspc0.0071\\	\mspc0.0071 & \mspc0.0067\\\end{bmatrix}$& {6}&$\begin{bmatrix}	\mspc0.0066 & \mspc0.0076\\	\mspc0.0076 & \mspc0.0098\\\end{bmatrix}$& {7}&$\begin{bmatrix}	\mspc0.0100 & \mspc0.0090\\	\mspc0.0090 & \mspc0.0089\\\end{bmatrix}$\rule{0pt}{4ex}\\[2ex]  \hline {8}&$\begin{bmatrix}	\mspc0.0048 & -0.0029\\	-0.0029 & \mspc0.0025\\\end{bmatrix}$& {9}&$\begin{bmatrix}	\mspc0.0118 & -0.0052\\	-0.0052 & \mspc0.0029\\\end{bmatrix}$& {10}&$\begin{bmatrix}	\mspc0.0040 & -0.0026\\	-0.0026 & \mspc0.0024\\\end{bmatrix}$& {11}&$\begin{bmatrix}	\mspc0.0079 & -0.0037\\	-0.0037 & \mspc0.0024\\\end{bmatrix}$\rule{0pt}{4ex}\\[2 ex] \hline {12}&$\begin{bmatrix}	\mspc0.0007 & -0.0004\\	-0.0004 & \mspc0.0024\\\end{bmatrix}$& {13}&$\begin{bmatrix}	\mspc0.0008 & -0.0002\\	-0.0002 & \mspc0.0014\\\end{bmatrix}$& {14}&$\begin{bmatrix}	\mspc0.0006 & -0.0004\\	-0.0004 & \mspc0.0029\\\end{bmatrix}$& {15}&$\begin{bmatrix}	\mspc0.0007 & -0.0002\\	-0.0002 & \mspc0.0015\\\end{bmatrix}$\rule{0pt}{4ex}\\[2 ex] \hline {16}&$\begin{bmatrix}	\mspc0.0219 & \mspc0.0118\\	\mspc0.0118 & \mspc0.0064\\\end{bmatrix}$& {17}&$\begin{bmatrix}	\mspc0.0001 & \mspc0.0002\\	\mspc0.0002 & \mspc0.0010\\\end{bmatrix}$& {18}&$\begin{bmatrix}	\mspc0.0028 & -0.0005\\	-0.0005 & \mspc0.0001\\\end{bmatrix}$& {19}&$\begin{bmatrix}	\mspc0.0002 & -0.0002\\	-0.0002 & \mspc0.0004\\\end{bmatrix}$\rule{0pt}{4ex}\\[2 ex] \hline {20}&$\begin{bmatrix}	\mspc0.0612 & \mspc0.0411\\	\mspc0.0411 & \mspc0.0277\\\end{bmatrix}$& {21}&$\begin{bmatrix}	\mspc0.0034 & \mspc0.0126\\	\mspc0.0126 & \mspc0.0467\\\end{bmatrix}$& {22}&$\begin{bmatrix}	\mspc0.0007 & -0.0001\\	-0.0001 & \mspc0.0001\\\end{bmatrix}$& {23}&$\begin{bmatrix}	\mspc0.0002 & -0.0002\\	-0.0002 & \mspc0.0004\\\end{bmatrix}$\rule{0pt}{4ex}\\[2 ex] \hline {24}&$\begin{bmatrix}	\mspc0.0007 & \mspc0.0003\\	\mspc0.0003 & \mspc0.0002\\\end{bmatrix}$& {25}&$\begin{bmatrix}	\mspc0.0001 & \mspc0.0001\\	\mspc0.0001 & \mspc0.0010\\\end{bmatrix}$& {26}&$\begin{bmatrix}	\mspc0.0135 & -0.0036\\	-0.0036 & \mspc0.0010\\\end{bmatrix}$& {27}&$\begin{bmatrix}	\mspc0.0074 & -0.0106\\	-0.0106 & \mspc0.0153\\\end{bmatrix}$\rule{0pt}{4ex}\\[2 ex] \hline {28}&$\begin{bmatrix}	\mspc0.0006 & \mspc0.0003\\	\mspc0.0003 & \mspc0.0003\\\end{bmatrix}$& {29}&$\begin{bmatrix}	\mspc0.0010 & \mspc0.0073\\	\mspc0.0073 & \mspc0.0545\\\end{bmatrix}$& {30}&$\begin{bmatrix}	\mspc0.0008 & -0.0002\\	-0.0002 & \mspc0.0001\\\end{bmatrix}$& {31}&$\begin{bmatrix}	\mspc0.0015 & -0.0025\\	-0.0025 & \mspc0.0045\\\end{bmatrix}$\rule{0pt}{4ex}\\[2 ex] \hline {32}&$\begin{bmatrix}	\mspc0.0020 & \mspc0.0006\\	\mspc0.0006 & \mspc0.0016\\\end{bmatrix}$& {33}&$\begin{bmatrix}	\mspc0.0049 & \mspc0.0013\\	\mspc0.0013 & \mspc0.0013\\\end{bmatrix}$& {34}&$\begin{bmatrix}	\mspc0.0017 & \mspc0.0006\\	\mspc0.0006 & \mspc0.0018\\\end{bmatrix}$& {35}&$\begin{bmatrix}	\mspc0.0038 & \mspc0.0011\\	\mspc0.0011 & \mspc0.0014\\\end{bmatrix}$\rule{0pt}{4ex}\\[2 ex] \hline {36}&$\begin{bmatrix}	\mspc0.0020 & -0.0013\\	-0.0013 & \mspc0.0022\\\end{bmatrix}$& {37}&$\begin{bmatrix}	\mspc0.0031 & -0.0012\\	-0.0012 & \mspc0.0014\\\end{bmatrix}$& {38}&$\begin{bmatrix}	\mspc0.0018 & -0.0013\\	-0.0013 & \mspc0.0024\\\end{bmatrix}$& {39}&$\begin{bmatrix}	\mspc0.0026 & -0.0011\\	-0.0011 & \mspc0.0015\\\end{bmatrix}$\rule{0pt}{4ex}\\[2 ex] \hline {40}&$\begin{bmatrix}	\mspc0.0037 & -0.0000\\	-0.0000 & \mspc0.0009\\\end{bmatrix}$& {41}&$\begin{bmatrix}	\mspc0.0261 & \mspc0.0009\\	\mspc0.0009 & \mspc0.0007\\\end{bmatrix}$& {42}&$\begin{bmatrix}	\mspc0.0029 & -0.0000\\	-0.0000 & \mspc0.0010\\\end{bmatrix}$& {43}&$\begin{bmatrix}	\mspc0.0125 & \mspc0.0005\\	\mspc0.0005 & \mspc0.0008\\\end{bmatrix}$\rule{0pt}{4ex}\\[2 ex] \hline {44}&$\begin{bmatrix}	\mspc0.0069 & -0.0040\\	-0.0040 & \mspc0.0032\\\end{bmatrix}$& {45}&$\begin{bmatrix}	\mspc0.0227 & -0.0094\\	-0.0094 & \mspc0.0045\\\end{bmatrix}$& {46}&$\begin{bmatrix}	\mspc0.0055 & -0.0034\\	-0.0034 & \mspc0.0030\\\end{bmatrix}$& {47}&$\begin{bmatrix}	\mspc0.0140 & -0.0060\\	-0.0060 & \mspc0.0033\\\end{bmatrix}$\rule{0pt}{4ex}\\[2 ex] \hline {48}&$\begin{bmatrix}	\mspc0.0062 & \mspc0.0036\\	\mspc0.0036 & \mspc0.0022\\\end{bmatrix}$& {49}&$\begin{bmatrix}	\mspc0.0011 & \mspc0.0051\\	\mspc0.0051 & \mspc0.0258\\\end{bmatrix}$& {50}&$\begin{bmatrix}	\mspc0.0031 & -0.0006\\	-0.0006 & \mspc0.0002\\\end{bmatrix}$& {51}&$\begin{bmatrix}	\mspc0.0007 & -0.0011\\	-0.0011 & \mspc0.0018\\\end{bmatrix}$\rule{0pt}{4ex}\\[2 ex] \hline {52}&$\begin{bmatrix}	\mspc0.0009 & \mspc0.0005\\	\mspc0.0005 & \mspc0.0003\\\end{bmatrix}$& {53}&$\begin{bmatrix}	\mspc0.0001 & \mspc0.0005\\	\mspc0.0005 & \mspc0.0034\\\end{bmatrix}$& {54}&$\begin{bmatrix}	\mspc0.0035 & -0.0008\\	-0.0008 & \mspc0.0003\\\end{bmatrix}$& {55}&$\begin{bmatrix}	\mspc0.0193 & -0.0303\\	-0.0303 & \mspc0.0479\\\end{bmatrix}$\rule{0pt}{4ex}\\[2 ex] \hline {56}&$\begin{bmatrix}	\mspc0.0044 & \mspc0.0023\\	\mspc0.0023 & \mspc0.0013\\\end{bmatrix}$& {57}&$\begin{bmatrix}	\mspc0.0002 & \mspc0.0004\\	\mspc0.0004 & \mspc0.0024\\\end{bmatrix}$& {58}&$\begin{bmatrix}	\mspc0.0287 & -0.0055\\	-0.0055 & \mspc0.0011\\\end{bmatrix}$& {59}&$\begin{bmatrix}	\mspc0.0008 & -0.0011\\	-0.0011 & \mspc0.0018\\\end{bmatrix}$\rule{0pt}{4ex}\\[2 ex] \hline {60}&$\begin{bmatrix}	\mspc0.0008 & \mspc0.0004\\	\mspc0.0004 & \mspc0.0003\\\end{bmatrix}$& {61}&$\begin{bmatrix}	\mspc0.0001 & \mspc0.0002\\	\mspc0.0002 & \mspc0.0015\\\end{bmatrix}$& {62}&$\begin{bmatrix}	\mspc0.0967 & -0.0246\\	-0.0246 & \mspc0.0063\\\end{bmatrix}$& {63}&$\begin{bmatrix}	\mspc0.0206 & -0.0300\\	-0.0300 & \mspc0.0440\\\end{bmatrix}$\rule{0pt}{4ex}\\[2 ex] \hline \end{tabular}} \caption{Non-normalized states $\sigma_\lambda$ needed in Eq.\ \eqref{eq:DeterministicTOLHS} for the TO-LHS decomposition of the assemblage \eqref{eq:noisyWassemb}.} \label{tab:sigmalambda} 
\end{table*}}{\footnotesize\par}

\par However, there is a TO-LHS decomposition of $\sigma_{a,b|x,y}^{\text{noisy }W}$
for $v=0.64$ (hence for $v<0.64$), which, equivalently to Eq.\ (\ref{eq:TOdef}),
can be written as \begin{subequations} 
\begin{align}
\sigma_{a,b|x,y}^{\text{noisy W}} & =\sum_{\lambda}D_{\lambda}(a|x)D_{\lambda}(b|x,y)\,\sigma_{\lambda}\label{eq:DeterministicTOLHS1}\\
 & =\sum_{\lambda}D_{\lambda}(a|x,y)D_{\lambda}(b|y)\,\sigma_{\lambda}\ ,\label{eq:DeterministicTOLHS2}
\end{align}
\label{eq:DeterministicTOLHS}\end{subequations} where the $D_{\lambda}$
are deterministic response functions and $\sigma_{\lambda}:=p_{\lambda}\rho_{\lambda}$
are non-normalized states. Each $D_{\lambda}(a|x)$ is specified by
$a_{x}$, the deterministic outcome $a$ conditioned on $x$; the
notation follows analogously for $D_{\lambda}(b|x,y)$, $D_{\lambda}(a|x,y)$,
and $D_{\lambda}(b|y)$ ($b_{xy}$, $a_{xy}$, and $b_{y}$, respectively).
These are given by {\footnotesize{}
\[
\begin{tabular}{|c|cccccc|}
\hline  \ensuremath{\lambda}  &  \ensuremath{a_{0}}  &  \ensuremath{a_{1}}  &  \ensuremath{b_{00}}  &  \ensuremath{b_{01}}  &  \ensuremath{b_{10}}  &  \ensuremath{b_{11}} \\
\hline  0  &  0  &  0  &  0  &  0  &  0  &  0 \\
 1  &  0  &  0  &  0  &  0  &  0  &  1 \\
 2  &  0  &  0  &  0  &  0  &  1  &  0 \\
 3  &  0  &  0  &  0  &  0  &  1  &  1 \\
\multicolumn{7}{c}{\ensuremath{\vdots}}\\
 62  &  1  &  1  &  1  &  1  &  1  &  0 \\
 63  &  1  &  1  &  1  &  1  &  1  &  1 \\
\hline   &   &   &   &   &   &  
\end{tabular}\ \ \ \ \begin{tabular}{|c|cccccc|}
\hline  \ensuremath{\lambda}  &  \ensuremath{a_{00}}  &  \ensuremath{a_{01}}  &  \ensuremath{a_{10}}  &  \ensuremath{a_{11}}  &  \ensuremath{b_{0}}  &  \ensuremath{b_{1}} \\
\hline  0  &  0  &  0  &  0  &  0  &  0  &  0 \\
 1  &  0  &  0  &  0  &  0  &  0  &  1 \\
 2  &  0  &  0  &  0  &  0  &  1  &  0 \\
 3  &  0  &  0  &  0  &  0  &  1  &  1 \\
\multicolumn{7}{c}{\ensuremath{\vdots}}\\
 62  &  1  &  1  &  1  &  1  &  1  &  0 \\
 63  &  1  &  1  &  1  &  1  &  1  &  1 \\
\hline   &   &   &   &   &   &  
\end{tabular}\ ,
\]
}where in each table, the six columns to the right are the binary
expression of the leftmost column ($\lambda$). The states $\sigma_{\lambda}$
are given in Table \ref{tab:sigmalambda}. \end{proof}\selectlanguage{american}

\ihead{}

\ohead{\textbf{Appendix~\thechapter}~\leftmark}

\ifoot{}

\cfoot{}

\ofoot{\thepage}

\chapter{Time evolution and CPF correlation for the decay of a two level system
in a bosonic bath \label{chap:Anexo1}}

In this Appendix, it is shown the detailed calculation to get to the
time evolved state (Eq. (\ref{Rho(t)-1}) in the main text) and to
the expression of the CPF correlation for the different schemes (Eq.
(\ref{zzz}) and Eq. (\ref{xzx}) in the main text). It is also shown
in more details the relation between the experimental HWP angles and
the theoretical time evolution.

\section{Solution of Eq. (\ref{eq:SchEq}) for initial separable states }

The dynamics of the composite system (qubit + bosonic environment)
is given by the total Hamiltonian (\ref{Bosonic-2}) that can be regarded
as the sum of a free evolution term $H_{0}=\frac{\omega_{0}}{2}\sigma_{z}+\sum_{k}\omega_{k}b_{k}^{\dag}b_{k}$
with an interaction term $H_{I}=\sum_{k}(g_{k}\sigma_{+}b_{k}+g_{k}^{\ast}\sigma_{-}b_{k}^{\dag})$.
In the interaction picture the total state satisfies the Schrödinger
equation (\ref{eq:SchEq}) with the time dependent interaction Hamiltonian
$H_{I}(t)=e^{iH_{0}t}H_{I}e^{-iH_{0}t}=\sum_{k}(g_{k}e^{i\omega_{0}t}\sigma_{+}e^{i\omega_{k}b_{k}^{\dag}b_{k}}b_{k}e^{-i\omega_{k}b_{k}^{\dag}b_{k}}+h.c.)$.
We assume that the this composite system is closed and initially in
a pure separable state\footnote{This is the case for the very initial state because the environment
starts to evolve from vacuum and the system is measured before each
step of time evolution.}. The commutator of the total number of excitations $N=\sigma_{+}\sigma_{-}+\sum_{k}b_{k}^{\dagger}b_{b}$
with the total Hamiltonian vanishes ($[H_{tot},N]=0$), therefore
this quantity is conserved. As in the initial instant we consider
that the system has at most one excitation and the environment is
in its vacuum state, then the state at time $t>0$ must be the general
state
\begin{equation}
|\Psi_{t}\rangle=\Big{[}a(t)\ket{\uparrow}+b(t)\ket{\downarrow}+\ket{\downarrow}\sum_{k}c_{k}(t)b_{k}^{\dag}\Big{]}|0\rangle,\label{evolved_state_t-2}
\end{equation}
which is the superposition of all possible composite states with at
most one excitation. 

From Schrödinger equation, the coefficients evolves as
\begin{equation}
\frac{d}{dt}b(t)=0.
\end{equation}
Therefore, $b(t)=b(0)=b.$ In addition, it follows that
\begin{eqnarray}
\frac{d}{dt}a(t) & = & -i\sum_{k}g_{k}\exp(+i\phi_{k}t)c_{k}(t),\\
\frac{d}{dt}c_{k}(t) & = & -ig_{k}^{\ast}\exp(-i\phi_{k}t)a(t),
\end{eqnarray}
where $\phi_{k}\equiv\omega_{0}-\omega_{k}.$ Integrating the last
equation as
\begin{equation}
c_{k}(t)=c_{k}(0)-ig_{k}^{\ast}\int_{0}^{t}dt^{\prime}\exp(-i\phi_{k}t^{\prime})a(t^{\prime}),\label{Ck-1}
\end{equation}
the evolution for $a(t)$ becomes
\begin{equation}
\frac{d}{dt}a(t)=-\int_{0}^{t}f(t-t^{\prime})a(t^{\prime})dt^{\prime}-ig(t).
\end{equation}
Here, $f(t)$ defines the bath correlation
\begin{equation}
f(t)\equiv\sum_{k}|g_{k}|^{2}\exp(+i\phi_{k}t),
\end{equation}
while the inhomogeneous term is
\begin{equation}
g(t)\equiv\sum_{k}g_{k}\exp(+i\phi_{k}t)c_{k}(0).\label{eq:g}
\end{equation}

Defining the Green function $G(t)$ by the evolution
\begin{equation}
\frac{d}{dt}G(t)=-\int_{0}^{t}f(t-t^{\prime})G(t^{\prime})dt^{\prime},\label{Propagator-1}
\end{equation}
with $G(0)=1,$ the coefficient $a(t)$ can be written as
\begin{equation}
a(t)=G(t)a(0)-i\int_{0}^{t}G(t-t^{\prime})g(t^{\prime})dt^{\prime}.\label{ASolution-1}
\end{equation}

\subsection*{Evolution in the time interval $(0,t)$}

The total system is prepared in the initial state $|\Psi_{0}\rangle=(a\ket{\uparrow}+b\ket{\downarrow})\otimes|0\rangle$
. Imediatelly after the preparation, the OQS is measured and the composite
system's state becomes $|\Psi_{0}^{x}\rangle=(a_{x}\ket{\uparrow}+b_{x}\ket{\downarrow})\otimes|0\rangle$,
with new coefficients dependent on the measurement outcome $x$.

Thus, the initial conditions are 
\begin{equation}
a(0)=a_{x},\ \ \ \ ,b(0)=b_{x}\ \ \;\ c_{k}(0)=0,\label{CIFirst-1}
\end{equation}
implying $g(t)=0$. The coefficients can be expressed as
\begin{equation}
a(t)=G(t)a_{x},\ \ \ \ \ \ c_{k}(t)=-ia_{x}g_{k}^{\ast}\int_{0}^{t}dt^{\prime}\exp(-i\phi_{k}t^{\prime})G(t^{\prime}).\label{TimeSolution-1}
\end{equation}

\subsection*{Evolution in the time interval $(t,t+\tau)$}

The measurement module $Y$ is applied, projecting the OQS into one
of its energy eigenstates $|\uparrow\rangle$ or $|\downarrow\rangle$.
Different initial conditions must be used for this second step of
time evolution , depending on this measurement result. The initial
state for this step is the projection of Eq. (\ref{evolved_state_t-2})
on $|\uparrow\rangle$ or $|\downarrow\rangle$ for $y=+1$ or $y=-1$,
respectively.

\textit{First Initial conditions}: when $y=+1$
\begin{equation}
\tilde{a}(0)=1,\ \ \ \;\tilde{b}(0)=0,\;\ \ \ \tilde{c}_{k}(0)=0,\label{CITilde-1}
\end{equation}
which implies $\tilde{g}(\tau)=0,$ if follows the solution 
\begin{equation}
\tilde{a}(\tau)=G(\tau),\label{eq:SolTil}
\end{equation}
while from Eq. (\ref{Ck-1}) we get
\begin{equation}
\tilde{c}_{k}(\tau)=-ig_{k}^{\ast}\int_{0}^{\tau}d\tau^{\prime}\exp[-i\phi_{k}(\tau^{\prime}+t)]G(\tau^{\prime}).
\end{equation}
These solutions are equivalent to the previous ones {[}Eq.~(\ref{TimeSolution-1}){]}
under the replacement $t\rightarrow\tau.$

\textit{Second initial conditions}: If $y=-1$, the set of initial
conditions for this second step is given by 
\begin{equation}
\tilde{a}^{\prime}(0)=0,\ \ \ \tilde{b}^{\prime}(0)=b_{x}/\sqrt{1-|G(t)|^{2}}\ \ \ \tilde{c}_{k}^{\prime}(0)=c_{k}(t)/\sqrt{1-|G(t)|^{2}}.\label{CITildePrimas-1}
\end{equation}
 From Eq.~(\ref{Ck-1}) we write $c_{k}(t)=-ig_{k}^{\ast}\int_{0}^{t}dt^{\prime}\exp(-i\phi_{k}t^{\prime})G(t^{\prime}).$
Thus, Eq.~(\ref{eq:g}) becomes
\begin{eqnarray}
\tilde{g}(\tau) & = & -i\frac{\int_{0}^{t}dt^{\prime}\sum_{k}|g_{k}|^{2}\exp[+i\phi_{k}(\tau+t-t^{\prime})]G(t^{\prime})}{\sqrt{1-|G(t)|^{2}}},\nonumber \\
 & = & \frac{-i}{\sqrt{1-|G(t)|^{2}}}\int_{0}^{t}dt^{\prime}f(\tau+t-t^{\prime})G(t^{\prime}).
\end{eqnarray}
From Eq. (\ref{ASolution-1}), $\tilde{a}^{\prime}(\tau)=-i\int_{0}^{\tau}G(\tau-\tau^{\prime})\tilde{g}(\tau^{\prime})d\tau^{\prime},$
delivering
\begin{equation}
\tilde{a}^{\prime}(\tau)=\frac{-\int_{0}^{\tau}d\tau^{\prime}\int_{0}^{t}dt^{\prime}f(\tau^{\prime}+t-t^{\prime})G(\tau-\tau^{\prime})G(t^{\prime})}{\sqrt{1-|G(t)|^{2}}},\label{Aprevio-1}
\end{equation}
which can be rewritten as \begin{subequations} \label{SolTildePrima-1}
\begin{equation}
\tilde{a}^{\prime}(\tau)=\frac{-G(t,\tau)}{\sqrt{1-|G(t)|^{2}}}.
\end{equation}
This equation defines the function $G(t,\tau).$ Moreover, from Eq.
(\ref{Ck-1}), the other coefficients read 
\begin{eqnarray}
\tilde{c}_{k}^{\prime}(\tau) & = & \frac{1}{\sqrt{1-|G(t)|^{2}}}\Big{\{}c_{k}(t)+ig_{k}^{\ast}\\
 &  & \times\int_{0}^{\tau}d\tau^{\prime}\exp[-i\phi_{k}(\tau^{\prime}+t)]G(\tau^{\prime},t)\Big{\}}.\nonumber 
\end{eqnarray}
The function $G(t,\tau),$ after a change of integration variables
in Eq. (\ref{Aprevio-1}), can be written as \end{subequations} 
\begin{equation}
G(t,\tau)=\int_{0}^{t}dt^{\prime}\int_{0}^{\tau}d\tau^{\prime}f(\tau^{\prime}+t^{\prime})G(t-t^{\prime})G(\tau-\tau^{\prime}).
\end{equation}
Eq. (\ref{SolTildePrima-1}) shows that $G(t,\tau)$\textit{\ measures
the probability of finding the system in the upper state at time }$\tau$\textit{\ given
that at the initial time }$t$\textit{\ it was in the ground state}
{[}Eq. (\ref{CITildePrimas-1}){]}.

\section{Calculation of the CPF correlation}

Here\ we explicitly calculate the CPF\ correlation defined as:
\begin{equation}
C_{pf}(t,\tau)|_{y}=\langle O_{z}O_{x}\rangle_{y}-\langle O_{z}\rangle_{y}\langle O_{x}\rangle_{y}.
\end{equation}
Equivalently, $C_{pf}(t,\tau)|_{y}=\sum_{zx}O_{z}O_{x}[P(z,x|y)-P(z|y)P(x|y)],$
for different possible measurement schemes. The conditional values
explicitly read
\begin{equation}
\langle O_{x}\rangle_{y}=\sum_{x=\pm1}xP(x|y),\ \ \ \ \ \ \ \langle O_{z}\rangle_{y}=\sum_{z=\pm1}zP(z|y),\label{Mean-1}
\end{equation}
and
\begin{equation}
\langle O_{z}O_{x}\rangle_{y}=\sum_{z,x=\pm1}zxP(z,x|y).\label{TwoMean-1}
\end{equation}
Furthermore, $P(z|y)=\sum_{x=\pm1}P(z,x|y),$ and $P(x|y)=\sum_{z=\pm1}P(z,x|y).$
Measurement outcomes are indicated by $x,$ $y,$ and $z,$ while
directions in Bloch sphere are given by the eigenvectors of the Pauli
matrices, $\sigma_{x},$ $\sigma_{y},$ and $\sigma_{z}.$

\subsection{First scheme, measurements $\sigma_{z}-\sigma_{z}-\sigma_{z}$}

The three measurements necessary to obtain the CPF correlations are
performed in in the same $\sigma_{z}-$direction, with corresponding
measurement projectors $\Pi_{+1}=\ket{\uparrow}\bra{\uparrow}$ and
$\Pi_{-1}=\ket{\downarrow}\bra{\downarrow}$. The initial condition
is taken as
\begin{equation}
|\Psi_{0}\rangle=(a\ket{\uparrow}+b\ket{\downarrow})\otimes|0\rangle.
\end{equation}

After the first $x$-measurement (measurement in the past), the total
state suffers the transformation $|\Psi_{0}\rangle\rightarrow|\Psi_{0}^{x}\rangle=\Pi_{\hat{z}=x}|\Psi_{0}\rangle/\sqrt{\langle\Psi_{0}|\Pi_{\hat{z}=x}|\Psi_{0}\rangle}$
delivering $(x=\pm1)$
\begin{equation}
|\Psi_{0}^{x}\rangle=|x\rangle\otimes|0\rangle,
\end{equation}
where we disregarded a global phase contribution. The probability
of each option $P(x)=\langle\Psi_{0}|\Pi_{\hat{z}=x}|\Psi_{0}\rangle,$
reads
\begin{equation}
P(x=+1)=|a|^{2},\ \ \ \ \ P(x=-1)=|b|^{2}.
\end{equation}

After the x-measurement, the system and environment evolve with the
Hamiltonian dynamics during a time interval $t,$ $|\Psi_{0}^{x}\rangle\rightarrow|\Psi_{t}^{x}\rangle.$
We get,
\begin{equation}
\begin{tabular}{cccc}
 \ensuremath{x}  &  \vline\   &  \ensuremath{|\Psi_{t}^{x}\rangle}  &  \vline\\
\hline  \ensuremath{+}  &  \vline\   &  \ensuremath{[a(t)\ket{\uparrow}+\ket{\downarrow}\sum_{k}c_{k}(t)b_{k}^{\dag}]|0\rangle}  &  \vline\\
 \ensuremath{-}  &  \vline\   &  \ensuremath{\ket{\downarrow}\otimes|0\rangle}  &  \vline
\end{tabular},\label{PsiTime-1}
\end{equation}
with $a(0)=1,$ $c_{k}(0)=0$ and normalization $|a(t)|^{2}+\sum_{k}|c_{k}(t)|=1.$
Thus, from Eq. (\ref{CIFirst-1}), these coefficients are explicitly
given by Eq. (\ref{TimeSolution-1}).

Posteriorly, the second $y$-measurement, correspondent to the present,
is performed. The conditional probability of outcomes $y,$ given
the previous outcomes $x,$ is given by $P(y|x)=\langle\Psi_{t}^{x}|\Pi_{\hat{z}=y}|\Psi_{t}^{x}\rangle.$
The joint probability of both outcomes is $P(y,x)=P(y|x)P(x).$ The
retrodicted probability of past outcomes given the present ones is
$P(x|y)=P(y,x)/P(y),$ where $P(y)=\sum_{x}P(y,x).$ We get
\begin{equation}
\begin{tabular}{ccccccccc}
 \ensuremath{y}  &  \ensuremath{x}  &  \vline\   &  \ensuremath{P(y|x)}  &  \vline\   &  \ensuremath{P(y,x)}  &  \vline\   &  \ensuremath{P(x|y)}  &  \vline\\
\hline  \ensuremath{+}  &  \ensuremath{+}  &  \vline\   &  \ensuremath{|a(t)|^{2}}  &  \vline\   &  \ensuremath{|G(t)|^{2}|a|^{2}}  &  \vline\   &  \ensuremath{1}  &  \vline\\
 \ensuremath{+}  &  \ensuremath{-}  &  \vline\   &  \ensuremath{0}  &  \vline\   &  \ensuremath{0}  &  \vline\   &  \ensuremath{0}  &  \vline\\
 \ensuremath{-}  &  \ensuremath{+}  &  \vline\   &  \ensuremath{1-|a(t)|^{2}}  &  \vline\   &  \ensuremath{(1-|G(t)|^{2})|a|^{2}}  &  \vline\   &  \ensuremath{\frac{(1-|G(t)|^{2})|a|^{2}}{(1-|G(t)|^{2})|a|^{2}+|b|^{2}}}  &  \vline\\
 \ensuremath{-}  &  \ensuremath{-}  &  \vline\   &  \ensuremath{1}  &  \vline\   &  \ensuremath{|b|^{2}}  &  \vline\   &  \ensuremath{\frac{|b|^{2}}{(1-|G(t)|^{2})|a|^{2}+|b|^{2}}}  &  \vline
\end{tabular}.\label{P(x|y)-1}
\end{equation}

After the second measurement, the total state suffer the transformation
$|\Psi_{t}^{x}\rangle\rightarrow|\Psi_{t}^{yx}\rangle=\Pi_{\hat{z}=y}|\Psi_{t}^{x}\rangle/\sqrt{\langle\Psi_{t}^{x}|\Pi_{\hat{z}=y}|\Psi_{t}^{x}\rangle}.$
Posteriorly, starting at time $t,$ $|\Psi_{t}^{yx}\rangle$ evolves
with the total unitary dynamics during a time interval $\tau,$ leading
to the transformation $|\Psi_{t}^{yx}\rangle\rightarrow|\Psi_{t+\tau}^{yx}\rangle.$
From Eq. (\ref{PsiTime-1}) the states conditioned to the output of
each measurement are
\begin{equation}
\begin{tabular}{ccccccc}
 \ensuremath{y}  &  \ensuremath{x}  &  \vline\   &  \ensuremath{|\Psi_{t}^{yx}\rangle}  &  \vline\   &  \ensuremath{|\Psi_{t+\tau}^{yx}\rangle}  &  \vline\\
\hline  \ensuremath{+}  &  \ensuremath{+}  &  \vline\   &  \ensuremath{\ket{\uparrow}\otimes|0\rangle}  &  \vline\   &  \ensuremath{[\tilde{a}(\tau)\ket{\uparrow}+\ket{\downarrow}\sum_{k}\tilde{c}_{k}(\tau)b_{k}^{\dag}]|0\rangle}  &  \vline\\
 \ensuremath{+}  &  \ensuremath{-}  &  \vline\   &  \ensuremath{\nexists}  &  \vline\   &  \ensuremath{\nexists}  &  \vline\\
 \ensuremath{-}  &  \ensuremath{+}  &  \vline\   &  \ensuremath{\ket{\downarrow}\frac{\sum_{k}c_{k}(t)b_{k}^{\dag}|0\rangle}{\sqrt{1-|a(t)|^{2}}}}  &  \vline\   &  \ensuremath{[\tilde{a}^{\prime}(\tau)\ket{\uparrow}+\ket{\downarrow}\sum_{k}\tilde{c}_{k}^{\prime}(\tau)b_{k}^{\dag}]|0\rangle}  &  \vline\\
 \ensuremath{-}  &  \ensuremath{-}  &  \vline\   &  \ensuremath{\ket{\downarrow}\otimes|0\rangle}  &  \vline\   &  \ensuremath{\ket{\downarrow}\otimes|0\rangle}  &  \vline
\end{tabular}.
\end{equation}
The solution form $(y,x)=(+,+)$ comes from Eq.~(\ref{CITilde-1})
{[}solutions~(\ref{SolTilde-1}){]}, while for $(y,x)=(-,+)$ follows
from Eq.~(\ref{CITildePrimas-1}) {[}solutions~(\ref{SolTildePrima-1}){]}.

Finally, the third $z$-measurement is performed (measurement in the
future). The probability $P(z|yx)$ of outcome $z$ given the previous
outcomes $y$ and $x,$ is given by $P(z|yx)=\langle\Psi_{t+\tau}^{yx}|\Pi_{\hat{z}=z}|\Psi_{t+\tau}^{yx}\rangle.$
The conditional probability of past and future event is $P(z,x|y)=P(z|y,x)P(x|y),$
where $P(x|y)$ follows from Eq.~(\ref{P(x|y)-1}). We get 
\begin{equation}
\begin{tabular}{cccccccc}
 \ensuremath{z}  &  \ensuremath{y}  &  \ensuremath{x}  &  \vline\   &  \ensuremath{P(z|y,x)}  &  \vline\   &  \ensuremath{P(z,x|y)}  &  \vline\\
\hline  \ensuremath{+}  &  \ensuremath{+}  &  \ensuremath{+}  &  \vline\   &  \ensuremath{|\tilde{a}(\tau)|^{2}}  &  \vline\   &  \ensuremath{|G(\tau)|^{2}}  &  \vline\\
 \ensuremath{+}  &  \ensuremath{+}  &  \ensuremath{-}  &  \vline\   &  \ensuremath{0}  &  \vline\   &  \ensuremath{0}  &  \vline\\
 \ensuremath{+}  &  \ensuremath{-}  &  \ensuremath{+}  &  \vline\   &  \ensuremath{|\tilde{a}^{\prime}(\tau)|^{2}}  &  \vline\   &  \ensuremath{\frac{|G(t,\tau)|^{2}|a|^{2}}{(1-|G(t)|^{2})|a|^{2}+|b|^{2}}}  &  \vline\\
 \ensuremath{+}  &  \ensuremath{-}  &  \ensuremath{-}  &  \vline\   &  \ensuremath{0}  &  \vline\   &  \ensuremath{0}  &  \vline\\
 \ensuremath{-}  &  \ensuremath{+}  &  \ensuremath{+}  &  \vline\   &  \ensuremath{1-|\tilde{a}(\tau)|^{2}}  &  \vline\   &  \ensuremath{1-|G(\tau)|^{2}}  &  \vline\\
 \ensuremath{-}  &  \ensuremath{+}  &  \ensuremath{-}  &  \vline\   &  \ensuremath{0}  &  \vline\   &  \ensuremath{0}  &  \vline\\
 \ensuremath{-}  &  \ensuremath{-}  &  \ensuremath{+}  &  \vline\   &  \ensuremath{1-|\tilde{a}^{\prime}(\tau)|^{2}}  &  \vline\   &  \ensuremath{\frac{(1-|G(t,\tau)|^{2}-|G(t)|^{2})|a|^{2}}{(1-|G(t)|^{2})|a|^{2}+|b|^{2}}}  &  \vline\\
 \ensuremath{-}  &  \ensuremath{-}  &  \ensuremath{-}  &  \vline\   &  \ensuremath{1}  &  \vline\   &  \ensuremath{\frac{|b|^{2}}{(1-|G(t)|^{2})|a|^{2}+|b|^{2}}}  &  \vline
\end{tabular}.\label{P(zx|y)-1}
\end{equation}
The conditional probability of the last measurement follows from $P(z|y)=\sum_{x}P(z,x|y),$
delivering
\begin{equation}
\begin{tabular}{ccccc}
 \ensuremath{z}  &  \ensuremath{y}  &  \vline\   &  \ensuremath{P(z|y)}  &  \vline\\
\hline  \ensuremath{+}  &  \ensuremath{+}  &  \vline\   &  \ensuremath{|G(\tau)|^{2}}  &  \vline\\
 \ensuremath{+}  &  \ensuremath{-}  &  \vline\   &  \ensuremath{\frac{|G(t,\tau)|^{2}|a|^{2}}{(1-|G(t)|^{2})|a|^{2}+|b|^{2}}}  &  \vline\\
 \ensuremath{-}  &  \ensuremath{+}  &  \vline\   &  \ensuremath{1-|G(\tau)|^{2}}  &  \vline\\
 \ensuremath{-}  &  \ensuremath{-}  &  \vline\   &  \ensuremath{\frac{(1-|G(t,\tau)|^{2}-|G(t)|^{2})|a|^{2}+|b|^{2}}{(1-|G(t)|^{2})|a|^{2}+|b|^{2}}}  &  \vline
\end{tabular}.\label{P(z|y)-1}
\end{equation}

From Eqs. (\ref{P(x|y)-1}) and (\ref{P(z|y)-1}), the expectation
values {[}Eqs.~(\ref{Mean-1}) and (\ref{TwoMean-1}){]} read
\begin{equation}
\langle O_{x}\rangle_{y=1}=1,\ \ \ \ \ \ \langle O_{z}\rangle_{y=1}=2|G(\tau)|^{2}-1,
\end{equation}
while from Eq. (\ref{P(zx|y)-1}) we get
\begin{equation}
\langle O_{z}O_{x}\rangle_{y=1}=2|G(\tau)|^{2}-1.
\end{equation}
Thus, it follows 
\begin{equation}
C_{pf}(t,\tau)|_{y=+1}=0.
\end{equation}
On the other hand, for $y=-1,$ the averages read 
\begin{equation}
\langle O_{x}\rangle_{y=-1}=\frac{(1-|G(t)|^{2})|a|^{2}-|b|^{2}}{(1-|G(t)|^{2})|a|^{2}+|b|^{2}},
\end{equation}
while
\begin{equation}
\langle O_{z}\rangle_{y=-1}=\frac{(2|G(t,\tau)|^{2}+|G(t)|^{2}-1)|a|^{2}-|b|^{2}}{(1-|G(t)|^{2})|a|^{2}+|b|^{2}},
\end{equation}
and
\begin{equation}
\langle O_{z}O_{x}\rangle_{y=-1}=\frac{(2|G(t,\tau)|^{2}+|G(t)|^{2}-1)|a|^{2}+|b|^{2}}{(1-|G(t)|^{2})|a|^{2}+|b|^{2}}.
\end{equation}
The CPF correlation then is
\begin{equation}
C_{pf}(t,\tau)|_{y=-1}\underset{\hat{z}\hat{z}\hat{z}}{=}\left\{ \frac{4|a|^{2}|b|^{2}}{[(1-|G(t)|^{2})|a|^{2}+|b|^{2}]^{2}}\right\} |G(t,\tau)|^{2}.\label{ApCPFZZZ-1}
\end{equation}

\subsection{Second scheme, \^{x}-\^{z}-\^{x}}

In this scheme, the first and last measurements are performed in $\hat{x}-$direction,
with measurement projector $\Pi_{\hat{x}+1}=|+\rangle\langle+|,$
and $\Pi_{\hat{x}-1}=|-\rangle\langle-|,$ where $|\pm\rangle=(1/\sqrt{2})(\ket{\uparrow}\pm\ket{\downarrow}).$
The intermediate one is realized in $\hat{z}-$direction, with projector
$\Pi_{\hat{z}=+1}$ and $\Pi_{\hat{z}=-1}$ defined above. The initial
system-environment state is
\begin{equation}
|\Psi_{0}\rangle=(a\ket{\uparrow}+b\ket{\downarrow})\otimes|0\rangle.
\end{equation}

After the first $x$-measurement $|\Psi_{0}\rangle\rightarrow|\Psi_{0}^{x}\rangle=\Pi_{\hat{x}=x}|\Psi_{0}\rangle/\sqrt{\langle\Psi_{0}|\Pi_{\hat{x}=x}|\Psi_{0}\rangle},$
the bipartite state is
\begin{equation}
|\Psi_{0}^{x}\rangle=\frac{\ket{\uparrow}+x\ket{\downarrow}}{\sqrt{2}}\otimes|0\rangle,
\end{equation}
where global phase contributions are disregarded. The probability
of each option $(x=\pm1)$ $P(x)=\langle\Psi_{0}|\Pi_{\hat{x}=x}|\Psi_{0}\rangle,$
reads
\begin{equation}
P(x)=\frac{1}{2}|a+xb|^{2}.
\end{equation}

After the previous step, $|\Psi_{0}^{x}\rangle$ evolves with the
unitary evolution during a time interval $t,$ $|\Psi_{0}^{x}\rangle\rightarrow|\Psi_{t}^{x}\rangle.$
Using the initial conditions (\ref{CIFirst-1}) and their associated
solution (\ref{TimeSolution-1}), we get
\begin{equation}
|\Psi_{t}^{x}\rangle=\frac{1}{\sqrt{2}}\Big{[}a(t)\ket{\uparrow}+x\ket{\downarrow}+\ket{\downarrow}\sum_{k}c_{k}(t)b_{k}^{\dag}\Big{]}|0\rangle,\label{PsiAfterXyTime-1}
\end{equation}
where $a(0)=1$ and $c_{k}(0)=0.$

Posteriorly, the second $y$-measurement is performed. The conditional
probability for the outcomes is $P(y|x)=\langle\Psi_{t}^{x}|\Pi_{\hat{z}=y}|\Psi_{t}^{x}\rangle,$
which deliver
\begin{equation}
P(+|x)=\frac{|a(t)|^{2}}{2},\ \ \ \ \ \ P(-|x)=1-\frac{|a(t)|^{2}}{2},
\end{equation}
where we used $|a(t)|^{2}+\sum_{k}|c_{k}(t)|^{2}=1.$ This result
indicates that the random variable $y$ is statistically independent
of $x,$ $P(y|x)=P(y).$ Thus, the joint probability for the first
and second outcomes is $P(y,x)=P(y|x)P(x)=P(y)P(x).$ The retrodicted
probability $P(x|y)=P(y,x)/P(y),$ where $P(y)=\sum_{x}P(y,x),$ becomes
\begin{equation}
P(x|y)=P(x).\label{Pretro-1}
\end{equation}

After the second measurement, the state suffers the transformation
$|\Psi_{t}^{x}\rangle\rightarrow|\Psi_{t}^{yx}\rangle=\Pi_{\hat{z}=y}|\Psi_{t}^{x}\rangle/\sqrt{\langle\Psi_{t}^{x}|\Pi_{\hat{z}=y}|\Psi_{t}^{x}\rangle}.$
From Eq.~(\ref{PsiAfterXyTime-1}), for $y=+1$ we get
\begin{equation}
|\Psi_{t}^{+,x}\rangle=\ket{\uparrow}\otimes|0\rangle,\label{CIPlus-1}
\end{equation}
while for $y=-1,$
\begin{equation}
|\Psi_{t}^{-,x}\rangle=\frac{1}{\sqrt{2-|a(t)|^{2}}}\ket{\downarrow}\otimes\Big{[}x+\sum_{k}c_{k}(t)b_{k}^{\dag}\Big{]}|0\rangle.\label{CIMinus-1}
\end{equation}

Starting at time $t,$ $|\Psi_{t}^{yx}\rangle$ evolves with the total
unitary dynamics during a time interval $\tau,$ leading to the transformation
$|\Psi_{t}^{yx}\rangle\rightarrow|\Psi_{t+\tau}^{yx}\rangle.$ From
Eq.~(\ref{CIPlus-1}) we get
\begin{equation}
|\Psi_{t+\tau}^{+,x}\rangle=\Big{[}\tilde{a}(\tau)\ket{\uparrow}+\ket{\downarrow}\sum_{k}\tilde{c}_{k}(\tau)b_{k}^{\dag}\Big{]}|0\rangle,
\end{equation}
with $\tilde{a}(0)=1,$ $\tilde{c}_{k}(0)=0$ {[}Eq. (\ref{CITilde-1}){]},
with $|\tilde{a}(\tau)|^{2}+\sum\nolimits _{k}|\tilde{c}_{k}(\tau)|^{2}=1.$
Thus, $\tilde{a}(\tau)$ and $\tilde{c}_{k}(\tau)$ are given by Eq.~(\ref{SolTilde-1}).
On the other hand, from Eq. (\ref{CIMinus-1}), it follows
\begin{eqnarray}
|\Psi_{t+\tau}^{-,x}\rangle & = & \frac{x\ket{\downarrow}\otimes|0\rangle}{\sqrt{2-|a(t)|^{2}}}+\sqrt{\frac{1-|a(t)|^{2}}{2-|a(t)|^{2}}}\\
 &  & \times\Big{[}\tilde{a}^{\prime}(\tau)\ket{\uparrow}+\ket{\downarrow}\sum_{k}\tilde{c}_{k}^{\prime}(\tau)b_{k}^{\dag}\Big{]}|0\rangle,\nonumber 
\end{eqnarray}
where $\tilde{a}^{\prime}(0)=0$ and $\tilde{c}_{k}^{\prime}(0)=c_{k}(t)/\sqrt{1-|a(t)|^{2}}$
{[}Eq.~(\ref{CITildePrimas-1}){]} with $|\tilde{a}^{\prime}(\tau)|^{2}+\sum\nolimits _{k}|\tilde{c}_{k}^{\prime}(\tau)|^{2}=1.$
In this case, $\tilde{a}^{\prime}(\tau)$ and $\tilde{c}_{k}^{\prime}(\tau)$
are then given by Eq.~(\ref{SolTildePrima-1}).

At the final stage, the third $z$-measurement is performed, where
the corresponding conditional probability reads $P(z|yx)=\langle\Psi_{t+\tau}^{yx}|\Pi_{\hat{x}=z}|\Psi_{t+\tau}^{yx}\rangle.$
From the previous expressions, we get
\begin{equation}
P(z|+,x)=\frac{1}{2},
\end{equation}
while
\begin{equation}
P(z|-,x)=\frac{1}{2}\Big{[}1-zx\frac{G(t,\tau)+G^{\ast}(t,\tau)}{2-|G(t)|^{2}}\Big{]}.
\end{equation}
The CPF probability $P(z,x|y)=P(z|y,x)P(x|y),$ from the previous
two expressions and Eq. (\ref{Pretro-1}), reads $(y=+1)$
\begin{equation}
P(z,x|+)=\frac{1}{2}P(x)=\frac{1}{4}|a+xb|^{2},\label{cpfPyPlus-1}
\end{equation}
while $(y=-1)$
\begin{equation}
P(z,x|-)=\frac{|a+xb|^{2}}{4}\Big{[}1-zx\frac{G(t,\tau)+G^{\ast}(t,\tau)}{2-|G(t)|^{2}}\Big{]}.\label{cpfPyMinus-1}
\end{equation}

From Eqs. (\ref{cpfPyPlus-1}) and (\ref{cpfPyMinus-1}), the conditional
expectation values {[}Eqs. (\ref{Mean-1}) and (\ref{TwoMean-1}){]}
for $y=+1$ read
\begin{equation}
\langle O_{x}\rangle_{y=+1}=2\mathrm{Re}(ab^{\ast}),\ \ \ \ \ \ \langle O_{z}\rangle_{y=+1}=0,
\end{equation}
and
\begin{equation}
\langle O_{z}O_{x}\rangle_{y=+1}=0,
\end{equation}
which implies 
\begin{equation}
C_{pf}(t,\tau)|_{y=+1}=0.
\end{equation}
On the other hand, for $y=-1,$ the averages read 
\begin{equation}
\langle O_{x}\rangle_{y=-1}=2\mathrm{Re}(ab^{\ast}),
\end{equation}
while
\begin{equation}
\langle O_{z}\rangle_{y=-1}=-2\mathrm{Re}(ab^{\ast})\frac{G(t,\tau)+G^{\ast}(t,\tau)}{2-|G(t)|^{2}}.
\end{equation}
Furthermore,
\begin{equation}
\langle O_{z}O_{x}\rangle_{y=-1}=-\frac{G(t,\tau)+G^{\ast}(t,\tau)}{2-|G(t)|^{2}}.
\end{equation}
The CPF correlation then is
\begin{equation}
C_{pf}(t,\tau)|_{y=-1}\underset{\hat{x}\hat{z}\hat{x}}{=}-\left\{ \frac{1-[2\mathrm{Re}(ab^{\ast})]^{2}}{1-|G(t)|^{2}/2}\right\} \mathrm{Re}[G(t,\tau)].
\end{equation}

For a$\ \hat{y}-\hat{z}-\hat{y}$ measurements scheme, by performing
a similar calculation, the CPF correlation reads
\begin{equation}
C_{pf}(t,\tau)|_{y=-1}\underset{\hat{y}\hat{z}\hat{y}}{=}-\left\{ \frac{1-[2\mathrm{Im}(ab^{\ast})]^{2}}{1-|G(t)|^{2}/2}\right\} \mathrm{Re}[G(t,\tau)].\label{ApCPFXZX-1}
\end{equation}

\section{Map representation \ of the total unitary dynamics}

For experimental implementation, the system is encoded in the light
polarization states, while the bath is effectively implemented through
different spatial light modes \cite{brasil,rio}.

The total unitary evolution in first interval $(0,t)$ can be written
as the map \begin{subequations} \label{TimeMap-1} 
\begin{eqnarray}
\ket{\downarrow}\otimes|0\rangle & \rightarrow & \ket{\downarrow}\otimes|0\rangle,\\
\ket{\uparrow}\otimes|0\rangle & \rightarrow & \cos(\theta)\ket{\uparrow}\otimes|0\rangle+\sin(\theta)\ket{\downarrow}\otimes|1\rangle,\ \ \ \ 
\end{eqnarray}
where here $|0\rangle$ and $|1\rangle$ represent spatial modes that
respectively take into account the absence or presence of one excitation
in the environment Bosonic modes. Thus, the angle $\theta$ is given
by the relation \end{subequations} 
\begin{equation}
\cos(\theta)=a(t)=G(t),
\end{equation}
where $a(t)$ follows from Eq. (\ref{TimeSolution-1}).

In the interval $(t,t+\tau)$ the total unitary dynamics realize the
following mapping \begin{subequations} \label{TauMap-1} 
\begin{eqnarray}
\ket{\downarrow}\otimes|0\rangle & \rightarrow & \ket{\downarrow}\otimes|0\rangle,\\
\ket{\uparrow}\otimes|0\rangle & \rightarrow & \cos(\tilde{\theta})\ket{\uparrow}\otimes|0\rangle+\sin(\tilde{\theta})\ket{\downarrow}\otimes|1\rangle,\\
\ket{\downarrow}\otimes|1\rangle & \rightarrow & \sin(\tilde{\theta}^{\prime})\ket{\uparrow}\otimes|0\rangle+\cos(\tilde{\theta}^{\prime})\ket{\downarrow}\otimes|1\rangle.\ \ \ \ 
\end{eqnarray}
The angles are given by the relations \end{subequations} 
\begin{equation}
\cos(\tilde{\theta})=\tilde{a}(\tau)=G(\tau),
\end{equation}
and
\begin{equation}
\sin(\tilde{\theta}^{\prime})=\tilde{a}^{\prime}(\tau)=-\frac{G(t,\tau)}{\sqrt{1-|G(t)|^{2}}},
\end{equation}
where $\tilde{a}(\tau)$\ and $\tilde{a}^{\prime}(\tau)$\ follows
from Eqs. (\ref{TimeSolution-1}) and (\ref{SolTildePrima-1}) respectively.

From the previous mapping, it is possible to rewrite the CPF correlation
in terms of angle variables. From Eq. (\ref{ApCPFZZZ-1}) we get
\begin{equation}
C_{pf}|_{y=-1}\underset{\hat{z}\hat{z}\hat{z}}{=}\left\{ \frac{4|a|^{2}|b|^{2}}{[\sin^{2}(\theta)|a|^{2}+|b|^{2}]^{2}}\right\} \sin^{2}(\theta)\sin^{2}(\tilde{\theta}^{\prime}),
\end{equation}
while from Eq. (\ref{ApCPFXZX-1}) it follows
\begin{equation}
C_{pf}|_{y=-1}\underset{\hat{x}\hat{z}\hat{x}}{=}\left\{ \frac{1-[2\mathrm{Re}(ab^{\ast})]^{2}}{1-\cos^{2}(\theta)/2}\right\} \sin(\theta)\sin(\tilde{\theta}^{\prime}).
\end{equation}
These two expressions do not depend on angle $\tilde{\theta}.$ In
fact, this angle is relevant when $y=+1,$ where $C_{pf}|_{y=+1}\underset{\hat{z}\hat{z}\hat{z}}{=}0$
and $C_{pf}|_{y=+1}\underset{\hat{x}\hat{z}\hat{x}}{=}0.$

The previous expressions for the CPF\ correlation in terms of angle
variables can also be derived from the measurement schemes and by
using the dynamical maps Eqs. (\ref{TimeMap-1}) and (\ref{TauMap-1}).
For example, the CPF probability $P(z,x|y)$ for the $\hat{z}-\hat{z}-\hat{z}$
scheme {[}compare with Eq.~(\ref{P(zx|y)-1}){]} reads
\begin{equation}
\begin{tabular}{cccccc}
 \ensuremath{z}  &  \ensuremath{y}  &  \ensuremath{x}  &  \vline\   &  \ensuremath{P(z,x|y)}  &  \vline\\
\hline  \ensuremath{+}  &  \ensuremath{+}  &  \ensuremath{+}  &  \vline\   &  \ensuremath{\cos^{2}(\tilde{\theta})}  &  \vline\\
 \ensuremath{+}  &  \ensuremath{+}  &  \ensuremath{-}  &  \vline\   &  \ensuremath{0}  &  \vline\\
 \ensuremath{+}  &  \ensuremath{-}  &  \ensuremath{+}  &  \vline\   &  \ensuremath{\frac{\sin^{2}(\theta)\sin^{2}(\tilde{\theta}^{\prime})|a|^{2}}{\sin^{2}(\theta)|a|^{2}+|b|^{2}}}  &  \vline\\
 \ensuremath{+}  &  \ensuremath{-}  &  \ensuremath{-}  &  \vline\   &  \ensuremath{0}  &  \vline\\
 \ensuremath{-}  &  \ensuremath{+}  &  \ensuremath{+}  &  \vline\   &  \ensuremath{\sin^{2}(\tilde{\theta})}  &  \vline\\
 \ensuremath{-}  &  \ensuremath{+}  &  \ensuremath{-}  &  \vline\   &  \ensuremath{0}  &  \vline\\
 \ensuremath{-}  &  \ensuremath{-}  &  \ensuremath{+}  &  \vline\   &  \ensuremath{\frac{\sin^{2}(\theta)\cos^{2}(\tilde{\theta}^{\prime})|a|^{2}}{\sin^{2}(\theta)|a|^{2}+|b|^{2}}}  &  \vline\\
 \ensuremath{-}  &  \ensuremath{-}  &  \ensuremath{-}  &  \vline\   &  \ensuremath{\frac{|b|^{2}}{\sin^{2}(\theta)|a|^{2}+|b|^{2}}}  &  \vline
\end{tabular}.
\end{equation}
For the $\hat{x}-\hat{z}-\hat{x}$ scheme {[}compare with Eqs.~(\ref{cpfPyPlus-1})
and (\ref{cpfPyMinus-1}){]} it can be written as $(y=+1)$
\begin{equation}
P(z,x|+)=\frac{1}{4}|a+xb|^{2},
\end{equation}
while $(y=-1)$ 
\begin{equation}
P(z,x|-)=\frac{1}{4}|a+xb|^{2}\Big{[}1+2zx\frac{\sin(\theta)\sin(\tilde{\theta}^{\prime})}{2-\cos(\theta)}\Big{]}.
\end{equation}

\ihead[]{Acknowledgments}

\ohead[Acknowledgments]{}

\cleardoublepage{}

\ihead{ }

\ohead[]{\leftmark}
\chead{}

\bibliographystyle{thais}
\phantomsection\addcontentsline{toc}{chapter}{\bibname}\bibliography{ref}

\cleardoublepage{}

\ihead[]{Nomenclature}

\ohead[Nomenclature]{}

\printnomenclature[2.5cm]
\end{document}